\DeclareRobustCommand{\cev}[1]{%
  {\mathpalette\do@cev{#1}}%
}
\newcommand{\do@cev}[2]{%
  \vbox{\offinterlineskip
    \sbox\z@{$\m@th#1 x$}%
    \ialign{##\cr
      \hidewidth\reflectbox{$\m@th#1\vec{}\mkern4mu$}\hidewidth\cr
      \noalign{\kern-\ht\z@}
      $\m@th#1#2$\cr
    }%
  }%
}
\title{Computational Dynamical Systems}
\author{Jordan Cotler\thanks{Email: \texttt{jcotler@fas.harvard.edu}%. This work is supported by
%a Junior Fellowship from the Harvard Society of Fellows.
} \\
Harvard University
\and
Semon Rezchikov\thanks{Email: \texttt{semonr@princeton.edu}%. This work is supported by...
}\\ Princeton University}
\DeclareMathAlphabet\mathbfcal{OMS}{cmsy}{b}{n}
\renewcommand{\epsilon}{\varepsilon}
\newcommand{\machine}[1]{\textnormal{\textsf{#1}}}
\begin{document}

\maketitle

\begin{abstract}
We study the computational complexity theory of smooth, finite-dimensional dynamical systems. Building off of previous work, we give definitions for what it means for a smooth dynamical system to simulate a Turing machine. We then show that `chaotic' dynamical systems (more precisely, Axiom A systems) and `integrable' dynamical systems (more generally, measure-preserving systems) cannot robustly simulate universal Turing machines, although such machines can be robustly simulated by other kinds of dynamical systems.  Subsequently, we show that any Turing machine that can be encoded into a structurally stable one-dimensional dynamical system must have a decidable halting problem, and moreover an explicit time complexity bound in instances where it does halt.  More broadly, our work elucidates what it means for one `machine' to simulate another, and emphasizes the necessity of defining low-complexity `encoders' and `decoders' to translate between the dynamics of the simulation and the system being simulated. We highlight how the notion of a computational dynamical system leads to questions at the intersection of computational complexity theory, dynamical systems theory, and real algebraic geometry.
%and combinatorics
\end{abstract}

\thispagestyle{empty}
\clearpage
\tableofcontents
\addtocontents{toc}{\protect\thispagestyle{empty}}
\pagenumbering{gobble}

\clearpage
% \clearpage
\pagenumbering{arabic} 

\section{Introduction}

\subsection{Motivation and overview}

Models of digital computation, which lie at the foundation of computer science, are typically discrete, while most of our fundamental models of the physical world are essentially continuous. Nonetheless,  the Church-Turing
thesis~\cite{turing1939systems} and its physical counterparts \cite{gandy_physical_church_turing, copeland2007physical} state that this difference is  illusory: the discrete computations we can perform reliably in the physical world should be the same as those which can be performed by a Turing machine, possibly by one having access to random bits. The validity of the physical Church-Turing thesis is a subject of debate, and a number of variants of the thesis have been proposed~\cite{copeland1997church}.
%there are arguments \cite{nielsen1997computable, kieu2003quantum} that it is consistent with our understanding of quantum mechanics (which is based fundamentally on the complex numbers, a continuous object) that nature may implement physical systems which compute Turing-uncomputable functions, although no experimental examples are known. 
Furthermore, from the perspective of complexity theory rather than computatibility theory,  the possibility for quantum computers to solve with high probability, in polynomial time, decision problems which are not in \textbf{P}, is a basic motivation for research on quantum computation \cite{nielsen2010quantum, Aharonov:2021das}. 

In a different (non-quantum) direction, there have been multiple models proposed for a definition of a computable real function \cite{grzegorczyk1955computable, lacombe1959classes, blum1998complexity, smale1997complexity, braverman2005complexity},  and using this language, it has been found that simple finite-dimensional continuous dynamical systems defined by polynomial equations with integral coefficients can exhibit non-computable dynamical properties \cite{moore1990unpredictability, braverman2006non}. In general it is known that the existence of natural problems with no computable solution (such as the problem of recognizing presentations of the trivial group \cite{psalgorithmic}) forces complex behaviour of various continuous mathematical objects related to geometry and dynamics \cite{weinberger2020computers, seidel2008biased}. In yet a different direction, there has been a sequence of papers asking whether universal computation can be realized by various ordinary \cite{branicky1994analog} and partial differential equations, including in single-particle potential energy systems \cite{tao2017universality} and in solutions to fluid dynamics equations \cite{cardona2021constructing}; this was in part motivated by the hope of showing the existence of blow-up solutions to the Navier-Stokes equations by finding fluid flows which `replicate themselves' at smaller and smaller scales \cite{tao2016finite}. Such works on realizing universal computation in natural continuous physical models can be seen as a continuation of Moore's earlier work \cite{moore1998finite, moore1990unpredictability}, which realized universal computation in a simple $2$-dimensional piecewise-linear map, as well as in a Lipschitz map on the interval and an analytic map on $\R$. The relation between the computational capacity and the analytic or dynamical properties of a continuous dynamical system, such as its topological entropy or its regularity, are known to be subtle: for example, depending on the formalization, the topological entropy of a Turing-universal system can be zero \cite{cardona2023computability} or can be forced to be nonnegative \cite{bruera2024topological}. 

Researchers in both machine learning and computational neuroscience are often forced to posit that various continuous systems (recurrent neural networks, transformers, models of brains) implement certain computations \cite{sussillo2014neural, chariker2016orientation, maass2019brain, kato2015global, khona2022attractor, cotler2023analyzing}, and indeed part of the problem of neuroscience is to extract, from neuronal measurements, the `computations' implemented by the brain. Such scientific applications were part of the motivation for the founders of the mathematical field of differentiable dynamical systems theory \cite{smale1967differentiable, THOM1969313}, who originally tried to extract simple `discrete' descriptions of the dynamics of systems like Axiom A systems. The development of differentiable dynamical systems theory led to a collection of powerful mathematical methods for understanding dynamical systems. Nonetheless, these methods are rarely used by non-mathematicians, perhaps because they do not connect straightforwardly with the kinds of questions the researchers in neural network interpretability and computational neuroscience typically ask.

To `do' complexity theory with continuous dynamical systems, one must know what it means for a continuous system to `implement' a discrete computation. Unfortunately, this notion is not completely clear and many of the works cited above do not give precise definitions for this notion. Here we propose an answer to this question via a perspective which is natural to computer scientists. We will show that without some definition like the one we propose, computability questions about continuous dynamical systems become trivialized. Our proposed definition differs from other proposals connected to the Space-Bounded Church-Turing Thesis \cite{braverman2015space, braverman2012noise} by not requiring for `noise' to be introduced in the continuous dynamics. Moreover, our proposed framework naturally leads to interesting questions that should feel familiar to those interested in the mathematical study of differentiable dynamical systems. In particular, we prove several results regarding complexity and computability theory in the context of our framework by utilizing results from differentiable dynamical systems theory.

To explain our proposal, let us recall how computer scientists ask computational complexity questions about \emph{discrete} systems. Given some machine $\machine{T}: S \to S$ with discrete configuration space $S$ (where $x \in S$ describes the configuration of a machine including all of its tapes at a given moment), we would say that $\machine{T}$ is \emph{Turing universal} if it can do the same computations as any fixed universal Turing machine $\machine{T}_{\!\text{univ}}$. To establish this property of $\machine{T}$, we always need to find some \emph{encoder} $\mathcal{E}$ which lets us encode configurations of $\machine{T}_{\!\text{univ}}$ into configurations of $\machine{T}$, and some \emph{decoder} $\mathcal{D}$ which lets us decode a configuration of $\machine{T}_{\!\text{univ}}$ from a configuration of $\machine{T}$. In fact, since there are many definitions of a Turing machine (e.g.~with one-sided or two-sided tapes, with multiple tapes, as well as more exotic variants), such constructions are needed when setting up the theory of computation; many implicit examples of such encoders and decoders can be found in basic textbooks like~\cite{sipser2012introduction, arora2009computational}. For such constructions to make sense, one must require that \emph{the encoder and decoder are themselves computationally simple}, e.g.~that they can be implemented by a low-time complexity Turing machine or a uniform family of low-depth circuits. Otherwise, one can package all the computation into the encoder and decoder themselves (see Section \ref{subsubsec:counterexample1}); thus, the notion of a Turing-universal system already presupposes the existence of a basic theory of computational complexity to constrain the encoder and decoder.

Luckily, in the continuous domain there is already a well-developed theory of computable real functions and uniform real circuits \cite{blum1998complexity, braverman2005complexity}. Thus, to ask if a continuous system $f: M \to M$ (where $M$ is continuous, e.g.~$M = [0,1]^n$) is Turing-universal, we can require for there to be low-time complexity $\mathbb{R}$--Turing machines $\mathcal{E}, \mathcal{D}$ which respectively encode bit strings from the configuration space of the Turing machine into the domain of $f$, and decode regions in the domain of $f$ to e.g.~bit strings from the configuration space of the Turing machine.

\begin{definition}[informal; see Definition \ref{def:CDS1}]
    A \textbf{computational dynamical system} (or \textbf{CDS}) is a tuple $(f, \mathcal{E}, \mathcal{D}, \tau, \machine{T})$ where:
    \begin{itemize}
        \item $f: M \to M$ is a dynamical system, with $M \subset \R^k$;
        \item $\mathcal{E}: S \to M$ is a function which can be implemented by a \text{\rm BSS$_\textsf{C}$} machine (a model of real computation; see Definition \ref{def:decoders} and Appendix \ref{App:BSSC}) that runs in time $O(t(n))$ for some function $t(n)$ of the length of the \emph{input}; 
        \item $\mathcal{D}: M \rightharpoonup S$ is a partially defined function (i.e.~a function $M \to S \sqcup \{\textnormal{\textsf{Error}}\})$ which can be implemented by a \text{\rm BSS$_\textsf{C}$} machine that runs in time $O(t(n))$, for the same function $t(n)$ of the length of the \emph{output} of $\mathcal{D}$;
        \item $\tau: M \to \mathbb{Z}_{\geq 0}$ is a function which is constant on connected components of $\mathcal{D}^{-1}(s)$; and
        \item $\machine{T}: S \to S$  is a discrete computational system, e.g.~a Turing machine (although variants can be defined e.g.~for pushdown automata).
    \end{itemize}
    This tuple is required to satisfy the condition that $\mathcal{D} \circ \mathcal{E} = \text{\rm Id}$, as well as the condition that for $s \in S$, we have
    \[ \mathcal{D} \circ f^{\tau} \circ \mathcal{D}^{-1}(s) = \machine{T}(s)\,. \]
    Here $f^\tau : M \to M$ where $f^\tau : x \mapsto f^{\tau(x)}(x)$. Thus the encoder-decoder pair along with $f$ can simulate $\machine{T}$ with a slowdown determined by $\tau$ and $t(n)$. 
\end{definition}

\begin{remark}
    An equivalent condition to the above is that $f^\tau$ takes all of $\mathcal{D}^{-1}(s)$ into $\mathcal{D}^{-1}(s)$. If $\mathcal{D}^{-1}(s) = \{\mathcal{E}(s)\}$, then this is equivalent to the condition that $\mathcal{D}\circ f^\tau \circ \mathcal{E} = \machine{T}$.  However, for general $\mathcal{D}$ for which $\mathcal{D}^{-1}(s)$ may have non-empty interior, this notion corresponds to requiring that the computation be \emph{robust}: any perturbation of the ideal input $\mathcal{E}(s)$ corresponding to $s$, which still lies in a `validity region' $\mathcal{D}^{-1}(s)$, will continue to compute the correct answer. Thus, the above definition encapsulates a model of computation which is \emph{robust to non-uniform errors}, i.e.~the amount of error allowed may depend on the input and may go to zero in certain regions of the domain of $f$. It is the non-uniformity of the allowed amount of error that enables universal computation on compact domains (see the discussion regarding robustness in Section \ref{sec:related-work}). 
\end{remark}

Essentially all previous work on computational properties of continuous dynamical systems (e.g.~\cite{moore1998finite}) can be put into this framework; our definition gives a precise notion of a `reasonable' encoding of states of $\machine{T}$ into $M$, \emph{which has thus far been without a definition in the literature}. Without such a condition on $\mathcal{D}$ and $\mathcal{E}$, the notion of simulation becomes essentially trivial (Example \ref{example:uncon1}), just as in the case of Turing machines. 

However, with bounds on $\tau(x)$ and $t(n)$ (e.g.~$\tau(x) = \mathcal{O}(|\mathcal{D}^{-1}(x)|)$ and $t(n) = O(n)$) the notion of simulation is nontrivial, and we can say that a continuous dynamical system $f$ is \emph{Turing-universal} when there \emph{exists} a CDS $(f, \mathcal{E}, \mathcal{D}, \tau,  \machine{T}_{\!\text{univ}})$ for some universal Turing machine $\machine{T}_{\!\text{univ}}$. Thus universality is an \emph{intrinsic} property of $f$; we will see that dynamical systems $f$ which in our sense are both `robust' and universal exist, even though many natural dynamical conditions on $f$ will be shown to preclude universality. 

It is natural to generalize CDSs to the setting of forced dynamical systems, as would be appropriate for studying e.g.~finite state machines, RNNs, transformers, etc.  We develop the theory of forced CDSs in~\cite{cotlerrezchikovWIP}.

\subsection{Our results}
\label{sec:our-results}

One of the benefits of our notion of a CDS is that it is straightforward to define various conditions for the decoder $\mathcal{D}$ in order to model different ways of encoding.  For instance:

\begin{definition}
    Let $(f, \mathcal{E}, \mathcal{D}, \tau, \machine{T})$ be a CDS. We say that the decoder $\mathcal{D}$, as well as the CDS, is \textbf{robust} if for every $s \in S$ (where $S$ is the configuration space of $\machine{T}$) we have that $\mathcal{D}^{-1}(s)$ is the closure of its interior, and $\mathcal{E}(s)$ lies in the interior of $\mathcal{D}^{-1}(s)$. 
\end{definition}

\noindent Clearly, the notion of a robust CDS models the idea that if a state is encoded via the encoder $\mathcal{E}$, then some amount of $C^0$-bounded noise $\eta$ can be allowed in the encoding $\mathcal{E}(s)$ of $s$ such that the states $\mathcal{D}\circ (f^\tau)^k(\mathcal{E}(s) + \eta)$ correctly simulate the dynamics of $\machine{T}$ starting from $s$. (Here $f^\tau = f$ if $\tau = 1$; see Definition \ref{def:CDS1}.) This is different from other notions of robustness in the literature, which we will discuss in Section \ref{sec:related-work} below. 

\subsubsection{Universality: existence and obstructions}

The first result of the paper, beyond setting up the definitions, shows that the resulting theory is non-vacuous:

\begin{theorem}
    There exists a robustly Turing-universal CDS $(f, \mathcal{E}, \mathcal{D}, \tau, \machine{T}_{\!\text{\rm univ}})$ with $\tau(x) = 1$, $t(n) = O(n)$, and $f$ a smooth diffeomorphism of the closed $2$-disk. 
\end{theorem}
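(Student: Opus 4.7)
The plan is to refine Moore's classical piecewise-affine construction of a two-dimensional map simulating a universal Turing machine, upgrading it to a smooth diffeomorphism of the closed 2-disk and fitting it into the CDS framework with a linear-time, robust encoder and decoder. The construction proceeds in three stages: first fix a reversible universal Turing machine $\machine{T}_{\text{univ}}$, such as Bennett's construction, so that each transition is a bijection on configurations; next construct the encoder, decoder, and a piecewise-affine map $\widehat f$ realizing one step of $\machine{T}_{\text{univ}}$ in coordinates; and finally extend $\widehat f$ from finitely many rectangles in the unit square to a smooth diffeomorphism of $D^2$.

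For the encoder and decoder, write a configuration as $s = (L, q, a, R)$ with left tape $L$, state $q$, head symbol $a$, and right tape $R$, and fix a base $k$ large enough to encode both $\Sigma$ and $Q \times \Sigma$ as single digits. Define $\mathcal{E}(s) = (x_L, x_R) \in (0,1)^2$ by base-$k$ expansions spelling out $L$ outward from the head in $x_L$ and $(q,a), a_1, a_2, \ldots$ in $x_R$, where only the middle half of each subinterval $[j/k,(j+1)/k]$ is allowed to encode the digit value $j$ (the rest serves as a robustness buffer). The validity rectangle $\mathcal{D}^{-1}(s)$ is the product of the resulting digit-constraint subintervals; by construction $\mathcal{E}(s)$ lies at its center, all validity rectangles have pairwise disjoint closures, and both $\mathcal{E}$ and $\mathcal{D}$ can be realized by a BSS$_\textsf{C}$ machine in $O(|s|)$ time (the decoder reads digits one by one, declaring \textnormal{\textsf{Error}} if any digit falls into a buffer).

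In these coordinates one step of $\machine{T}_{\text{univ}}$ becomes an affine map: the transition $(q,a)\mapsto(q',a',\mathrm{dir})$ shifts base-$k$ digits of $x_L,x_R$ by one position in the appropriate direction and overwrites the digits adjacent to the head. Reversibility ensures that the transformation applied to a point depends only on finitely many leading digits of $(x_L,x_R)$, so one obtains a piecewise-affine bijection $\widehat f:\bigsqcup_\alpha R_\alpha\to\bigsqcup_\alpha R_\alpha'$ between two finite families of closed rectangles in $[0,1]^2$ with pairwise-disjoint closures (thanks to the buffers), each $\widehat f|_{R_\alpha}$ an orientation-preserving affine similarity onto $R_\alpha'$. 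Crucially, every validity rectangle $\mathcal{D}^{-1}(s)$ for any $s$ is contained in exactly one $R_\alpha$, and $\widehat f$ performs precisely the digit shift that converts the encoding of $s$ into that of $\machine{T}_{\text{univ}}(s)$, so $\widehat f(\mathcal{D}^{-1}(s))\subseteq \mathcal{D}^{-1}(\machine{T}_{\text{univ}}(s))$.

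The last and most delicate step is extending $\widehat f$ to a smooth diffeomorphism $f:D^2\to D^2$ with $f|_{R_\alpha}=\widehat f|_{R_\alpha}$ for each $\alpha$. After embedding $[0,1]^2$ into $\mathring{D}^2$, the rectangles $\{R_\alpha\}$ and $\{R_\alpha'\}$ become two finite ordered collections of smoothly embedded 2-disks, and the task is to produce an ambient diffeomorphism of $D^2$ carrying one collection onto the other with prescribed affine boundary data. This is the main obstacle, but it is solvable via a finite, two-dimensional application of the Schoenflies and isotopy-extension theorems: any two orientation-preserving smooth embeddings of the same ordered disjoint union of 2-disks into $\mathring{D}^2$ are ambient isotopic (any permutation can be absorbed by relabeling indices), and the isotopy extension theorem promotes such an isotopy to a diffeomorphism of $D^2$; the prescribed boundary identifications $\partial R_\alpha\to\partial R_\alpha'$ extend across a collar because the orientation-preserving diffeomorphism group of a circle is connected, and mollification in the collar preserves smoothness at the boundary of $D^2$. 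With such $f$ in hand, the identity $\mathcal{D}\circ f\circ \mathcal{D}^{-1}(s)=\machine{T}_{\text{univ}}(s)$ holds for every $s$, and taking $\tau\equiv 1$ and $t(n)=O(n)$ completes the construction.
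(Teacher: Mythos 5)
Your construction follows the same blueprint as the paper's proof of Theorem~\ref{thm:TuringUniversalCDS1}---encode configurations into a thickened Cantor-like family of rectangles, realize Moore's generalized shift on finitely many windows, then extend to the disk---but departs genuinely at the final step. Where the paper builds a smooth nonlinear intermediate map from the interpolants $q_{a,b}, r_{a,b}$ and invokes Whitney extension (which yields a smooth \emph{map}, not a diffeomorphism, with the diffeomorphism case only gestured at in a remark), you start from a reversible machine and appeal to Schoenflies plus isotopy extension. That is the right tool for producing a diffeomorphism directly, and is a genuine improvement in rigor over the paper's remark; the corner-rounding you elide (the $R_\alpha$ have corners and are not smooth disks; they must first be enlarged to sets with smooth boundary, as the paper does with the $\widetilde{B}_{z_m}$) is needed but routine, as is the correction step of composing your ambient diffeomorphism with ones supported near each $R_\alpha'$ to force the restriction to agree with $\widehat{f}$.

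However, there is a genuine gap in the middle step. You assert that each $\widehat{f}|_{R_\alpha}$ is a fixed ``digit-shift'' affine similarity and conclude $\widehat{f}(\mathcal{D}^{-1}(s))\subseteq\mathcal{D}^{-1}(\machine{T}_{\text{univ}}(s))$. In your middle-half encoding, each constrained digit multiplies the validity width by $1/(2k)$, and a one-digit shift on $x_R$ has slope $2k$; these cancel precisely when the decoded right-tape length drops by one per step. But when the head is already at the right end of the initialized tape (window digit $a_1=\sqcup$) and moves right, the decoded right-tape length stays $0$, so $C_{\machine{T}_{\text{univ}}(s)}$ is as thin as $C_s$, while the slope-$2k$ shift makes $\widehat{f}(C_s)$ wider by a factor of $2k$, and the required containment fails. (A symmetric failure occurs on $x_L$ when the head moves left off the tape.) The paper handles exactly this via the nonlinear $r_{a,b}$, which interpolate from slope-$3$ dilations down to the identity near the $\vec{s}=\vec{0}$ edge. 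A fix within your piecewise-affine framework is available---read enough leading digits into the window $R_\alpha$ to detect an empty half-tape, and on those windows use a translation (slope $1$) rather than a dilation---but you do not state it, and it is incompatible with your claim that each piece is a ``similarity'': the slopes along $x_L$ and $x_R$ are reciprocal to each other, and the expansion factor must itself depend on the window, being $1$ exactly on the boundary windows. Similar care (the paper's formulae for $|\vec{s}|_x,\,|\cev{s}|_y$, and the two-bits-per-symbol device ensuring at most two consecutive zeros) is what makes the decoder well-defined and the $C_s$ pairwise disjoint; your sketch leaves these entirely implicit.
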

\noindent This construction is modeled off that of Moore \cite{moore1991generalized}, using an idea similar to that of \cite{cardona2021constructing}. The construction of \cite{moore1991generalized} provides a map of a square that is piecewise linear, as well as an associated smooth map that, while having correct dynamics for a decoder $\mathcal{D}$ with $\mathcal{D}^{-1}(s) = \{\mathcal{E}(s)\}$ for all $s \in S$, cannot in any evident way be upgraded to a robust decoder; the construction of \cite{cardona2021constructing} shows how to make $f$ a smooth area-preserving map, but suffers from the same problem as the construction of \cite{moore1991generalized}. In fact, area-preserving maps can never furnish a robustly Turing-universal CDS:

\begin{theorem}[see Corollary \ref{corr:measurepreserve}] \label{thm:measure-preserving-not-universal}
    Let $f:M \to M$ be such that $M$ is a codimension $0$ submanifold of $\R^n$, and suppose that there is an an $f$-invariant Borel measure $\mu$ on $M$ which is nonzero on all nonempty open sets and such that $\mu(M) < \infty$. Then $f$ cannot be extended to a robustly Turing-universal CDS $(f, \mathcal{E}, \mathcal{D}, \tau, \machine{T}_{\!\text{\rm univ}})$ for any $\tau$ or $t$.

\end{theorem}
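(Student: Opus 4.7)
Proof proposal: My plan is to exploit three facts for a contradiction: (i)~for a universal Turing machine one can choose an input $s_0$ whose forward orbit $s_k := \machine{T}_{\!\text{univ}}^k(s_0)$ consists of infinitely many pairwise distinct configurations (take $s_0$ to describe a never-halting machine that writes to arbitrarily much tape, e.g.~a counter); (ii)~by robustness, each validity region $B_k := \mathcal{D}^{-1}(s_k)$ has nonempty interior in $M$, hence $\mu(B_k) > 0$ by the assumption that $\mu$ is positive on nonempty open sets; (iii)~by $f$-invariance of $\mu$, all the $\mu(B_k)$ are bounded below by a common positive constant, which when combined with disjointness of the $B_k$ and $\mu(M) < \infty$ forces only finitely many of them to exist, contradicting~(i).

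The crux is~(iii). I would fix a nonempty connected open set $V_0 \subseteq \text{int}(B_0)$; one exists because $\text{int}(B_0)$ is a nonempty open subset of $M$. Since $V_0$ is connected, it lies inside a single connected component of $B_0$, on which $\tau$ is constant by hypothesis with some value $n_0$; therefore $f^{n_0}(V_0) \subseteq B_1$. Because $f^{n_0}$ is continuous, $f^{n_0}(V_0)$ is again connected and thus lies inside a single connected component of $B_1$, on which $\tau$ is constant with some value $n_1$, and so on. Iterating yields integers $n_0, n_1, \ldots$ with $f^{N_k}(V_0) \subseteq B_k$ for $N_k := n_0 + n_1 + \cdots + n_{k-1}$. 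Equivalently $V_0 \subseteq f^{-N_k}(B_k)$, and $f$-invariance of $\mu$ applied $N_k$ times gives
\[ \mu(V_0) \leq \mu\bigl(f^{-N_k}(B_k)\bigr) = \mu(B_k) \qquad \text{for every } k \geq 0. \]
Since the $B_k$ are pairwise disjoint subsets of $M$ and infinite in number by~(i),
\[ \infty = \sum_{k \geq 0} \mu(V_0) \leq \sum_{k \geq 0} \mu(B_k) \leq \mu(M) < \infty, \]
a contradiction.

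The main technical subtlety I anticipate is the need to propagate an entire positive-measure \emph{connected} region through the dynamics rather than a single base point: this is what lets me use the hypothesis that $\tau$ is constant on connected components of each $\mathcal{D}^{-1}(s)$, thereby converting the step-by-step CDS simulation into the honest $f$-orbit inclusion $V_0 \subseteq f^{-N_k}(B_k)$, which is what is amenable to measure-theoretic manipulation. A naive variant applying Poincar\'e recurrence to a single point $x \in B_0$ runs into trouble because the step sizes $\tau$ may vary along the $f$-orbit of $x$, and one would have to rule out the possibility that recurrence to $B_0$ happens only at ``intermediate'' times which do not correspond to completed simulation steps; the connected-set propagation sidesteps this issue entirely.
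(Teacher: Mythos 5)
Your proof is correct and rests on the same fundamental mechanism as the paper's: produce infinitely many pairwise disjoint subsets of $M$, each of $\mu$-measure bounded below by a common positive constant, contradicting $\mu(M) < \infty$. Two differences are worth flagging. First, the paper reaches the theorem via a sub-machine reduction: Theorem~\ref{thm:measurepreserve} shows $f$ cannot robustly simulate the unary-successor machine $\machine{Plus}$, and Corollary~\ref{corr:measurepreserve} follows via the fundamental lemma of sub-machines (Lemma~\ref{lemm:fund1}); you skip this factorization by directly choosing a non-recurrent orbit of $\machine{T}_{\!\text{univ}}$, which is a mild shortcut. Second, and more substantively, the paper tracks the forward images $\widetilde{C}_{s_n} := (f^{\tau})^n(C_{s_0})$ and asserts the equality $\mu(\widetilde{C}_{s_n}) = \mu(C_{s_0})$, while you fix a connected open set $V_0 \subseteq \text{int}(C_{s_0})$ and pull back, obtaining $V_0 \subseteq f^{-N_k}(B_k)$ and hence $\mu(V_0) \leq \mu(f^{-N_k}(B_k)) = \mu(B_k)$. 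Your bookkeeping is the cleaner one: because $\tau$ is only guaranteed constant on each connected component of $\mathcal{D}^{-1}(s)$, the map $(f^\tau)^n$ acts on all of $C_{s_0}$ as a patchwork of different iterates of $f$ on its different components, so the paper's stated \emph{equality} of measures is not literally justified without further comment when $C_{s_0}$ is disconnected or $f$ is non-invertible --- only a one-sided lower bound obtained by passing to a single component with nonempty interior is needed, which is exactly what you use. Your closing observation that one must propagate a connected region rather than a point, precisely so that $\tau$ stays locked to a single value per simulation step, is the right technical point and is the reason the constant-on-components hypothesis on $\tau$ appears in the definitions.
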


\noindent In particular, the examples of \cite{cardona2021constructing} and subsequent papers cannot be made into robustly Turing-universal CDSs in our sense. As a consequence of a general result about measure-preserving dynamics (Theorem \ref{thm:integrable-systems}), we prove an analog of Theorem~\ref{thm:measure-preserving-not-universal} for ``integrable systems'', e.g.~$f$ is a linear translation on a torus, or a family of these depending on an additional parameter, such as if $f$ is the Hamiltonian dynamics for a Hamiltonian $H: M \to \R$ where $M$ is a ``computable manifold'' (see Remark \ref{rk:computable-manifold}) with a symplectic form $\omega$ such that $H$ is part of a system of pairwise Poisson-commuting Hamiltonians $(H_1, \ldots, H_n)$ with $H_1 = H$. 

Integrable systems are among the simplest possible model systems in physics, as their behavior is completely and efficiently predictable. On the other extreme, we have systems which are `completely chaotic': their behavior is sensitive to their initial conditions in a strong sense. In differentiable dynamics, these are axiomatized under the guise of \emph{uniformly hyperbolic}, or \emph{Anosov}, diffeomorphisms, and form fundamental examples in differentiable dynamical systems theory. Another natural class of examples in dynamical systems theory are time-$1$-gradient flows of generic Morse functions. Their common generalization is the class of \emph{Axiom A} systems, which we review in Section \ref{sec:axiom-a}; a fundamental theorem is that the \emph{structurally stable} systems, i.e. those diffeomorphisms $f$ such that any nearby diffeomorphism $\tilde{f}$ has the \emph{same} dynamics (from a topological perspective) as $f$, are exactly the Axiom A systems satisfying a transversality assumption (see Theorem \ref{thm:axiom-a-is-structurally-stable}). 

It turns out that just as the least chaotic, i.e.~integrable, systems cannot be robustly Turing universal, the `most chaotic' systems, and more generally the structurally stable systems, likewise cannot be robustly Turing universal:

\begin{theorem}
\label{thm:axiom-a-not-universal}
    Let $M$ be a manifold, and let $f: M \to M$ be an Axiom A diffeomorphism. Then $f$ cannot be extended to a robustly Turing-universal CDS $(f, \mathcal{E}, \mathcal{D}, \tau, \machine{T}_{\!\text{\rm univ}})$ for any $\tau$ or $t$.
\end{theorem}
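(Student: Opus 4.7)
The plan is to prove the contrapositive via decidability: I will argue that if $f$ is Axiom A, then for any robust CDS $(f, \mathcal{E}, \mathcal{D}, \tau, \machine{T})$, the halting problem of the simulated $\machine{T}$ is decidable. Since a universal Turing machine has undecidable halting, no robust Turing-universal CDS can have $f$ Axiom A. The two main classical tools are Smale's spectral decomposition and the existence of finite Markov partitions for Axiom A basic sets.

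The first step is to apply spectral decomposition: $\Omega(f) = \Omega_1 \sqcup \cdots \sqcup \Omega_k$, a finite union of compact, transitive, hyperbolic basic sets, each admitting a Markov partition giving a finite-to-one continuous semi-conjugacy $\pi_i \colon \Sigma_{A_i} \to \Omega_i$ with a subshift of finite type on a finite alphabet. The no-cycle property yields a partial order on basic sets and a filtration, so that every $x \in M$ has $\omega(x) \subset \Omega_{i(x)}$ for a unique basic set, and the forward orbit of $x$ eventually enters any prescribed neighborhood of $\Omega_{i(x)}$. The finite combinatorial data (number of basic sets, Markov partitions, transition matrices) can be regarded as advice depending only on $f$, not on the TM input.

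The second step is the algorithm to decide halting. Fix a halt state $s_{\mathrm{halt}}$ of $\machine{T}$ and let $U_{\mathrm{halt}} = \mathrm{int}(\mathcal{D}^{-1}(s_{\mathrm{halt}}))$, which is nonempty and open by robustness. Given input $s_0$, we iterate $f$ starting from $\mathcal{E}(s_0)$ in the $\mathrm{BSS}_{\textsf{C}}$ model and simultaneously: (a) check at each step whether the orbit has entered $U_{\mathrm{halt}}$ (if so, halt); and (b) use the filtration to detect when the orbit has entered a neighborhood of some basic set $\Omega_i$ small enough that the Markov partition of $\Omega_i$ controls its symbolic dynamics. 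Once such an $\Omega_i$ is identified, the forward orbit's future is governed by the shift on $\Sigma_{A_i}$. By robustness, both the image of the orbit under $\pi_i^{-1}$ and the symbolic preimage of $U_{\mathrm{halt}} \cap \Omega_i$ (if nonempty) are unions of cylinder sets, since each $U_s$ is open and has nonempty interior pulling back to an open set in $\Sigma_{A_i}$. Whether the one can reach the other under the shift reduces to a reachability question in the finite transition graph of $A_i$, which is decidable. If $U_{\mathrm{halt}}$ does not meet $\Omega_i$ at all (so its preimage is empty), then no further halt can occur and we output ``does not halt''. Combining the cases across the finitely many $\Omega_i$ yields an algorithm for the halting problem of $\machine{T}$, contradicting universality.

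The hard part, I expect, is making Step~(b) rigorous: one must specify algorithmically when the orbit has entered a ``small enough'' neighborhood of $\Omega_i$ that the Markov coarse-graining is valid, and one must handle the finite-to-one (rather than bijective) nature of $\pi_i$ so that the cylinder sets associated to $U_s$ and to $U_{\mathrm{halt}}$ transfer cleanly between $M$ and $\Sigma_{A_i}$. A secondary subtlety is that robustness needs to be used in a key way: without it, one could only assert $\mathcal{E}(s) \in \mathcal{D}^{-1}(s)$ and would lose the open-set (and hence cylinder-set) structure needed for the reduction. An alternative, more direct approach would be to exploit expansiveness of $f$ near $\Omega(f)$: two distinct points in $U_{s_0}$ must, when both orbits are near $\Omega_i$, diverge to distance at least some uniform $\varepsilon_0 > 0$; since both orbits still lie in the same $U_{s_n}$ at TM-step times, this forces infinitely many of the disjoint $U_{s_n}$ to have diameter bounded below, contradicting finiteness of the volume of $M$ --- but one has to verify the divergence occurs at a TM-step time rather than a non-step time, which is where the symbolic-dynamics framing above feels cleaner.
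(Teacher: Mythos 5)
Your proposal takes a genuinely different route from the paper, but both of its variants have gaps that I don't see how to close.

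Your main approach argues that the halting problem for the simulated $\machine{T}$ is decidable by exhibiting an algorithm that iterates $f$ and consults Markov partitions. The basic obstacle is that $f$ is not assumed to be computable anywhere in the definition of a CDS: the theory is deliberately formulated so that the results hold for uncomputable $f$, and the paper makes a point of this in the discussion of related work. You therefore cannot ``iterate $f$ starting from $\mathcal{E}(s_0)$ in the $\mathrm{BSS}_{\textsf{C}}$ model''; there is no machine that computes $f$. Beyond that, the Markov partition approach has two further problems. First, boundaries of Markov rectangles for Axiom A basic sets are in general fractal (the paper notes this explicitly in the symbolic-dynamics discussion), so the rectangles cannot serve as computable advice. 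Second, and more fundamentally, the semiconjugacy $\pi_i \colon \Sigma_{A_i} \to \Omega_i$ is defined only on $\Omega_i$, while the orbit of $\mathcal{E}(s_0)$ will generically never lie exactly on $\Omega_i$; it is only asymptotic to it. So the phrase ``the image of the orbit under $\pi_i^{-1}$'' does not refer to anything, and the reduction to reachability in a finite transition graph never gets off the ground. Even for points honestly in $\Omega_i$, an open set in $\Omega_i$ is a union of cylinder sets of unbounded length, so the reachability problem is not obviously finite.

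Your alternative sketch also does not work. Expansiveness holds for $f$ restricted to $\Omega_f$, not globally: two points in $U_{s_0}$ can both lie in the basin of the same attracting periodic orbit and converge rather than diverge (think of the time-one map of a Morse gradient flow, which is Axiom A). Moreover, the conclusion ``contradicting finiteness of the volume of $M$'' cannot be drawn: $M$ is not assumed compact (the paper's statement of the theorem has an explicit remark that only $\Omega_f$ is compact), and even for compact $M$, infinitely many pairwise disjoint open sets each of diameter $\geq \varepsilon_0$ is consistent with finite total volume, since such sets can be thin.

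The paper's proof is purely topological and sidesteps all of these issues. It first observes (Lemma~\ref{lemma:UTM-has-periodic-points}) that a universal Turing machine has arbitrarily many distinct periodic configurations, and robustness turns each such configuration $s$ into a closed set $C_s$ with nonempty interior and $f^{L}(C_s) \subset C_s$ for some $L$. Then it uses the spectral decomposition together with the stable manifold theorem (Proposition~\ref{prop:stable-manifold-theorem}, in the parametrized-family form) to show that each such $C_s$ must contain an entire basic set $\Omega_{i,j}$: one picks a point of $C_s$ asymptotic to a point $z$ of some basic set, perturbs along the stable lamination to land on the stable manifold of a point with dense orbit, and uses closedness of $C_s$ to conclude $\Omega_{i,j} \subset C_s$. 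Since the $C_s$ are pairwise disjoint and there are only finitely many basic sets, the pigeonhole principle gives the contradiction. No volume, no computability of $f$, and no Markov partitions are needed.
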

\begin{remark}
    The manifold $M$ does not need to be compact; however, by definition (Definition \ref{def:axiom-A-diffeo}), the nonwandering set of $f$ is compact. 
\end{remark}

\noindent The argument used to prove Theorem \ref{thm:axiom-a-not-universal} uses deep structural results about the dynamics of $f$ due to Smale \cite{smale1967differentiable} and others \cite{hirsch-pugh}. Moreover, the arguments proving Theorems \ref{thm:measure-preserving-not-universal} and \ref{thm:axiom-a-not-universal} do not use all of the structure of the universal Turing machine $\machine{T}_{\!\text{\rm univ}}$. Indeed, we define a notion of a \emph{sub-machine} of $\machine{T}_{\!\text{\rm univ}}$ such that if $f$ simulates $\machine{T}_{\!\text{\rm univ}}$ and $\machine{N}$ is a sub-machine of $\machine{T}_{\!\text{\rm univ}}$, then $f$ also simulates $\machine{N}$ (possibly with modified $t$ and $\tau$, i.e.~with a \emph{slowdown}). 

\begin{theorem}[see Theorem~\ref{thm:measurepreserve}]
    In the setting of Theorem \ref{thm:measure-preserving-not-universal}, in fact $f$ cannot be extended to a robust CDS $(f, \mathcal{E}, \mathcal{D}, \tau, \machine{T})$ if $\machine{T}$ contains as a sub-machine the machine $\machine{Plus}: \{1\}^* \to \{1\}^*$ defined by $\machine{Plus}([n]_1) = [n+1]_1$, where $[n]_1$ denotes $n$ expressed in unary.  Similarly, in the setting of Theorem \ref{thm:axiom-a-not-universal}, $f$ cannot be extended to a robust CDS $(f, \mathcal{E}, \mathcal{D}, \tau, \machine{T})$ if $\machine{T}$ contains as sub-machines $\machine{T}_n$ for all $n > 0$, where $\machine{T}_n:S \to S$ is simply the identity map and $|S|=n$.
\end{theorem}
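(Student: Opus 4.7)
The plan is to apply the sub-machine slowdown property advertised earlier in the excerpt to reduce to hypothetical robust CDSs $(f,\mathcal{E},\mathcal{D},\tau,\machine{Plus})$ in the measure-preserving setting and $(f,\mathcal{E},\mathcal{D},\tau,\machine{T}_n)$, for every $n$, in the Axiom A setting. In each case robustness makes the preimages $A_\alpha:=\mathcal{D}^{-1}(s_\alpha)$ pairwise disjoint closed sets with nonempty interior. The key observation driving both proofs is that $\tau$ is constant on each connected component $C$ of $A_\alpha$ with value $k_C$, so that $f^\tau|_C=f^{k_C}|_C$ is a continuous injection sending the connected set $C$ into a single connected component of $A_{\machine{T}(s_\alpha)}$.

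For the measure-preserving case I index states by $n\geq 0$ and build components $C_n\subseteq A_n$ by induction. Let $C_0$ be any component of $A_0$ meeting $\mathrm{int}(A_0)$; then $\mu(C_0)>0$ because $\mu$ is nonzero on nonempty open sets. Given $C_n$ with constant $\tau$-value $k_n$, the continuous image $f^{k_n}(C_n)$ is connected and lies in $A_{n+1}$ by the simulation condition, so it sits inside a unique component $C_{n+1}\subseteq A_{n+1}$; since $f$ is a measure-preserving diffeomorphism, $\mu(C_{n+1})\geq\mu(f^{k_n}(C_n))=\mu(C_n)$. By induction $\mu(A_n)\geq\mu(C_n)\geq\mu(C_0)>0$ for every $n$, but pairwise disjointness of the $A_n$ together with $\mu(M)<\infty$ gives $\sum_n\mu(A_n)\leq\mu(M)<\infty$, forcing $\mu(A_n)\to 0$, a contradiction.

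For the Axiom A case the sets $A_1,\ldots,A_n$ are $f^\tau$-invariant because $\machine{T}_n$ is the identity. Smale's Spectral Decomposition gives $\Omega(f)=\bigsqcup_{j=1}^{K}\Omega_j$ with $K$ fixed, and among the $\Omega_j$ the attractors $\Omega^a_1,\ldots,\Omega^a_{K_a}$ have open basins whose union is open and dense (the stable sets of non-attracting basic sets have empty interior by hyperbolicity). For each $i$ I pick a small compact connected $U_i\subseteq\mathrm{int}(A_i)$ meeting the basin of some attractor $\Omega^a_{j(i)}$, chosen transverse to the local stable foliation. The same component-chasing as above yields $(f^\tau)^m(U_i)=f^{T_m}(U_i)\subseteq A_i$ with $T_m\to\infty$; Smale's inclination ($\lambda$-)lemma together with the density of a single unstable leaf in a hyperbolic Axiom A attractor gives $f^{T_m}(U_i)\to\Omega^a_{j(i)}$ in the Hausdorff metric. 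Since $A_i$ is closed, $\Omega^a_{j(i)}\subseteq A_i$; disjointness of the $A_i$ forces the attractors $\Omega^a_{j(i)}$ to be pairwise distinct, so $n\leq K_a$, contradicting the fact that $n$ may be taken arbitrarily large.

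The subtlest step is the Hausdorff accumulation in the Axiom A argument: I must justify that transversality of $U_i$ to the stable foliation (a generic open condition) plus the $\lambda$-lemma suffices to witness the full attractor in the limit, which reduces to the standard density of unstable leaves in hyperbolic Axiom A attractors. A secondary but essential point is that $\tau\geq 1$ on the relevant sub-machine states, so that $T_m\to\infty$; I expect this to be read off from the precise definition of sub-machine simulation, since otherwise the trivial choice $\tau\equiv 0$ would simulate any identity machine and the Axiom A half of the theorem would collapse.
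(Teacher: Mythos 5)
Your measure-preserving argument is essentially the paper's (Theorem~\ref{thm:measurepreserve}): disjoint iterated images of a positive-measure region drive the sum of measures past $\mu(M)<\infty$. Tracking a chain of connected components with the inequality $\mu(C_{n+1})\ge\mu(f^{k_n}(C_n))\ge\mu(C_n)$ is slightly more careful than the paper's stated equality $\mu(C_{s_0})=\mu(\widetilde{C}_{s_n})$ (which for non-invertible $f$ is only an inequality), but the two arguments are the same in substance.

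Your Axiom A argument takes a different route from the paper's and has a gap. You restrict attention to attractors, resting on the claim that the attractor basins are open and dense, with the justification ``stable sets of non-attracting basic sets have empty interior by hyperbolicity.'' Even granting that a non-attracting basic set has a stable set of empty interior (this is a real theorem, not an immediate consequence of hyperbolicity), a finite union of non-closed sets each with empty interior need not have dense complement, so density of attractor basins does not follow; the standard density result invokes the no-cycle condition, which the hypothesis ``Axiom A'' alone does not grant. If the interior of some $A_i$ is disjoint from every attractor basin, your set $U_i$ does not exist and the argument halts. The paper never restricts to attractors: it takes $x$ in the interior of $A_i$, lying in $W^s(\Omega_j)$ for \emph{whatever} basic set $\Omega_j$, and uses the Hirsch--Pugh continuous-family stable manifold theorem (Proposition~\ref{prop:stable-manifold-theorem}) to perturb within $A_i$ to a point $\tilde{x}$ whose forward orbit shadows a dense orbit in $\Omega_j$; forward-invariance and closedness of $A_i$ then give $\Omega_j\subseteq A_i$, and one pigeonholes against the finite number of basic sets of the spectral decomposition. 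That route sidesteps the no-cycle assumption, sidesteps the need for an inclination lemma for general basic sets (the classical $\lambda$-lemma is a statement about hyperbolic periodic points), and sidesteps the Hausdorff-convergence claim; each of these is a nontrivial burden in your version.

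Your flag that one must force $\tau\ge 1$ along the relevant orbits so that $T_m\to\infty$ (otherwise $\tau\equiv 0$ trivially simulates the identity machines) is a legitimate concern, and it applies equally to the paper's own proof, which replaces $f$ by $f^{(\prod_i n_{C_i})r_1\cdots r_k}$ without ruling out $n_{C_i}=0$. So that is not a deficiency specific to your argument, but it is worth noting.
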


\noindent The method of proof of Theorem \ref{thm:measure-preserving-not-universal} from earlier is to note that if $f$ simulates the the sub-machine $\machine{Plus}$, then there must be an infinite collection of disjoint regions $C_i$ such that $f^{r_i}(C_i) \subset C_{i+1}$ and such that the sum of the measures of the $C_i$ is finite; thus the measures of the $C_i$ must decrease to zero, which contradicts the requirement that $f$ is measure-preserving. 

In contrast, the proof of Theorem~\ref{thm:axiom-a-not-universal}, namely a proof by contradiction, uses the sub-machines $\machine{T}_n$ to produce an arbitrarily large finite disjoint collection of closed subsets $C_i$ with nonempty interiors such that $f^n(C_i) \subset C_i$ for each $i$. Then, one invokes the \emph{spectral decomposition} of $f$, decomposing its nonwandering set into finitely many basic sets, and then uses stable manifold theory to force each $C_i$ to contain at least one of the basic sets. Essentially, points in $C_i$ must be attracted to one of the basic sets, so they lie on the stable manifold of one of the points on the basic sets; perturbing this point a small amount and following the stable manifold back to $C_i$, one finds a new point asymptotic to a point on a basic set with a dense periodic orbit. Thus since $C_i$ is closed, it must contain that entire basic set; the argument concludes with a contradiction because there are more sets $C_i$ than there are basic sets, and yet the $C_i$ are pairwise disjoint.  

These results suggest future research directions for exactly characterizing the `computational complexity classes' that can be assigned to various types of dynamical systems. There is an expectation in the differentiable dynamics literature that as one allows for \emph{non-uniform hyperbolicity}, and more generally for mixtures of integrable and chaotic behavior (such as that arising in Henon system \cite{pesin2010open}), then more complex types of dynamical behavior can occur \cite{pomeau1980intermittent}.  It would be desirable to identify precise differences between the types of computations that can be implemented by such systems. We venture the following conjecture:

\begin{conjecture}
\label{conj:generic-maps-are-not-universal}
A $C^\infty$ generic $f: M \to M$ cannot be extended to a robustly Turing-universal CDS. 
\end{conjecture}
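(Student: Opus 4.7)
The plan is a Baire category argument over a countable enumeration of the discrete data of a robust CDS. The universal Turing machine $\machine{T}_{\!\text{univ}}$, the program codes of the BSS$_\textsf{C}$ machines implementing $\mathcal{E}$, $\mathcal{D}$, and $\tau$, and the polynomial time bound $t(n)$ all range over a countable set; only a finite number of real constants per program can vary continuously. I would enumerate the possible discrete types as $\{\sigma_k\}_{k \in \mathbb{N}}$ and let $V_{\sigma_k} \subset C^\infty(M,M)$ denote the set of $f$ admitting a robust CDS structure of type $\sigma_k$ for \emph{some} choice of real constants. Since the trap condition $f^\tau(\mathcal{D}^{-1}(s)) \subset \mathcal{D}^{-1}(\machine{T}_{\!\text{univ}}(s))$ is closed in $f$ at fixed constants, $V_{\sigma_k}$ is the projection of a closed set along a finite-dimensional fiber. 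If each $V_{\sigma_k}$ can be shown to be nowhere dense in $C^\infty(M,M)$, then by Baire category the $C^\infty$-generic $f$ avoids $\bigcup_k V_{\sigma_k}$, proving the conjecture.

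To establish that $V_{\sigma_k}$ is nowhere dense, I would exploit the combinatorial richness of $\machine{T}_{\!\text{univ}}$: it admits infinitely many periodic configurations $\{s_i\}_{i \in \mathbb{N}}$ of unboundedly large periods $k_i$, and robustness forces the pairwise-disjoint closed sets $\mathcal{D}^{-1}(s_i)$ to each have nonempty interior and to satisfy the trap condition $f^{k_i\tau}(\mathcal{D}^{-1}(s_i)) \subset \mathcal{D}^{-1}(s_i)$. By a Lefschetz or fixed-point-index argument on each connected component, each trap region contains at least one periodic point of $f$. Applying the $C^\infty$ Kupka--Smale theorem, on a $C^\infty$-residual subset of any prescribed open set in $C^\infty(M,M)$ all such periodic points are hyperbolic and their invariant manifolds meet transversally. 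The hope is then to argue that the trap condition together with the nonempty interior requirement forces each such periodic point to be a hyperbolic attractor whose basin contains all of $\mathcal{D}^{-1}(s_i)$, and then to contradict the existence of infinitely many such attractors with pairwise-disjoint positive-volume basins on compact $M$ via a dissipation or volume-counting argument reminiscent of the Axiom A proof in Theorem~\ref{thm:axiom-a-not-universal}.

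The principal obstacle, and the reason this remains a conjecture rather than a theorem, is the \emph{Newhouse phenomenon}: there exist $C^r$-open sets of diffeomorphisms (for finite $r \geq 2$ in dimension $2$, and conjecturally for $r = \infty$ in higher dimension) inside which generic maps have infinitely many coexisting hyperbolic sinks with arbitrarily small basins. Thus the naive counting contradiction fails, and a real proof must either exclude Newhouse-type accumulations from the $C^\infty$-generic class or use the structure of the decoder---its semialgebraic boundary, its BSS$_\textsf{C}$ complexity bound, and the condition $\mathcal{E}(s_i) \in \mathrm{int}(\mathcal{D}^{-1}(s_i))$---to derive a uniform lower bound on $\mathrm{vol}(\mathcal{D}^{-1}(s_i))$ in terms of $|s_i|$, incompatible with Newhouse sinks. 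Developing these uniform volume estimates, and then constructing $C^\infty$ perturbations that simultaneously destroy arbitrarily large finite subcollections of trap conditions---noting that the $C^\infty$ analogs of Pugh's closing lemma and Hayashi's connecting lemma are not presently available---is where I expect the main technical work to lie.
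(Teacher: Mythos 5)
This statement is a conjecture, not a theorem: the paper does not prove it and explicitly leaves it open, noting only that a partial version can be established under strong extra hypotheses (Theorem~\ref{thm:generic-diffeo-not-universal-under-hierachical-shrinking}: the decoder is (hierarchically) shrinking and the slowdown function is constant). Your proposal is also not a proof---you acknowledge this yourself---so the question is whether your outline matches the paper's partial argument and whether the gaps you identify are the right ones. Broadly, yes on direction (Kupka--Smale plus a periodic-point count) and yes on the obstructions (Newhouse phenomenon, absence of a $C^\infty$ closing/connecting lemma), which the paper flags in kind by pointing to the Palis conjectures and Newhouse's non-density result. But there is a concrete flaw in the step by which you manufacture periodic points, and the mechanism of the paper's contradiction is different from yours.

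The flaw: you invoke a Lefschetz or fixed-point-index argument to conclude that a trap region $C_i$ with $f^{k_i\tau}(C_i)\subset C_i$ contains a periodic point. That does not follow. A compact set mapped into itself by a smooth map need not contain any periodic orbit---consider a closed annular trapping region whose maximal invariant set is an invariant circle with irrational rotation number. The paper's partial result sidesteps exactly this gap by imposing the shrinking condition: one looks at longer and longer periodic configurations whose decoder sets $C'_{s}$ are nested and whose diameters shrink to zero, so the Hausdorff limit is a single point, which is then forced to be periodic by continuity of $f$. Without some such shrinking hypothesis, the existence of a periodic point in each trap region is precisely the sort of thing that requires a closing-lemma input, which, as you correctly note, is unavailable in $C^\infty$.

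The contradiction mechanism also differs. Your plan is a volume/dissipation count: each $\mathcal{D}^{-1}(s_i)$ has positive volume, the basins are disjoint, and $M$ is compact, so only finitely many such sinks can exist. This is the style of argument used for the measure-preserving and Axiom A cases, and it is indeed blocked by Newhouse: generically there can be infinitely many coexisting sinks with arbitrarily small basins, so no uniform lower bound on $\mathrm{vol}(\mathcal{D}^{-1}(s_i))$ is available from genericity alone. The paper's partial result instead uses a pure cardinality argument: the (hierarchically) shrinking decoder produces an \emph{uncountable} family of distinct periodic points of bounded period (indexed by infinite tape strings), whereas Kupka--Smale plus compactness forces periodic points of any fixed period to be finite in number, hence countable overall. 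This is a cleaner contradiction that does not need any volume estimate, but it leans heavily on the shrinking hypothesis and the constant-slowdown hypothesis (the paper explicitly remarks that non-constant $\tau$ breaks the period bound). Your suggested remedy---derive a uniform volume lower bound for $\mathcal{D}^{-1}(s_i)$ from the BSS$_{\textsf{C}}$ complexity bound and the semialgebraic structure---is plausible as a research direction and is very close in spirit to Problem~\ref{problem:semialgebraic-complexity-bound} in the paper's open problems section, but it is not something either you or the paper currently supply.

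One small structural remark: your Baire-category enumeration of discrete CDS types $\{\sigma_k\}$ and the claim that each $V_{\sigma_k}$ is the projection of a closed set along a finite-dimensional fiber of constants is a reasonable organizing scheme, but showing each $V_{\sigma_k}$ is nowhere dense is exactly where all the difficulty lies; the paper's Theorem~\ref{thm:generic-diffeo-not-universal-under-hierachical-shrinking} does not go through this enumeration and instead produces a single residual (Kupka--Smale) set directly.
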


\noindent We are able to prove this conjecture under a strong additional assumption on the decoder $\mathcal{D}$ as well as a constant slowdown function (see Theorem \ref{thm:generic-diffeo-not-universal-under-hierachical-shrinking}) via a periodic-point argument and the Kupka-Smale Theorem (see \cite[Chapter 7]{katok1995introduction}). It may be possible to resolve this conjecture under the assumption of the well known Palis conjectures on the dynamics of generic smooth dynamical systems \cite{palis2000global}; any argument for this conjecture, like the Axiom A argument, is likely to require sophisticated input from differentiable dynamical systems theory.

\subsubsection{Time complexity bounds}

Going beyond statements about decidability or universality, it is natural to ask more refined questions about the computational capacity of smooth dynamical systems. In particular, it is desirable to show that various natural dynamical conditions on $f$, e.g.~Axiom A, exponential mixing, genericity, etc., bound the computational capacity of $f$ such that one can identify a comparatively small complexity class of problems solvable by such $f$. For example, it is natural to ask whether generic one-dimensional systems can recognize languages that are not in $\textbf{P}$. We formalize such questions in Appendix~\ref{App:compclass}, and proceed by proving some initial time complexity results in this setting:

\begin{theorem}
\label{thm:time-complexity-1-d}
    Let $f: [0,1] \to [0,1]$ be an Axiom A diffeomorphism (this property is generic and equivalent to structural stability). For any CDS $(f, \mathcal{E}, \mathcal{D}, \tau, \machine{T})$ where $t(n) = n$ and $\tau$ is constant, if $\machine{T}$ halts on a configuration then it will halt in time $O(F(n))$, where 
    \[ F(n) =D^{2^n} \]
    for some constant $D$.
    In other words, in the sense of Appendix \ref{App:compclass}, $f$ can recognize languages in at best $\textnormal{\textbf{DTIME}}(O(F(n)))$. 
\end{theorem}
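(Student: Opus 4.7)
The plan exploits the extremely constrained dynamics of one-dimensional Axiom A diffeomorphisms together with the linear-time BSS complexity of the encoder and decoder. First, I would note that any diffeomorphism $f: [0,1] \to [0,1]$ is monotone; passing to $f^2$ if needed, we may assume $f$ is monotone increasing. The Axiom A hypothesis then forces the fixed points of $f$ to form a finite set $0 = p_0 < p_1 < \cdots < p_N = 1$ of hyperbolic points, alternating between attractors and repellers. Every forward orbit is therefore monotone and converges exponentially to some attracting fixed point $p^*$, at rate controlled by the multiplier $\lambda = (f^\tau)'(p^*) \in (0, 1)$.

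The halting condition $\machine{T}(s_\mathrm{halt}) = s_\mathrm{halt}$ implies that $\mathcal{D}^{-1}(s_\mathrm{halt})$ is forward-invariant under $f^\tau$; since the orbit starting from $\mathcal{E}(s_0)$ halts, this region must be entered at some step $k$, which is the halting time. To upper-bound $k$, I would use BSS complexity estimates. Because $\mathcal{E}$ runs in linear time, $\mathcal{E}(s_0)$ is an algebraic expression in the machine's constants whose height is at most $2^{O(n)}$; a Liouville-type bound then yields either $\mathcal{E}(s_0) = p_j$ for some fixed point $p_j$ (a degenerate case, handled separately) or $|\mathcal{E}(s_0) - p_j| \geq 2^{-O(2^n)}$ for every $j$. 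Hence the orbit escapes any initial repeller neighborhood in $O(2^n / \log \mu)$ steps, where $\mu > 1$ is the repeller's multiplier, before settling into the basin of $p^*$.

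Similarly, the decoder $\mathcal{D}$ on outputs of length $m$ runs in time $O(m)$, so the boundaries between decoding regions are defined by polynomial comparisons of algebraic height at most $2^{O(m)}$, and the halting region $\mathcal{D}^{-1}(s_\mathrm{halt})$ is resolvable at scale at least $2^{-O(2^{|s_\mathrm{halt}|})}$. Using the bound $|s_\mathrm{halt}| \leq n + Ck$ (the Turing tape grows by at most a constant per step), combined with the exponential convergence $|x_k - p^*| \lesssim \lambda^k$, one obtains a self-consistent inequality of the form $k \log(1/\lambda) \leq O(2^{n + Ck})$. Solving this inequality---using that the orbit becomes decoder-indistinguishable from $p^*$ once closer than the decoder's resolution, so the computation must have decided to halt by then---yields $k \leq D^{2^n}$ for some constant $D = D(f, \tau, \mathcal{E}, \mathcal{D})$.

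The main obstacle is the last step: carefully extracting a concrete bound from the interplay between the orbit's exponential approach to the attractor, the decoder's BSS-complexity-bounded resolution, and the configuration length's linear growth in the halting time. In particular, one must rule out orbital behaviors in which the orbit revisits many tiny decoding regions near $p^*$ before actually entering the halting region, which requires a careful case analysis using the hyperbolic linearization of $f$ at each attracting and repelling fixed point.
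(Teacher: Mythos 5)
Your proposal has two genuine gaps, each fatal on its own, and both stemming from applying complexity bounds to the wrong object.

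The ``Liouville-type bound'' claiming $|\mathcal{E}(s_0) - p_j| \geq 2^{-O(2^n)}$ has no justification. The fixed points $p_j$ of $f$ are arbitrary real numbers: the CDS framework deliberately does \emph{not} require $f$ to be a computable map (this is emphasized in Section~\ref{sec:related-work}). Liouville-type inequalities control the distance from a point of bounded algebraic height to rationals or to roots of \emph{fixed} integer polynomials; they say nothing about the distance to a root of an arbitrary smooth $f$. So $\mathcal{E}(s_0)$ can sit as close to a repelling fixed point as $f$ likes, and your escape-time estimate evaporates. The paper sidesteps this entirely by applying the root-separation bound (Proposition~\ref{prop:real-algebraic-bound}) not to the location of $\mathcal{E}(s_0)$ relative to the dynamics, but to the \emph{widths of the intervals comprising} $C_s = \mathcal{D}^{-1}(s)$: their endpoints are roots of polynomials assembled by the BSS$_{\textsf{C}}$ machine in $O(n)$ steps, so Rump's bound gives a legitimate lower bound on the interval length. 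One then argues that an interval of that guaranteed width cannot remain trapped near the hyperbolic repeller for more than $O(F(n))$ iterations (Lemma~\ref{lemma:convenient-cover}), with no reference to how the interval is situated relative to any particular fixed point.

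Your closing ``self-consistent inequality'' $k\log(1/\lambda) \leq O(2^{n+Ck})$ is also vacuous: the right-hand side grows doubly exponentially in $k$, so the inequality holds for every $k$ and yields no bound. The circularity comes from estimating the decoder at the \emph{halting} configuration (whose length grows linearly with $k$) instead of the \emph{initial} configuration (whose length is $n$, independent of $k$). Applying the complexity bound to $C_{s_0}$ as the paper does gives a $k$-independent width estimate, and the conclusion then comes from a no-overlap argument: once $f^{L+k(n)}(C_{s_0})$ is near an attracting periodic point, either part of it lingers in the repeller cover $V$ long enough to overlap an earlier image (contradiction, since distinct configurations decode to disjoint sets), or the whole interval moves into a component of $U_\epsilon$ and must halt or loop in bounded additional time. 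Finally, a scope remark: you pass to monotone diffeomorphisms, which are Morse--Smale with only finitely many alternating hyperbolic fixed points. This literally matches the word ``diffeomorphism'' in the statement, but the parenthetical ``generic and equivalent to structural stability'' and the proof in the body pertain to Axiom~A \emph{maps} (the Kozlovski--Shen--van~Strien theorem), where the complement of the attracting basins can be a fractal Cantor repeller with nontrivial symbolic dynamics; Lemma~\ref{lemma:convenient-cover} exists precisely to control that case, and the Morse--Smale case you address is a degenerate sub-case where a much sharper bound would hold.
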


\begin{remark} The result actually holds for a much wider class of slowdown functions $\tau$; %which we do not precisely specify in this paper, 
see Remark
    \ref{rk:annoying-time-complexity-1d-remark}.
\end{remark}

\noindent The proof relies on the structure theory of one-dimensional Axiom A systems. Standard results \cite{katok1995introduction} imply that such systems have only finitely many hyperbolic attracting periodic points, and the complement of their basins of attraction is a (possibly fractal) repelling set. Now, each configuration of $\machine{T}$  is associated to some disjoint union of subintervals of $[0,1]$. One then uses (i) lower bounds from real algebraic geometry about the minimum separation between roots of real polynomials as a function of their coefficients \cite{rump1979polynomial}, (ii) the construction of symbolic dynamics controlling the dynamics of the chaotic hyperbolic repeller of $f$ \cite{katok1995introduction}, and (iii) the complexity condition on the decoder, in order to show that one of these $[0,1]$ subintervals contains a point that must get very close to the hyperbolic periodic point attractor in time $O(F(n))$. Subsequently, either one sees that the rest of the interval must still be stuck near the hyperbolic repelling set for several iterations, which is a contradiction if we do not halt since regions corresponding to configurations cannot overlap; or the entire interval must have escaped a neighborhood of the hyperbolic repelling set, in which case waiting another $O(F(n))$ time will allow us to decide if the computation will halt or not.

In higher dimensions, analogs of the real algebraic geometry bound \cite{rump1979polynomial} seem not to have been established, while the dynamics of Axiom A systems are comparatively more complicated. Thus, we restrict ourselves to the simplest Axiom A systems, the Anosov ones, and prove an inexplicit complexity bound:

\begin{theorem}
\label{thm:Anosov1}
    Let $f: M \to M$ be Anosov and volume-preserving. Consider a CDS $(f, \mathcal{E}, \mathcal{D}, \tau, \textnormal{\textsf{T}})$ where $t(n) = n$, the decoder is Cantor-like (Definition \ref{def:cantor-like-decoder}), and the encoder and decoder are implemented by \textnormal{BSS$_{\textsf{C}}$} machines with the finite set of computable constants being $\{a_1, \ldots, a_\ell\}$.  Then the CDS halts on all configurations  in time $O(\mathcal{C}(n))$, where $\mathcal{C}(n)$ is a computable function depending on $a_1, \ldots, a_\ell$.
\end{theorem}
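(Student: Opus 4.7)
The plan is to combine a volumetric pigeonhole argument with the ergodicity of volume-preserving Anosov diffeomorphisms to bound the halting time, and then to convert this into a computable bound via effective real algebraic geometry. First, I would exploit volume preservation: since $f$ is a diffeomorphism of the compact manifold $M$ preserving a finite volume $\mu$, we have $\mu(f^\tau(A)) = \mu(A)$ for every measurable $A$, and combined with the CDS condition $f^\tau(\mathcal{D}^{-1}(s)) \subset \mathcal{D}^{-1}(\machine{T}(s))$ this yields the monotonicity $\mu(\mathcal{D}^{-1}(\machine{T}^k(s))) \geq \mu(\mathcal{D}^{-1}(s))$ for all $k \geq 0$. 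As the sets $\mathcal{D}^{-1}(s')$ are pairwise disjoint for distinct $s'$, the orbit $s \to \machine{T}(s) \to \cdots$ can visit at most $\mu(M)/\mu(\mathcal{D}^{-1}(s))$ distinct configurations before either halting or entering a cycle.

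Next, I would rule out non-halting cycles using ergodicity. Suppose the orbit enters a cycle $\{s_0,\dots,s_{\ell-1}\}$ and set $U = \bigsqcup_i \mathcal{D}^{-1}(s_i)$; the monotonicity above, applied around the cycle, forces $f^\tau(U) = U$ modulo $\mu$-null sets. Volume-preserving Anosov diffeomorphisms are mixing (Anosov--Sinai), hence ergodic, and iterates of mixing systems are mixing, so $f^\tau$ is ergodic; therefore $\mu(U) \in \{0, \mu(M)\}$. The Cantor-like hypothesis ensures each $\mathcal{D}^{-1}(s_i)$ has non-empty interior and hence positive measure, so $\mu(U) > 0$, forcing $\mu(U) = \mu(M)$. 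But Cantor-like also gives positive measure to $\mathcal{D}^{-1}(s')$ for any $s' \notin \{s_0,\dots,s_{\ell-1}\}$, and such $s'$ exists since $\mathcal{D}$ is surjective onto the infinite configuration space $S$ (from $\mathcal{D} \circ \mathcal{E} = \textnormal{Id}$), contradicting $\mu(M \setminus U) = 0$. Hence the CDS must halt, in at most $\mu(M)/\mu(\mathcal{D}^{-1}(s))$ steps.

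Finally, I would bound $\mu(\mathcal{D}^{-1}(s'))$ below by a function computable from $a_1,\dots,a_\ell$ and $|s'|$. Because $\mathcal{D}$ is a \textnormal{BSS$_{\mathsf{C}}$} machine with constants $\{a_1,\dots,a_\ell\}$ running in time $O(n)$ on outputs of length $n$, each $\mathcal{D}^{-1}(s')$ is a semi-algebraic set, cut out by polynomial (in)equalities of degree and height computably bounded in $|s'|$ with coefficients in $\mathbb{Z}[a_1,\dots,a_\ell]$. The Cantor-like hypothesis guarantees non-empty interior; effective real algebraic geometry, in the spirit of the Rump-style polynomial root separation bounds invoked in Theorem~\ref{thm:time-complexity-1-d} together with quantitative Lojasiewicz-type inequalities, then yields a computable lower bound $v(|s'|; a_1,\dots,a_\ell)$ on the volume of such a set. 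Combined with $|\machine{T}^k(s)| \leq |s| + k$ along a Turing machine orbit, the halting time $T$ satisfies the self-consistent inequality $T \leq \mu(M)/v(n+T; a_1,\dots,a_\ell)$, which resolves to $T = O(\mathcal{C}(n))$ for the claimed computable $\mathcal{C}$.

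The hardest part will be extracting the effective, computable lower bound on $v$: non-empty interior alone is not enough, so one must track the explicit bounds on degree, height, and dependence on $a_1,\dots,a_\ell$ of the semi-algebraic representation of $\mathcal{D}^{-1}(s')$ produced by the BSS$_{\mathsf{C}}$ decoder and apply quantitative volume estimates. The pigeonhole and ergodicity arguments are comparatively clean; the technical heart of the theorem, and precisely the reason $\mathcal{C}$ must depend on the particular BSS-constants, lies in this effective geometric step.
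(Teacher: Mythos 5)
Your dynamical argument (volume pigeonhole plus ergodicity to rule out non-halting cycles) is a genuinely different and cleaner route than the paper's. The paper invokes the Sinai--Bowen--Ruelle exponential decay of correlations for volume-preserving Anosov systems: for any ball $B_r \subset C_s$, $f^N(B_r)$ must meet the halting region once $N \gtrsim \kappa^{-1}\log(C/r^n)$, so a computable lower bound on the inradius of $C_s$ directly yields a halting-time bound logarithmic in $1/\mu(C_s)$. Your approach instead needs only measure preservation and ergodicity of iterates (which follows from mixing), a strictly weaker hypothesis than Anosov; the paper's stronger input buys a bound $T = O(-\log \mu(C_s))$, exponentially sharper than your $T \leq \mu(M)/\mu(C_s)$. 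Also note the self-consistent inequality $T \leq \mu(M)/v(n+T)$ is unnecessary: your own monotonicity $\mu(C_{\machine{T}^k(s)}) \geq \mu(C_s)$ already bounds everything in terms of the initial configuration, giving $T \leq \mu(M)/v(n)$ directly.

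The genuine gap is in your third step. You propose to obtain the computable lower bound on $\mu(\mathcal{D}^{-1}(s'))$ via ``Rump-style root separation together with quantitative Lojasiewicz-type inequalities,'' but Rump's bound (Proposition~\ref{prop:real-algebraic-bound}) is one-dimensional, and the higher-dimensional analog --- an explicit lower bound on the inradius or volume of a nonempty-interior semialgebraic set in terms of its defining data --- is exactly what the paper lists as an open problem (Problem~\ref{problem:semialgebraic-complexity-bound}); Lojasiewicz inequalities do not supply it in the needed effective form. The paper circumvents this rather than solving it: because the Cantor-like decoder is a BSS$_{\textsf{C}}$ machine with fixed rate constant and fixed finite constant set $\{a_1,\dots,a_\ell\}$, only \emph{finitely many} semialgebraic pairs $(P,\mathcal{D}_0)$ are possible, and for each of these (and each $|s'|$) the cylindrical algebraic decomposition of $C_{s'}$ is computable, yielding a computable lower bound on its inradius. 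This is precisely why $\mathcal{C}(n)$ is merely computable and depends on $a_1,\dots,a_\ell$ rather than being an explicit formula. You correctly diagnosed this as the crux, but the mechanism you sketched is not available; you need the finiteness-of-decoders observation plus CAD.
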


\noindent Note that unlike in the Axiom A case, such systems must always halt. This feature arises from the topological mixing of Anosov systems, since robust CDSs with topologically mixing dynamics must always halt (Lemma \ref{lemma:topologically-mixing-implies-halting}). To establish a computable halting time bound, one must know \emph{how fast} these systems mix. We do this via the Sinai-Bowen-Ruelle theory of exponential mixing of Anosov systems \cite{bowen1978markov, Bowen1975}, together with some elementary real algebraic geometry counting argument to get a bound on how fast the sets corresponding to configurations of $f$ can shrink as $n$ increases. In short, exponential mixing means that  the time evolution of the space correlation between $f_*^nF_1$ and $F_2$, where $F_1$ is a nonnegative function supported in a ball inside $C_s := \mathcal{D}^{-1}(s)$, and $F_2$ is a nonnegative function supported inside the halting set, must converge to a positive number $\int F_1 \int F_2$ at an exponential rate.  This gives us an upper bound on how long the correlation can be zero, i.e.~on how long the images of $C_s$ (corresponding to input configuration $s$ of the machine $\machine{T}$ simulated by $f$) do not intersect the halting set. Since the sets $C_s$ do not shrink too fast as $n$ increases, we can argue that the bound for the halting time of this CDS is computable.

We expect that the dependence on the computable constants $(a_1, \ldots, a_\ell)$ can be removed and a bound on $\mathcal{C}(n)$ can be made explicit, but this depends on a natural problem in real algebraic geometry which we explain in the text. We expect that in fact one can give polynomial time halting bounds. Moreover, the condition that $f$ is volume-preserving is likely to be inessential, but removing this condition involves more subtle arguments about the invariant SRB measure for $f$, which in general is not absolutely continuous with respect to the Lebesgue measure, even for Anosov systems. We leave these improvements to future work.

\section{Related work}
\label{sec:related-work}

This paper contributes a precise definition that allows one to probe the intrinsic computational capacity of a continuous dynamical system $f$. Moreover, the results of Section \ref{sec:our-results} show that the resulting theory is nontrivial. As mentioned in the introduction, there have been many previous approaches to this problem. Below, we compare our work with the different perspectives taken previously. 

\paragraph{Real computation.} One might ask for the dynamical system $f$ to be computable in one of the many senses of the term \cite{blum1998complexity, braverman_bssc_arxiv, ko2012complexity, weihrauch2012computable}, and apply one of the different theories of computable real functions. Crucially, in our theory, \emph{the continuous dynamical system $f$ is not required to be a computable map}, and even for uncomputable $f$, the theory remains nontrivial. Moreover, using the notion of a \emph{robust} decoder means that we are not studying the dynamics of $f$ on a countable set of computable points, as is done in many previous works \cite{cardona2021constructing, cardona2022turing, cardona2023computability, moore1990unpredictability, moore1991generalized, moore1998finite}, but instead studying its behavior on certain open domains of controlled complexity. These two conditions together force one to use results about $f$ involving its \emph{continuous} dynamical properties (preserving measures, Axiom A-ness, etc.) rather than its properties in terms of some language used to describe $f$.  As such, we elucidate the \emph{computational significance} of various notions in differentiable dynamics. In particular, probing $f$ only along the points of $\mathcal{E}(S)$ rather than over the sets $\mathcal{D}^{-1}(S)$ means that much of the global behavior of $f$ is neglected in the analysis.  Since we think of $f$ as a `natural system' (i.e. a brain or a neural network), from this perspective it is unnatural to expect that we can prepare continuous parameters of states in any experiment without introducing some `error tolerance'. 

\paragraph{Robustness.} There are two flavors of previous perspectives on `robust computation' in continuous dynamics. The first is exemplified by \cite{braverman2012noise, braverman2015space}, where one asks that the dynamics is modified by the addition of some uniform noise. In this case, the system becomes essentially indistinguishable from a finite state Markov process, and because of this most properties of the system become computable in a suitable sense \cite{braverman2012noise}. This makes it difficult to ask \emph{computational complexity} questions about the intrinsic dynamics of $f$. 

The other notion of `robust computation' in continuous dynamics is exemplified by \cite{asarin2001perturbed,bournez2013computation, gracca2008computability}, which essentially fix a robust decoder for $f$ and require that there is an $\epsilon$ such that the \emph{same} robust decoder suffices to simulate the desired machine for \emph{all} $\tilde{f}$ which are $\epsilon$-close to $f$ in the $C^0$ norm. In this formalization, one also finds that universal computation cannot be achieved robustly by finite-dimensional dynamical systems \cite{bournez2013computation}, essentially by approximating the behavior of $f$ by its behavior viewed through a `pixelated lens' of the domain, which is again simply the behavior of a finite state machine. By defining robust computation as an asymptotic condition as the amount of uniform noise added goes to zero,~\cite{bournez2013computation, gracca2008computability} show that systems $f$ can be than finite state machines: in fact, they can robustly recognize precisely the recursive languages.  Moreover, approaches like \cite{bournez2013computation}  focus on asking dynamical systems to \emph{recognize languages} rather than to \emph{simulate machines}, as we do.  Our approach leads to a different class of questions, which are entwined with the theory of differentiable dynamical systems and the notion of simulation, and are less focused on the question of deciding reachability of the halting set. 

In our approach, we prove computational restrictions on an \emph{individual} $f$, \emph{without} any perturbations made to $f$, and for \emph{arbitrary} decoders coupled to $f$. This is a very different setup from results which \emph{fix} a decoder; our approach aims to probe the `intrinsic' computational capacity of $f$. Moreover, our notion of robustness does \emph{not} allow for arguments which boil down the dynamics of $f$ to that of a finite state machine due to the lack of uniformity in the `noise' allowed when simulating $f$.  (Analogously, although every physical computer is in effect a finite state machine, we do not model them as such in order to understand the computations they are performing, and so it is appropriate not to turn every continuous dynamical system into a finite state machine.) Our notion of robustness models the idea of preparing an initial state of an experiment to `sufficiently high precision', with the precision required described by the decoder; as such, it is analogous to a variant of the noise models of \cite{braverman2012noise} with highly non-uniform noise, or to asking for robustness to perturbations of $f$ where the norm of the perturbation is allowed to depend on $x \in M$ (with $M$ the domain of $f$) in a non-uniform way.

\paragraph{`Reasonable' state encodings.} There have been many clever constructions of various mechanisms implementing computation in continuous dynamical systems \cite{moore1990unpredictability, moore1991generalized, moore1998finite,  cardona2021constructing, cardona2023hydrodynamic}. Most often, these encode states of some discrete computational system via \emph{points} of the continuous state space $M$, i.e.~the decoder satisfies $\mathcal{D}^{-1}(s)= \{\mathcal{E}(s)\}$. As mentioned before, this ignores robustness, and only probes the behavior of $f$ on some Cantor set of points, making it difficult to prove \emph{upper} bounds on the computational capacity of $f$.

Moreover, previous works usually ask for the state encoding to be `reasonable' \cite{moore1998finite}, usually without making precise what this means \cite{cardona2021constructing}. In this paper we give a precise definition of a `reasonable' state encoding, and also advocate for focusing on dynamics of open sets (or sets with non-empty interior) rather than of individual points in the continuous state space. Although our definition of a `reasonable' state encoding may seem natural after reading it, our use of the theory of real computation as a bootstrap to make sense of the computational power of more general continuous dynamical systems is new, and is carefully designed to avoid many potential issues having to do with the theory of real computation. For example, BSS$_\R$ machines (which were the ones originally defined by Blum-Shub-Smale \cite{blum1989theory}, rather than the later BSS$_{\textsf{C}}$-machines) are capable of strong super-Turing computation via access to uncomputable constants \cite{braverman_bssc_arxiv}. Additionally, decoders defined using non-uniform circuit families also lead to pathologies like Example \ref{example:uncon1}. Finally, formulating the decoding of states in terms of bit complexity either leads to pathologies involving states encoded in regions which are essentially `pixelated' (if the decoder can be bit-computed in finite time) or only being able to define the encoder and decoder via an infinite computation, making it more challenging to state the required complexity constraints on the encoder and decoder. We do not explicate these alternative problematic formalizations in this work, but it is a central contribution of this paper to define CDSs such that they formalize previously existing intuitions while avoiding many possible technical pitfalls.

\paragraph{Symbolic dynamics.} A popular approach in the mathematical literature on continuous dynamical systems is to study the dynamics of $f$ via \emph{symbolic dynamics}, namely by discretizing the continuous state space $M$ by removing a measure zero set $M_0$, defining another map $\sigma: M \setminus M_0 \to \Sigma$ for some discrete alphabet $\Sigma$, and studying $f$ via the induced language $L_{f, \sigma} \subset \Sigma^\infty$ given by the set 
\[L_{f, \sigma} = \{\tau^\sigma_f(x)\, : \, x \in M \setminus M_0 \},\quad \tau^\sigma_f(x) := (\sigma(x), \sigma(f(x)), \sigma(f^2(x)), \ldots )\,.\] 
One hopes to find a $\sigma$ for which the map $x \mapsto \tau^\sigma_f(x)$ is injective on $x \in M \setminus M_0$, but for which the number of symbols is small, e.g.~finite. This approach is reviewed in Appendix \ref{app:symbolic-dynamics}. It turns out that it is often possible to achieve this, and the method and its variations are very helpful for studying dynamical properties of $f$. Thus, one might be tempted to say that one should identify $f$ with the corresponding discrete language $L_{f, \sigma}$.

However, this approach leads to several challenges. The first is that one might have different discretization maps $\sigma_1$ and  $\sigma_2$ such that the resulting languages $L_{f, \sigma_1}$ and $L_{f, \sigma_2}$ are of vastly different computational complexity. As such, there is no obvious mechanism for \emph{upper bounding} the computational capacity of a differentiable dynamical system if one measures computational capacity in terms of the corresponding language $L_{f, \sigma}$.  The second is that while the symbolic dynamics allows one to associate notions of language complexity to dynamical systems, it does not allow one to discuss the \emph{way} that $f$ is performing the desired pattern-recognition problem -- one cannot argue that $f$ is simulating any \emph{given} algorithm. The third is that for many systems, preferred classes of discretizations $\sigma$ called \emph{Markov partitions} have been proven to exist; however, the resulting maps $\sigma$ have the property that the boundary of $\sigma^{-1}(s)$ for $s \in \Sigma$ is a fractal \cite{bowen1978markov}, and the computability of these sets is not  clear. To arrange for computability, one must certainly require that $f$ is computable; however, upgrading existing constructions of Markov partitions to a computable analysis setting is laborious (see the thesis \cite{kenny-roberts-thesis}, which proves some results for two-dimensional diffeomorphisms).  

In this paper, while we use symbolic dynamics in the proof of Theorem \ref{thm:time-complexity-1-d} to study the dynamics of $f$, our discretization of the dynamics is instead given by $\mathcal{D}$, which is always (efficiently) computable. Tools from symbolic dynamics (see Appendix~\ref{app:symbolic-dynamics} for an overview) are naturally adapted to study \emph{dynamical} properties of $f$, e.g.~mixing and periodic point properties. These are helpful for understanding how $f$ might `compute', but do not straightforwardly answer computational questions about $f$ in general. Part of the purpose of this paper is to highlight the gap between a dynamical and a computational understanding of differentiable dynamical systems.

\section{Preliminaries}
\label{sec:preliminaries}

In this section we set up the definition of a computational dynamical system, motivated by considering a general theory of simulability of one machine by another.  We will explicate how simulability has a subtle but fundamental relationship with computational complexity, which is neglected in standard textbook treatments.  Next we provide an overview of (differentiable) dynamical systems theory, as appropriate for our analyses in this paper.  Since dynamical systems theory, including its tools and perspectives, are mostly unfamiliar to computer scientists, we delve into more detail on these vis-\`{a}-vis other preliminaries.

\subsection{Definitions for simulability and CDSs}
\label{subsec:defofCDS}

\subsubsection{Simulability}

The foundations of computer science are predicated on notions of \textit{simulability}, as manifested by the Church-Turing thesis and its variants.  Strangely, textbook treatments do not provide a general definition of `simulability' as an answer to the question: ``What does it mean for one system to simulate another?''  While it is standard in textbook treatments (see e.g.~\cite{sipser2012introduction, arora2009computational}) to give an example of universal Turing machine that can reproduce the outputs of all other Turing machines with some slowdown, this is merely an example of a Turing machine that we might take as being able to `simulate' all others, leaving unanswered the question of what `simulation' precisely means.

Let us motivate a suitable definition of simulation by exploring two initial desiderata.  To do so, we need some notation.  Abstractly, let us denote a machine by a map $\textsf{T} : S \to S$ where $S$ is the configuration space.  For example, $\textsf{T}$ could be a Turing machine and some $s \in S$ would be the joint description of the state of the head, location of the head, and the symbols on the tape.  (We will provide precise definitions in the Turing machine setting shortly.)  Then $\textsf{T}^n(s)$ describes the configuration of the system after $n$ steps.  This is a type of autonomous system since it merely has an initial condition and no external inputs at intermediate time steps; as such it does not specialize a standard formulation of e.g.~finite state machines which instead must be viewed as forced systems $\machine: S \times C \to S$ where $C$ are some `control' variables.  We will study forced systems in a separate work~\cite{cotlerrezchikovWIP}.  For now, we consider two machines $\machine{T}_1 : S_1 \to S_1$ and $\machine{T}_2 : S_2 \to S_2$, and we want to understand what it would mean for $\machine{T}_2$ to simulate $\machine{T}_1$.

The first desiderata is that all realizable configurations of $\machine{T}_1$, namely $S_1$, should correspond to some set of configurations in $S_2$.  That is, there is a translation protocol between configurations of the system being simulated and the simulation.  This can be expressed as a partial, surjective function $\mathcal{D} : S_2 \rightharpoonup S_1$ (here `$\rightharpoonup$' denotes a partial function).  We call this function $\mathcal{D}$ the \textit{decoder}.  We also use the notation $C_s := \mathcal{D}^{-1}(s)$ to denote the configurations in $S_2$ corresponding to the configuration $s$ in $S_1$.

The second desiderata is that the dynamics of the simulation $\machine{T}_2$ should `track' the dynamics of $\machine{T}_1$.  That is, for some function $\tau : S_2 \to \mathbb{N}$ which is constant on connected components of $\mathcal{D}^{-1}(s)$, we would want
\begin{align}
\label{E:firstDM2M1}
\mathcal{D} \circ \machine{T}_2^\tau(C_s) = \machine{T}_1(s)
\end{align}
for all $s \in S_1$.  Our above notation $\machine{T}_2^\tau : S_2 \to S_2$ means $\machine{T}_2^\tau : s_2 \mapsto \machine{T}_2^{\tau(s_2)}(s_2)$.  The `slowdown function' $\tau$ allows for the possibility that $\machine{T}_2$ tracks the dynamics of $\machine{T}_1$ with a slowdown for each computational step, contingent on the details of the encoded representation at that step.  We can equivalently write
\begin{align}
\label{E:firstDM2M12}
\mathcal{D} \circ (\machine{T}_2^\tau)^n(C_s) = \machine{T}_1^n(s)
\end{align}
for all $s \in S_1$ and all $n \geq 0$.  As a convenience, we might want a means to select a particular configuration in $C_s$, so we can consider a map $\mathcal{E} : S_1 \to S_2$ such that $\mathcal{E}(s) \in C_s$ for all $s \in S_1$.  We call $\mathcal{E}$ an \textit{encoder}, and it evidently satisfies $\mathcal{D}\circ \mathcal{E}(s) = s$ for all $s \in S_1$.  Then~\eqref{E:firstDM2M12} implies
\begin{align}
\mathcal{D} \circ (\machine{T}_2^{\tau})^n \circ \mathcal{E}(s) = \machine{T}_1^n(s)\,,
\end{align}
which as before holds for all $s \in S_1$ and all $n \geq 0$.

We are now prepared to provide a formal definition of simulation.  For ease of exposition, we will take the simulation and system being simulated to both be Turing machines.  Our definition will readily generalize beyond the setting of Turing machines, and we will employ such a generalization when we consider dynamical systems.  To this end, we begin with a suitable definition of a $k$-tape Turing machine:
\begin{definition}[Turing machine, adapted from~\cite{arora2009computational}]\label{def:Turingmachine1} A \textbf{Turing machine} is given by a triple $(Q, \Gamma, \delta)$ where $Q$ and $\Gamma$ are finite sets and:
\begin{enumerate}
    \item $Q$ is the set of states, containing a start state $q_0$ and at least one halt state $q_{\text{\rm halt}}$;
    \item $\Gamma$ is the tape alphabet, not containing a blank symbol $\sqcup$; and
    \item $\delta : Q \times \Gamma \cup \{\sqcup\} \to Q \times \Gamma
    \times \{\text{\rm L}, \text{\rm R}, \text{\rm S}\}$.
\end{enumerate}
The configuration space $S$ of a Turing machine is given by $S := \Gamma^* \times Q \times \Gamma^*$, where a configuration is denoted by $s = x_1 \cdots x_{m-1} \,q\,x_m \cdots x_n$ for $x_i \in \Gamma$ and $q \in Q$, with the understanding that all symbols to the left of $x_1$ are blank and all symbols to the right of $x_n$ are blank.  Our notation expresses that the head is above $x_m$, and that all of the symbols on the tape to the left of $x_1$ and to the right of $x_n$ are blank.  We can define a map $\textnormal{\textsf{T}} : S \to S$ as follows.  If $\delta(q, x_m) = (q', x_m', a)$ for $a \in \{\text{\rm L},\text{\rm R}, \text{\rm S}\}$, then
\begin{align}
\textnormal{\textsf{T}}(s)
 = \begin{cases}
x_1 \cdots x_{m-2} \,q'\, x_{m-1} x_{m}' \cdots x_n &\text{if }a = \text{\rm L} \\
x_1 \cdots x_{m-1} \,q'\, x_{m}' \cdots x_n &\text{if }a = \text{\rm S} \\
x_1 \cdots x_m' \,q'\, x_{m+1} \cdots x_n &\text{if }a = \text{\rm R}
\end{cases},
\end{align}
which describes one time step of the Turing machine.
\end{definition}
\noindent The above readily generalizes to the $k$-tape setting, wherein $\delta : Q \times (\Gamma \cup \{\sqcup\})^k \to Q \times \Gamma^k \times \{\text{\rm L},\text{\rm R},\text{\rm S}\}^k$.  We note that there are many related definitions of Turing machines, but our later analyses will not really be sensitive to the choice of definition.

We also require an initial definition of encoders, decoders, and slowdown functions, given below.

\begin{definition}[Encoders, decoders, slowdown functions, and their complexities]
\label{def:encdec1}
Let $S_1$ and $S_2$ be the configuration spaces of two Turing machines $(Q_1, \Gamma_1, \delta_1)$ and $(Q_2, \Gamma_2, \delta_2)$.
\begin{itemize} 
    \item An \textbf{encoder} is a function $\mathcal{E} : S_1 \to S_2$ that can be instantiated by a $2$-tape Turing machine, with one tape having symbols in $\Gamma_1 \cup \{\sqcup\} \cup Q_1$ and the other having symbols in $\Gamma_2 \cup \{\sqcup\} \cup Q_2$. (Here $\sqcup$ is the blank symbol.)  We will additionally allow for the $2$-tape Turing machine to write the blank symbol.  The initial state of the first tape is $q_0\,\sqcup$ and the initial state of the second tape is $q_0'\,s^{(2)}$ for $s^{(2)} \in S_2$.  Then the machine halts with the first tape in the configuration $q_{\textnormal{halt}}\,\mathcal{E}(s^{(2)})$, where $\mathcal{E}(s^{(2)}) \in S_2$, and the second tape is all blank symbols. If the machine halts in time $O(t(|s^{(2)}|))$, then we say that $\mathcal{E}$ has \textbf{input time complexity} $O(t(n))$.
    \item A \textbf{decoder} is a partial function $\mathcal{D} : S_2 \to S_1$ that can be instantiated by $2$-tape Turing machine as follows, with one tape having symbols in $\Gamma_1 \cup \{\sqcup\} \cup Q_1$ and the other having symbols in $\Gamma_2 \cup \{\sqcup\} \cup Q_2$.  As before, we will additionally allow for the $2$-tape Turing machine to write the blank symbol.  The initial state of the first tape is $q_0\,s^{(1)}$ for $s^{(1)} \in S_1$, and the initial state of the second tape is $q_0'\,\sqcup$.  If $s^{(1)}$ is in the domain of definition of $\mathcal{D}$, then the machine halts with the second tape in the configuration $q_{\textnormal{halt}}'\,\mathcal{D}(s^{(1)})$ for $\mathcal{D}(s^{(1)}) \in S_2$, and the first tape having all blank symbols.  If the machine halts in time $O(t(|\mathcal{D}(s^{(1)})|))$, then we say that $\mathcal{D}$ has \textbf{output time complexity} $O(t(n))$.
    \item A \textbf{slowdown function} is a partial function $\tau : S_2 \rightharpoonup \mathbb{Z}_{\geq 0}$ which is defined exactly on the domain of definition of $\mathcal{D}$, is constant on the connected components of $\mathcal{D}^{-1}$, and can be instantiated by a 2-tape Turing machine (which we allow to write blank symbols) as follows.  The initial state of the first tape of the Turing machine is $q_0\,s^{(2)}$ for $s^{(2)} \in S_2$, and the initial state of the second tape is $q_0'\,\sqcup$.  If $s^{(2)}$ is in the domain of definition of $\tau$, then the machine halts with the second tape in the configuration $q_{\textnormal{halt}}'\,\tau(s^{(2)})$ for $\tau(s^{(2)})$ a binary representation of an element of $\mathbb{Z}_{\geq 0}$, and the first tape blank.  If the machine halts in time $O(u(|\mathcal{D}(x)|))$, then we say that $\tau$ has \textbf{slowdown function time complexity} $O(u(n))$.  We will always assume that $u(n)$ is a computable function. 
\end{itemize}
\end{definition}

With the above definitions at hand, we are prepared to define the simulability of one Turing machine by another.  
\begin{definition}[Simulation of Turing machines] 
\label{def:sim1}
Let $\textnormal{\textsf{T}}_1$ and $\textnormal{\textsf{T}}_2$ be Turing machines with configuration spaces $S_1$ and $S_2$, respectively.  We say that $\textnormal{\textsf{T}}_2$ $(\tau, t(n))$--simulates $\textnormal{\textsf{T}}_1$ if the exists an encoder $\mathcal{E} : S_1 \to S_2$ with input complexity $O(t(n))$, a decoder $\mathcal{D} : S_2 \rightharpoonup S_1$ with output complexity $O(t(n))$, and a slowdown function $\tau : S_2 \to \mathbb{Z}_{\geq 0}$ such that:
\begin{enumerate}
    \item $\mathcal{E}(s) \in C_s$ for all $s \in S_1$, where $C_s := \mathcal{D}^{-1}(s)$; and
    \item $\mathcal{D} \circ \textnormal{\textsf{T}}_2^{\tau}(C_s) = \textnormal{\textsf{T}}_1(s)$ for all $s \in S_1$.
\end{enumerate}
We recall that $\textnormal{\textsf{T}}_2^\tau : S_2 \to S_2$ means $\textnormal{\textsf{T}}_2^\tau : s_2 \mapsto \textnormal{\textsf{T}}_2^{\tau(s_2)}(s_2)$.
\end{definition}
\noindent Before explaining the significance of this definition and its relation to prior work, several initial remarks are in order.
\begin{remark}
    We may say that $\textsf{T}_2$ is a \emph{universal} Turing machine if it $(\tau, t(n))$-simulates $\textsf{T}_1$ for every Turing machine $\textsf{T}_1$ and some $(\tau, t(n))$ which possibly depend on $\textsf{T}_1$. While this does not agree with all definitions of universal Turing Machines, which often only ask for $\textsf{T}_2$ to correctly compute recursive functions rather than to \emph{simulate} other Turing machines, it nonetheless usually holds in \emph{constructions} of universal Turing machines. For a definition in the vein of Definition \ref{def:sim1}, see e.g.~\cite{rogozhin1996small}. 
\end{remark}
\begin{remark}
\label{rem:haltstate1}
Above we are tacitly assuming that if some $s$ in $S_1$ contains a halt state, then $\textsf{T}_1(s) = s$.  This need not be the case; that is, we could instead say that time evolution of $\textsf{T}_1$ simply ends when we reach the halt state.
\end{remark}

\begin{remark}
We can readily generalize Definition~\ref{def:sim1} to the setting where $\textsf{T}_1$ and $\textsf{T}_2$ have $k$ and $k'$ tapes, respectively, possibly with $k \not = k'$.  Then it would be sensible for $\mathcal{E}$ and $\mathcal{D}$ to correspond to Turing machines with (at least) $k + k'$ tapes.  Related generalizations, where the machines $\textsf{T}_1$ and $\textsf{T}_2$ in question need not even be Turing machines, proceed by analogy with Definition~\ref{def:sim1}.
\end{remark}

\begin{remark}[Oblivious simulation]
\label{rem:obvlivious1}
An \textbf{oblivious simulation} is one for which $\tau(s_2) = f(|\mathcal{D}(s_2)|)$; that is, $\tau$ only depends on the length of the decoded string.  Any Turing machine can be simulated via an oblivious simulation (see e.g.~\cite{arora2009computational}).
\end{remark}

A key feature of Definition~\ref{def:sim1} is that it accounts for the complexity $O(t(n))$ of the encoder and decoder.  We will show that this accounting for complexity is completely essential for the definition to be sensible and meaningful.  Moreover, it was missed in previous work.  In particular, there have been a number of previous works (see~\cite{rosen2011anticipatory, rosen1991life, branicky1995universal, koiran1999closed} and specifically~\cite{gracca2005robust, gracca2008computability,  cardona2023hydrodynamic}, as well as examples in~\cite{tao2017universality, cardona2021constructing, cardona2022turing, cardona2023computability}) which define simulability akin to Definition~\ref{def:sim1}, but without any specification of the complexity of the encoders and decoders.  Without such a specification, we can run into several problems.  Specifically, suppose that $\textsf{T}_2$ simulates $\textsf{T}_1$ with respect to the encoder and decoder $\mathcal{E}$, $\mathcal{D}$.  We might be inclined to say that this relation implies that $\textsf{T}_2$ in a sense `contains' $\textsf{T}_1$, or that $\textsf{T}_2$ is at least `as powerful' as $\textsf{T}_1$.  But these statements may not be true.  For suppose that the encoder or decoder are more computationally powerful than the simulator $\textsf{T}_2$, but as powerful as the simulated system $\textsf{T}_1$.  As such, even if $\textsf{T}_2$ has very weak computational capabilities, we could still satisfy conditions akin to 1 and 2 in Definition~\ref{def:sim1} since the encoder and decoder could do the `work' of simulating the computation of $\textsf{T}_1$.  We will give an example of this phenomenon in Section~\ref{subsubsec:counterexample1} below.  In summary, accounting for the complexity of the encoder and decoder is essentially that we correctly attribute computational power to $\textsf{T}_2$ vis-\'{a}-vis the encoder and decoder.

There is a useful heuristic that further clarifies the above point.  When we say that $\textsf{T}_2$ simulates $\textsf{T}_1$, what we might want is an implication like $\textsf{Complexity}(\textsf{T}_2) \geq \textsf{Complexity}(\textsf{T}_1)$, for some notion of complexity.  However, $\textsf{T}_2$ being able to simulate $\textsf{T}_1$ using $\mathcal{E}, \mathcal{D}$ actually implies a schematic inequality along the lines of
\begin{align}
\label{E:schematic1}
\max\{\textsf{Complexity}(\textsf{T}_2), \textsf{Complexity}(\mathcal{D})\} \geq \textsf{Complexity}(\textsf{T}_1)\,.
\end{align}
As such, if $\textsf{Complexity}(\mathcal{D}) \geq \textsf{Complexity}(\textsf{T}_1)$, then~\eqref{E:schematic1} is automatically satisfied and so we do not learn about the computational power of $\textsf{T}_2$.  On the other hand, if $\textsf{Complexity}(\mathcal{D}) < \textsf{Complexity}(\textsf{T}_1)$, then~\eqref{E:schematic1} implies $\textsf{Complexity}(\textsf{T}_2) \geq \textsf{Complexity}(\textsf{T}_1)$, and so we do learn about the computational power of $\textsf{T}_2$.  As such, the conceptual lesson is: \textit{the encoder and decoder need to be less computationally complex than the system being simulated if we wish to learn about the computational powers of the simulator.}  Moreover, we emphasize that we are always learning about the computational power of the simulator \textit{vis-\'{a}-vis} the computational power of $\textsf{T}_1$.

Before proceeding, let us demonstrate that Definition~\ref{def:sim1} interfaces nicely with standard constructions of universal Turing machines, which we often say can `simulate' any other Turing machine.

\begin{theorem}[Efficient universal Turing machine, adapted from~\cite{hennie1966two, arora2009computational}]\label{thm:efficient1}
There is a 3-tape universal Turing machine that can $(\tau, t(n))$--simulate any $k$-tape Turing machine for $\tau(s) \leq O(\log |\mathcal{D}(s)|)$ and $t(n) = n$.
\end{theorem}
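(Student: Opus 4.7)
The plan is to construct a fixed 3-tape universal Turing machine $U$, together with explicit $\mathcal{E}$, $\mathcal{D}$, $\tau$ that witness the simulation in the sense of Definition~\ref{def:sim1}, with the slowdown bound being essentially the classical multitape-to-2-tape simulation theorem of Hennie and Stearns~\cite{hennie1966two}. The three tapes of $U$ will play distinct roles: tape 1 holds a self-delimiting description $\langle \textsf{T}_1 \rangle$ of the transition table and state set of the $k$-tape machine being simulated, together with a constant-size pointer marking its current simulated state; tape 2 holds the contents of the $k$ simulated tapes, interleaved into $2k$ tracks with the hierarchical block structure described below; and tape 3 is a short scratch tape used to stage the $k$-tuple of head symbols for each lookup of $\delta_1$.

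Given a configuration $s$ of $\textsf{T}_1$, the encoder $\mathcal{E}(s)$ produces the $U$-configuration in which tape 1 contains $\langle\textsf{T}_1\rangle$ followed by the state symbol extracted from $s$, tape 2 contains the $k$ tape contents of $s$ aligned so that each simulated head sits in the central block $B_0$, and tape 3 is blank. This can clearly be done by a $2$-tape Turing machine in $O(|s|)$ steps, so $\mathcal{E}$ has input time complexity $O(n)$. The decoder $\mathcal{D}$ performs the inverse transformation: it reads tape 2, strips the block delimiters and track markers, reads the current state symbol from tape 1, and writes the resulting $\textsf{T}_1$ configuration on its output tape; this clearly runs in time $O(|\mathcal{D}(\cdot)|)$, giving output time complexity $O(n)$.

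The nontrivial part is showing that $U$ can simulate one step of $\textsf{T}_1$ in time $O(\log L)$, where $L = |\mathcal{D}(s)|$ is the length of the currently nonblank portion of the simulated configuration. For this I will invoke the Hennie--Stearns hierarchical-block construction: tape 2 is divided into blocks $\ldots, B_{-2}, B_{-1}, B_0, B_1, B_2, \ldots$ with $|B_i| = 2^{|i|}$, each carrying two tracks per simulated tape, and maintaining the invariant that the simulated head of every simulated tape lies in $B_0$. To simulate a step, $U$ reads the $k$ symbols under the simulated heads from $B_0$, copies them to tape 3, scans tape 1 to look up $\delta_1$, writes the resulting symbols back to $B_0$, and then performs a block-shift that logically advances each simulated head. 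The standard amortized analysis shows that the shift operation rebalances blocks $B_1, B_2, \ldots$ only as far as necessary, at a per-step cost of $O(\log L)$; this becomes the bound on $\tau(s)$, and since $\tau$ only inspects tape 2 to count nonblank blocks, it can itself be implemented in the required time.

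The main obstacle is purely implementational: one must verify that the bookkeeping of the block structure, the lookup of $\delta_1$, and the transfer of symbols between the two tracks of each simulated tape can all be carried out by a \emph{single} fixed 3-tape machine $U$ whose state set and alphabet are independent of $\textsf{T}_1$ (this is what makes $U$ universal rather than tailored to $\textsf{T}_1$). This is handled by encoding $\langle \textsf{T}_1 \rangle$ in a self-delimiting format and using tape 3 as a constant-size register file; each simulated step then requires only $O(|\langle \textsf{T}_1 \rangle|)$ passes over tape 1 plus the $O(\log L)$ block-shift work. Folding the constant $|\langle \textsf{T}_1 \rangle|$ into the slowdown yields $\tau(s) \leq O(\log |\mathcal{D}(s)|)$, and taking $t(n) = n$ for $\mathcal{E}$ and $\mathcal{D}$ completes the verification of Definition~\ref{def:sim1}.
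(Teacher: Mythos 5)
The paper does not prove this theorem; it cites Hennie--Stearns and Arora--Barak, and it observes that the $t(n)=n$ complexity of the encoder and decoder is implicit in those references. Your write-up is a faithful expansion of the Hennie--Stearns hierarchical-block construction, and the framing of $\mathcal{E}$, $\mathcal{D}$, and $\tau$ in the language of Definition~\ref{def:sim1} is sound.

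There is one subtlety, however, that your proposal glosses over (and which the paper's phrasing ``adapted from'' may also be eliding): the Hennie--Stearns analysis gives $O(\log L)$ slowdown per simulated step only in an \emph{amortized} sense. An individual simulated step in which block $B_j$ must be rebalanced costs $\Theta(2^j)$ steps of the universal machine, and $j$ can be as large as $\Theta(\log L)$, so the worst-case cost of a single simulated step is $\Theta(L)$. Since Definition~\ref{def:sim1} requires $\mathcal{D} \circ \textnormal{\textsf{T}}_2^{\tau}(C_s) = \textnormal{\textsf{T}}_1(s)$ with $\tau$ applied \emph{exactly} $\tau(s_2)$ times from each $s_2 \in C_s$, the claimed inequality $\tau(s) \leq O(\log |\mathcal{D}(s)|)$ is a per-step worst-case bound, and the standard construction does not satisfy it as stated. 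To repair this one would either need a de-amortized (lazy-rebalancing) variant of the block-shift scheme --- a nontrivial additional ingredient not present in the cited references or in your write-up --- or one would need to weaken the theorem to say that the \emph{sum} of $\tau$ along an orbit of length $T$ is $O(T\log T)$, which is what the $O(T\log T)$ total-time bound of Hennie--Stearns actually gives. Your sentence ``the standard amortized analysis shows \ldots\ a per-step cost of $O(\log L)$; this becomes the bound on $\tau(s)$'' is precisely where the gap sits: amortized per-step cost does not bound the value $\tau(s)$ at each configuration.
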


\noindent The statement and proof of Theorem~\ref{thm:efficient1} in~\cite{hennie1966two, arora2009computational} discuss the $\tau(s) \leq O(\log |\mathcal{D}(s)|)$ growth explicitly.  If the Turing machine being simulated terminates in $T$ steps, then the simulator will terminate in order $T \log T$ steps.  The $t(n) = n$ behavior is implicit in the proofs in~\cite{hennie1966two, arora2009computational}.  We observe that the $t(n) = n$ dependence is optimal, since the encoder and decoder need to look at every symbol in the input and output at least once, giving us $t(n) = \Omega(n)$.  This motivates the following definition, to be used later:
\begin{definition}[Optimal encoder and decoder]\label{def:optencdec1} An encoder and decoder pair $\mathcal{E}, \mathcal{D}$ are optimal if $\mathcal{E}$ has input time complexity $\Theta(n)$ and $\mathcal{D}$ has output time complexity $\Theta(n)$.
\end{definition}
\noindent As we go along, we point out when an encoder and decoder are optimal. We now state a useful and elementary lemma.
\begin{lemma}
\label{lemma:UTM-has-periodic-points}
   If $\textnormal{\textsf{T}}_2:S_2 \to S_2$ is a universal Turing machine then there is an $s \in S_2$ which is non-halting and which is a periodic point of $\textnormal{\textsf{T}}_2$. In fact, for any $N$, there is an $L \gg 0$ such that $\textnormal{\textsf{T}}_2$ has at least $N$ distinct periodic orbits of period $L$. 
\end{lemma}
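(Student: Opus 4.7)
My plan is to exploit universality to produce at least one non-halting periodic orbit (Part~1), and then to multiply it into $N$ distinct orbits of a common period $L$ by adjoining inert tape content outside the head-visitation window (Part~2).

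For Part~1, I would consider the trivial non-halting Turing machine $\machine{L}$ with a single non-halt state $q_0$ (plus an unreachable halt state) and transition $\delta(q_0,\cdot) = (q_0,\cdot,\text{S})$, so that $q_0\sqcup$ is a fixed non-halting point of $\machine{L}$. By universality, $\machine{T}_2$ simulates $\machine{L}$ via some $(\mathcal{E},\mathcal{D},\tau)$, and setting $s_0 := \mathcal{E}(q_0\sqcup)$ produces an infinite non-halting orbit of $\machine{T}_2$ that returns to the fiber $\mathcal{D}^{-1}(q_0\sqcup)$ every $\tau$ steps. Since $\mathcal{E}$ has input time complexity $O(t(|q_0\sqcup|)) = O(1)$, the initial configuration $s_0$ is supported on a bounded portion of the tape; appealing to an efficient universal Turing machine construction along the lines of Theorem~\ref{thm:efficient1}, whose simulation of a bounded-tape machine uses only bounded tape, the entire $\machine{T}_2$-orbit of $s_0$ stays inside a finite subset of $S_2$. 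Pigeonhole then forces the orbit to revisit a configuration, so the cycle it falls into is a non-halting periodic orbit $O_0$ of some minimal period $L_0$.

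For Part~2, I would observe that since $O_0$ is finite, it visits only finitely many head positions, all contained in some bounded tape window $[-P,P]$. Fix a representative $c_0 \in O_0$. For any tape content $w$ placed strictly outside the window -- which the head provably never reaches during the orbit -- the configuration $c_0^{(w)}$ obtained by adjoining $w$ to $c_0$ has a $\machine{T}_2$-orbit that is also periodic of exact period $L_0$, because the dynamics on the visible portion of the tape coincides with that of $c_0$ while $w$ remains untouched. Choosing $N$ distinct inert decorations $w_1,\ldots,w_N$ yields $N$ pairwise distinct non-halting periodic orbits of period $L_0$ in $\machine{T}_2$; setting $L$ to be any sufficiently large multiple of $L_0$ then furnishes the desired collection of orbits of period $L$.

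The main obstacle is the ``bounded tape'' step in Part~1. Definition~\ref{def:sim1} by itself only guarantees that $\machine{T}_2^\tau$ returns to the fiber $\mathcal{D}^{-1}(q_0\sqcup)$, while the concrete $\machine{T}_2$-configuration may a~priori drift within that fiber and expand the tape without bound; ruling this out requires a simulation whose auxiliary tape usage is controlled. The cleanest route, which I would take, is to fix from the outset the efficient universal Turing machine of Theorem~\ref{thm:efficient1}, for which the bounded-tape property is a standard feature of the construction, and then to observe that once any single non-halting periodic orbit is in hand, the inert-decoration argument of Part~2 applies to $\machine{T}_2$ intrinsically without further reference to the simulation.
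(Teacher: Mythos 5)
Your Part~2 (adjoining inert tape content outside the head-visitation window of a periodic orbit) is a clean and valid way to multiply one orbit into $N$ of a common period; the paper instead takes the arguably more uniform route of noting that the fibers $\mathcal{D}^{-1}(s)$ are pairwise disjoint over infinitely many non-halting base configurations $s$, each contributing its own periodic point. Either works once Part~1 is secured.

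The real issue is in Part~1, and you have correctly put your finger on where the gap is, but your proposed fix does not close it. The lemma is a statement about an \emph{arbitrary} universal Turing machine $\textsf{T}_2$, so you cannot ``fix from the outset the efficient universal Turing machine of Theorem~\ref{thm:efficient1}'': you do not get to choose $\textsf{T}_2$, and even if you could, the bounded-tape behaviour you want is not part of what Theorem~\ref{thm:efficient1} asserts (it is a time bound, not a space bound). What actually resolves the ``orbit drifts within the fiber and expands the tape'' worry is a fact you never invoke: by Definition~\ref{def:encdec1}, the Turing machine computing $\mathcal{D}$ must \emph{erase its input tape} and halt within $O(t(|\mathcal{D}(s^{(1)})|))$ steps, so it must in particular visit (hence read) every non-blank cell of $s^{(1)}$; this gives $|s^{(1)}| = O(t(|s|))$ for every $s^{(1)} \in \mathcal{D}^{-1}(s)$, and therefore $\mathcal{D}^{-1}(s)$ is a \emph{finite} set. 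Once you know $C_s := \mathcal{D}^{-1}(q_0\sqcup)$ is finite and (by the simulation condition) forward-invariant under $\textsf{T}_2^\tau$, pigeonhole on the self-map $\textsf{T}_2^\tau\!\restriction_{C_s}$ immediately produces a $\textsf{T}_2^\tau$-periodic, hence $\textsf{T}_2$-periodic, point in $C_s$ -- no reasoning about the head's spatial excursions during the intermediate $\tau$ steps is needed. This finiteness of fibers, coming from the decoder's output time complexity, is the missing ingredient, and it is exactly the point of imposing complexity constraints on $\mathcal{E}, \mathcal{D}$ in the first place.
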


\begin{proof}
    Let $\textsf{T}_1:S_1 \to S_1$ be the Turing machine which `does nothing'; in particular, whenever the head is not in the halting state we have $\textsf{T}_1(s)=s$. We must have that $\textsf{T}_2$ $(\tau, t(n))$-simulates $\textsf{T}_1$ for some $\tau$ and $t(n)$ and some encoder-decoder pair $\mathcal{E}, \mathcal{D}$. Now we note that $\mathcal{D}^{-1}(s)$ is finite for every $s \in S_1$; indeed, the bound on output time complexity controls the length of the input, as the algorithm computing $\mathcal{D}$ must \emph{erase the input} on the first tape (and in particular read all of its symbols). We then claim that for every $s \in S_1$ with the head not in the halting state, the set $\mathcal{D}^{-1}(s)$ must contain a periodic point for $\textsf{T}_2$. Indeed we must have by condition 2 of Definition \ref{def:sim1} that $\textsf{T}^\tau_2(\mathcal{D}^{-1}(s)) \subset \mathcal{D}^{-1}(s)$; so this set must contain a fixed point of $\textsf{T}^\tau_2$, i.e.~a periodic point of $\textsf{T}_2$. The second statement follows from the fact that the sets $\mathcal{D}^{-1}(s)$, each of which contains a $\textsf{T}_2$-periodic point, are disjoint as $s$ runs over the infinitely many configurations $s \in S_1$ which are period-1 non-halting configurations for $\textsf{T}_1$. 
\end{proof}

% attribute

% translation example

%At an intuitive level, a potential

% restrict the complexity

%account for

% [[Explain how the definition is the right one, and satisfied for textbook examples; maybe give an example]]

% \noindent For concreteness, we will take $\Sigma = \{0,1\}$ in our discussion of Turing machines henceforth.

% [[ How our definition relates to others ]]

\subsubsection{A useful example}
\label{subsubsec:counterexample1}

Here we show that a `trivial' Turing machine can simulate a universal Turing machine if we do not put any complexity restrictions on the encoder and decoder.  This example emphasizes that such restrictions are essential for simulation to be a useful or meaningful notion.  Let $\textsf{T}_{\!\text{univ}}$ with triple $(Q, \Gamma, \delta)$ be some single-tape universal Turing machine with configuration space $S = \Gamma^* \times Q \times \Gamma^*$. Let us suppose that $\{0,1\} \subseteq \Gamma$.  Now let us define the `trivial' Turing machine that will simulate $\textsf{T}_{\!\text{univ}}$ using a complicated encoder and decoder pair.

Let the `trivial' Turing machine $\textsf{T}_{\!\text{trivial}}$ be described by the triple $(Q, Q \cup \Gamma, \delta_{\text{simple}})$, where
\begin{align}
\label{E:simpledeltarule1}
\delta_{\text{trivial}}(q_0, \sqcup) = (q_0, 1, \text{R})
\end{align}
and otherwise $\delta_{\text{trivial}}(q', b) = (q_{\text{halt}}, 0,\,\text{S})$ for $q' \not = q_0$ or $b \not = \sqcup$.  On account of~\eqref{E:simpledeltarule1}, for any $s \in S$ we have
\begin{align}
\textsf{T}_{\!\text{trivial}}(s \,0\underbrace{1 1 \cdots 1}_{k} q_0 \, \sqcup) = s\,0\underbrace{1 1 \cdots 1}_{k+1} q_0 \, \sqcup\,.
\end{align}
Then defining the encoder $\mathcal{E}$ and decoder $\mathcal{D}$ by
\begin{align}
\label{E:encoddecodeex1}
\mathcal{E}(s) := s\,0\,q_0\,\sqcup\,,\qquad \mathcal{D}(s\,0\underbrace{1 1 \cdots 1}_{k} q_0 \, \sqcup) := \textsf{T}_{\!\text{univ}}^k(s)\,,
\end{align}
we see that $\textsf{T}_{\!\text{trivial}}$ simulates $\textsf{T}_{\!\text{univ}}$ with respect to $\mathcal{E}, \mathcal{D}$.

In this example, we see that while $\textsf{T}_{\!\text{trivial}}$ cannot furnish Turing-universal computation itself, the decoder $\mathcal{D}$ does furnish Turing-universal computation.  Then the only role of $\textsf{T}_{\!\text{trivial}}$ is to serve as a memory and counter.  As such, this example demonstrates that without constraining the complexity of the encoder and decoder, they can `do the work' of the computation that we would otherwise wish to associate with the simulator.  To connect back with our schematic~\eqref{E:schematic1}, the decoder in~\eqref{E:encoddecodeex1} indeed has the same `complexity' as $\textsf{T}_{\!\text{univ}}$ itself, which renders the simulation property uninteresting.

We emphasize that decoder in~\eqref{E:encoddecodeex1} does not have any bounded output complexity bound of the form $O(t(n))$.  To see this, consider a configuration $s$ which is a fixed point of $\textsf{T}_{\!\text{univ}}$ so that $\textsf{T}_{\!\text{univ}}(s) = s$.  Then $\mathcal{D}(s\,0\,11\cdots1\,q_0 \sqcup) = s$ regardless of the number of $1$'s, but its complexity grows with the number of $1$'s.  Then the output complexity is not a function of $|s|$.

As such, we see that merely having an input and output complexity bound for the encoder and decoder, respectively, provides important constraints on what kinds of machines can `simulate' a universal Turing machine.  These same considerations will carry over to our discussions of computational dynamical systems below.

\begin{remark}
    It is natural to require additionally that if $\textsf{T}_2$ simulates $\textsf{T}_1$, then $\textsf{T}_2$ halts exactly when $\textsf{T}_1$ halts. This is not done in Definition \ref{def:sim1}, and the astute reader may notice that including this condition invalidates the example described above. However, the purpose of Definition \ref{def:sim1} will ultimately be to clarify the meaning of what it means for a \emph{dynamical system} $f$ (which may ultimately be some differentiable map) to simulate a Turing machine, as per Definition \ref{def:sim2} below. There is no natural meaning to the `halting' of $f$, so it makes sense to drop the halting condition here; moreover, the example above forms the basis for a more interesting example in the next section below. 
\end{remark}

%inadequate

% [[ Give simple counterexample (maybe with a figure) ]]

% [[ articulate the general principle(s) (and italicize) ]]

\subsubsection{Defining CDSs}
\label{sec:defining-CDS}

We now turn to defining a computational dynamical system, or CDS.  At a high level, our definition will parallel Definition~\ref{def:sim1}; that is, a computational dynamical system is essentially a dynamical system that simulates a particular machine.  In slightly more detail, a CDS will be specified by a tuple $(f, \mathcal{E}, \mathcal{D}, \tau, \textsf{T})$ where $f : M \to M$ for some space $M$ (possibly a manifold) and $\textsf{T} : S \to S$ is a Turing machine, such that $f$ $(\tau, t(n))$--simulates $\textsf{T}$ with respect to $\mathcal{E}, \mathcal{D}$.  If $M$ is a discrete space, then we reduce to our previous discussions.  However, if $M$ is e.g.~a smooth manifold, then the story becomes more interesting. For instance, notice that $\mathcal{E} : S \to M$, $\mathcal{D} : M \rightharpoonup S$, and $\tau : M \to \mathbb{Z}_{\geq 0}$.  As such, if $S$ is a space of finite strings and $M$ is a smooth manifold, then $\mathcal{E}$, $\mathcal{D}$, and $\tau$ cannot be implemented by ordinary Turing machines.

For our purposes, we will always consider manifolds which are explicitly subsets of $\mathbb{R}^k$, although our analyses can be readily adapted to more general manifolds.  As such, it is natural that $\mathcal{E}, \mathcal{D}$ be implemented by Turing machines with $\Gamma = \mathbb{R}$, i.e.~having $\mathbb{R}$-valued symbols on the tapes.  An $\R$-Turing machine is essentially a BSS machine~\cite{blum1989theory}.  The BSS machines have been used to develop a rich theory of computability and complexity over the reals, including decidability problems involving fractals,  and complexity bounds for computing features of polynomial inequalities~\cite{blum1989theory, smale1997complexity}.  We note that BSS machines can be defined over any field $F$, with $F = \mathbb{Z}_2$ corresponding to ordinary Turing machines (with $\Gamma = \{0,1\}$) \cite{braverman_bssc_arxiv}. Since we are interested in the $F = \mathbb{R}$ setting, we will take `BSS machine' to mean a `BSS machine with $F = \mathbb{R}$' unless otherwise specified.

BSS machines will serve as our model of encoders and decoders $\mathcal{E}, \mathcal{D}$ for CDSs, as well as our slowdown functions $\tau$.  Since the definition for BSS machines is somewhat elaborate, we have put it in Appendix~\ref{App:BSSC} so as not to interrupt the flow of the paper.  We note that we make a slight modification to the definition of BSS machines suggested in~\cite{braverman_bssc_arxiv}.  Typically, the head of a BSS machine has access to a finite number of real-valued constants, which are not required to be computable.  In~\cite{braverman_bssc_arxiv}, Braverman suggests requiring the constants in question to be computable so as to avoid certain pathologies, and he calls the corresponding machines BSS$_{\textsf{C}}$ machines, where the `\textsf{C}' stands for `computable constants'.  We will use BSS$_{\textsf{C}}$ machines henceforth.  In some cases, our results will depend on the details of the particular finite set of computable constants to which a BSS$_{\textsf{C}}$ machine has access.

It will be notationally convenient to define a hybrid of a BSS$_{\textsf{C}}$ machine and ordinary Turing machine.  We defer a precise definition to Appendix~\ref{App:BSSC}, and so give an informal definition here.
\begin{definition}[Hybrid BSS$_{\textnormal{\textsf{C}}}$ machine, informal] A \textbf{hybrid BSS$_{\textnormal{\textsf{C}}}$ machine} is a \text{\rm BSS$_{\textnormal{\textsf{C}}}$} machine over $\mathbb{R} \times \mathbb{Z}_d$ with two tapes.  The first tape has symbols valued in $\mathbb{R}$, and the second tape has symbols valued in $\mathbb{Z}_d$.  We will let $0$ (in either $\mathbb{R}$ or $\mathbb{Z}_d$) be a proxy for the blank symbol, and will use $0$ and $\sqcup$ interchangeably in this context. 
\end{definition}

\noindent While a BSS$_{\textsf{C}}$ machine over $\mathbb{R}$ is sufficient to capture the power of a hybrid BSS$_{\textsf{C}}$ machine, we will nonetheless find the definition of the latter to be natural and conceptually useful.  To this end, we are now prepared to define $\mathbb{R}$-valued encoders and decoders by direction analog with Definition~\ref{def:encdec1}.

\begin{definition}[$\mathbb{R}$-valued encoders, decoders, slowdown functions, and their complexities] \label{def:decoders}
Let $S$ be a language over a finite alphabet  $\Sigma$ of size $d$, and $M$ be a submanifold of $\mathbb{R}^k$.
\begin{itemize}
    \item An \textbf{encoder} is a function $\mathcal{E} : S \to M$ that can be instantiated by a hybrid BSS$_{\textnormal{\textsf{C}}}$ machine over $\mathbb{R} \times \mathbb{Z}_d$ as follows.  The initial state of the first tape of the hybrid machine is $q_0\,\sqcup$ and the initial state of the second tape is $q_0'\,s$ for $s \in S$.  Then the machine halts with the first tape in the configuration $q_{\textnormal{halt}}\,\mathcal{E}(s)$, where $\mathcal{E}(s)\in M \subset \R^k$, and the second tape having all blank symbols.  If the machine halts in time $O(t(|s|))$, then we say that $\mathcal{E}$ has \textbf{input time complexity} $O(t(n))$.
    \item A \textbf{decoder} is a partial function $\mathcal{D} : M \rightharpoonup S$ that can be instantiated by a hybrid BSS$_{\textnormal{\textsf{C}}}$ machine over $\mathbb{R} \times \mathbb{Z}_d$ as follows.  The initial state of the first tape of the hybrid machine is $q_0\,x$ for $x \in M \subset \R^k$, and the initial state of the second tape is $q_0'\,\sqcup$.  If $x$ is in the domain of definition of $\mathcal{D}$, then the machine halts with the second tape in the configuration $q_{\textnormal{halt}}'\,\mathcal{D}(x)$ for $\mathcal{D}(x) \in S$, and the first tape having all blank symbols.  If the machine halts in time $O(t(|\mathcal{D}(x)|))$, then we say that $\mathcal{D}$ has \textbf{output time complexity} $O(t(n))$. 
    \item A \textbf{slowdown function} is a partial function $\tau : M \rightharpoonup \mathbb{Z}_{\geq 0}$ which is defined exactly on the domain of definition of $\mathcal{D}$, is constant on the connected components of $\mathcal{D}^{-1}$, and can be instantiated by a hybrid BSS$_{\textnormal{\textsf{C}}}$ machine over $\mathbb{R} \times \mathbb{Z}_d$ as follows.  The initial state of the first tape of the hybrid machine is $q_0\,x$ for $x \in M \subset \R^k$, and the initial state of the second tape is $q_0'\,\sqcup$.  If $x$ is in the domain of definition of $\tau$, then the machine halts with the second tape in the configuration $q_{\textnormal{halt}}'\,\tau(x)$ for $\tau(x) \in \mathbb{Z}_{\geq 0}$, and the first tape having all blank symbols.  If the machine halts in time $O(u(|\mathcal{D}(x)|))$, then we say that $\tau$ has \textbf{slowdown function time complexity} $O(u(n))$.  We will always assume that $u(n)$ is a computable function.  
\end{itemize}

\end{definition}
\noindent In the definition above, the hybrid BSS$_{\textsf{C}}$ machines serve `translators' between $S$ and $M$.  With the above definition at hand, we can now formulate simulability of a machine by a finite state machine, akin to Definition~\ref{def:sim1} above.
\begin{definition}[Simulation of a Turing machine by a dynamical system] 
\label{def:sim2}
Let $f : M \to M$ for $M \subseteq \mathbb{R}^k$ be a dynamical system, and let $\textnormal{\textsf{T}} : S \to S$ be a Turing machine.  We say that $f$  $(\tau, t(n))$-simulates $\textnormal{\textsf{T}}$ if the exists an encoder $\mathcal{E} : S \to M$ with input complexity $O(t(n))$, a decoder $\mathcal{D} : M \rightharpoonup S$ with input complexity $O(t(n))$, and a slowdown function $\tau: M \rightharpoonup \mathbb{Z}_{\geq 0}$   such that
\begin{enumerate}
    \item \emph{(Encoding/Decoding)} $\mathcal{E}(s) \in C_s$ for all $s \in S$, where $C_s := \mathcal{D}^{-1}(s)$; and
    \item \emph{(Simulation Condition)} $\mathcal{D} \circ f^{\tau}(C_s) = \textnormal{\textsf{T}}(s)$ for all $s \in S$.
\end{enumerate}
\end{definition}
\noindent We note that Remark~\ref{rem:haltstate1} for Definition~\ref{def:sim1} generalizes appropriately to Definition~\ref{def:sim2}. Moreover Definition~\ref{def:sim2} readily generalizes to machines beyond Turing machines $\textsf{T}$, and we will consider such generalizations later on.  We also make several additional remarks.

\begin{remark}
\label{rk:explanation-of-simulation-condition}
Notice that in the definition of simulation (Definition \ref{def:CDS1}), the encoder plays no role except to ensure that there is an efficiently computable point inside each set $C_s$. However, we wish to \emph{think} of the encoder $\mathcal{E}$ as serving the role of \emph{encoding} the state $s$ of $\machine{T}$. In this definition of `simulation', we are imagining $f$ as an `unknown' or `natural' system, and $\mathcal{E}$ and $\mathcal{D}$ as experimental apparatuses, where $\mathcal{E}$ encodes a computational state into a state of the continuous system underlying $f$, and $\mathcal{D}$ reads out the state. Thus, another natural version of the simulation condition of Definition  \ref{def:CDS1} would be the condition that 
\begin{equation}
\label{eq:simulation-condition-with-encoder}
\mathcal{D} \circ f^\tau \circ \mathcal{E}(s) = \textsf{T}(s) \,\text{ for all }\, s \in S\,.
\end{equation}
We instead use the `simulation condition' in Definition~\ref{def:sim2} for technical convenience later; see the discussion in the Section \ref{sec:robustness} on robust computation.
\end{remark}

\begin{remark}
The condition that $\mathcal{E}$ and $\mathcal{D}$ are defined by low-complexity circuits, besides avoiding certain pathologies (see Example \ref{example:uncon1} below), is meant to model the idea that an \emph{experimental apparatus} must be implemented in a `known' way, and so its behavior must be efficiently computable.  (See also~\cite{Aharonov:2021das} for a related discussion.)
\end{remark}

\begin{remark}
    In many settings the configuration space $S$ can be thought of as a tuple $(S_{\text{fin}}, S_{\text{inf}})$, where $S_{\text{fin}}$ lies in a finite set and $S_{\text{inf}}$ lies in an infinite set, e.g.~$S_{\text{fin}}$ is the state of the finite state machine composing the Turing machine head and $S_{\text{inf}}$ is the state of the tapes. We could make the stronger requirement that the decoder $\mathcal{D}$ be written as $(\mathcal{D}_{\text{fin}}, \mathcal{D}_{\text{inf}})$ where $\mathcal{D}_{\text{fin}}$ computes the corresponding element of $S_{\text{fin}}$ in $O(1)$ time. This condition often holds in examples, and in the case of the simulation of a Turing machine by a dynamical system lets us define a semialgebraic halting set $\mathcal{D}^{-1}(q_{\text{halt}})$. 
\end{remark}

\begin{remark}[Modification for continuous-time dynamical systems]
\label{rem:modification1}
 If $f$ is a continuous-time dynamical system, e.g.~a map $f : \mathbb{R}_{\geq 0} \times M \to M$ written as $f_t(x)$, then we can modify Definition~\ref{def:sim2} by letting $\tau : M \rightharpoonup \mathbb{R}_{\geq 0}$ (possibly implemented by a bounded-complexity BSS$_{\textsf{C}}$ machine) and modify the second condition in the definition to be $\mathcal{D} \circ f_{\tau}(C_s) = \textnormal{\textsf{T}}(s)$ for all $s \in S$.
\end{remark}

\noindent Finally, we now arrive at our definition of a computational dynamical system.
\begin{definition}[Computational dynamical system]
\label{def:CDS1}
The tuple $(f, \mathcal{E}, \mathcal{D}, \tau, \textnormal{\textsf{T}})$ forms a \textbf{computational dynamical system} or \textbf{CDS} if $f$ $(\tau, t(n))$--simulates $\textnormal{\textsf{T}}$ with respect to the encoder and decoder $\mathcal{E}, \mathcal{D}$.
\end{definition}

% [[Give definitions]]

\noindent  We observe that Definitions~\ref{def:sim2} and~\ref{def:CDS1} provide a means to discuss computation in autonomous dynamical systems.

\begin{remark}[Computable manifolds]
    \label{rk:computable-manifold} It is natural to generalize Definition~\ref{def:CDS1} to a setting where the dynamics of $f$ take place on a manifold, i.e.~$f: M \to M$ is a diffeomorphism of some manifold $M$. In that case, we require that $M$ is a \emph{computable manifold}, namely we fix a finite collection of charts $\phi_i: \R^n \supset U_i \to M$ for $U_i$, $i=1, \ldots, r$ (such that the $\phi_i$'s agree on the non-trivial overlaps of the $U_i's$) and our functions $\mathcal{E}$ and $\mathcal{D}$ are now defined as compositions of BSS$_{\textsf{C}}$-computable functions to or from $M = \bigcup_i U_i$ with the charts $\phi_i$. 

    In several cases below, we will consider dynamics on a torus $f: (S^1)^n \to (S^1)^n$ or on a product of a torus with a Euclidean space; in those cases we will take our charts $\phi_i$ to be the usual charts $[0,1)^n \to (S^1)^n$.
\end{remark}

What we have achieved so far is a precise way of articulating what it means for a dynamical system $f$ to instantiate computation.  As in our previous discussions, it is essential that we keep track of the complexity of the encoder and decoder.  To this end, we note that Definition~\ref{def:optencdec1} generalizes to the CDS setting immediately:

\begin{definition}[Optimal $\mathbb{R}$-valued encoder and decoder]\label{def:optencdec2} An $\mathbb{R}$-valued encoder and decoder pair $\mathcal{E}, \mathcal{D}$ are optimal if $\mathcal{E}$ has input time complexity $\Theta(n)$ and $\mathcal{D}$ has output time complexity $\Theta(n)$.
\end{definition}

\noindent Moreover, the definition of a CDS allows us to articulate what it means for a dynamical system to be Turing-universal:
\begin{definition}[Turing-universal CDS]
\label{def:TuringCDS}
We say that a CDS $(f, \mathcal{E}, \mathcal{D}, \tau, \textnormal{\textsf{T}}_{\!\text{\rm univ}})$ is a \textbf{Turing-universal CDS} if $\textnormal{\textsf{T}}_{\!\text{\rm univ}}$ is a universal Turing machine.  Moreover, we also say that $f$ is Turing universal if there exists a Turing-universal CDS to which $f$ belongs.
\end{definition}
\noindent Various refinements of the above can be used to define a relationship between complexity classes and dynamical systems, which we discuss in Appendix~\ref{App:compclass}.  We emphasize that the above definition importantly requires that $\mathcal{E}, \mathcal{D}$ have bounded input and output complexity, respectively.  This precludes examples akin to the one given in Section~\ref{subsubsec:counterexample1}.

To be explicit, let us given a CDS generalization of the example in Section~\ref{subsubsec:counterexample1} which demonstrates that trivial dynamical systems can `simulate' universal Turing machines if we put no restrictions on the encoder and decoder.

\begin{figure}[t!]
    \centering
    \includegraphics[scale = .75]{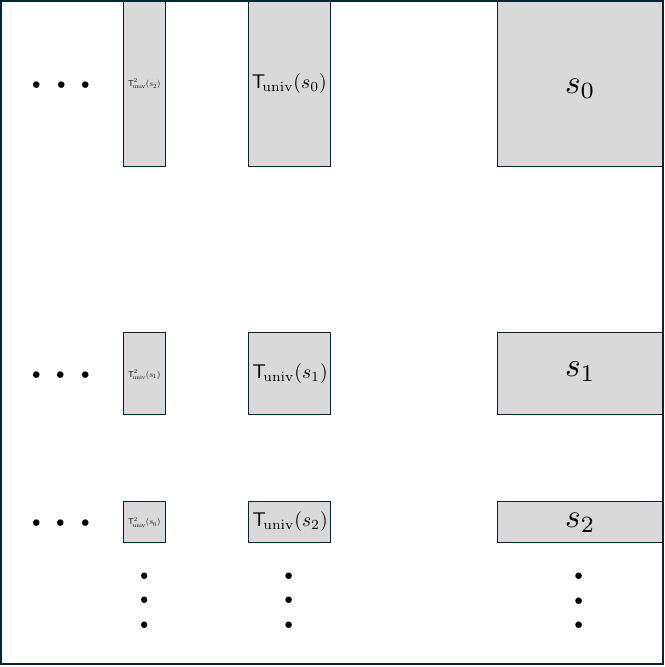}
    \caption{For Example~\ref{example:uncon1} we depict $M = [0,1]^2$, and shade in regions that encode configurations of the Turing machine.  Each such region is labelled by the configuration into which it is decoded by the $\mathcal{D}$ given in the example.}
    \label{fig:M0101v1}
\end{figure}

\begin{example}[Unconstrained encoder and decoder]\label{example:uncon1}
Let $M = [0,1]^2$, and define $f : [0,1]^2 \to [0,1]^2$ by $f(x,y) = (x/2,y)$.  Now consider a universal Turing machine $\textsf{T}_{\!\text{univ}}$ with configuration space $S$.  The $S$ is enumerable, and thus we write it as $S = \{s_0, s_1,...\}$.  Then define $\mathcal{E}(s_n) = (1,\,\frac{1}{2^n})$ and note that $f^\ell(s_n) = (\frac{1}{2^\ell},\,\frac{1}{2^n})$.  We define the decoder by
\begin{align}
\mathcal{D}^{-1}(s_n) = C_{s_n} := \bigcup_{n = 0}^\infty \,\bigcup_{\ell \,: \, \textsf{T}_{\textsf{univ}}^\ell(s_n) \,=\, s_m}\,[\frac{1}{2^\ell} - \frac{1}{2^{\ell+2}}\,,\, \frac{1}{2^\ell}] \times [\frac{1}{2^n} - \frac{1}{2^{n+2}}\,,\, \frac{1}{2^n}]\,.
\end{align}
In Figure~\ref{fig:M0101v1}, we show $M = [0,1]^2$ with regions each labelled by the configuration into which it will be decoded.  It can be checked that $\mathcal{D}$ can be implemented by a BSS$_{\textsf{C}}$ machine that does \textit{not} have bounded output complexity.  Moreover, computing $\mathcal{D}$ involves running the universal Turing machine $\textsf{T}_{\!\text{univ}}$.  The construction guarantees that $\mathcal{D} \circ f^m(C_s) = \textsf{T}_{\!\text{univ}}^m(s)$ for all $s \in S$ and all $m \in \mathbb{Z}_{\geq 0}$.  At a high level, the decoder takes in a point in its domain in $[0,1]^2$ and by examining the $y$ coordinate determines the corresponding initial configuration $s_n$ of the Turing machine.  Then the decoder runs the Turing machine for a number of steps $\ell$ determined by the $x$-coordinate, and outputs $\textsf{T}_{\!\text{univ}}^\ell(s_n)$.  As such, we see that a decoder which is `too powerful' makes trivial the notion of Turing universality for a dynamical system.  Our definitions prevent a `too powerful' encoder or decoder from arising.
\end{example}

We conclude by discussing how Definition~\ref{def:sim2} allows us to reason about what it means for one dynamical system to simulate another dynamical system.  We begin with an initial remark.

\begin{remark}[Simulating one dynamical system by another]
Suppose we have two dynamical systems $f_1 : M_1 \to M_1$ for $M_1 \subseteq \mathbb{R}^{k_1}$ and $f_2 : M_2 \to M_2$ for $M_2 \subseteq \mathbb{R}^{k_2}$.  Then Definition~\ref{def:sim2} readily generalizes to this setting if we let the encoders and decoders be e.g.~2-tape BSS$_{\textsf{C}}$ machines over $\mathbb{R}$.
\end{remark}

\noindent Since our Definition~\ref{def:sim1} of the simulation of Turing machines so closely parallels Definition~\ref{def:sim2} of the simulation of a Turing machine by an $\mathbb{R}$-valued dynamical system, it is sensible to generalize our definition of a CDS to account for both the discrete and $\mathbb{R}$-valued settings.  To this end, we have the following definition.

\begin{definition}[Discrete CDSs and sub-machines]\label{def:sub-machines1}
Suppose that $\textnormal{\textsf{T}}_2$  $(\tau, t(n))$-simulates $\textnormal{\textsf{T}}_1$ with respect to $\mathcal{E}, \mathcal{D}$, in the sense of Definition~\ref{def:sim1}.  Then we say that the tuple $(\textnormal{\textsf{T}}_2, \mathcal{E}, \mathcal{D}, \tau, \textnormal{\textsf{T}}_1)$ is a \textbf{(discrete) CDS}, and we further say that $\textnormal{\textsf{T}}_1$ is a \textbf{sub-machine} of $\textnormal{\textsf{T}}_2$.
\end{definition}

\noindent We will often omit the word `discrete' in `discrete CDS' when the context is clear.

There is a very useful relationship between CDSs and sub-machines.  We will state one version of this relationship with the following definition and lemma.
\begin{definition}
Let $f : M \to M$ by a dynamical system, and let $\tau_1, \tau_2 : M \rightharpoonup \mathbb{Z}_{\geq 0}$ be slowdown functions then we define the $\tau_1 \!\times_{\! f}\! \tau_2 := \tau_1 \circ f^{\tau_2}$, where $f^\tau_2 : x \mapsto f^{\tau_2(x)}(x)$.  Then we have $f^{\tau_1 \times_{\! f} \tau_2} = f^{\tau_1} \circ f^{\tau_2}$.
\end{definition}
\noindent We note that if $\tau_1$ and $\tau_2$ are computable by hybrid BSS$_{\textsf{C}}$ machines, then $\tau_1 \!\times_{\! f}\! \tau_2$ may not be computable contingent on $f$.  This fact will not affect any of our results.
\begin{lemma}[Fundamental lemma of sub-machines]\label{lemm:fund1}
Let $(f, \mathcal{E}, \mathcal{D}, \tau, \textnormal{\textsf{T}}_{2})$ be a CDS with encoder and decoder complexity $O(t(n))$, and let $(\textnormal{\textsf{T}}_{2}, \widetilde{\mathcal{E}}, \widetilde{\mathcal{D}}, \tilde{\tau}, \textnormal{\textsf{T}}_{1})$ be a discrete CDS with encoder and decoder complexity $O(\tilde{t}(n))$.  Then $(f, \mathcal{E} \circ \widetilde{\mathcal{E}}, \widetilde{\mathcal{D}} \circ \mathcal{D}, \tau \times_{\! f}  \tilde{\tau}, \,\textnormal{\textsf{T}}_{1})$ is a CDS with encoder and decoder complexity $O(\max\{t(n),\,\tilde{t}(n)\})$.

On the other hand, suppose that a given $f$ cannot be extended to a CDS for a Turing machine $\textnormal{\textsf{T}}_1$.  Then $f$ likewise cannot be extended to a CDS for any $\textnormal{\textsf{T}}_2$ for which $\textnormal{\textsf{T}}_1$ is a sub-machine.
\end{lemma}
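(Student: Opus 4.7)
My plan is to verify directly that the proposed tuple $(f, \mathcal{E} \circ \widetilde{\mathcal{E}}, \widetilde{\mathcal{D}} \circ \mathcal{D}, \tau \times_{\! f} \tilde{\tau}, \textnormal{\textsf{T}}_1)$ satisfies the three CDS axioms (encoding, simulation, and complexity), and then to observe that the second statement is simply the contrapositive of the first.

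For the \emph{encoding} condition, I would regroup:
\[ (\widetilde{\mathcal{D}} \circ \mathcal{D}) \circ (\mathcal{E} \circ \widetilde{\mathcal{E}}) = \widetilde{\mathcal{D}} \circ (\mathcal{D} \circ \mathcal{E}) \circ \widetilde{\mathcal{E}} = \widetilde{\mathcal{D}} \circ \widetilde{\mathcal{E}} = \text{\rm Id}_{S_1}, \]
using the encoding conditions for the two given CDSs. For the \emph{simulation} condition, the key step is to iterate the original CDS condition to obtain $\mathcal{D} \circ (f^\tau)^k = \textnormal{\textsf{T}}_2^k \circ \mathcal{D}$ on $\mathcal{D}^{-1}(S_2)$ for every $k \geq 0$. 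Fixing $s_1 \in S_1$ and $x \in (\widetilde{\mathcal{D}} \circ \mathcal{D})^{-1}(s_1) = \bigcup_{s_2 \in \widetilde{\mathcal{D}}^{-1}(s_1)} \mathcal{D}^{-1}(s_2)$, I set $s_2 := \mathcal{D}(x) \in \widetilde{\mathcal{D}}^{-1}(s_1)$ and $k := \tilde{\tau}(s_2)$, obtaining
\[ (\widetilde{\mathcal{D}} \circ \mathcal{D}) \circ (f^\tau)^{k}(x) = \widetilde{\mathcal{D}} \circ \textnormal{\textsf{T}}_2^{\tilde{\tau}(s_2)}(s_2) = \textnormal{\textsf{T}}_1(s_1), \]
where the last equality is the discrete simulation condition for $(\textnormal{\textsf{T}}_2, \widetilde{\mathcal{E}}, \widetilde{\mathcal{D}}, \tilde{\tau}, \textnormal{\textsf{T}}_1)$. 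The slowdown $\tau \times_{\! f} \tilde{\tau}$ is then exactly the total $f$-iteration count that unwinds $(f^\tau)^{\tilde{\tau}(\mathcal{D}(x))}(x)$, with $\tilde{\tau}$ lifted from $S_2$ to $M$ via $\tilde{\tau} \circ \mathcal{D}$ (constant on connected components of $\mathcal{D}^{-1}$, as required of a slowdown function).

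For the \emph{complexity}: on input $x$ with $\widetilde{\mathcal{D}}(\mathcal{D}(x)) = s_1$ of length $n$, the intermediate string $s_2 = \mathcal{D}(x)$ has length $O(\tilde{t}(n))$ by the output complexity bound on $\widetilde{\mathcal{D}}$, so computing $\mathcal{D}(x)$ takes time $O(t(\tilde{t}(n)))$ and computing $\widetilde{\mathcal{D}}(s_2)$ takes time $O(\tilde{t}(n))$; symmetric reasoning handles the encoder. In the regime of optimal encoders and decoders (Definition~\ref{def:optencdec2}) — which covers the applications used in this paper — both $t$ and $\tilde{t}$ are linear and this collapses to the stated $O(\max\{t(n), \tilde{t}(n)\})$ bound.

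The second statement follows immediately as the contrapositive of the first: if $\textnormal{\textsf{T}}_1$ is a sub-machine of $\textnormal{\textsf{T}}_2$, then a discrete CDS $(\textnormal{\textsf{T}}_2, \widetilde{\mathcal{E}}, \widetilde{\mathcal{D}}, \tilde{\tau}, \textnormal{\textsf{T}}_1)$ exists by Definition~\ref{def:sub-machines1}, so any CDS extension of $f$ for $\textnormal{\textsf{T}}_2$ would compose with it via the first half of the lemma to produce a CDS extension of $f$ for $\textnormal{\textsf{T}}_1$, contradicting the hypothesis. I expect the main (mild) obstacle to be the slowdown bookkeeping: the identity $f^{\tau_1 \times_{\! f} \tau_2} = f^{\tau_1} \circ f^{\tau_2}$ is stated directly in the paper for a single composition, but when $\tilde{\tau}$ itself counts multiple sub-machine steps, the iterated version must be unwound inductively along the orbit $(x, f^\tau(x), (f^\tau)^2(x), \ldots)$ before being substituted into the simulation equation.
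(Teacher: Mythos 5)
Your approach is the same as the paper's --- compose the encoders/decoders/slowdowns and verify the CDS axioms directly --- but you have worked it out more carefully and caught two imprecisions in the paper's argument that are worth flagging.

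First, the complexity bound. The paper asserts the composed pair has complexity $O(\max\{t(n),\tilde{t}(n)\})$ and justifies it with ``composition of functions corresponds to additivity of complexities.'' You are right that this is not what composition gives in general: for the composed decoder $\widetilde{\mathcal{D}}\circ\mathcal{D}$ with output length $n$, the intermediate string $\mathcal{D}(x)$ can have length up to $O(\tilde{t}(n))$, so the inner decoder costs $O(t(\tilde{t}(n)))$ and the correct general bound is $O(t(\tilde{t}(n)) + \tilde{t}(n))$, which collapses to $O(\max\{t(n),\tilde{t}(n)\})$ only when $t$ (and the class of $t$ being considered) is closed under composition in the right sense --- e.g.\ when $t$ and $\tilde{t}$ are both linear, as in the optimal-complexity setting the paper actually uses. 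Your restriction to that regime is a reasonable way to salvage the statement; the cleaner fix would be to state the $O(t(\tilde{t}(n)) + \tilde{t}(n))$ bound in the lemma.

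Second, the slowdown bookkeeping. Your concern is well-founded and, if anything, understated: as literally defined, $\tau \times_f \tau'$ satisfies $f^{\tau\times_f\tau'} = f^\tau\circ f^{\tau'}$, a \emph{single} composition, whereas the simulation condition you verify requires the $f$-iteration count for $(f^\tau)^{\tilde{\tau}(\mathcal{D}(x))}(x)$ --- i.e.\ $\tilde{\tau}(\mathcal{D}(x))$ repeated applications of $f^\tau$, with $\tau$ re-evaluated at each intermediate point. This is $\sum_{j=0}^{\tilde{\tau}(\mathcal{D}(x))-1}\tau\bigl((f^\tau)^j(x)\bigr)$, not $\tau(f^{\tilde{\tau}\circ\mathcal{D}}(x)) + \tilde{\tau}(\mathcal{D}(x))$. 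The paper's notation $\tau\times_f\tilde{\tau}$ is best read as shorthand for this iterated sum (with $\tilde{\tau}$ pulled back along $\mathcal{D}$), and spelling out the induction along the orbit $(x, f^\tau(x), (f^\tau)^2(x),\dots)$ as you propose is exactly what is needed to make the simulation identity rigorous. Neither issue affects the second half of the lemma, whose contrapositive structure you have correctly and cleanly identified.
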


\begin{proof}
For the first part, suppose that $\textsf{T}_1 : S_1 \to S_1$, $\textsf{T}_2 : S_2 \to S_2$, and $\widetilde{C}_s := \widetilde{\mathcal{D}}^{-1}(s)$ for $s \in S_1$.  Let us show that $(f,\,\mathcal{E} \circ \widetilde{\mathcal{E}}, \widetilde{\mathcal{D}} \circ \mathcal{D},\, \tau \times_{\! f} \tilde{\tau},\, \textnormal{\textsf{T}}_{1})$ is a CDS with encoder and decoder complexity $O(\max\{t(n),\,\tilde{t}(n)\})$.  By composition, $\mathcal{E} \circ \widetilde{\mathcal{E}} \in C_s$ for all $s \in S_1$, where $C_s := (\widetilde{\mathcal{D}} \circ \mathcal{D})^{-1}(s)$, and
\begin{align}
(\widetilde{\mathcal{D}} \circ \mathcal{D}) \circ f^{\tau \times_{\! f}  \widetilde{\tau}}(C_s) &= \widetilde{\mathcal{D}} \circ \textsf{T}_2^{\tilde{\tau}}(\widetilde{C}_s) = \textsf{T}_1(s) 
\end{align}
holds for all $s \in S_1$.  Moreover, the complexities of $\mathcal{E} \circ \widetilde{\mathcal{E}}$ and $\widetilde{\mathcal{D}} \circ \mathcal{D}$ are $O(\max\{t(n),\,\tilde{t}(n)\})$ since composition of functions corresponds to additivity of complexities.  This completes the first part of the proof.

For the second part, suppose by contradiction that $f$ cannot furnish a CDS for a $\textsf{T}_1$, but it can furnish a CDS for a $\textsf{T}_2$ such that $\textsf{T}_1$ is a sub-machine.  These statements imply the existence of CDSs of the form $(f,\,\mathcal{E} \circ \widetilde{\mathcal{E}}, \widetilde{\mathcal{D}} \circ \mathcal{D},\,\tau \times_{\! f}  \tilde{\tau},\, \textnormal{\textsf{T}}_{1})$ and $(\textnormal{\textsf{T}}_{2}, \widetilde{\mathcal{E}}, \widetilde{\mathcal{D}},\,\tilde{\tau}, \textnormal{\textsf{T}}_{1})$, but then by the first part of the proof we can obtain a new CDS for which $f$ simulates $\textsf{T}_1$, which is a contradiction.
\end{proof}

The above Lemma~\ref{lemm:fund1} will be used throughout the paper in the following way.  We will often prove that some dynamical system $f$ cannot be extended to a CDS of a machine $\textsf{T}_1$.  But then the second part of Lemma~\ref{lemm:fund1} implies that $f$ likewise cannot be extended to a CDS for any machines $\textsf{T}_2$ for which $\textsf{T}_1$ is a sub-machine.  So, for instance, if an $f$ cannot be extended to a CDS for some Turing machine $\textsf{T}_1$, then it certainly cannot be extended to a CDS for a universal Turing machine $\textsf{T}_{\!\text{univ}}$ for which $\textsf{T}_1$ is a sub-machine by virtue of universality.

\begin{remark}[A refinement of Lemma~\ref{lemm:fund1}] Suppose in the second part of Lemma~\ref{lemm:fund1} that we merely stipulated that $f$ cannot be extended to a CDS for $\textsf{T}_1$ with encoder and decoder complexity $O(\bar{t}(n))$ for some $\bar{t}(n)$.  Then a similar argument as in the proof of Lemma~\ref{lemm:fund1} establishes that if $f$ can furnish a CDS of a $\textsf{T}_2$ such that $\textsf{T}_1$ is a sub-machine, then the encoder and decoder complexity $t(n)$ of any $(f,\,\mathcal{E} \circ \widetilde{\mathcal{E}}, \widetilde{\mathcal{D}} \circ \mathcal{D},\,\tau \times_f \tilde{\tau},\, \textnormal{\textsf{T}}_{1})$ and $\bar{t}(n)$ of any $(\textnormal{\textsf{T}}_{2}, \widetilde{\mathcal{E}}, \widetilde{\mathcal{D}}, \tilde{\tau},\,\textnormal{\textsf{T}}_{1})$ satisfies $\max\{t(n),\,\tilde{t}(n)\} \geq \Omega(\bar{t}(n))$.
\end{remark}

\subsubsection{Robustness and other conditions for CDS encoders and decoders.}
\label{sec:robustness}

We now turn to a formalization of \emph{robust computation}. This model of robust computation is different from other previous works \cite{braverman2012noise, gracca2005robust}, as described in Section \ref{sec:related-work}. Recall that, as in Remark \ref{rk:explanation-of-simulation-condition}, given a CDS, we are simulating $\textsf{T}$ by encoding configurations $s \in S$ using the encoder $\mathcal{E}$, running $f^\tau$ for each computational step, and then decoding the configurations via $\mathcal{D}$. In many previous works on implementing computation in continuous dynamical systems (see Section \ref{sec:related-work}), one only requires that the dynamics is `correct' on the image of $\mathcal{E}$, in other words, one has that 
\[ \mathcal{D}^{-1}(s) = \{\mathcal{E}(s)\} \,\text{ for all }\, s \in S.\]
We might call such a decoder a `point decoder'; if this is the decoder underlying a CDS, then clearly the simulation of the underlying machine $\machine{T}$ is not robust in any sense.

One sense of robustness that one might wish for is that a small margin of error in the encoding still leads to a valid simulation.  That is, for every $s \in S$, there is some error $\epsilon_s$ such that for any $x_s \in M$ of distance at most $\epsilon_s$ from $\mathcal{E}(s)$, the simulation equation is satisfied:
\begin{equation}
    \label{eq:simulation-equation-with-error}
    \mathcal{D} \circ f^{\tau}(x_s) = \machine{T}(s) \,\text{ for all }\, s \in S\,.
\end{equation}
For technical convenience, it is easier to formulate an analogous property entirely in terms of the behavior of $f$ on the sets $C_s$ defined by the decoder $\mathcal{D}$, as follows:

\begin{definition}
\label{def:robust-decoder}
    We say that a decoder $\mathcal{D}: M \rightharpoonup S$ is \textbf{robust} when $\mathcal{D}^{-1}(s)$ is the closure of its own interior, for every $s \in S$. We say that a CDS is robust when the underlying decoder $\mathcal{D}$ is robust, and when $\mathcal{E}(s)$ lies in the interior of $\mathcal{D}^{-1}(s)$ for every $s$.  
\end{definition}

It is clear that that given a robust CDS, if we set $\epsilon_s = \sup_{x \in C_s} d(\mathcal{E}(s), x) > 0$ then the simulation equation \eqref{eq:simulation-equation-with-error} is satisfied; and conversely one can show that there exists a metric $d$ on $M$ such that the closed ball of radius $\epsilon_s$ around $\mathcal{E}(s)$ is exactly $\mathcal{D}^{-1}(s)$. Thus, our model of robustness is analogous to requiring robustness of the simulation after perturbing $f$ by a noise term with magnitude dependent on $x \in M$.

Below, we give two more conditions on decoders that are helpful when formulating certain results:

\begin{definition}
\label{def:shrinking-decoders-and-so-forth}
    We say that a decoder $\mathcal{D}: M\rightharpoonup S$ is \textbf{proper} when $\mathcal{D}^{-1}(s)$ is a compact set for every $s \in S$. We say that the decoder $\mathcal{D}$ is \textbf{shrinking} when it is proper and when the diameter 
    \[ \text{\rm{diam}}(\mathcal{D}^{-1}(s)) := \sup_{(x,y) \in \mathcal{D}^{-1}(s)^2} d(x,y)\]
    shrinks to zero as the length of $s$ goes to infinity, i.e.~for every sequence of strings $s_i \in S$ such that the lengths $|s_i|$ diverge to infinity, we have that $\lim_{i \to \infty} \rm{diam}(\mathcal{D}^{-1}(s_i)) = 0$.

    If $S$ is the set of configurations of a Turing machine, we say that the decoder is \textbf{hierarchically shrinking} when for every configuration $s= (s_a, s_b)$ where $s_a$ is the internal configuration of the turing machine and $s_b$ is the state of the tape, there exist compact subsets $C'_s \subset M$ satisfying the following condition. Let $s' = (s'_a, s'_b)$ be a configuration of the Turing machine for which the internal state agrees with $s$ (so $s'_a = s_a$) and the region of the tape around the head agrees with $s_b$ (so $s'_a$ is a substring of $s'_b$, and the head of both configurations lies in $s'_a$). Then, the condition is that $C'_s \supset C_{s'}$; and such that the following two conditions hold:
    \begin{enumerate}
        \item $\rm{diam}(C'_s) \to 0$ as the length of $s_b$ diverges; and 
        \item Writing $s_1 = (s_{a_1}, s_{b_1})$ and $s_2 = (s_{a_2}, s_{b_2})$ for a pair of configurations of the Turing machine, we have that either 
        $C'_{s_1}$ is disjoint from $C'_{s_2}$, or that either $C'_{s_1} \subset C'_{s_2}$ or $C'_{s_2} \subset C'_{s_1}$; and that $C'_{s_1} \subset C'_{s_2}$ exactly when $s_{a_1} = s_{a_2}$ and $s_{b_2}$ is is a substring of $s_{b_1}$ with the head of the Turing machine lying in $s_{b_2}$. 
    \end{enumerate}
\end{definition}

\begin{remark}
\label{rk:robust-universal-system-can-be-shrinking}
    The decoder in Section \ref{sec:example-of-robust-cds} associated to a robustly Turing-universal CDS is both shrinking and hierarchically shrinking. 
\end{remark}

\begin{remark}
    It is easy to see that a hierarchically shrinking decoder is shrinking.
\end{remark}

\subsection{Overview of dynamical systems}
\label{subsec:overviewofdynsys}

In this section, we will review basic notions about continuous and differentiable dynamical systems theory. These concepts will be used in the subsequent constructions and proofs, and are very helpful for expressing properties of dynamical systems. We will also briefly review several standard examples of simple dynamical systems which are used as sources of intuition for the possible types of dynamical behavior. 

\paragraph{Basic notions.}

We have repeatedly used the well-known concept of a (discrete-time) \emph{dynamical system}, which is simply a map $f: M \to M$ for some set $M$. We say $f$ system is \emph{reversible} when $f^{-1}$ exists. For any point $x \in M$, we can consider the \emph{forwards orbit} $\{x, f(x), f^2(x), f^3(x), \ldots\}$ of $x$ and if $f$ is reversible, we can consider its \emph{orbit} $\{\ldots, f^{-2}(x), f^{-1}(x), x, f(x), f^2(x), \ldots\}$  or its \emph{backwards orbit} $\{\ldots, f^{-2}(x), f^{-1}(x), x\}$.  Taking a union of forwards orbits produces an \emph{invariant set}: a subset $A \subset M$ such that $f(A) \subset A$. (Note that this does not imply that $f(A) = A$!) 

Many of the basic questions regarding dynamical systems are about the decomposition of $M$ into orbits, and the action of $f$ on the orbits themselves. In particular, the simplest kinds of orbits are the orbits of \emph{periodic points}, i.e.~of $x \in M$ such that $f^n(x) = x$ for some $n$. One often hopes to understand the global dynamics of $f$ by first understanding its periodic points. 

Usually, the set $M$ carries more structure, and $f$ is compatible with this structure in a natural way. For example, $M$ may be a topological space or a metric space and $f$ may be continuous, in which case we are studying \emph{topological dynamics}; or $M$ may be a manifold and $f$ may be differentiable with some degree of regularity, i.e.~$f$ may be once-differentiable, twice-differentiable, H\"older-continuous, or smooth. Alternatively, $M$ may be a measurable space and $f$ may preserve a measure $\mu$ on this space. 

\begin{remark}
A phenomenon that may be surprising to those new to differentiable dynamics is that smooth dynamical systems produce associated objects which have a much lower degree of regularity. For example, when studying hyperbolic dynamical systems, as we will in Section \ref{sec:axiom-a}, the stable and unstable manifolds associated to points $x$ in hyperbolic invariant sets only vary H\"older continuously with $x$ \cite{katok1995introduction}, even when the underlying dynamics are smooth. Similarly, certain fundamental results like the closing lemma \cite{pugh1983c1} are known to be true for $C^1$ small perturbations but not for $C^\infty$-small perturbations, and with energy-conservation constraints on $f$ they may even be false \cite{herman1991exemples}. This sensitivity of dynamical results to the regularity of the mathematical objects under consideration often underlies deep dynamical structure, but the relevance of such phenomena to scientific and modeling problems is still unclear. 
\end{remark}

\paragraph{Discrete- and continuous-time dynamics.}
Instead of the discrete-time systems $f: M \to M$ defined above, many physical systems are instead specified by a differential equation 
\[ \frac{dx}{dt} = g(x,t), \quad x \in M, \quad g: M \times \R \to TM\,,\]
where $g$ is a time-dependent vector field on a manifold $M$. The analog of the map $f$ above is then the \emph{continuous-time} dynamical system
$f_t: M \to M$ for $t \in \R$, associated to solutions of the differential equation above. Thus, $f$ is specified by the property that $\frac{\partial}{\partial t}f_t(x) = g(x,t)$. When $g$ is $t$-independent or $1$-periodic (i.e.~$g(x,t) = g(x,t+1)$), one can extract a discrete-time system from this continuous-time system by specifying $f: M \to M$ via $f(x) = f(x,1)$; this will satisfy the property that $f^2(x) = f(x,2)$, and so forth, and thus we will be studying the behavior of the flow $f(x,t)$ of the defining differential equation at a discrete set of times. Then the dynamics of $f: M \to M$ corresponds to viewing the dynamics specified by the ODE viewed at a discrete set of times, as in Remark \ref{rem:modification1}.

\begin{figure}[t!]
    \centering
    \includegraphics[scale = .5]{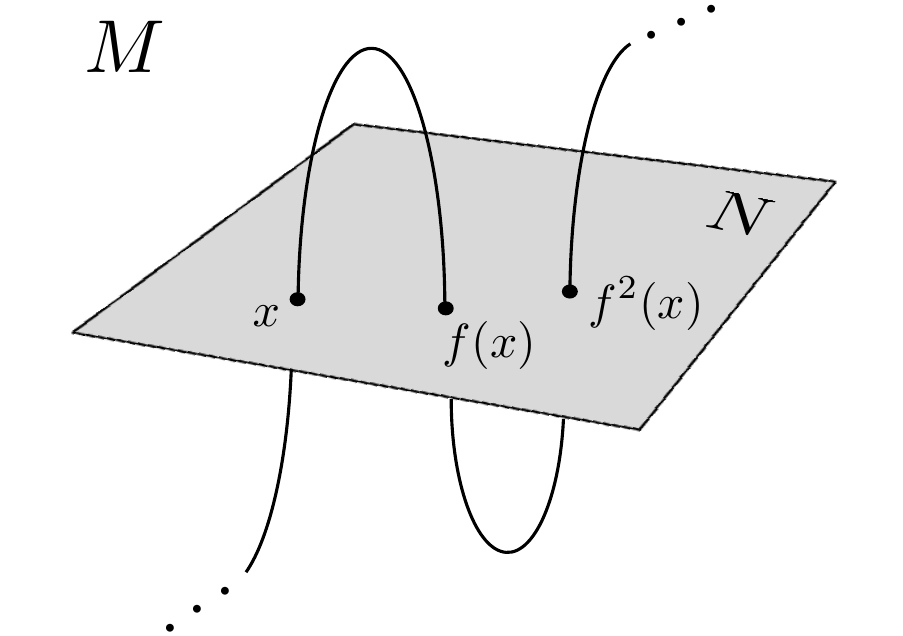}
    \caption{A sketch of the Poincar\'{e} section of a map.  We consider time-dependent dynamics $f_t : M \to M$ with $f_0 = \text{Id}$.  A Poincar\'{e} section of the map is a submanifold $N$ of $M$ such that for any $x \in N$ and any $k \in \mathbb{Z}_{\geq 0}$, we can define $f^k(x)$ as the $k$th time the trajectory $f_t(x)$ intersects with $N$ (which we require to occur for each $k$).}
    \label{fig:poincare1}
\end{figure}

\begin{remark}
Definitions made for discrete-time systems usually have analogs for continuous-time systems. In this paper, we will largely stick to discrete-time systems because of their comparative mathematical simplicity. It is worth noting that the behavior of discrete-time systems in $n$-dimensions models in many ways resembles the behavior of continuous time-independent systems (i.e.~with $g(x,t) = g(x)$) in $(n+1)$-dimensions, because in many cases one can find a \emph{Poincar\'{e} section} for the dynamics.  A Poincar\'{e} section is a hypersurface $N \subset M$ such that $g(x)$ is transverse to $N$ for all $x$.  Then one defines a discrete dynamical system $f: N \to N$ by setting $f(x) =  f_{t^*}(x)$, where $t^* = \inf \{t > 0\,:\, f_t(x) \in N\}$.  See Figure~\ref{fig:poincare1} for a depiction.
\end{remark} 

\paragraph{Notions of transitivity, ergodicity, and chaos.}
There are several basic notions that help to articulate and characterize the behavior dynamical systems, and which we will use repeatedly throughout this paper. We quickly introduce them here, and then illustrate them with standard examples in the subsequent section. Specifically, there are various ways of discussing the extent to which the dynamics of $f$ `mix up' certain regions in $M$, all of which highlight different aspects of the dynamical behavior of $f$. 

Let $f:M \to M$ be a continuous dynamical system on a  metric space $M$.  

\begin{definition}[Topological transitivity]
Let $C \subset M$ be closed and $f$-invariant, i.e.~$f(C) \subset C$.  We say that $f$ is \textbf{topologically transitive} on $C$ if there is an $x \in C$ such that the orbit of $x$ is dense in $C$. This implies that $f(C) = C$. 
\end{definition} 

\noindent With  notation as in the previous definition, we write 
\begin{equation}
    \label{eq:stable-set}
    W^s(C) = \left\{ x \in M \,:\, \lim_{n \to \infty} d(f^n(x), C) = 0\right\},\; W^u(C) = \left\{ x \in M \,:\, \lim_{n \to \infty} d(f^{-n}(x), C) = 0\right\}\,.
\end{equation}
The set $W^s(C)$ \emph{stable set} of $C$; points in $W^s(C)$ are eventually attracted to $C$.  Similarly, $W^u(C)$ is the \emph{unstable set} of $C$. One may want to find invariant sets $C$ which can be thought of as `attractors' \cite{smale1967differentiable}, such that the entire dynamics is decomposed, in some sense, into the interaction between the attractors of the dynamics. 

\begin{definition}[Topologically mixing]
Let $C \subset M$ be closed and $f$-invariant. We say that $f$ is \textbf{topologically mixing} on $C$ if for every two open sets $U$ and $V$ in $C$ (in the subspace topology) there is an $N$ such that for all $n>N$, $f^n(U) \cap V \neq \emptyset$. 
\end{definition}

\begin{lemma}[\cite{katok1995introduction}]
    If $X$ is compact and $f$ is topologically mixing on $C$ then $f$ is topologically transitive on $C$.
\end{lemma}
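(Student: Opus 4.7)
The plan is a standard Baire category argument, which is the usual way to produce a point with dense forward orbit from a ``mixing'' hypothesis. Since $X$ is compact and $C$ is closed and $f$-invariant in a metric space, $C$ itself is a compact metric space, hence a Baire space. The idea is to characterize the set of points with dense forward orbit as a countable intersection of open dense subsets of $C$, and then invoke the Baire category theorem to conclude this intersection is nonempty (in fact dense in $C$).

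First I would fix a countable basis $\{U_i\}_{i \geq 1}$ for the topology of $C$, which exists because $C$ is a compact metric space and hence second countable. A point $x \in C$ has forward orbit dense in $C$ if and only if, for every $i$, there exists $n \geq 0$ such that $f^n(x) \in U_i$. In other words, the set $T$ of points with dense forward orbit is
\[ T \;=\; \bigcap_{i \geq 1} V_i, \qquad V_i \;:=\; \bigcup_{n \geq 0} f^{-n}(U_i) \cap C\,. \]
Since $f:M \to M$ is continuous and each $U_i$ is open in $C$, each $f^{-n}(U_i) \cap C$ is open in $C$, so each $V_i$ is open in $C$.

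Next I would verify that each $V_i$ is dense in $C$; this is where topological mixing enters. Given any nonempty open set $W \subset C$, the mixing hypothesis furnishes $N$ such that $f^n(W) \cap U_i \neq \emptyset$ for all $n > N$. Pick any such $n$: then $W \cap f^{-n}(U_i) \neq \emptyset$, so $W \cap V_i \neq \emptyset$. Since $W$ was arbitrary, $V_i$ is dense. (In fact one only needs the weaker condition that for every pair of nonempty open sets $U$ and $V$ there exists some $n \geq 0$ with $f^n(U) \cap V \neq \emptyset$, i.e.\ topological transitivity would already suffice here; the full mixing hypothesis is more than needed for this lemma.)

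Finally, by the Baire category theorem applied to the compact metric space $C$, the countable intersection $T = \bigcap_i V_i$ is dense in $C$, in particular nonempty. Any $x \in T$ has forward orbit meeting every basic open set $U_i$, hence dense in $C$, which is precisely the definition of topological transitivity on $C$. There is no real obstacle here beyond setting up the framework correctly; the only thing to double-check is the standing continuity hypothesis on $f$ (needed for openness of $V_i$) and the metrizability/second countability of $C$ (needed for the countable basis and for Baire), both of which are immediate from the hypotheses.
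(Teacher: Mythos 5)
Your proof is correct. The paper itself gives no argument for this lemma, merely citing \cite{katok1995introduction}; the Baire category argument you present is the standard one found there (and in most dynamics textbooks), so there is nothing to compare beyond noting that your parenthetical observation — that only the open-set form of transitivity, not full mixing, is needed — is also accurate and worth making, since the lemma as stated is indeed weaker than what the Baire argument actually delivers.
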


There are yet stronger notions of mixing besides the topological ones described above. To define these one asks for the existence of a Borel probability measure $\mu$ such that $f^*\mu =\mu$, i.e.~$\mu$ is an \emph{invariant measure} for $f$. In this case, $(f, X, \mu)$ is a \emph{continuous, measure-preserving system.}
\begin{definition}[Ergodic]
    Given a continuous, measure-preserving system $(f, X, \mu)$, we say that $f$ is \textbf{ergodic} if for any $A \subset X$ such that $f^{-1}(A) = A$ we have that $\mu(A)=0$ or $\mu(A)=1$. 
\end{definition}
\begin{remark}
    One can speak of course of discontinuous measure-preserving systems and make such definitions in that context, but such examples will not be the focus of this work.
\end{remark}

\noindent It is easily shown that if $f$ is ergodic then $\mu$-almost-all points of $X$ have dense orbits in $X$, and thus that $f$ is topologically transitive on $X$. However, ergodicity does not imply topological mixing  of $f$ on $X$, while there are various measure-theoretic notions of mixing that do imply topological mixing of $f$ on $X$ \cite{katok1995introduction}.

Standard examples, reviewed below, show that while mixing and ergodicity arise in chaotic systems, they are not necessary hallmarks of chaotic behavior. Instead, chaotic behavior is better quantified using other concepts. One important notion is that of \emph{hyperbolicity}: the idea that the dynamics of $f$ are sensitive to initial conditions, i.e.~that small errors in initial condition get exponentially larger as time goes on. An invariant set on which there is a quantitative bound on the amount of such sensitivity is called a \emph{hyperbolic set}:

\begin{definition}[Hyperbolic sets]
\label{def:hyperbolic-set}
Let $C \subset M$ be closed and $f$-invariant. We say that $C$ is \textbf{hyperbolic} if there there is a decomposition 
\[ TM|_{C} = T^+_C \oplus T^-_C \]
together with constants $c, d> 0$, $\lambda > 1$, such that 
\begin{equation}
    \label{eq:instability-equation-for-hyperbolic-set}\|df^n(v)\| > c \lambda^n \|v\|\,\text{ for }\,v \in T^+_C\,,
\end{equation}
and $\|df^n(v)\| < d \lambda^{-n}\|v\|$ for $v \in T^-_C$, where in each of these statements $n$ runs over the natural numbers. This property does not depend on the choice of Riemannian metric that we are using to measure the lengths of tangent vectors. 
\end{definition} 

\begin{remark}
    Note that if $T^-_C=0$, which is allowed in the definition of a hyperbolic set, there is indeed no sensitivity to initial conditions. In particular, if $f$ is the time-$1$ gradient flow of some Morse function $g$, then a minimum of $g$ is a hyperbolic set for $f$. Nonetheless, in typical examples $T^+_C \neq \emptyset$.
\end{remark}

\noindent Clearly if $T^+_C \neq 0$ then the maximal constant $\lambda$ such that the condition \eqref{eq:instability-equation-for-hyperbolic-set} holds is a measure of the instability of the dynamics of $f$ on $C$. The measure of infinitesimal instability of $f$ along a trajectory of $f$ is the \emph{Lyapunov exponent}:

\[ L(f, x) = \sup_{v \in T_x} \lim_{n \to \infty} \frac{ \log \|Df^n(v)\|}{n}.  \]
Here one uses any Riemannan metric on $M$ to measure the lengths of vectors, and this quantity does not depend on the metric chosen. 
 It is a fundamental theorem of Oseledets \cite{oseledets1968multiplicative} that given an invariant measure $\mu$ for $f$, the Lyapunov exponents are the same for $\mu$-almost-every $x \in M$. (One can define  Lyapunov exponents as functions of $v$ instead, in which case one has several Lyapunov exponents for each $x \in M$; \cite{oseledets1968multiplicative} states that this set of exponents is the same for $\mu$-almost everywhere $x$.)

\paragraph{Fundamental examples.}
Basic intuition about the behavior of dynamical systems is captured in well-known examples. Here we give a lightning review of what may be considered from a mathematical perspective to be the simplest kinds of dynamical systems; see Figure~\ref{fig:manysystems} for a depiction. The rest of this paper focuses on proving some results that try to analyze the \emph{computational capabilities} of natural generalizations of these simplest examples. 

\begin{figure}[t!]
    \centering
    \includegraphics[scale = .5]{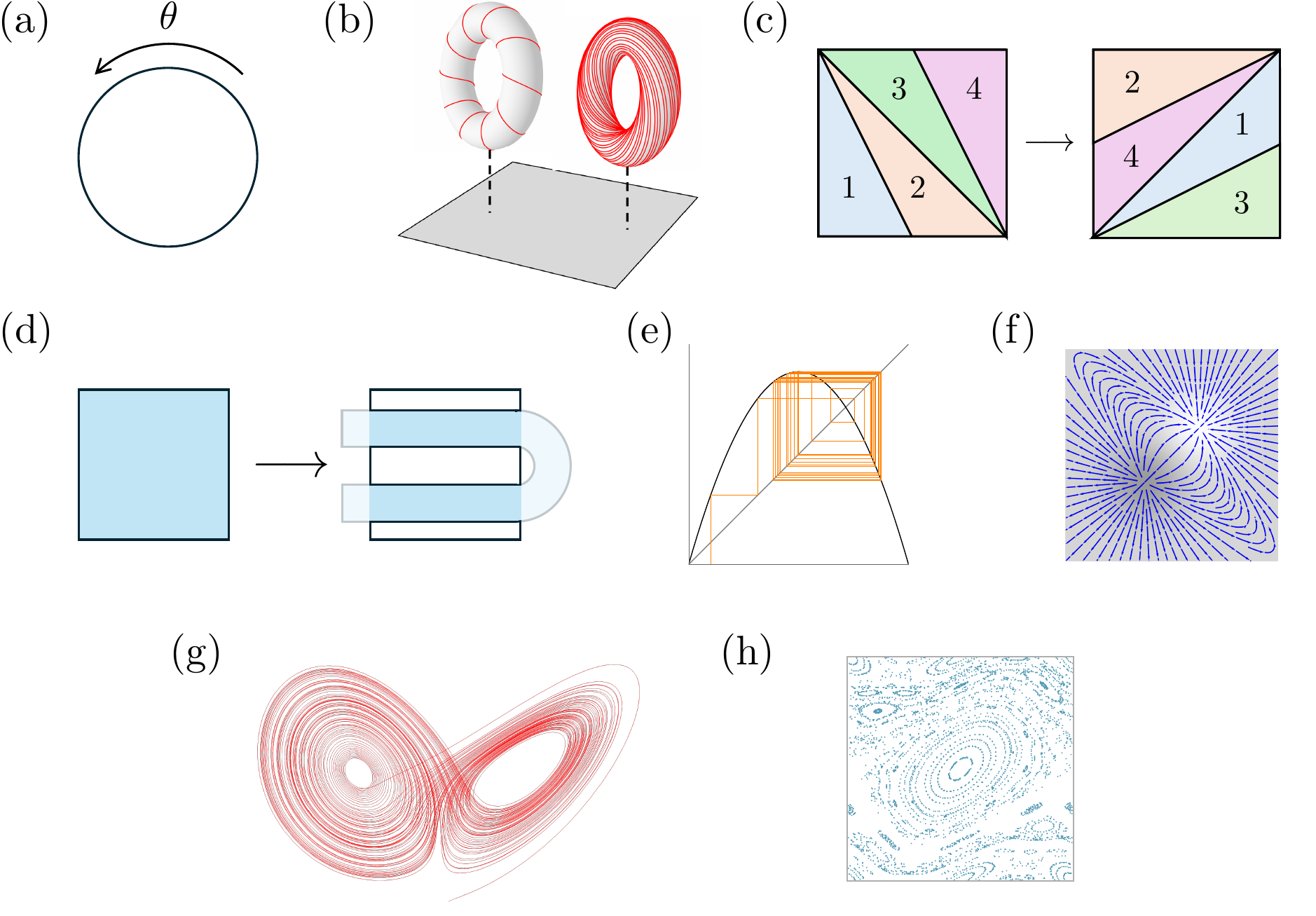}
    \caption{Various fundamental examples in dynamical systems theory. (a) A circle map. (b) An integrable system represented as a base manifold with tori fibered over it.  The red curves are either periodic or irrational orbits, contingent on the corresponding point on the base manifold. (c) The Arnold cat map. (d) One iteration of the Smale horseshoe map. (e) A member of the quadratic family.  Depicted is a cobweb diagram for an orbit. (f) A gradient flow system, with various flow lines depicted.  (g) An orbit of the Lorenz system, tending towards the Lorenz attractor.  (h) An instance of the standard map, with various orbits displayed. }
    \label{fig:manysystems}
\end{figure}

\begin{enumerate}
\item \textbf{Circle rotations.} This is a dynamical system $f: S^1 \to S^1$ which rotates the circle by a fixed angle. Parameterizing $S^1$ as $\mathbb{R}/\mathbb{Z}$, the real numbers modulo $1$, a circle rotation is given by the map $t_\tau(x) = x + \tau$ for some $\tau$. We can organize circle rotations into the \emph{rational} rotations for which $\tau \in \mathbb{Q}/\mathbb{Z}$, and the complement of the rational rotations, namely the irrational ones. Rational rotations are periodic (a power of $f$ is the identity) and so are not topologically transitive; while irrational rotations are topologically transitive, ergodic (with respect to the uniform measure on $S^1$), but not topologically mixing. 
\item \textbf{Torus translations and integrable systems.} One can generalize circle rotations to \emph{torus translations}, which are maps $f: (S^1)^n \to (S^1)^n$ given by $f(x) = x+v$ for some vector $v \in \R^n$. These have similar behavior to circle rotations but are higher-dimensional; moreover, they are pieces of the dynamics of \emph{integrable systems}, which are  the (time-1 flows) of Hamiltonian dynamical systems associated to $H: M \to \R$ where $M$ is a manifold with a symplectic form $\omega$ such that $H$ is part of a system of pairwise Poisson-commuting Hamiltonians $(H_1, \ldots, H_n)$ with $H_1 = H$. The latter arise in many important physical problems, e.g.~the Euler equations for a a free rotating body in $3$d \cite{arnol2013mathematical}. The result connecting these two settings is the Liouville-Arnold theorem \cite{arnol2013mathematical}, which states that if there is a point $x \in M$ such that $(dH_i)_x$ is nonzero for each $H_i$, then there is a neighborhood $U$ of $x$ and a diffeomorphism $\psi: U \to (S^1)^n \times V$ where $V$ is some open subset in $\R^n$ and the dynamics satisfy $\psi \circ f\circ \psi^{-1}(a, b) = a+b$. Later in this paper, we will strongly constrain the computations that can be robustly performed by integrable systems; see Theorem \ref{thm:measure-preserving-not-universal}. The Lyapunov exponent computed with respect to any $x \in M$ when $f$ is a torus translaton or an integrable system is zero.

\item \textbf{Doubling map.} A qualitatively different type of dynamical system arises by replacing the \emph{translation} defining a circle map $f: S^1 \to S^1$ with a \emph{multiplication} of the coordinate. Again parameterizing $S^1 = \mathbb{R}/\mathbb{Z}$, let $f(x) = 2x$. This is called the \emph{doubling map}, and is the simplest Anosov map. Writing real numbers $x \in [0,1)$ in their $2$-adic expansions $x = 0.a_1a_2a_3\ldots$, we have that the $2$-adic expansion of $f(x)$ is $0.a_2a_3a_4\ldots$; thus, the dynamics of $f$ is essentially that of the \emph{shift map} on the set $\{0,1\}^{\mathbb{N}}$ of sequences of binary digits which simply drops the first element of a half-infinite string of digits. One sees immediately from this description that $f$ is topologically mixing, since $f^n(I)$ contains $(0,1)$ for any nonempty interval $I$ and any sufficiently large $n$. In fact, all of $S^1$ is an invariant closed hyperbolic set, the map preserves the uniform measure on $S^1$, and the Lyapunov exponent with respect to this measure is $\log 2$. The description of the dynamics of $f$ in terms of the shift map on decimal sequences is the simplest instance of \emph{symbolic dynamics} (further reviewed in Appendix \ref{app:symbolic-dynamics}).

\item \textbf{Anosov maps.} One drawback of the doubling map is that it is not invertible. However, the non-invertiblity can be removed by passing to higher dimensions. For example, the \emph{Arnold cat map}, which is a diffeomorphism $f: T^2 \to T^2$ of the two-torus $T^2$, is defined via 
\[ f(x,y) = (2x + y, x+y), \text{ where }\, T^2 = S^1 \times S^1 = (\mathbb{R}/\mathbb{Z})^2. \]
This map has a representation via symbolic dynamics, where one divides $T^2$ up into certain blocks consisting of certain symbols, such that almost all trajectories of the cat map can be understood in terms of the restriction of the shift map to a simple subset of $\{0,1\}^{\mathbb{Z}}$. All of $T^2$ is a hyperbolic invariant set, there is a $1$-dimensional stable bundle and a $1$-dimensional unstable bundle, and the Lyapunov exponent (with respect to the area measure on $T^2$, which is an invariant measure) is $(3 + \sqrt{5})/2$. More generally, systems for which the entire domain is a single hyperbolic transitive invariant set are called \emph{Anosov}, and are the prototypical examples of systems which are \emph{strongly chaotic} -- so chaotic that their dynamics, while unpredictable, is considered by researchers in dynamical systems theory to be completely understood  \cite[Part 4]{katok1995introduction}.
\item \textbf{1-dimensional dynamics.} Beyond Anosov maps, it is fruitful to first study low-dimensional examples. There is a rich theory describing the dynamics of $1$-dimensional maps $f:S^1 \to S^1$ or maps $f: [-\beta, \beta] \to [0,1]$. Famously, the quadratic family of maps $f_a(x) = a - x^2$ has an invariant interval $[-\beta, \beta]$, and as discovered by Feigenbaum \cite{feigenbaum1978quantitative} the resulting maps show a transition from periodic dynamics to chaotic dynamics as one increases $a$ past a critical threshold. There is an elaborate theory of smooth maps of the interval without flat critical points \cite{de2012one}. 
\item \textbf{Gradient flows.} Generalizing in a different direction, a simple higher-dimensional example of a nonlinear dynamical system is the time-1 flow of the gradient flow of a potential function $U: M \to \R$. The invariant sets are all contained in the locus of critical points $\text{Crit}(U)$ of $U$. Moreover, when $U$ is Morse, i.e.~when the Hessian of $U$ has full rank at every critical point of $U$, the invariant sets are the (isolated) critical points, and each critical point comprises a hyperbolic invariant set. In that case, $M$ becomes decomposed into regions which are attracted to the various critical points: we have 
\[ M = \bigsqcup_{p \,\in\, \text{Crit}(U)} W^s(\{p\}),\]
and there is an associated graph that describes possible gradient flows from one critical point to another. 

In Section \ref{sec:axiom-a} we will strongly constrain computational properties of the class of systems which encompass the features of gradient flows and of Anosov diffeomorphisms, called \emph{Axiom A systems}, and include both as special cases. 
\item \textbf{Beyond the simplest examples.} Outside of mild generalizations of the class of systems described above, the behavior of differentiable dynamical systems becomes extremely complex and difficult to analyze mathematically. For example, there are non-uniformly hyperbolic systems \cite{pesin2010open}, which include famous examples such as the Lorenz system \cite{lorenz1963deterministic}, and there is the puzzling behavior of essentially all non-integrable Hamiltonian systems, even small perturbations of integrable examples. The latter type of behavior is modeled by the \emph{standard map}, and the related H\`enon family \cite{young1998developments}, which can contain Lorenz-like attractors. Many basic questions, such as the existence of any values of the parameter of the standard map for which the Lyapunov exponent is nonnegative, are famous and difficult open problems in the theory of differentiable dynamical systems \cite{pesin2010open}.
\end{enumerate}

We wish to use the notion of a computational dynamical system introduced in Section \ref{sec:defining-CDS} to ask questions about the computational capacity of each of these types of dynamical systems.

\section{Autonomous CDSs}

In this section, we employ our definition of a CDS to examine the topological and geometric characteristics of autonomous dynamical systems that either facilitate or prevent their ability to simulate universal Turing machines.  To demonstrate the possibility of robustly Turing-universal CDSs, we begin by constructing an example on a disk.  Next we study Axiom A dynamical systems as a well-studied class of `chaotic'  systems, and prove that they cannot be extended to robustly Turing-universal CDSs.  We the prove another non-universality result for measure-preserving systems and thus for integrable systems.  Taken all together, our non-universality articulate a sense in which `chaos' and `order' are in tension with Turing-universality.  Thereafter we prove more refined statements about how mixing in certain dynamical systems constrains the computational complexity of the Turing machines it can realize as a CDS.

\subsection{Example of a Turing-complete CDS}
\label{sec:example-of-robust-cds}
Here we construct a robustly Turing-universal CDS $(f,\mathcal{E}, \mathcal{D},\tau,\textsf{T}_{\!\text{univ}})$ where $f : D^2 \to D^2$ is a smooth map on the disk, $\tau(x) = 1$, and $\mathcal{E}, \mathcal{D}$ have optimal complexity $\Theta(n)$.  Our construction is inspired in part by~\cite{moore1991generalized} and~\cite{cardona2021constructing}.  The latter work constructs an area-preserving, smooth map $f : D^2 \to D^2$ which forms a Turing-universal CDS, but as discussed previously, their map is not robust in our sense.  (In fact, as we show later in Section~\ref{subsec:nonunivmeas1}, it is not possible for an area-preserving map to be extended to a robustly Turing-universal CDS.)  To make a robust version of the construction in~\cite{cardona2021constructing}, some new ideas are involved which we will explain shortly.

For our purposes, we will consider a single-tape universal machine $\textsf{T}_{\!\text{univ}}$ given by the triple $(Q, \Gamma, \delta)$ where $\Gamma \cup \{\sqcup\} = \{\sqcup, 0, 1, 2\}$, and $|Q| = 2^k$ for some $k$.  Recalling that $\sqcup$ is the blank symbol, it will be useful to associate $\sqcup \to 00$, $0 \to 01$, $1 \to 10$, $2 \to 11$, so that we can write $\Gamma \cup \{\sqcup\} = \{00, 01, 10, 11\}$.  Moreover we will label each state $q \in Q$ by a $k$-bit string $z \in \{0,1\}^k$.  The Turing machine $\textsf{T}_{\!\text{univ}}$ can read blank symbols, but it cannot write blank symbols.  The basic strategy in constructing a CDS for this $\textsf{T}_{\!\text{univ}}$ is to re-express $\textsf{T}_{\!\text{univ}}$ in terms of a generalized shift map.  To do so, we need to establish some notation, following~\cite{moore1991generalized}.

Let $A$ be some finite alphabet and let $\Sigma = A^{\mathbb{Z}}$ be the set of bi-infinite sequences of $A$.  We further let $\sigma : \Sigma \to \Sigma$ be the shift map, such that for $s \in \Sigma$ we have $\sigma(s)$ denote $s$ shifted one spot to the right.  We will use $\sigma$ as an ingredient to define `generalized shift maps' below, but first we require a useful definition.

\begin{definition}[Domain of effect and domain of dependence, adapted from~\cite{moore1991generalized}] 
Let $h : \Sigma \to \Sigma$ where we write $\Sigma = \prod_{i \in \mathbb{Z}} A_i$.  Let $E^e$ be the subset of $\mathbb{Z}$ containing all $j$ such that $h(s)_j \not = s_j$ for at least one $s$.  Here $h(s)_j$ is the $j$th coordinates of $h(s)$, and similarly for $s_j$.  Then the \textbf{domain of effect} of $h$ is $D^e := \prod_{i \in E^e} A_i$.  Informally, the domain of effect of $h$ contains the $A_i$'s in $\Sigma$ which are affected by the inputs to $h$.

Moreover, let $E^d$ be the smallest subset $\mathbb{Z}$ containing all $j$ such that $h$ can be written as 
\[ h(s)_j = s_j \,\text{ for }\, j \notin E^e,\quad h(s)_j = \bar{h}_j((s_i)_{i \in E^d}) \,\text{ for }\, j \in E^e.\]
Then the \textbf{domain of dependence} of $h$ is $D^d := \prod_{i \in E^d} A_i$. More informally, the domain of dependence of $h$ contains the $A_i$'s in $\Sigma$ on which outputs of $h$ depend.

Similarly, if $h$ is a map $\Sigma \to \Z$ then the domain of dependence is $\prod_{i \in E^d} A_i$ where $E^d$ is the smallest subset of $\Z$ such that $h$ factors through the projection to $\prod_{i \in E^d} A_i$.
\end{definition}

\noindent Having defined the domain of effect and domain of  dependence, we can now define generalized shift maps as follows.

\begin{definition}[Generalized shift map, adapted from~\cite{moore1991generalized}]  Let $F : \Sigma \to \mathbb{Z}$ have a finite domain of dependence which we denote by $D_F^d$.  Moreover, let $G : \Sigma \to \Sigma$ have a finite domain of dependence $D_G^d$ and a finite domain of effect $D_G^e$.  Then we call
\begin{align}
\Phi(a) = \sigma^{F(s)}\!(G(s))
\end{align}
a \textbf{generalized shift map}.
\end{definition}

There is a close connection between Turing machines and generalized shift maps, articulated by~\cite{moore1991generalized} in the following theorem:
\begin{theorem}[Turing machines are conjugate to generalized shift maps, from~\cite{moore1991generalized}]\label{thm:Turingtogeneral1}
For any Turing machine $\textnormal{\textsf{T}}$, there is a generalized shift $\Phi = (F,G)$ conjugate to $\textnormal{\textsf{T}}$.
\end{theorem}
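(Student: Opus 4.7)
The plan is to encode Turing machine configurations as bi-infinite sequences over a slightly enlarged alphabet, and then to construct $F$ and $G$ so that one step of $\textsf{T}$ corresponds to the local modification $G$ followed by the shift $\sigma^{F(s)}$.

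First I would fix the alphabet. Let $\textsf{T} = (Q, \Gamma, \delta)$ and set $A := (\Gamma \cup \{\sqcup\}) \sqcup \bigl((\Gamma \cup \{\sqcup\}) \times Q\bigr)$, so that a symbol of $A$ is either a plain tape symbol or a tape symbol tagged with a state. Define the encoding $\phi : S \to \Sigma = A^{\mathbb{Z}}$ by translating the tape so that the head sits at position $0$, placing the pair $(x_m, q)$ at position $0$, ordinary tape symbols at all other positions (with $\sqcup$ cofinitely often), and no other state tag anywhere. This $\phi$ is injective, and its image is exactly the set $\Sigma_0 \subset \Sigma$ of sequences with a unique state-tagged position, located at $0$, with cofinitely many blanks.

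Next I would define $F$ and $G$ from $\delta$. If $\phi(s)_0 = (x,q)$ and $\delta(q,x) = (q', x', a)$, set
\[ F(s) \;=\; \begin{cases} -1 & a = \text{R}\\ +1 & a = \text{L} \\ 0 & a = \text{S}\end{cases} \]
(using the convention $\sigma(s)_i = s_{i-1}$, so that $\sigma^{-1}$ exposes the cell to the right of the head). Let $G$ act as the identity outside the window $\{-1,0,1\}$ and otherwise modify that window so that after applying $G$ the cell at position $F(s) \cdot (-1)$ carries the tagged pair (new tape symbol, new state $q'$), position $0$ carries the untagged updated tape symbol $x'$, and the remaining positions of the window are unchanged. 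Concretely, for $a = \text{R}$, $G$ writes $x'$ at position $0$ and replaces the plain symbol $y$ at position $1$ by $(y, q')$; symmetrically for $\text{L}$; for $\text{S}$, $G$ writes $(x',q')$ at position $0$. Extend $F$ and $G$ to all of $\Sigma$ arbitrarily outside $\Sigma_0$ (for instance, $F \equiv 0$ and $G = \mathrm{Id}$). By construction, both $F$ and $G$ have domain of dependence contained in $\{-1,0,1\}$, and $G$ has domain of effect contained in $\{-1,0,1\}$, so $\Phi(s) := \sigma^{F(s)}(G(s))$ is a generalized shift.

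The conjugacy $\phi \circ \textsf{T} = \Phi \circ \phi$ is then a direct verification on each of the three cases for $a$. For $a = \text{R}$: after applying $G$, the tagged pair $(y, q')$ sits at position $1$ and position $0$ holds the plain symbol $x'$; then $\sigma^{-1}$ shifts the sequence so the tag lands at position $0$, matching $\phi(\textsf{T}(s))$ because the Turing head has moved one cell to the right and now views the old symbol $y$ in state $q'$. The cases $a = \text{L}$ and $a = \text{S}$ are symmetric. Finally, $\Phi$ preserves $\Sigma_0$ (the image of $\phi$), so the restriction $\Phi|_{\phi(S)}$ is conjugate via $\phi$ to $\textsf{T}$.

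I do not expect a real obstacle; the only thing to watch is the sign convention for $\sigma$ and the placement of the state tag so that $G$ writes the new tag at the position that $\sigma^{F(s)}$ will move to $0$. The rest is bookkeeping, and the finiteness conditions on the domains of dependence and effect follow automatically from the fact that $\delta$ reads and writes only at the head.
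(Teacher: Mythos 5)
Your construction is correct and is essentially Moore's original argument, which is the same idea the paper gestures at; the paper cites the result from Moore and then illustrates it only for its specific universal machine $\textsf{T}_{\!\text{univ}}$ using a bit-level encoding into $\{0,1\}^{\mathbb{Z}}$ (state bits and the scanned symbol occupying fixed positions, with $F \in \{-2,0,2\}$ because symbols are two bits wide), rather than proving the general statement. What you do differently is use the tagged alphabet $A = (\Gamma\cup\{\sqcup\}) \sqcup \bigl((\Gamma\cup\{\sqcup\}) \times Q\bigr)$ with the head tag always at position $0$ and $F\in\{-1,0,1\}$. That is closer to Moore's own presentation and gives a cleaner general proof, since it avoids having to fix a bit-encoding of $Q$ and $\Gamma$; the paper's version buys compatibility with the later Cantor-set decoder in Theorem~\ref{thm:TuringUniversalCDS1}, where one really does want everything expressed as binary digits. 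Both are instances of the same mechanism: $G$ performs the local rewrite in a bounded window around the head, and $\sigma^{F(s)}$ re-centers. One small wobble in your write-up: the sentence defining $G$ abstractly (``position $F(s)\cdot(-1)$ carries the tagged pair (new tape symbol, new state $q'$), position $0$ carries the untagged updated tape symbol $x'$'') is internally inconsistent for $a=\text{S}$ (both clauses refer to position $0$) and mislabels the tagged symbol for $a=\text{R},\text{L}$ (it should be the \emph{old} symbol at position $\pm 1$, not the newly written one); your subsequent case-by-case description is correct and should be taken as the definition. You should also note explicitly, as part of the conjugacy claim, that $\Phi(\Sigma_0)\subseteq\Sigma_0$, though this does follow immediately from $\Phi\circ\phi = \phi\circ\textsf{T}$ once the case analysis is done.
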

\noindent Let us work out the correspondence explicitly for our universal Turing machine $\textsf{T}_{\!\text{univ}}$ mentioned above.  Recall that $\textsf{T}_{\!\text{univ}}$ can be expressed as a map $\textsf{T}_{\!\text{univ}} :  S \to S$ where $S = \Gamma^* \times Q \times \Gamma^*$.  As such a configuration in $S$ can be written as $\gamma_1\,q\,\gamma_2$ for $\gamma_1, \gamma_2 \in \Gamma^*$ and $q \in Q$. 
 Noting that in our setting $\Gamma = \{00,01,10,11\}$ and $Q = \{0,1\}^k$ where $00$ corresponds to the blank symbol, we can treat $S$ as a subset of $\{0,1\}^\mathbb{Z}$ which contains finitely many $1$'s.  Accordingly, we can re-express an $s \in S$ by 
 \begin{align}
s = \cdots 0\,0\underbrace{s_{-2m+2} \cdots s_{-1}\,.\,s_{0}\,s_{1}}_{\gamma_1} \,\underbrace{s_2 \cdots s_{k+1}}_{q}\,\underbrace{s_{k+2} \cdots s_{k+2n+1}}_{\gamma_2} 0\,0 \cdots
\end{align}
for all $s_i \in \{0,1\}$.  We have put in a decimal point between $s_{-1}$ and $s_{-0}$ for later convenience.  If for our Turing machine $\delta(s_2 \cdots s_{k+1},\,s_{k+2}\,s_{k+3}) = (s_2' \cdots s_{k+1}',\,s_{k+1}'\,s_{k+3}',\,a)$ for $a \in \{\text{\rm L},\text{\rm R}, \text{\rm S}\}$, then $\textsf{T}_{\!\text{univ}}$ acts on $s$ as 
\begin{align}
\label{E:niceTuniv1}
\textsf{T}_{\!\text{univ}}(s)
 = \begin{cases}
\cdots 0\,0\,s_{-2m+2} \cdots s_{-3}\, .\,s_{-2}\,s_{-1} \,s_2' \cdots s_{k+1}'\,s_{0}\,s_{1}\,s_{k+2}'\,s_{k+3}' \cdots s_{k+2n+1}\,0\,0 \cdots &\text{if }a = \text{\rm L} \\
\cdots 0\,0\,s_{-2m+2} \cdots s_{-1}\, .\,s_{0}\,s_{1} \,s_0' \cdots s_{k+1}'\,s_{k+2}'\,s_{k+3}'\cdots s_{k+2n+1}\,0\,0 \cdots &\text{if }a = \text{\rm S} \\
\cdots 0\,0\,s_{-2m+2} \cdots s_{0}\,s_{1}\,.\,s_{k+2}'\,s_{k+3}'\,s_2' \cdots s_{k+1}'\,s_{k+4}\cdots s_{k+n+1}\,0\,0 \cdots &\text{if }a = \text{\rm R}
\end{cases}.
\end{align}
Now let us define the maps $F : \{0,1\}^{\mathbb{Z}} \to \mathbb{Z}$ and $G : \{0,1\}^{\mathbb{Z}} \to \mathbb{Z}$ by
\begin{align}
\label{E:Ffunction1}
F(s) &:= \begin{cases}
    -2 &\text{if }a = \text{\rm L} \\
    0 &\text{if }a = \text{\rm S} \\
    2 &\text{if }a = \text{\rm R}
    \end{cases}, \\ \nonumber \\
\label{E:Gfunction1}
G(s) &:= \begin{cases}
\cdots 0\,0\,s_{-2m+2} \cdots s_{-1}\,.\,s_2'\,s_{3}'\,s_4' \cdots s_{k+1}'\,s_{0}\,s_{1}\,s_{k+2}'\,s_{k+3}' \cdots s_{k+2n+1}\,0\,0 \cdots &\text{if }a = \text{\rm L} \\
\cdots 0\,0\,s_{-2m+2} \cdots s_{-1}\, .\,s_{0}\,s_{1} \,s_2' \cdots s_{k+1}'\,s_{k+2}'\,s_{k+3}'\cdots s_{k+2n+1}\,0\,0 \cdots &\text{if }a = \text{\rm S} \\
\cdots 0\,0\,s_{-2m+2} \cdots s_{-1}\,.\,s_{0}\,s_{1}\,s_{k+2}'\,s_{k+3}' \,s_2' \cdots s_{k+1}'\,s_{k+4}\cdots s_{k+2n+1}\,0\,0 \cdots &\text{if }a = \text{\rm R}
\end{cases}.
\end{align}
We note that the domain of dependence $D_F^d$ of $F$ and the domain of dependence $D_G^d$ of $G$ are bits $s_2$ through $s_{k+3}$.  Similarly, the domain of effect $D_G^e$ of $G$ are bits $s_0$ through $s_{k+3}$.  As such, $D_F^d$, $D_G^d$, and $D_G^e$ are all finite.  Inspecting~\eqref{E:niceTuniv1},~\eqref{E:Ffunction1}, and~\eqref{E:Gfunction1}, we find that $G(s) = \sigma^{- F(s)} (\textsf{T}_{\!\text{univ}}(s))$, giving us the equivalence
\begin{align}
\label{E:Tgenshiftequiv1}
\textsf{T}_{\!\text{univ}}(s) = \sigma^{F(s)}\!(G(s))\,.
\end{align}
Thus, we see that $\textsf{T}_{\!\text{univ}}$ can be written as a generalized shift map in accordance with Theorem~\ref{thm:Turingtogeneral1}.  Note that while at any finite time step a Turing machine will only act on the subset $S$ which we noted has finitely many ones (since the initial conditions to the Turing machine are likewise in $S$), the dynamics of the Turing machine extends to all of $\{0,1\}^{\mathbb{Z}}$ by e.g.~\eqref{E:Tgenshiftequiv1}.

With the above definitions at hand, we can now provide the promised construction of a robustly Turing-universal CDS.
\begin{theorem}[Existence of a smooth, robustly Turing-universal CDS]\label{thm:TuringUniversalCDS1}
There is a robustly Turing-universal CDS $(f, \mathcal{E}, \mathcal{D}, \tau, \textnormal{\textsf{T}}_{\!\text{\rm univ}})$ where $f : D^2 \to D^2$ is a smooth map on the disk, $\tau(x) = 1$, and $\mathcal{E}, \mathcal{D}$ have optimal complexity $\Theta(n)$.
\end{theorem}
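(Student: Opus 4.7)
The plan is to adapt the generalized-shift construction of Moore and its smooth realization by Cardona et al., with the crucial modification that configurations of $\textsf{T}_{\!\text{univ}}$ are encoded not as individual points on a measure-zero Cantor set but as full-dimensional \emph{dyadic rectangles} in $[0,1]^2 \subset D^2$. By Theorem \ref{thm:measure-preserving-not-universal}, a robustly Turing-universal CDS cannot be area-preserving, so our diffeomorphism will differ from Cardona et al.'s by a small dissipative perturbation that slightly contracts each configuration box at each step.

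First, set up the coding. Using the pairing $\{\sqcup, 0, 1, 2\} \leftrightarrow \{00, 01, 10, 11\}$ and the identification $Q \cong \{0,1\}^k$ from the preceding paragraphs, define the encoder $\mathcal{E}: S \to [0,1]^2$ by $\mathcal{E}(s) = (x_s, y_s)$ with
\[ x_s = \sum_{i \geq 0} s_i\, 2^{-i-1}, \qquad y_s = \sum_{i \geq 1} s_{-i}\, 2^{-i}. \]
Define the decoder $\mathcal{D}: [0,1]^2 \rightharpoonup S$ by reading binary digits of $x$ and $y$ outward, pairing them into symbols of $\Gamma \cup \{\sqcup\}$, and returning the resulting configuration $s$ when blocks of $00$-pairs (blank tape) are found on each side of the head region; otherwise return $\textnormal{\textsf{Error}}$. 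Then $\mathcal{D}^{-1}(s)$ is a closed dyadic rectangle containing $\mathcal{E}(s)$ in its interior and equal to the closure of its interior, so the decoder is robust in the sense of Definition \ref{def:robust-decoder}. Both $\mathcal{E}$ and $\mathcal{D}$ manipulate exactly the relevant binary digits via constant-time operations per bit, so they can be implemented by hybrid $\textnormal{BSS}_{\textsf{C}}$ machines in time $\Theta(n)$, giving optimal complexity.

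Next, construct the dynamics. Partition $[0,1]^2$ into $2^{k+4}$ \emph{input strips}, one for each assignment of the bits $s_0, s_1, \ldots, s_{k+3}$ on which the generalized-shift formulas \eqref{E:Ffunction1}--\eqref{E:Gfunction1} depend, including the two bits shifted between coordinates. On each strip, the generalized shift $\Phi$ is realized by a single affine map of $[0,1]^2$: the rewriting $G$ is an affine translation in coordinates (it replaces a bounded set of bits by fixed new values determined by the strip), and the shift $\sigma^F$ with $F \in \{-2, 0, +2\}$ scales $x$ by $4^{\pm 1}$ and transfers two bits between $x$ and $y$. This gives a piecewise-affine map $f_{\textnormal{PL}}$ whose discontinuities lie on a finite union of dyadic lines. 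Crucially, for every admissible configuration $s$, the dyadic rectangle $\mathcal{D}^{-1}(s)$ is aligned with the dyadic grid and sits inside exactly one input strip, so $f_{\textnormal{PL}}$ acts on $\mathcal{D}^{-1}(s)$ as a single affine map carrying it onto $\mathcal{D}^{-1}(\textsf{T}_{\!\text{univ}}(s))$. Smoothing $f_{\textnormal{PL}}$ in small tubular neighborhoods of its discontinuity lines by a partition of unity — while keeping those neighborhoods disjoint from every $\mathcal{D}^{-1}(s)$ — produces a smooth map that still simulates $\textsf{T}_{\!\text{univ}}$ on $\bigsqcup_s \mathcal{D}^{-1}(s)$. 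Finally, compose with a small smooth dissipative perturbation supported near each rectangle so that each $\mathcal{D}^{-1}(s)$ is sent into a proper subrectangle of $\mathcal{D}^{-1}(\textsf{T}_{\!\text{univ}}(s))$, embed $[0,1]^2$ into $D^2$, and extend by the identity off a slightly enlarged square via a bump function, obtaining a smooth diffeomorphism $f: D^2 \to D^2$ with $\tau \equiv 1$.

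The main obstacle is reconciling three conditions simultaneously: $f$ must be a smooth \emph{diffeomorphism} of $D^2$ (hence globally injective); each box $\mathcal{D}^{-1}(s)$ must be carried robustly into $\mathcal{D}^{-1}(\textsf{T}_{\!\text{univ}}(s))$; and $f$ cannot be area-preserving. The delicate step is the dissipative perturbation: one must contract the configuration boxes in a globally consistent way that keeps $f$ injective on all of $D^2$ and does not create collisions between orbits of distinct boxes. The cleanest way is to arrange that $\bigsqcup_s \mathcal{D}^{-1}(s)$ strictly contracts under $f$ while the ``scratch'' region $D^2 \setminus \overline{\mathcal{E}(S)}$ absorbs the freed volume through the bump-function extension; checking smoothness and injectivity across the scratch-region interface is the principal technical burden.
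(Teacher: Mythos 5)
Your proposal follows the same high-level blueprint as the paper's proof — Moore's generalized shift, Cantor-style encoding of configurations into full-dimensional rectangles, a piecewise map realizing the shift/rewrite, and then a smoothing step — but the choice of a \emph{dyadic} (base-$2$) encoding introduces a genuine gap that invalidates the key smoothing step.

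With the dyadic rectangles $\mathcal{D}^{-1}(s)$, there is no positive separation between distinct encoded regions. Concretely, for a configuration $s$ with tape encoded by the finite prefix $s_0\cdots s_\ell$, the set of $x$-coordinates decoding to $s$ is an interval of the form $[0.s_0\cdots s_\ell,\ 0.s_0\cdots s_\ell+2^{-\ell-3}]$, and the configuration $s'$ obtained by appending one more tape symbol has its $x$-interval beginning \emph{exactly} at $0.s_0\cdots s_\ell + 2^{-\ell-3}$: the closed rectangles abut. Worse, the rectangles accumulate on the strip boundaries where your piecewise-affine map $f_{\mathrm{PL}}$ is discontinuous. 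For instance, the strip boundary at $x=\tfrac12$ is the limit of $x$-coordinates of configurations whose tape is a lengthening block of $2$'s (binary $0.0\underbrace{11\cdots1}_{2m}$), and the corresponding rectangles $\mathcal{D}^{-1}(s^{(m)})$ enter every neighborhood of $x=\tfrac12$. Consequently your step ``smoothing $f_{\mathrm{PL}}$ in small tubular neighborhoods of its discontinuity lines\ldots while keeping those neighborhoods disjoint from every $\mathcal{D}^{-1}(s)$'' is not possible: no such tubular neighborhood exists. This is precisely the reason the paper uses a ternary (middle-thirds Cantor) encoding, $X(\vec{s})=\sum_j 2s_j\,3^{-(j+1)}$, which creates uniform gaps between the sets $B_{z_m}$ corresponding to the strips, so the Whitney extension across those gaps is available.

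A secondary issue: your encoder $\mathcal{E}(s)=(x_s,y_s)$ with $x_s=\sum_{i\geq0}s_i2^{-i-1}$ lands on the \emph{left endpoint} of the dyadic interval, not in its interior, so the robustness condition of Definition \ref{def:robust-decoder} (that $\mathcal{E}(s)$ lie in the interior of $\mathcal{D}^{-1}(s)$) is violated as stated; the paper places $\mathcal{E}(s)$ at the midpoint of the rectangle. Also, the explicit dissipative perturbation and injectivity bookkeeping you describe are unnecessary: the theorem only asks for a smooth map (not a diffeomorphism), and the maps $\textsf{S}$ and $\textsf{G}$ used in the paper are automatically non-area-preserving because of the non-uniform Cantor rescaling, so no additional perturbation needs to be engineered.
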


\begin{proof}
Let $D_{\text{tot}} := D_F^d \cup D_G^d \cup D_G^e$, which consists of the first bit to the left of the decimal in $s$ and first $k+2$ bits to the right of the decimal in $s$.  We let the dynamics $f : D^2 \to D^2$ take place on the unit disk centered at $(\frac{1}{3},\frac{1}{3})$, containing the square $[-\frac{1}{3},1]^2$.  The configurations of the Turing machine will be encoded and decoded from this square.  Accordingly, let us define the encoder $\mathcal{E}$ and decoder $\mathcal{D}$.

Let $\vec{s} = s_0 s_1 s_2 \cdots$ and $\cev{s} =  \cdots s_{-2} s_{-1}$ be half-infinite sequences which we take as having finitely many $1$'s.  We can put a $\vec{s}$ and $\cev{s}$ together to get a bi-infinite string $s$ with the notation $s = \cev{s}\,.\,\vec{s}$. Then the functions
\begin{align}
\label{E:abss1}
|\vec{s}|_x &:= \begin{cases}
\max\{k \geq 1 \, : \, s_{k-1} = 1\} & \text{if  }\vec{s} \not = 000\cdots \\
0 & \text{if  }\vec{s} = 000\cdots
\end{cases}\,, \\ \nonumber \\
\label{E:abss2}
|\cev{s}|_y &:= \begin{cases}
\max\{k \geq 1 \, : \, s_{-k} = 1\} & \text{if  }\cev{s} \not = \cdots000 \\
0 & \text{if  }\cev{s} = \cdots 000
\end{cases}\,,
\end{align}
respectively locate the position of the right-most $1$ in $\vec{s}$ before there is an infinite string of $0$'s to the right, and the position of the left-most $1$ in $\cev{s}$ before there is an infinite string of zeros to the left.  While the definitions of $|\vec{s}|_x$ and $|\cev{s}|_y$ in~\eqref{E:abss1} and~\eqref{E:abss2} are clear, it is not obvious as written how to compute them. 
 To remedy this, we write down two manifestly computable (but harder to parse at a glance) formulae below, which are tailored to our encoding:
\begin{align}
\label{E:abss3}
|\vec{s}|_x &:= \begin{cases} \min\{j \geq k+2\, : \, s_{j+1}\,s_{j+2}\,s_{j+3} = 0\,0\,0\} &\text{if } s_0 \cdots s_{k+1} \not = 0\cdots 0\,\text{ and }\,s_{k+2}\,s_{k+3} \not = 0\,0 \\
\max\{1 \leq j \leq k+2 \, : \,s_{j-1} = 1\} &\text{if } s_0 \cdots s_{k+1} \not = 0\cdots 0\,\text{ and }\,s_{k+2}\,s_{k+3} = 0\,0 \\
0 &\text{if } s_0 \cdots s_{k+1} = 0\cdots 0
\end{cases}\,,
\\ 
\label{E:abss4}
|\cev{s}|_y &:= \min\{j \geq 0\,:\,s_{-j-3}\,s_{-j-2}\,s_{-j-1} = 0\,0\,0\}\,.
\end{align}
These definitions rely on the fact that the initialized part of the tape of our Turing machine must be a contiguous string of non-blank symbols, and that the Turing machine can only write non-blank symbols.  As such, there will never be a non-blank symbol followed by blank symbols followed by a non-blank symbol.

We further define ternary encodings of $\vec{s}$ and $\cev{s}$ by the functions
\begin{align}
\label{E:XY1}
X(\vec{s}) := \sum_{j=0}^{|\vec{s}|_x} 2s_{j}\,3^{-(j+1)}\,,\qquad Y(\cev{s}) := \sum_{j = 1}^{|\cev{s}|_y + 1} 2 s_{-j}\,3^{-j}\,.
\end{align}
Using~\eqref{E:abss1} and~\eqref{E:XY1}, we can define the sets $C_s = \mathcal{D}^{-1}(s)$ by
\begin{align}
\label{eq:cantor-decoder}
C_s := [X(\vec{s})- \frac{1}{3^{|\vec{s}|_x + 1}},\, X(\vec{s}) ] \times [Y(\cev{s})- \frac{1}{3^{|\cev{s}|_y + 1}},\, Y(\cev{s}) ]\,,
\end{align}
which evidently are closed sets with non-trivial interior (and as such, our CDS will be robust in the sense we previously defined).  Moreover all $C_s$'s are contained in the square $[-\frac{1}{3},1]^2$, and are depicted in Figure~\ref{fig:Cantor1}.  We can think of $C_s$ as a thickened version of a two-dimensional Cantor set encoding of $s$.  The encoder $\mathcal{E}$ is then given by 
\begin{align}
\mathcal{E}(s) := (X(\vec{s})- \frac{1}{2}\,\frac{1}{3^{|\vec{s}|_x + 1}},\,Y(\cev{s})- \frac{1}{2}\,\frac{1}{3^{|\cev{s}|_y + 1}})\,,
\end{align}
and satisfies $\mathcal{D} \circ \mathcal{E}(s) = s$.  Moreover, both $\mathcal{E}$ and $\mathcal{D}$ can be implemented by hybrid BSS$_{\textsf{C}}$ machines with complexity $\Theta(n)$.

\begin{figure}[t!]
    \centering
    \includegraphics[scale = .5]{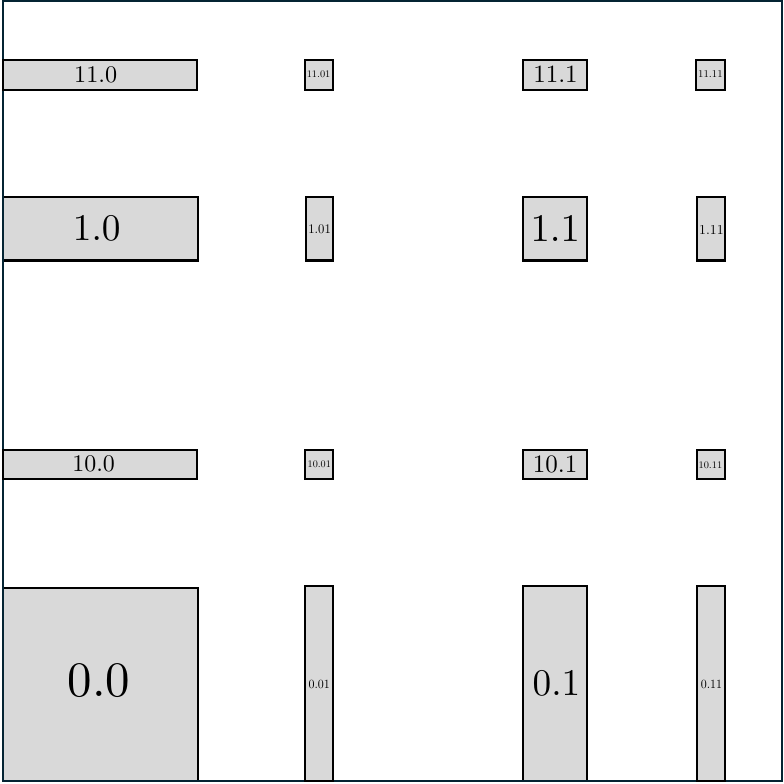}
    \vspace{15pt}
    \caption{An illustration of the some of the sets $C_s$ in the square $[-\frac{1}{3},1]^2$.  The numerical labels of the sets indicate the corresponding $s$, with the understanding that there are infinitely many zeros to the left and infinitely many zeros to the right (i.e.~$1.01$ is shorthand for $\cev{0}1.01\vec{0}$).  The sets are organized according to a thickened version of a Cantor encoding in two dimensions.}
    \label{fig:Cantor1}
\end{figure}

Having constructed $\mathcal{E}$ and $\mathcal{D}$, we now turn to constructing appropriate dynamics $f : D^2 \to D^2$.  Before we proceed, let us define the smooth, monotonic functions
\begin{align}
q_{a,b}(x) := \begin{cases}
a\,x &\text{for }x \leq - \frac{b}{2} \\
\phi_{a,b}(x) &\text{for }- \frac{b}{2} \leq x \leq 0 \\
x &\text{for } x \geq 0
\end{cases}\,,\qquad r_{a,b}(x) := \begin{cases}
x &\text{for }x \leq - \frac{b}{2} \\
\psi_{a,b}(x) &\text{for }- \frac{b}{2} \leq x \leq 0 \\
a\,x &\text{for } x \geq 0
\end{cases}\,,
\end{align}
where $\phi_{a,b}(x)$ and $\psi_{a,b}(x)$ are smooth, monotonic interpolating functions (which can be constructed explicitly or otherwise are guaranteed to exist by the Whitney extension theorem).  The functions $q_{a,b}(x)$ and $r_{a,b}(x)$ will play important roles in the definition of $f$.

Our strategy in the remainder of the proof is to construct smooth maps $\textsf{S}$ and $\textsf{G}$ from a nice subset of $D^2$ (containing $\bigcup_{s \in S} C_s$) to $D^2$ such that $\textsf{S}(C_s) = C_{\sigma(s)}$ and $\textsf{G}(C_s) = C_{G(s)}$.  Then we can compose these maps in an appropriate way and extend the domain to all of $D^2$ to arrive at a mapping $f : D^2 \to D^2$ such that $f(C_s) = C_{\sigma^{F(s)}(G(s))} = C_{\textsf{T}_{\!\text{univ}}(s)}$.  We begin by constructing $\textsf{G}$.

Recall from~\eqref{E:Gfunction1} that $G$ only depends on $s_0\,s_1 \cdots s_{k+3}$.  Let us denote the $2^{k+4}$ possible bit strings $s_0\, s_1 \cdots s_{k+3}$ by $z_{1},...,z_{2^{k+4}}$.  It will be convenient to find nice sets $B_{z_m}$ such that $C_{s} \subset B_{z_m}$ for all $s$, with $s_0 \,s_1 \cdots s_{k+3} = z_m$.  If $\vec{0} := 000 \cdots$, an example of nice sets $B_{z_m}$ is given by
\begin{align}
B_{z_m} = [X(z_m \vec{0}) - \frac{1}{3^{|z_m \vec{0}|_x + 1}},\, X(z_m \vec{0}) + \frac{1}{3^{k+4}}] \times [-\frac{1}{3},1]\,.
\end{align}
These $B_{z_m}$'s are pairwise disjoint, and have non-zero distance apart from one another. 
 In a slight abuse of notation, we can treat $G$ as a map $G : \{z_1,...,z_{2^{k+4}}\} \to \{z_1,...,z_{2^k}\}$.  Then we construct the map
\begin{align}
\label{E:sfGeq1}
\textsf{G}(x,y) = \left(q_{a(z_m), \,b(z_m)}(x - X(z_m \vec{0})) + X(G(z_m)\vec{0}),\,\,y\right) \qquad \text{if} \quad (x,y) \in B_{z_m}\,,
\end{align}
where $a(z_m) := \frac{3^{|z_m \vec{0}|_x}}{3^{|G(z_m) \vec{0}|_x}}$ and $b(z_m) := \frac{1}{3^{|z_m \vec{0}|_x + 1}}$.  We note several features of the above map $\textsf{G} : \bigcup_{m = 1}^{2^{k+4}} B_{z_m} \to D^2$.  The map smoothly rearranges and resizes a finite number of vertical strips in $[-\frac{1}{3},1]^2$ so that $\textsf{G}(C_s) = C_{G(s)}$ for all $s \in S$.  Due to the re-sizing, $\textsf{G}$ is not area-preserving.  

We can similarly define a smooth map $\textsf{S}$ satisfying $\textsf{S}(C_s) = C_{\sigma(s)}$, namely
\begin{align}
\label{E:Shift1}
\textsf{S}(x,y) = \begin{cases}
(r_{\frac{1}{3},\frac{1}{3}}\!(x),\,r_{3,\frac{1}{3}}\!(x)) & \text{for }(x,y) \in [-\frac{1}{3},1] \times [-\frac{1}{3},\frac{1}{3}] \\
(\frac{1}{3}x + \frac{2}{3},\,3(y - \frac{2}{3})) & \text{for  } (x,y) \in [-\frac{1}{3},1] \times [\frac{2}{3}-\frac{1}{9}, 1]
\end{cases}\,,
\end{align}
which is a nonlinear version of the Baker's map. 
The map $\textsf{S}$ is not area-preserving, and is only defined on the two blocks $[-\frac{1}{3}, 1] \times [-\frac{1}{3},\frac{1}{3}]$ and $[-\frac{1}{3},1]\times [\frac{5}{9},1]$ which jointly contain all of the $C_s$'s.  The two blocks comprise both the domain and co-domain of $\textsf{S}$, and so $\textsf{S}$ can be composed with itself.  A depiction of the action of $\textsf{S}$ on $[-\frac{1}{3},1]^2$ is given in Figure~\ref{fig:Shift1}.  It is straightforward to use $\textsf{S}$ to define a smooth map $\textsf{H}$ satisfying $\textsf{H}(C_s) = C_{\sigma^{F(a)}(s)}$.  To this end we note that, in a slight abuse of notation, $F$ in~\eqref{E:Ffunction1} can be viewed as a map $F :\{z_1,...,z_{2^k}\} \to \mathbb{Z}$, and we define
\begin{align}
\label{E:Heq1}
\textsf{H}(x,y) = \textsf{S}^{F(z_m)}\big|_{B_{z_m}}\!(x,y) \quad \text{if} \quad (x,y) \in B_{z_m}\,.
\end{align}
Above, $\textsf{S}^{F(z_m)}|_{B_{z_m}}$ denotes the map $\textsf{S}^{F(z_m)}$ restricted to the domain $B_{z_m}$.  Indeed, the above mapping $\textsf{H} : \bigcup_{m = 1}^{2^{k+4}} B_{z_m} \to D^2$ is smooth and satisfies $\textsf{H}(C_s) = C_{\sigma^{F(a)}(s)}$ for all $s \in S$.

\begin{figure}[t!]
    \centering
    \includegraphics[scale = .4]{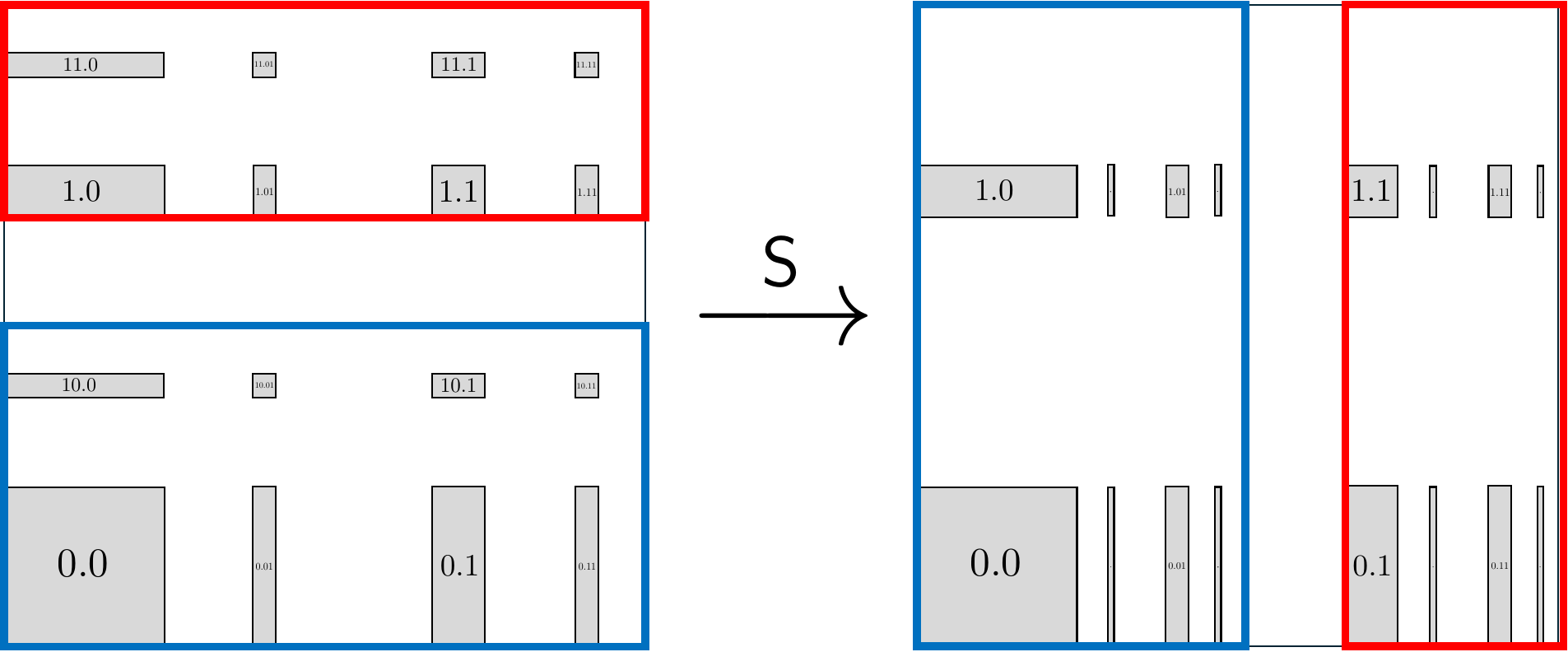}
        \vspace{15pt}
    \caption{A depiction of how the map $\textsf{S}$ in~\eqref{E:Shift1} acts on two subsets of the square $[-\frac{1}{3},1]^2$.  Some of the sets $C_s$ are depicted (akin to Figure~\ref{fig:Cantor1}) to clarify the action of the mapping $\textsf{S}$.  The red rectangle is shifted to the right and nonlinearly compressed in the $x$-direction, as well as nonlinearly stretched in the $y$-direction.  The blue rectangle is also nonlinearly compressed in the $x$-direction as well as nonlinearly stretched in the $y$-direction.  Hence $\textsf{S}$ is a nonlinear version of the Baker's map, and indeed acts as a shift map since it takes $C_s$ to $C_{\sigma(s)}$.}
    \label{fig:Shift1}
\end{figure}

Putting~\eqref{E:sfGeq1} and~\eqref{E:Heq1} together, we find
\begin{align}
\label{E:HGeq1}
\textsf{H} \circ \textsf{G}(C_s) = C_{\sigma^{F(s)}(G(s))}\,,
\end{align}
where $\textsf{H}$ and $\textsf{G}$ each are maps from $\bigcup_{m = 1}^{2^{k+4}} B_{z_m} \to D^2$. 
 Next we will extend the domain to all of $D^2$ (which we recall we have taken to be the unit disk centered at $(\frac{1}{3}, \frac{1}{3})$, containing the square $[-\frac{1}{3},1]^2$).  We first note that the $B_{z_m}$'s each have a non-zero pairwise distance from one another, and from the boundary of $D^2$.  Then we can enlarge each $B_{z_m}$ slightly into a closed set $\widetilde{B}_{z_m}$ with smooth boundary (e.g.~no corners), such that the $\widetilde{B}_{z_m}$'s each have a non-zero pairwise distance from one another, and non-zero distance to the boundary of $D^2$.  Then we extend the domains of the functions comprising $\textsf{H}$ and $\textsf{G}$ to $\bigcup_{m = 1}^{2^{k_4}}\widetilde{B}_{z_m}$, resulting in functions $\widetilde{\textsf{H}}$ and $\widetilde{\textsf{G}}$ which also satisfy~\eqref{E:HGeq1}.  But since $\widetilde{\textsf{H}} \circ \widetilde{\textsf{G}} : \bigsqcup_{m = 1}^{2^{k+4}} \widetilde{B}_{z_m} \to \bigsqcup_{m = 1}^{2^{k+4}} \widetilde{B}_{z_m}  $
is a smooth map and $\bigsqcup_{m = 1}^{2^{k+4}} \widetilde{B}_{z_m}$ is contained in $D^2$ and has finite distance from the boundary of $D^2$, by the Whitney extension theorem we can extend $\widetilde{\textsf{H}} \circ \widetilde{\textsf{G}}$ to a smooth function $f : D^2 \to D^2$.  This function satisfies
\begin{align}
f(C_s) = C_{\sigma^{F(s)}(G(s))} = C_{\textsf{T}_{\!\text{univ}}}(s)\,,
\end{align}
and therefore we have
\begin{align}
\mathcal{D} \circ f(C_s) = \textsf{T}_{\!\text{univ}}(s)
\end{align}
for all $s \in S$.  Evidently from the above equation, $\tau(x) = 1$.  This completes the construction of our robustly Turing-universal CDS.
\end{proof}

Although the previous proof is somewhat elaborate, the basic ingredients are straightforward.  The first idea is that since we need to encode finite strings into regions with non-trivial interior so that the CDS is robust, the $C_s$'s need to become smaller as the `size' of $s$ becomes larger.  These considerations motivate a natural guess for the encoded regions (and hence the encoder and decoder), where the $C_s$'s are thickened version of the Cantor encoding of an infinite binary string into two dimensions.  The second idea is that by appropriately stretching and compressing regions of the disk (in a non-area-preserving manner), we can implement a proxy for $G$ which acts on a finite number of regions.  The third idea is that a nonlinear generalization of the Baker's map serves as a proxy for the shift map on the $C_s$'s.  By combining the previous ingredients appropriately, we obtain a function on a subset of the disk implementing Turing-universal dynamics on the $C_s$'s.  We then smoothly extend said function into $f$, which gives us our desired CDS.  

\begin{remark}[Generalizing Theorem~\ref{thm:Turingtogeneral1} to a diffeomorphism]
We can generalize our CDS in Theorem~\ref{thm:Turingtogeneral1} to a diffeomorphism if we use a Turing machine $\textsf{T}_{\!\text{univ}}$ which is reversible.  The same exact construction presented in the proof goes through without any other change.
\end{remark}

\begin{remark}
Our robustly Turing-universal CDS is defined on a compact set in $\R^2$.  The work~\cite{cardona2023computability} gives a fascinating example of a gradient flow on $\R^2$ which does not preserve any compact subset and yet is Turing-universal, although under a less robust definition of simulation than ours.  However, we suspect that their example, or a slight variation thereof, may also furnish a robustly Turing-universal CDS.  The same paper also gives a related example of a gradient flow on the sphere with zero topological entropy which is Turing-universal under a less robust definition of simulation.  It would be interesting to understand if this example is a robustly Turing-universal CDS; this would involve a non-trivial slowdown function $\tau$ implicit in the construction in~\cite{cardona2023computability} (arising from the function $G$ of~\cite[Section 7]{cardona2023computability}).
\end{remark}

\subsection{Non-universality of Axiom A systems}
\label{sec:axiom-a}

We now state another central result of our paper: 

\begin{theorem}
\label{thm:axiom-a-non-universal}
 Let $f: M \to M$ be an Axiom A system and assume that $M$ is compact. There is no extension of $f$ to a robustly Turing-universal CDS $(f,\mathcal{E}, \mathcal{D}, \tau, \machine{T})$. 
\end{theorem}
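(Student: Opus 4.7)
My plan is to proceed by contradiction, following the strategy sketched in Section~\ref{sec:our-results}. The idea is that robustness together with the sub-machine lemma will manufacture arbitrarily many $f^n$-invariant closed sets with nonempty interior; Smale's spectral decomposition of Axiom A diffeomorphisms and the Hirsch--Pugh stable manifold theorem~\cite{hirsch-pugh} will then force each such set to engulf an entire basic set, which contradicts the finiteness of the basic set decomposition.

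First I would unfold the hypothesis. Assuming $(f, \mathcal{E}, \mathcal{D}, \tau, \textsf{T}_{\!\text{univ}})$ is a robust CDS, for each $N \geq 1$ the trivial machine $\machine{T}_N$ (identity on $N$ symbols) is a sub-machine of $\textsf{T}_{\!\text{univ}}$, so Lemma~\ref{lemm:fund1} produces a robust CDS $(f, \mathcal{E}', \mathcal{D}', \tau', \machine{T}_N)$. The sets $C'_i := (\mathcal{D}')^{-1}(i)$ for $i = 1, \ldots, N$ are then pairwise disjoint closures of their interiors with $f^{\tau'}(C'_i) \subset C'_i$, and $\tau'$ is constant on connected components. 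Replacing each $C'_i$ by the component $\tilde C_i$ containing $\mathcal{E}'(i)$ and letting $n$ be a common multiple of the finitely many resulting period values, I obtain pairwise disjoint closed sets $\tilde C_1, \ldots, \tilde C_N \subset M$, each with nonempty interior and satisfying $g(\tilde C_i) \subset \tilde C_i$ for $g := f^n$. By Smale's spectral decomposition, $\Omega(g) = \Omega_1 \sqcup \cdots \sqcup \Omega_k$ is a finite disjoint union of basic sets --- each closed, $g$-invariant, hyperbolic, with dense periodic orbits and a topologically transitive iterate --- and Hirsch--Pugh gives the decomposition $M = \bigsqcup_j W^s(\Omega_j)$, with each $x$ lying on $W^s(y)$ for some $y \in \Omega_{j(x)}$.

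The crux is then to show that $\Omega_{j(i)} \subset \tilde C_i$ for each $i$. To this end, pick $x$ in the interior of $\tilde C_i$; its forward $g$-orbit is attracted to some basic set $\Omega_{j(i)}$, and $x \in W^s(y)$ for some $y \in \Omega_{j(i)}$. Using density of $g$-periodic points in $\Omega_{j(i)}$ together with the local product structure of hyperbolic sets and $C^1$-continuous dependence of local stable manifolds on their basepoints, I would approximate $y$ by a periodic $y' \in \Omega_{j(i)}$ and produce $x' \in W^s(y')$ sufficiently close to $x$ to remain in the interior of $\tilde C_i$. Closedness of $\tilde C_i$ then forces the entire $g$-periodic orbit of $y'$ into $\tilde C_i$; iterating the construction with $y'$ ranging over a dense set of periodic points of $\Omega_{j(i)}$ (accessible by varying $y$ within $\Omega_{j(i)}$) and invoking topological transitivity of the appropriate iterate of $g$ on $\Omega_{j(i)}$ yields $\Omega_{j(i)} \subset \tilde C_i$. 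Consequently $i \mapsto j(i)$ is injective, so $N \leq k$; taking $N > k$ gives the contradiction.

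I expect the substantive difficulty to lie precisely in this engulfment step: passing from ``$\tilde C_i$ meets the stable set of a basic set'' to ``$\tilde C_i$ contains the entire basic set'' is what really uses the hyperbolic structure --- the $C^1$-continuity of local stable manifolds in their basepoints, the local product structure on basic sets, and topological transitivity on cyclic components of each $\Omega_j$. By contrast, the reduction to a finite collection of forward-invariant closed sets with nonempty interior via the sub-machine lemma is comparatively soft.
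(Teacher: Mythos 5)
Your overall skeleton matches the paper's: reduce via the sub-machine lemma to arbitrarily many disjoint, closed, forward-invariant sets $\tilde C_i$ with nonempty interior; invoke Smale's spectral decomposition; and use continuity of stable manifolds to show each $\tilde C_i$ must engulf a basic set, contradicting the finiteness of the basic sets. One minor caveat first: what the perturbation step really requires is not $C^1$-continuity of \emph{local} stable manifolds and local product structure, but continuity of the family of \emph{global} stable manifolds near $x$ --- your $x$ lies far from $\Omega_{j(i)}$, not in a local product neighborhood. This is precisely the upgraded variant of Hirsch--Pugh recorded in Proposition~\ref{prop:stable-manifold-theorem} and justified in Appendix~\ref{App:stable-manifold-theorem-variants}, and it should be cited rather than the local statements.

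The substantive gap is your substitution of a \emph{periodic} approximant $y'$ for the target of $x$. If $\tilde z := y'$ is $g$-periodic with orbit $O$ and $x' \in W^s(y')$ lies in the interior of $\tilde C_i$, then forward invariance and closedness of $\tilde C_i$ only yield $O \subset \tilde C_i$: the iterates $g^m(x')$ accumulate exactly on $O$ and nowhere else. Your proposed remedy --- ``iterating the construction with $y'$ ranging over a dense set of periodic points (accessible by varying $y$)'' --- does not go through as stated. The point $y$ is pinned down by $x$ (it is where $x$'s forward orbit lands); the periodic approximants $y'$ must stay near $y$ for $x'$ to stay near $x$; varying $x$ within the interior of $\tilde C_i$ gives no control over how $y$ sweeps $\Omega_{j(i)}$; and $y'$ itself need not lie in the interior of $\tilde C_i$, so you cannot restart the argument from $y'$. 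The paper sidesteps all of this by choosing the approximant $\tilde z$ to be a high forward iterate $f^{n_i}(y_0)$ of a point $y_0$ with \emph{dense forward orbit} in the basic set. Such a $\tilde z$ again has dense forward orbit, so for \emph{every} $w \in \Omega_{j(i)}$ there are $m_j \to \infty$ with $g^{m_j}(\tilde z) \to w$; since $d(g^{m_j}(\tilde x), g^{m_j}(\tilde z)) \to 0$, $g^{m_j}(\tilde x) \in \tilde C_i$ by forward invariance, and $\tilde C_i$ is closed, one gets $w \in \tilde C_i$ directly. This single choice of approximant makes the engulfment step immediate, and it is the piece of your argument that needs to be replaced.
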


\noindent Below we recall the definition of an Axiom A system (originally due to Smale \cite{smale1967differentiable}), as well as its basic properties. We must first introduce some auxiliary definitions.

\begin{definition}[Wandering and nonwandering sets]
The \textbf{wandering set} of $f$ is the set of points $x \in M$ such that there is a neighborhood $U$ of $x$ and an $N$ such that for all $n > N$, $f^n(U) \cap U = \emptyset$. The \textbf{nonwandering set} $\Omega_f$ of $f$ is the complement of the wandering set. 
\end{definition}

\noindent Clearly, any periodic point of $f$ is in the non-wandering set of $f$.

\begin{definition}[Axiom A, Smale \cite{smale1967differentiable}]
\label{def:axiom-A-diffeo}
We say that $f$ is \textbf{Axiom A} if its nonwandering set $\Omega_f$ is compact, hyperbolic, and if the periodic points of $f$ are dense in $\Omega_f$. 
\end{definition}

Much of the reason for the importance of Axiom A systems is that they form a rich class of differentiable dynamical systems with dynamics that can be characterized in great detail. Our proof of Theorem \ref{thm:axiom-a-non-universal} will rely on this precise understanding. As such, we will recall the basic structural results on Axiom A below. 

The first result is the \emph{spectral decomposition} of Axiom A systems:

\begin{proposition}[Spectral Decomposition of Axiom A systems \cite{smale1967differentiable, bowen2008equilibrium}]
\label{prop:spectral-decomposition}
Let $f: M \to M$ be an Axiom A system. Then there is a decomposition of the non-wandering set
\[ \Omega_f = \Omega_1 \cup \cdots \cup \Omega_k \]
into disjoint closed hyperbolic $f$-invariant subsets such that $f$ is topologically transitive on each $\Omega_i$. Each $\Omega_i$ can be written as a union of pairwise disjoint closed sets 
\[ \Omega_{i} = \bigsqcup_{j=1}^{r_i} \Omega_{i, j}\]
with $f(\Omega_{i, j}) = \Omega_{i, j+1}$ for $j=1, \ldots, r_i-1$, and $f(\Omega_{i, r_i}) = \Omega_{i, 1}$, and such that 
\[ f^{r_i}|_{\Omega_{i, j}} \text{ is topologically mixing. }\]
  Moreover, 
there is a decomposition into disjoint subsets 
\[ M = \bigsqcup_{i=1}^k W^s(\Omega_i)\,. \]
\end{proposition}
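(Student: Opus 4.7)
The plan is to exploit the hyperbolic structure on $\Omega_f$ together with density of periodic points to produce a finite decomposition via an equivalence relation on periodic orbits. The key technical inputs are the stable manifold theorem, which attaches to each $x \in \Omega_f$ smooth stable and unstable manifolds $W^s(x), W^u(x)$ varying continuously in $x$; the Anosov closing lemma, which shadows near-periodic orbit segments by genuine periodic orbits; and the shadowing lemma, which together with the closing lemma produces a local product structure on $\Omega_f$: there is $\epsilon_0 > 0$ such that whenever $x, y \in \Omega_f$ with $d(x,y) < \epsilon_0$, the set $W^s_\epsilon(x) \cap W^u_\epsilon(y)$ is a single point $[x,y] \in \Omega_f$ depending continuously on $(x,y)$.

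First I would define, on the dense set of periodic points of $f$ in $\Omega_f$, the relation $p \sim q$ iff $W^u(p)$ meets $W^s(q)$ transversely and $W^u(q)$ meets $W^s(p)$ transversely. Symmetry is tautological, reflexivity follows from $p \in W^s(p) \cap W^u(p)$, and transitivity is a consequence of the inclination (lambda) lemma, which propagates heteroclinic intersections through iteration. Let $\Omega_i$ be the closure of an equivalence class. Each $\Omega_i$ is closed and $f$-invariant. Using the local product structure one sees that every equivalence class is open in $\Omega_f \cap \{\text{periodic points}\}$, so compactness of $\Omega_f$ forces only finitely many classes, and their closures are pairwise disjoint because any would-be shared limit point would force two classes to merge through the product bracket $[\cdot,\cdot]$.

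Next I would verify topological transitivity of $f$ on each $\Omega_i$: given open subsets $U_1, U_2$ of $\Omega_i$, choose periodic points $p_j \in U_j$, chain a heteroclinic pseudo-orbit from $p_1$ to $p_2$ using a transverse intersection of $W^u(p_1)$ with $W^s(p_2)$, then shadow this pseudo-orbit by a genuine orbit to witness $f^n(U_1) \cap U_2 \neq \emptyset$; enumerating a countable basis of open sets of $\Omega_i$ produces a dense orbit. For the cyclic refinement, fix a periodic point $p \in \Omega_i$ and let $r_i$ be the greatest common divisor of the periods of all periodic orbits in $\Omega_i$; setting $\Omega_{i,j} := \overline{W^u(f^j(p)) \cap \Omega_i}$ for $j = 0, \ldots, r_i - 1$ gives closed sets cyclically permuted by $f$ on which $f^{r_i}$ is topologically mixing (mixing is automatic once the cyclic obstruction has been divided out, again via the inclination lemma). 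The global decomposition $M = \bigsqcup_i W^s(\Omega_i)$ follows because, for any $x \in M$, the $\omega$-limit set $\omega(x)$ lies in $\Omega_f$, and the local product structure together with transitivity forces $\omega(x)$ to lie in a single $\Omega_i$, so $x \in W^s(\Omega_i)$ for exactly one $i$.

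The main obstacle is establishing the local product structure on $\Omega_f$ and the basic properties of the equivalence relation rigorously. This is where the full strength of uniform hyperbolicity, density of periodic points, the stable manifold theorem, and the closing/shadowing lemmas is needed, and these ingredients are precisely what distinguish the Axiom A setting from bare topological dynamics. Once the local product structure and the inclination lemma are in hand, the remaining combinatorial steps — finiteness, transitivity on basic sets, the cyclic refinement for mixing, and the global stable-set decomposition — are fairly formal manipulations of open sets, heteroclinic intersections, and periodicity.
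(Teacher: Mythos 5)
This proposition is stated in the paper as a classical result and is not proved there; it is simply cited from Smale and from Bowen's monograph. So there is no proof in the paper against which to compare yours, and the relevant question is whether your sketch would hold up on its own. It does, in outline: your argument follows the standard Smale--Bowen route (local product structure on $\Omega_f$, an equivalence relation on periodic points whose transitivity comes from the inclination lemma, openness and hence finiteness of equivalence classes, shadowing for transitivity, cyclic refinement for mixing, and the stable-set decomposition). Two of your steps, though, deserve a closer look.

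First, the word ``transversely'' in your equivalence relation is not what one wants. Without the Strong Transversality hypothesis, the manifolds $W^u(p)$ and $W^s(q)$ need not meet transversely \emph{in $M$}, and Axiom A alone does not give you that. The relation that actually works is the nonempty-intersection condition $W^u(p)\cap W^s(q)\neq\emptyset$ and $W^u(q)\cap W^s(p)\neq\emptyset$; transitivity then uses the inclination lemma combined with the \emph{local product structure on $\Omega_f$}, which supplies the ``transversality within $\Omega_f$'' that your application of the $\lambda$-lemma implicitly relies on. Phrased your way, the transitivity step as written would need justification that transverse intersections in $M$ exist, which is not automatic.

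Second, your justification of $M = \bigsqcup_i W^s(\Omega_i)$ is misattributed. The correct mechanism is not local product structure plus transitivity but a simple separation argument: if $\omega(x)$ met two distinct $\Omega_i$, $\Omega_j$, then since these are compact, invariant, and a positive distance apart, the forward orbit of $x$ would have to pass through the compact ``gap'' region $M\setminus\bigcup_k N_\epsilon(\Omega_k)$ infinitely often, forcing $\omega(x)$ to contain a point of that gap; but $\omega(x)\subset\Omega_f$, which is disjoint from the gap, a contradiction. Compactness of $\Omega_i$ then turns $\omega(x)\subset\Omega_i$ into $x\in W^s(\Omega_i)$. This is more elementary than what you wrote and does not invoke transitivity on the basic sets at all. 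The rest of your sketch---openness and finiteness of the classes, shadowing for transitivity, and the cyclic refinement by the period of the nonwandering dynamics---is consistent with the textbook development (e.g.\ Bowen's Theorem~3.5 and the surrounding material), though each of those steps would of course require considerably more detail to be a complete proof.
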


Each of the $\Omega_i$ should be thought of as a kind of `attractor' for $f$. However, since the expanding part of the tangent bundle $T^+_{\Omega_i}$ (see Definition \ref{def:hyperbolic-set}) may be nontrivial, points near $\Omega_i$ may not be attracted to $\Omega_i$ under the dynamics of $f$; thus these `attractors' may be `unstable'. An illustrative example of an Axiom A system may be obtained by letting $f$ be a time-1 gradient flow of a proper Morse function $g: M \to \R$; in this setting, the $\Omega_i$ are exactly the critical points $x$ of $g$, and the decomposition of $TM|_{\Omega_i}$ is exactly the decomposition of $T_xM$ into the positive and negative eigenspaces of the Hessian of $g$ at the critical point $x$. Note that in contrast to this example, in general the geometry of the $\Omega_i$ may be quite complicated, and possibly fractal: the Cantor set associated to a Smale horsehoe (see Figure~\ref{fig:manysystems}(d) for a depiction) is an Axiom A attractor \cite{katok1995introduction}. Moreover, the dynamics on each $\Omega_i$ can be fairly chaotic, as we will discuss below. 

Another basic result regarding Axiom A systems is that they are essentially the same as the  \emph{structurally stable} systems:
\begin{theorem}[\cite{robbin1971structural, robinson1976structural, mane1987proof}]\label{thm:axiom-a-is-structurally-stable}
Suppose that $f$ is \textbf{structurally stable}: there exists an $\epsilon > 0$ such that for all diffeomorphisms $g: M \to M$ which are $\epsilon$-$C^1$-close to $f$ (i.e.~those such that $\sup_{x \in M} d(f(x), g(x)) + \|df_x-dg_x\| < \epsilon$), $g$ is \emph{topologically conjugate} to $f$ (i.e.~there exists a homeomorphism $h_g: M \to M$ such that $f = h_g \circ g \circ h^{-1}_g$). Then $f$ is Axiom A.

In fact, structural stability is equivalent to the condition that $f$ is Axiom A and that $f$ satisfies the \emph{Strong Transversality} condition: for every $x, y \in \Omega$, the stable manifold $W^s(x)$ is transverse to the unstable manifold $W^u(y)$. \end{theorem}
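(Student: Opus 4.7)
The plan is to handle the two implications separately, since one direction is constructive (via shadowing) while the other is the deep $C^1$ stability conjecture requiring elaborate perturbation theory.

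For the easier direction, that Axiom A together with strong transversality implies structural stability (due to Robbin \cite{robbin1971structural} and Robinson \cite{robinson1976structural}), I would first apply the spectral decomposition of Proposition~\ref{prop:spectral-decomposition} to write $\Omega_f = \Omega_1 \sqcup \cdots \sqcup \Omega_k$. On each basic set, the Anosov shadowing lemma — a direct consequence of uniform hyperbolicity — guarantees that for $g$ sufficiently $C^1$-close to $f$, every pseudo-orbit of $g$ near $\Omega_i$ is shadowed by a true orbit of $f$. Combining this with the expansivity of $f|_{\Omega_i}$ produces a unique continuous conjugacy $h_i$ on a neighborhood of each $\Omega_i$. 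The second step is to extend this local conjugacy to $M = \bigsqcup_i W^s(\Omega_i)$ using stable manifold theory: each $x \in M$ is asymptotic to a unique basic set, so $h$ can be propagated along the stable foliation. The role of the strong transversality hypothesis is to guarantee that when a stable leaf of $\Omega_i$ and an unstable leaf of $\Omega_j$ meet, the extensions agree; formally, this is a contraction-mapping argument in a space of continuous sections of a tubular neighborhood of $\mathrm{graph}(f)$.

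The converse direction, that structural stability implies Axiom A (plus strong transversality), is Ma\~n\'e's theorem \cite{mane1987proof} and can only be sketched. The architecture has three layers. First, every periodic point of $f$ must be hyperbolic: if $p$ were a non-hyperbolic periodic point, Franks' lemma would allow us to realize an arbitrary small perturbation of $Df^n|_p$ as the derivative of a $C^1$-close $g$, and by tuning this perturbation to change either the number of fixed points of $g^n$ near $p$ or their Morse indices, one produces a $g$ not topologically conjugate to $f$, contradicting structural stability. Second, $\Omega_f$ coincides with the closure of the periodic points of $f$; this uses Pugh's $C^1$ closing lemma \cite{pugh1983c1} to approximate any nonwandering point by a periodic point of a nearby system, and structural stability then transports the periodic point back to $f$ itself. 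Third, and most delicate, one must upgrade pointwise hyperbolicity along periodic orbits to \emph{uniform} hyperbolicity on $\Omega_f$; this is done by contradiction via Ma\~n\'e's cocycle arguments, showing that any failure of uniformity can be amplified by $C^1$-small perturbations into a topologically distinct $g$. Strong transversality is then extracted by a final perturbation argument: a non-transverse intersection of $W^s(x)$ and $W^u(y)$ can always be destroyed by an arbitrarily small $C^1$ perturbation, which again contradicts structural stability.

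The hard part is unquestionably the third layer of Ma\~n\'e's argument, namely the passage from hyperbolicity of periodic points to uniform hyperbolicity of $\Omega_f$; its proof rests on a technical linear-algebra study of how non-uniformly hyperbolic linear cocycles can be perturbed to create sinks or sources along periodic orbits. Since Theorem~\ref{thm:axiom-a-is-structurally-stable} enters this paper only as a structural input that motivates our focus on Axiom A systems in Theorem~\ref{thm:axiom-a-non-universal}, my proposal is to cite \cite{robbin1971structural, robinson1976structural, mane1987proof} as a black box rather than attempt a self-contained treatment, while still recording the shadowing-plus-transversality sketch above so that the reader understands why strong transversality appears on the list of equivalent conditions.
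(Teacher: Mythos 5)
Your proposal matches the paper's treatment exactly: Theorem~\ref{thm:axiom-a-is-structurally-stable} is stated there purely as a citation to \cite{robbin1971structural, robinson1976structural, mane1987proof} with no proof attempted, which is precisely the black-box strategy you settle on. Your accompanying sketch of the two directions (shadowing plus the contraction argument on sections for Robbin--Robinson, and Franks' lemma, the closing lemma, and the uniformization of hyperbolicity for Ma\~n\'e) is an accurate summary of the literature and a reasonable supplement, but it is not part of the paper's argument and need not be.
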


In part the motivation for the study of Axiom A systems was the hope that \emph{generic} systems might be Axiom A, and thus one might be able to give a tractable description of generic dynamics. Unfortunately, this latter statement turns out to be false \cite{newhouse1970nondensity}. There exists a rich set of conjectures and expectations regarding the dynamics of a generic differentiable dynamical system due to Palis \cite{palis2000global}, and an enormous collection of results and techniques for their study. The analysis of Axiom A systems is the simplest application of these methods. 

In particular, the spectral decomposition for Axiom A systems is proven using another tool, the \emph{stable manifold theorem for hyperbolic sets}:
\begin{proposition}[\cite{smale1967differentiable}, proven in \cite{hirsch-pugh}; see also Theorem 3.2 of~\cite{bowen2008equilibrium}]
\label{prop:stable-manifold-theorem}
Let $C$ be a compact hyperbolic set for $f$. Then $W^s(C)$ is a union of sets 
\[ W^s(x) = \{ y \in M : d(f^n(y), f^n(x)) \to 0 \text{ as } n \to \infty\} \]
over $x \in C$. Each of these sets is the image of a smooth injective immersion of a smooth manifold of dimension $\dim T^-_C$, and if $y \in C$ then the tangent space to $W^s(x)$ at $y$ is $(T^-_C)_y$. Moreover, the $W^s(x)$ are a \emph{continuous family of smooth submanifolds}, i.e.~for every $x \in C$ and any $z \in W^s(x)$, there exists a neighborhood $U_x \subset C$ and a continuous map $\phi: U_x \to C^\infty(D^r, M)$ (where $r = \dim W^s(x)$, and $D^r$ is the unit disk of degree $r$) such that $\phi(y)$ is an equidimensional immersion from $D^r$ into $W^s(y)$ sending $0$ to $y$, and such that $\phi(x)(D^r)$ contains $z$. 
\end{proposition}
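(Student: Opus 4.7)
The plan is to prove this via the Hadamard--Perron graph transform method, as refined by Hirsch--Pugh--Shub. First, equip $M$ with a Lyapunov-adapted (Mather) Riemannian metric so that $T^+_C \perp T^-_C$ and the constants $c,d$ of Definition~\ref{def:hyperbolic-set} are both $1$, i.e.~$Df$ strictly expands $T^+_C$ by factor $\lambda$ and strictly contracts $T^-_C$ by factor $\lambda^{-1}$. Using the exponential map, identify a small $\epsilon_0$-neighborhood of each $x \in C$ with a neighborhood of $0$ in $T_xM = T^+_x \oplus T^-_x$; in these charts $f$ is represented locally by maps $\tilde f_x \colon U_x \subset T_xM \to T_{f(x)}M$ satisfying $\tilde f_x(0)=0$ and $D\tilde f_x(0) = Df_x$, with a uniform $C^1$ bound on the nonlinear remainder.

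For $\epsilon \in (0, \epsilon_0)$ sufficiently small, let $\mathcal{X}$ be the space of continuous sections $\sigma$ assigning to each $x \in C$ a Lipschitz map $\sigma_x \colon B_\epsilon(T^-_x) \to T^+_x$ with $\sigma_x(0)=0$ and Lipschitz constant $\le 1$, equipped with the supremum norm. Next I would define the graph transform $\mathcal{G}\colon \mathcal{X} \to \mathcal{X}$ so that $\mathrm{graph}((\mathcal{G}\sigma)_x)$ is the component through $0$ of $\tilde f_x^{-1}(\mathrm{graph}(\sigma_{f(x)}))$, read off as a graph over $B_\epsilon(T^-_x)$. The implicit function theorem together with hyperbolicity makes this well-defined and preserves the Lipschitz bound --- the $T^-$ direction is expanded by $\tilde f_x^{-1}$ while the $T^+$ direction is contracted, so pullback graphs become flatter --- and it makes $\mathcal{G}$ a $\lambda^{-1}$-contraction on $\mathcal{X}$. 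The Banach fixed-point theorem then produces a unique $\sigma^* \in \mathcal{X}$, and the local stable manifold is defined by $W^s_\epsilon(x) := \exp_x(\mathrm{graph}(\sigma^*_x))$.

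The tangency $T_xW^s_\epsilon(x) = T^-_x$ follows by linearizing $\mathcal{G}$ at the zero section: the linearized transform contracts first jets strictly, forcing $D\sigma^*_x(0)=0$. A standard hyperbolic-closeness argument identifies $W^s_\epsilon(x)$ with $\{y : d(f^n(y), f^n(x)) \le \epsilon \text{ for all } n \ge 0\}$ and shows that this distance decays geometrically. Globalizing via $W^s(x) = \bigcup_{n \ge 0} f^{-n}(W^s_\epsilon(f^n(x)))$ presents $W^s(x)$ as an injective immersion of a manifold of dimension $\dim T^-_C$, since each $f^{-n}$ restricted to $W^s_\epsilon(f^n(x))$ is a local diffeomorphism. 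The continuous family $\phi\colon U_x \to C^\infty(D^r,M)$ in the statement is then obtained by combining the continuous $x$-dependence of $\sigma^*$ on a neighborhood $U_x \subset C$ with finitely many iterations of $f^{-1}$ chosen to capture the prescribed point $z$.

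The hardest step will be upgrading the $C^0$-fixed section $\sigma^*$ to a $C^k$-smooth section that varies continuously in $x$; this is the content of the Hirsch--Pugh--Shub $C^k$-section theorem. Its proof reruns the contraction argument on the bundle of $k$-jets of graphs, exploiting that the derivative of $\mathcal{G}$ contracts the $k$-jet norm strictly. For stable manifolds of a hyperbolic set the requisite bunched-contraction estimates reduce to $\lambda > 1$ and are therefore automatic, and the remaining bookkeeping (Fa\`a di Bruno expansion of higher derivatives of compositions) ensures that each fiberwise $C^k$-contraction assembles into a genuine $C^k$ section. Once this smoothness is in hand, the tangency, dynamical characterization, and continuous-family conclusions all follow from the construction with essentially no further work.
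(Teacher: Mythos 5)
Your proposal re-derives the full stable manifold theorem for hyperbolic sets from scratch via the graph transform, whereas the paper treats the Hirsch--Pugh theorem as a black box and (in Appendix~\ref{App:stable-manifold-theorem-variants}) gives only the brief translation needed to pass from Hirsch--Pugh's statement about the local manifolds $W_x$ to the variant formulated in Proposition~\ref{prop:stable-manifold-theorem}: one observes that $W^s(x) = \bigcup_n f^{-n}(W_{f^n(x)})$ by $f$-invariance and compactness (using that the Hirsch--Pugh radii $r_x$ can be taken to vary continuously and hence have a positive minimum on the compact set $C$), and that the continuity map $\phi$ asked for in the proposition may be taken to be $f^{-n} \circ \phi' \circ f^n$ for the continuity map $\phi'$ supplied by Hirsch--Pugh near $f^n(x)$, with $n$ chosen large enough that $f^n(z) \in W_{f^n(x)}$. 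Your graph-transform route is the standard Hadamard--Perron argument and is essentially what Hirsch and Pugh themselves do, so it is the more self-contained and illuminating path; what it costs is length and technicality, since the $C^k$ bootstrap via the section theorem, the injectivity of the globalized immersion, and the construction of a $\phi$ hitting a prescribed $z$ are each nontrivial steps that you correctly flag but necessarily compress to a sentence each. Since the paper's goal is to use this theorem rather than reprove it, the citation-plus-translation is the economical move; your sketch is a valid alternative that would replace the citation to \cite{hirsch-pugh} wholesale, but would no longer fit in a remark.
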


\begin{remark}
In fact, the theorem proven in \cite{hirsch-pugh} is not quite the result above; we explain the (standard) derivation of the variant above in Appendix \ref{App:stable-manifold-theorem-variants}. 
\end{remark}

To prove Theorem \ref{thm:axiom-a-non-universal}, we take the heuristic perspective that each `attractor' $\Omega_i$ acts as a `memory bank' which can store only a bounded amount of `data'. But the existence of arbitrarily many disjoint sets $C_s$ with $f^L(C_s) \subset C_s$ suggests that a universal CDS should be able to implement an arbitrarily large `memory'. Let us imagine that $\tau(x)=1$. By the spectral decomposition, given a point $x_s \in C_s$, we have that $f^{kL}(x) \to \Omega_i$ for some $i$; since $C_s$ is closed, this implies that $C_s$ contains the closure of an orbit of a $z_s \in \Omega_i$. We then use the precise form of the stable manifold theorem to show that we can perturb $x_s$ a little bit to $\tilde{x}_s \in C_s$ such that the corresponding $\tilde{z}_s \in \Omega_i \cap C_s$ has an orbit dense in $\Omega_i$. Thus each $C_s$ eats up one unit of `memory'; since the $C_s$ are disjoint but their number is arbitrarily large, this is a contradiction.  A visual summary of the proof is depicted in Figure~\ref{fig:axiom-a-proof}.

\begin{figure}[t!]
    \centering
    \includegraphics[scale = .85]{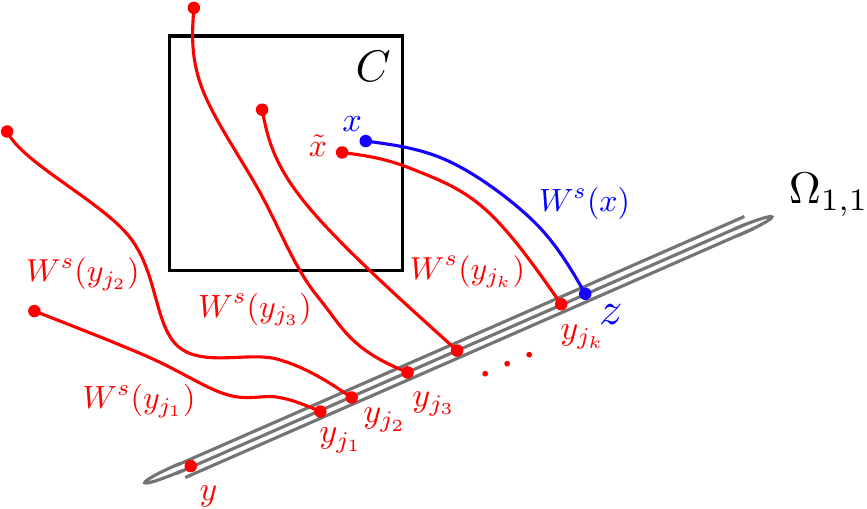}
        \vspace{15pt}
    \caption{Visual summary of the proof of Theorem \ref{thm:axiom-a-non-universal}.}
    \label{fig:axiom-a-proof}
\end{figure}

\begin{proof}[Proof of Theorem \ref{thm:axiom-a-non-universal}]

Let $\tilde{k} = \sum_{i=1}^k r_i$ with $k$ and $r_i$ as in the spectral decomposition theorem. 
Choose $N > \tilde{k}$, and using Lemma \ref{lemma:UTM-has-periodic-points}, choose $L$ sufficiently large such that there exist $N$ periodic configurations $s_1, \ldots, s_N$ of our universal Turing machine of period $L$, such that the orbits of these configurations are pairwise distinct. The robustly Turing-universal CDS condition then produces $N$ subsets $C_1, \ldots, C_N$ of $M$ which are closed and have nonempty interior, and such that $f^{n_{C_i}}(C_i) \subset C_i$ for each $i=1, \ldots, N$ and some integers $n_{C_i}$. Indeed, these sets can be taken to be a choice of connected component of $\mathcal{D}^{-1}(s_j)$ for each $s_j$, for the robust CDS condition implies that $(f^\tau)^L(C_i) \subset C_i$ for each $i$.  Replace $f$ by $f^{(\prod_{i=1}^N C_i)r_1 \cdots r_k}$; this diffeomorphism is still Axiom A, and its spectral decomposition is simply the decomposition 
\[ \Omega = \Omega_{1, 1} \sqcup \cdots \sqcup \Omega_{k, r_k}\,.\]
In particular, $N$ is still greater than the number of basic sets in the spectral decomposition.  Moreover, now $f(C_i) \subset C_i$. We will show that for each $j=1, \ldots, N$, $C_j$ contains one of the basic sets by the pidgenhole principle, and this is a contradiction. Without loss of generality let us set $j=1$.

Choose an $x$ in the interior of $C_1$. Then by the spectral decomposition of $f$, $x \in W^s(\Omega_{i, j})$ for some $i, j$; without loss of generality let us set $i=1, j=1$. By Proposition \ref{prop:stable-manifold-theorem}, there is a $z \in \Omega_{1,1}$ such that $d(f^{n}(x), f^{n}(z)) \to 0$ as $n \to \infty$. Now there is a point $y \in \Omega_{1,1}$ such that the orbit of $y$ is dense in $\Omega_{1,1}$. In particular, there is a sequence $n_i$ of distinct natural numbers diverging to infinity such that $y_i = f^{n_i}(y) \to z$ as $i \to \infty$. For all sufficiently large $i$, $y_i \in U_z$ with the notation as in Proposition \ref{prop:stable-manifold-theorem}. Let $p \in D^n$ be the preimage of $z$ under $\phi(x)$. By continuity of $\phi$, for sufficiently large $i$, we have that $\phi(y_i)(p) \to \phi(x)(p) = z$; in particular for sufficiently large $i$, $\phi(y_i)(p) \in C$. Fix any such $i$ and write $\tilde{x} = \phi(y_i)(p)$, $\tilde{z} = y_i$. 

The orbit of $\tilde{z}$ under $f$ is dense in $\Omega_{1,1}$. Thus, for any $w \in \Omega_{1,1}$, there exists a sequence $m_i$ of natural numbers diverging to infinity such that $f^{m_i}(\tilde{z}) \to w$ as $i \to \infty$. But $d(f^{m_i}(\tilde{x}), f^{m_i}(\tilde{z})) \to 0$ as $i \to \infty$. So $w \in C_1$ is in the closure of $C_1$. We have thus proven that $\Omega_{1,1}\subset C_1$. Then there is a relabeling of the $\Omega_{i,j}$ as 
$\Omega_k = \Omega_{i(k), j(k)}$ such that $\Omega_k \subset C_j$. But the number of $C_j$'s is larger than the number of $\Omega_{i,j}$'s, so this is not possible. We have proven the theorem.
\end{proof}

\begin{proof}[Proof of Theorem \ref{thm:axiom-a-not-universal}]
The only condition used in the proof of Theorem \ref{thm:axiom-a-non-universal} is the existence of, for an arbitrarily large $N$, a collection of disjoint sets $C_i$ that are closed, have non-empty interior, and satisfy $f^{n_{C_i}}(C_i) \subset C_i$ for each $i=1, \ldots, N$. But this is exactly the condition that $f$ simulates the corresponding machine $\machine{T}_N$ as defined in the statement of Theorem \ref{thm:axiom-a-not-universal}.
\end{proof}

\subsubsection{Conjectures about generic diffeomorphisms}

We have proven that structurally stable systems cannot robustly implement universal computation (in the sense of this paper). As mentioned previously, the original motivation for the study of Axiom A systems was, in part, the hope that they would be generic, that they would form an open and dense subset of the set of diffeomorphisms. This turns out not to be the case \cite{newhouse1970nondensity}; however, there are still hopes that generic diffeomorphisms have nice structure theorems \cite{palis2000global}. 

\begin{conjecture}
For any compact computable manifold $M$ (possibly with boundary), there is a $C^\infty$ generic set $\mathcal{S}$ of diffeomorphisms $f: M \to M$ such that no diffeomorphism $f \in \mathcal{S}$ can be extended to a robustly Turing-universal CDS $(f, \mathcal{E}, \mathcal{D}, \tau, \textnormal{\textsf{T}}_{\!\text{\rm univ}})$. 
\end{conjecture}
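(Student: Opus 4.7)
The plan is to combine the periodic-point argument underlying Theorem \ref{thm:generic-diffeo-not-universal-under-hierachical-shrinking} with a structural argument analogous to the proof of Theorem \ref{thm:axiom-a-non-universal}. First, I would let $\mathcal{S}$ be the $C^\infty$-generic set provided by the Kupka-Smale theorem (see \cite[Chapter 7]{katok1995introduction}): every $f \in \mathcal{S}$ has only hyperbolic periodic points and transverse heteroclinic intersections, so the set of period-$\leq P$ points of $f$ is finite for every $P$. Suppose toward a contradiction that some $f \in \mathcal{S}$ extends to a robust CDS $(f, \mathcal{E}, \mathcal{D}, \tau, \machine{T}_{\!\text{\rm univ}})$. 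For a large integer $N$, I would apply Lemma \ref{lemma:UTM-has-periodic-points} to produce $N$ pairwise distinct periodic orbits $s_1, \ldots, s_N$ of $\machine{T}_{\!\text{\rm univ}}$ of common period $L$. Robustness, together with $\tau$ being constant on each connected component of $\mathcal{D}^{-1}(s)$, yields for every $j$ an integer $T_j$ and a connected component $C^\ast_j$ of $\mathcal{D}^{-1}(s_j)$ with nonempty interior such that $f^{T_j}(C^\ast_j)$ is contained in some connected component of $\mathcal{D}^{-1}(s_j)$. By iterating further and applying a Brouwer-style fixed-point argument to a disk in the interior of $C^\ast_j$, one extracts a periodic point $p_j$ of $f$ lying in $\mathcal{D}^{-1}(s_j)$; the pairwise disjointness of the $\mathcal{D}^{-1}(s_j)$ forces the $p_j$ to be distinct.

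When $\tau$ is constant and the decoder is hierarchically shrinking, the periods of the $p_j$ are uniformly bounded by some $P$ depending only on $\tau$ and $L$, and Kupka-Smale immediately contradicts $N$ being arbitrarily large; this essentially reproduces Theorem \ref{thm:generic-diffeo-not-universal-under-hierachical-shrinking}. The main obstacle to the full conjecture is removing the hierarchical-shrinking hypothesis: for an arbitrary robust decoder and an arbitrary $\tau$, the periods $T_j$ of the $p_j$ may diverge with $j$, and since the number of period-$k$ orbits of a generic $f$ can grow exponentially in $k$ (as in Anosov examples), the naive Kupka-Smale counting argument breaks down. A secondary technical nuisance is the possibility that no connected component of $\mathcal{D}^{-1}(s_j)$ is preserved by any iterate of $f^{T_j}$; this can be handled when $\mathcal{D}^{-1}(s_j)$ has only finitely many components (as in the shrinking case) via pigeonhole, but must be finessed in general, perhaps by working with an appropriate minimal forward-invariant closed subset of the forward orbit of $\overline{C^\ast_j}$.

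To close the principal gap, I would appeal to the Palis conjectures \cite{palis2000global} on global dynamics of generic smooth maps, which predict that a $C^r$-generic $f$ admits only finitely many attractors whose basins cover $M$, each being a basic piece of controlled hyperbolic type. Conditional on this structure, one can mimic the Axiom A argument of Theorem \ref{thm:axiom-a-non-universal}: each $C^\ast_j$ has forward orbit accumulating on some attractor, and using a continuous-family-of-stable-manifolds statement in the spirit of Proposition \ref{prop:stable-manifold-theorem} one forces $C^\ast_j$ to contain an entire attractor, which contradicts $N$ exceeding the fixed number of attractors. The principal difficulty is therefore not a technical subtlety internal to the CDS framework but rather that the Palis program remains largely open; a complete proof will likely require substantial advances in generic smooth dynamics as input, which is why the statement remains a conjecture rather than a theorem.
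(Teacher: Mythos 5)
This statement is explicitly a conjecture, not a theorem; the paper offers no proof of it, and neither do you. What you have correctly done is reconstruct the paper's own assessment of the state of affairs: the partial result (Theorem \ref{thm:generic-diffeo-not-universal-under-hierachical-shrinking}) is provable via Kupka--Smale under a hierarchical-shrinking hypothesis on $\mathcal{D}$ and a constant $\tau$; the obstruction to removing those hypotheses is precisely that periodic configurations of $\machine{T}_{\!\text{univ}}$ need not translate into periodic points of $f$ of uniformly bounded period, and Kupka--Smale controls only bounded-period counts; and the most plausible route to the general case is conditional on the Palis conjectures, mimicking the spectral-decomposition argument of Theorem \ref{thm:axiom-a-non-universal}. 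All of this matches the paper's own remarks following Conjecture \ref{conj:generic-maps-are-not-universal} and the remark after the proof of Theorem \ref{thm:generic-diffeo-not-universal-under-hierachical-shrinking}.

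Two points in your sketch of the partial result deviate from the paper's actual proof and contain small gaps. First, you invoke Lemma \ref{lemma:UTM-has-periodic-points} to obtain $N$ periodic orbits of common period $L$, but in that lemma $L$ grows with $N$, so with constant $\tau=a$ your $p_j$ have period $aL$ which is \emph{not} uniformly bounded as $N\to\infty$; Kupka--Smale then gives no contradiction. The paper instead finds a \emph{single} period-$n$ configuration $s_0$ whose tape can be padded with never-read symbols, yielding uncountably many periodic configurations all of the \emph{same} period $n$, hence (via nested shrinking compact sets) uncountably many periodic points of $f$ of the \emph{fixed} period $an$, which does contradict Kupka--Smale. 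Second, your ``Brouwer-style fixed-point argument'' does not go through as stated: a compact subset of $M$ mapped into itself by $f^{T_j}$ need not contain a periodic point (consider a rotation of an annulus), and $C^\ast_j$ need not have the fixed-point property. The paper avoids this entirely by using the hierarchical-shrinking condition to produce a nested sequence of compacta whose diameters go to zero, so the intersection is a single point, automatically periodic. Replacing your Lemma \ref{lemma:UTM-has-periodic-points} invocation and Brouwer step with the padding-and-shrinking-limit argument would bring your sketch into line with the paper's proof of the partial result; your diagnosis of what remains open, and of the Palis-conditional route forward, is accurate.
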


\noindent Here a $C^\infty$-generic set is a set that is the intersection of a countable collection of open dense sets in the space of smooth diffeomorphisms of $M$. We now prove this conjecture under a strong condition on the decoder as well as a condition on the slowdown function, using a periodic point argument:

\begin{theorem}
\label{thm:generic-diffeo-not-universal-under-hierachical-shrinking}
    For any compact computable manifold $M$ (possibly with boundary), there is an open dense set $\mathcal{S}$ of diffeomorphisms $f: M \to M$ such that no diffeomorphism $f \in \mathcal{S}$ can be extended to a robustly Turing-universal CDS $(f, \mathcal{E}, \mathcal{D}, \tau, \textnormal{\textsf{T}}_{\!\text{\rm univ}})$ where $\mathcal{D}$ satisfies the shrinking condition (see Definition~\ref{def:shrinking-decoders-and-so-forth}) and the slowdown function is \emph{constant}.
\end{theorem}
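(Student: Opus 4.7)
My plan is to combine the Kupka-Smale theorem with a periodic-point counting argument. Kupka-Smale states that $C^r$-residually (and, once one fixes a bound on the period, $C^r$-open-densely) every periodic point of $f$ of any fixed period is hyperbolic and hence isolated. In particular, on the compact $M$ the set $\mathrm{Fix}(f^n)$ is finite for every $n$, with cardinality depending on $f$ and $n$; I will use this to bound the number of pairwise-disjoint forward-invariant ``cells'' that a CDS structure can force $f$ to support.

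I apply Lemma \ref{lemma:UTM-has-periodic-points} to $\textsf{T}_{\!\text{univ}}$: for any $N$ there exist $L \gg 0$ and $N$ distinct periodic orbits of period $L$. Picking one configuration $s_j$ from each orbit, the simulation condition together with the fact that $\tau$ is \emph{constant} gives $f^{\tau L}(\mathcal{D}^{-1}(s_j)) \subset \mathcal{D}^{-1}(s_j)$. Letting $C_j$ be the connected component of $\mathcal{D}^{-1}(s_j)$ containing $\mathcal{E}(s_j)$, robustness and properness (which is implied by shrinking) make each $C_j$ compact and the closure of its nonempty interior, with $f^{\tau L}(C_j) \subset C_j$, and the various $C_j$ pairwise disjoint.

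The crucial step is to produce a fixed point of $f^{\tau L}$ inside each $C_j$. Assuming the hierarchical form of the shrinking condition (as the theorem's label and the surrounding discussion indicate), the nested family $\{C'_s\}$ of Definition \ref{def:shrinking-decoders-and-so-forth} yields a continuous coding $\psi : \Sigma \to M$ from the space $\Sigma$ of bi-infinite tape contents with marked head and internal state into $M$, defined by $\psi(\xi) = \bigcap_i C'_{s_i(\xi)}$, where $s_i(\xi)$ are the finite head-window truncations of $\xi$. Diameter-shrinkage makes $\psi$ well-defined; the nesting property makes it continuous; and the $f^\tau$-equivariance of the $C_s$'s turns $\psi$ into a semi-conjugacy between the naturally induced UTM transition on $\Sigma$ and $f^\tau$. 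A UTM-periodic configuration $s_j$ of period $L$ extends canonically to a periodic $\xi_j \in \Sigma$, producing $p_j := \psi(\xi_j) \in C_j$ with $f^{\tau L}(p_j) = p_j$.

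Combining: the $p_j$ are $N$ pairwise distinct fixed points of $f^{\tau L}$, contradicting the Kupka-Smale bound on $|\mathrm{Fix}(f^{\tau L})|$ once $N$ exceeds it, which Lemma \ref{lemma:UTM-has-periodic-points} allows. The main obstacle is the construction and analysis of $\psi$: verifying continuity from the nesting and diameter conditions, and producing the semi-conjugacy so that UTM-periodic configurations genuinely map to $f^{\tau L}$-fixed points in the correct $C_j$. This is precisely where the hierarchical (rather than merely diameter-) shrinking hypothesis is essential: without it, $C_j$ can have nontrivial topology (for instance an annulus) on which a self-map need not have any fixed point, and the periodic-point counting argument stalls at exactly this step.
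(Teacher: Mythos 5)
Your overall strategy---invoke Kupka--Smale to get an open dense set of diffeomorphisms with only hyperbolic (hence isolated and, for bounded period, finitely many) periodic points, then use the CDS structure to force too many periodic points---is exactly the one the paper uses, but there is a gap in the counting step that makes your contradiction fail as stated.

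You invoke Lemma~\ref{lemma:UTM-has-periodic-points} to get, for any $N$, an $L=L(N)$ such that $\textnormal{\textsf{T}}_{\!\text{\rm univ}}$ has $N$ distinct periodic orbits of period $L$, and then want to contradict finiteness of $\mathrm{Fix}(f^{\tau L})$. But the period $L$ grows with $N$, and the bound on $|\mathrm{Fix}(f^{\tau L})|$ coming from Kupka--Smale plus compactness grows with $L$; you never get $N > |\mathrm{Fix}(f^{\tau L(N)})|$ for free. The paper sidesteps this by \emph{fixing a single period}: it takes one periodic configuration $s_0$ of period $n$ for which the head never reads a region of the tape, and observes that prepending or appending arbitrary symbols outside that region yields, for every finite string $t$, a periodic configuration $s_t$ of the \emph{same} period $n$. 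Then, for each infinite string $\hat{s}$, the shrinking condition forces $\mathrm{diam}(C_{s_{\hat{s}_k}}) \to 0$ as $k\to\infty$, Hausdorff compactness of compact subsets of $M$ gives a limit point $p_{\hat{s}}$, continuity of $f$ gives $f^{an}(p_{\hat{s}})=p_{\hat{s}}$, and the \emph{hierarchical} part of the shrinking condition is used to show $p_{\hat{s}^a}\ne p_{\hat{s}^b}$ for $\hat{s}^a\ne\hat{s}^b$. This produces an \emph{uncountable} family of $f^{an}$-fixed points, contradicting the countability of periodic points for Kupka--Smale diffeomorphisms. The ``unread padding'' observation is the missing ingredient in your write-up: without it, you have no control over $L$ as a function of $N$.

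Secondarily, your proposed coding map $\psi:\Sigma\to M$ (a semi-conjugacy from the full shift) is considerably more machinery than needed, and you yourself flag its construction as the ``main obstacle.'' The paper never builds a global semi-conjugacy; the nested-diameter argument gives each desired periodic point directly as an intersection of a shrinking chain $C'_{s_{\hat{s}_k}}$, and the only topological fact used is compactness of the Hausdorff hyperspace of $M$. You might find it easier to abandon $\psi$ and argue pointwise. Finally, the worry you raise about $C_j$ possibly being an annulus (so a self-map need not have a fixed point) is not what the hierarchical condition is addressing; its role in the paper's proof is to guarantee \emph{disjointness} of the limit points $p_{\hat{s}}$, not to furnish a fixed-point theorem---the fixed point comes from diameter shrinkage, not from any Brouwer-type argument.
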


\begin{remark}
Note that if one drops robustness and genericity \cite{koiran1999closed} or one allows oneself to work on noncompact domains \cite{gracca2023analytic}, then the above statement is false. The results of~\cite{koiran1999closed, gracca2023analytic} are under somewhat different formalizations, and we do not explicate the comparison here. 
\end{remark}

\begin{proof}Let us assume that the slowdown function has constant value $a$. 
    We can choose the set of diffeomorphisms in the theorem statement to be the set where all periodic points are hyperbolic; this is open and dense by the Kupka-Smale theorem \cite[Theorem~7.2.6]{katok1995introduction}. Note that $\machine{T}_{\!\text{univ}}$ has some periodic configuration $s_0$ of period $n$ by Lemma \ref{lemma:UTM-has-periodic-points}; moreover, we can find such a periodic configuration such that if we append symbols to the input those symbols are never read in the periodic motion of $\machine{T}_{\!\text{univ}}$. Thus for every finite string $t$ on the alphabet of tape symbols $\Sigma$ we have a corresponding periodic configuration $s_t$ of $\machine{T}_{\!\text{univ}}$. Moreover, the decoder defines for us a closed (and thus compact) set $C_{s_t} \subset M$ such that $f^{an}(C_{s_t}) \subset C_{s_t}$ where $n$ is the period of $s_t$ under the dynamics of $\machine{T}_{\!\text{univ}}$. Since the lengths of the corresponding configurations of $\machine{T}$ go to infinity as the length of $s$ goes to infinity, the shrinking condition implies that given an infinite string $\hat{s}$ on $\Sigma$ and writing its finite truncations as $\hat{s}_n$, we have that the diameters of $C_{t_{\hat{s}_n}}$ converge to zero. Since each $C_{t_{\hat{s}_n}}$ is closed, it is compact, and so it defines a subspace of the Hausdorff metric space of all compact subspaces of $M$. Since $M$ is compact, this latter space is compact as well, and so there is a subsequence $C_{t_{\hat{s}_{n_j}}}$ converging to some point $p_{\hat{s}} \in M$. Since $f$ is continuous, we must have that $f^{an}(p_{\hat{s}}) \subset \{p_s\}$, i.e.~$f^{an}(p_{\hat{s}}) = p_{\hat{s}}$. Thus, if we show that the points $p_{\hat{s}}$ are distinct for different infinite strings $\hat{s}$ we will be done. Indeed, the hyperbolicity of all periodic points and the compactness of $M$ implies that there are only a countable number of periodic points. (This is because every periodic point of period $k$ has an open neighborhood containing no other periodic points of period $k$ by the local normal form for hyperbolic periodic points; so by compactness of $M$ there are finitely many periodic points of period $k$ for each $k$, and thus countably many periodic points in general.) But the distinctness of the points $p_{\hat{s}}$ follows from the hierarchical shrinking condition: we have that if $\hat{s}^a \neq \hat{s}^b$, then $\hat{s}^a_n \neq \hat{s}^b_n$ for some $n$; and thus we have that for all sufficiently large $j$, $C_{t_{\hat{s}^a_{n_j}}} \subset C'_{t_{\hat{s}^a_n}}$  and similarly $C_{t_{\hat{s}^b_{n_j}}} \subset C'_{t_{\hat{s}^b_n}}$, so $p_{\hat{s}^a} \in C'_{t_{\hat{s}^a_n}}$ while $p_{\hat{s}^b} \in C'_{t_{\hat{s}^b_n}}$. Since the hierarchical shrinking condition implies that $C'_{t_{\hat{s}^a_n}} \cap C'_{t_{\hat{s}^a_n}} = \emptyset$, we know that the points $p_{\hat{s}^a}$ are disjoint. In other words, we have produced an uncountable number of periodic points of $f$, which is a contradiction.
\end{proof}

We do not know how to implement an analog of this argument without any conditions on the decoder, or even for a shrinking decoder. We hope the different flavor of arguments in this section vis-\'{a}-vis the previous section highlight how changing the condition on the decoder brings out different ways in which information can be encoded into the dynamics of a smooth dynamical system.

\begin{remark}
    Remark \ref{rk:robust-universal-system-can-be-shrinking} thus implies that the diffeomorphism underlying the construction of Section \ref{sec:example-of-robust-cds} is highly non-generic. 
\end{remark}

\begin{remark}
    The argument in the proof fails completely when we allow for a non-constant slowdown function. Attempting the same argument in that setting, the periodic points of constant period $n$ of $\machine{T}_{\!\text{univ}}$ then correspond to periodic points of \emph{unbounded} period of $f$. But Kupka-Smale diffeomorphisms typically have infinitely many periodic points; it is only the number of periodic points of any \emph{bounded} period which is finite.
\end{remark}
\subsection{Non-universality of measure-preserving and integrable systems}
\label{subsec:nonunivmeas1}

Above we showed that Axiom A systems cannot be extended to robustly Turing-universal CDSs.  One interpretation of that result is that certain kinds of chaotic behavior are incommensurate with robust Turing universality.  Below we will pursue a complementary set of results, namely that measure-preserving systems on compact domains, as well as certain measure-preserving systems on non-compact domains, are likewise incapable of furnishing robustly Turing-universal CDSs.  Such systems are at the opposite end of the extreme of chaotic systems.  As such, we are in total establishing that neither `very chaotic' systems nor `very non-chaotic' systems are capable of being robustly Turing universality.

For all of the results in this section we need the following lemma.
\begin{lemma}
\label{lemm:intersection}
Let $(f, \mathcal{E}, \mathcal{D}, \tau, \textnormal{\textsf{T}})$ be a CDS.  If for any $n \in \mathbb{Z}_{\geq 0}$ and any $s, s' \in S$ we have $(f^{\tau})^n(C_s) \cap C_{s'} \not = \emptyset$, then $(f^\tau)^n(C_s) \subseteq C_{s'} = C_{\textnormal{\textsf{T}}^n(s)}$. 
\end{lemma}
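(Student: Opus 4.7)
The plan is to reduce the lemma to a clean consequence of two structural facts: (i) the simulation condition implies that $f^\tau$ sends each set $C_s$ into a single fiber of $\mathcal{D}$, and (ii) the fibers of $\mathcal{D}$ are pairwise disjoint because $\mathcal{D}$ is a (partial) function.

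First I would rewrite condition 2 of Definition \ref{def:sim2} in containment form. Since $C_{s'} = \mathcal{D}^{-1}(s')$ by definition, the equation $\mathcal{D} \circ f^{\tau}(C_s) = \textsf{T}(s)$ says exactly that $f^{\tau}(C_s) \subseteq C_{\textsf{T}(s)}$ and that $\tau$ is defined on all of $C_s$ (since $\mathcal{D}$ is defined on the image). Iterating this, I would prove by induction on $n$ that $(f^\tau)^n(C_s) \subseteq C_{\textsf{T}^n(s)}$: the base case $n=0$ is trivial, and for the inductive step, applying the one-step containment to $\textsf{T}^{n-1}(s)$ gives $f^\tau(C_{\textsf{T}^{n-1}(s)}) \subseteq C_{\textsf{T}^n(s)}$, so $(f^\tau)^n(C_s) = f^\tau((f^\tau)^{n-1}(C_s)) \subseteq f^\tau(C_{\textsf{T}^{n-1}(s)}) \subseteq C_{\textsf{T}^n(s)}$. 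One should also check at each step that $\tau$ is defined on the current image, which follows because the image lies in a set of the form $C_{t}$, and $\tau$ is defined on the domain of $\mathcal{D}$ by hypothesis.

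Second, I would invoke that $\mathcal{D}$ is a partial function, which immediately gives $C_a \cap C_b = \emptyset$ whenever $a \neq b$. Combining this with the hypothesis $(f^\tau)^n(C_s) \cap C_{s'} \neq \emptyset$ and the containment $(f^\tau)^n(C_s) \subseteq C_{\textsf{T}^n(s)}$, we get $C_{s'} \cap C_{\textsf{T}^n(s)} \neq \emptyset$, hence $s' = \textsf{T}^n(s)$ and $C_{s'} = C_{\textsf{T}^n(s)}$. Substituting back yields $(f^\tau)^n(C_s) \subseteq C_{s'}$, which is what we wanted.

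There is no serious obstacle here; the lemma is essentially a bookkeeping statement packaging together the simulation condition with the disjointness of decoder fibers. The only point worth stating carefully is the inductive argument in the first paragraph, to ensure that $(f^\tau)^n$ is well-defined on $C_s$ for all $n$, rather than just $n=1$.
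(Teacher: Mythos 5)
Your proof is correct and takes essentially the same route as the paper: establish $(f^\tau)^n(C_s) \subseteq C_{\textsf{T}^n(s)}$ from the simulation condition, then use the disjointness of $\mathcal{D}$-fibers to identify $C_{s'}$ with $C_{\textsf{T}^n(s)}$. The only difference is that you spell out the induction on $n$ to get the iterated containment (a step the paper asserts directly from the definition), which is a reasonable piece of added rigor but not a different argument.
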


\begin{proof} 
By the definition of a CDS we have $\mathcal{D} \circ (f^\tau)^n(C_s) = \textsf{T}^n(s)$, and so applying $\mathcal{D}^{-1}$ to both sides we find $(f^\tau)^n(C_s) \subseteq C_{\textsf{T}^n(s)}$.  So to complete the proof, it suffices to show that $C_{\textsf{T}^n(s)} = C_{s'}$.  Since $(f^\tau)^n(C_s) \cap C_{s'} \not = \emptyset$, we have $C_{\textsf{T}^n(s)} \cap C_{s'} \not = \emptyset$.  If $x$ is a point in the intersection, then $\mathcal{D}(x) = \textsf{T}^n(s) = s'$, but by definition the inverse image under $\mathcal{D}$ is $C_{\textsf{T}^n(s)} = C_{s'}$.
\end{proof}

\noindent We can use the above Lemma to prove our result about measure-preserving maps:

\begin{theorem}
\label{thm:measurepreserve}
Let $\mu$ be a Borel measure on a compact set $M \subset \R^n$ which assigns nonzero measure to all nonempty open sets and such that $\text{\rm supp}(\mu) = M$ and $\mu(M) < \infty$.  If $f : M \to M$ is a measure-preserving map with respect to $\mu$, then $f$ cannot be extended to a robust CDS for the machine $\machine{Plus}: \{1\}^* \to \{1\}^*$ defined by $\machine{Plus}([n]_1) = [n+1]_1$, where $[n]_1$ denotes $n$ in unary.
\end{theorem}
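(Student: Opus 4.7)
The plan is to use the infinite family of configurations $\{[n]_1\}_{n \ge 0}$ to extract an infinite sequence of pairwise disjoint subsets $C_n \subseteq M$ whose $\mu$-measures are bounded below by a fixed positive constant, contradicting $\mu(M) < \infty$. Set $A_n := \mathcal{D}^{-1}([n]_1)$; by the definition of a CDS, $f^{\tau}(A_n) \subseteq A_{n+1}$, and robustness guarantees that each $A_n$ contains $\mathcal{E}([n]_1)$ in its interior, and so has strictly positive $\mu$-measure. Since $\mathcal{D}$ is a partial function, the $A_n$ are pairwise disjoint.

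The key step is promoting the inclusion $f^\tau(A_n) \subseteq A_{n+1}$ to one involving an honest iterate $f^{r_n}$ of $f$, which is what measure preservation actually controls. First, take $C_0$ to be the connected component of $A_0$ containing $\mathcal{E}([0]_1)$; by robustness a small open ball around $\mathcal{E}([0]_1)$ sits inside $A_0$ and hence inside $C_0$, so $\mu(C_0) > 0$. Since $\tau$ is constant on connected components of $A_0$, there is an integer $r_0$ with $\tau \equiv r_0$ on $C_0$, and $f^{r_0}(C_0) \subseteq A_1$. By continuity of $f$, the image $f^{r_0}(C_0)$ is connected and therefore lies in a single connected component $C_1$ of $A_1$. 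Iterate: define $C_{n+1}$ to be the connected component of $A_{n+1}$ containing $f^{r_n}(C_n)$, where $r_n := \tau|_{C_n}$.

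Finally, from $C_n \subseteq f^{-r_n}(C_{n+1})$ and the $f$-invariance of $\mu$ (iterated $r_n$ times), one gets
\[
\mu(C_n) \;\le\; \mu\bigl(f^{-r_n}(C_{n+1})\bigr) \;=\; \mu(C_{n+1}),
\]
so $\mu(C_n) \ge \mu(C_0) > 0$ for every $n$. Since the $C_n$ are pairwise disjoint (being contained in the disjoint $A_n$), one concludes
\[
\mu(M) \;\ge\; \sum_{n=0}^{\infty} \mu(C_n) \;=\; \infty,
\]
a contradiction. The main technical care lies in the connected-component tracking: $\tau$ is not globally constant on $A_n$, so we cannot work with the full sets $A_n$ directly; continuity of $f$ is used solely to preserve connectedness of the tracked set when iterating, which is what lets us choose a single $r_n$ at each step.
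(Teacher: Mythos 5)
Your proof is correct, and it tracks the same basic mechanism as the paper's: produce infinitely many pairwise disjoint positive-measure sets whose measures cannot decrease, contradicting $\mu(M)<\infty$. But you handle one point more carefully than the paper's written proof. The paper defines $\widetilde{C}_{s_n} := (f^\tau)^n(C_{s_0})$ and asserts $\mu(\widetilde{C}_{s_n}) = \mu(C_{s_0})$, but $f^\tau$ is not a fixed iterate of $f$ on the whole set $C_{s_0}$ (which may have several connected components with different $\tau$-values), so $\mu$-invariance of $f$ does not directly give this equality — indeed, the images $f^{\tau_i}(K_i)$ of distinct components $K_i$ can overlap, and for invertible $f$ one can drive $\mu(f^\tau(C_{s_0}))$ strictly below $\mu(C_{s_0})$. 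Your proof sidesteps this entirely by restricting to a single connected component $C_0$ on which $\tau$ is constant by definition, so that $f^\tau|_{C_0}$ is an honest iterate $f^{r_0}$, then propagating along connected components and using $\mu(C_n)\le\mu(f^{-r_n}(C_{n+1}))=\mu(C_{n+1})$, which is literally what measure preservation gives. This also lets you bypass Lemma \ref{lemm:intersection}: you get disjointness for free from $C_n\subseteq A_n=\mathcal{D}^{-1}([n]_1)$. In fact your version is exactly the argument the paper sketches informally in Section \ref{sec:our-results} (``there must be an infinite collection of disjoint regions $C_i$ such that $f^{r_i}(C_i)\subset C_{i+1}$\ldots''), and it is cleaner than the formal proof the paper gives. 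One tiny omission: you should note explicitly that $f^{r_n}(C_n)\subseteq A_{n+1}$ follows from the simulation condition ($\mathcal{D}\circ f^\tau(C_s)=\textnormal{\textsf{T}}(s)$, hence $f^\tau(\mathcal{D}^{-1}(s))\subseteq \mathcal{D}^{-1}(\textnormal{\textsf{T}}(s))$), but you clearly intend this and it is not a gap.
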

% This property is preserved under pushforwards of homeomorphisms.

\begin{proof}
By contradiction, let $s_0 = 1$ be the initial configuration of $\textsf{Plus}$, which which visits infinitely many distinct configurations $\{s_{n}\}_{n=0}^\infty$ with $s_n := \textsf{Plus}^n(s_0) = [n+1]_1$.  Define $\widetilde{C}_{s_{n}} := (f^{\tau})^n(C_{s_0})$, and notice that on account of Lemma~\ref{lemm:intersection} the $\widetilde{C}_{s_{n}}$'s are pairwise disjoint.  Since $C_{s_0}$ has non-trivial interior and $\mu$ is supported on all of $M$, we have $\mu(C_{s_0}) = \mu(\widetilde{C}_{s_{n}}) > 0$.  Consequently $\sum_{n=0}^\infty \mu(\widetilde{C}_{s_{n}}) = \infty$, but $\sum_{n=0}^\infty \mu(\widetilde{C}_{s_{n}}) = \mu\!\left(\bigcup_{n=0}^\infty \widetilde{C}_{s_{n}} \right) \leq \mu(M) < \infty$  which is a contradiction.
\end{proof}
\noindent Since $\textsf{Plus}$ is a sub-machine of every universal Turing machine, using the fundamental theorem of sub-machines we have the immediate corollary:
\begin{corollary}
\label{corr:measurepreserve}
Let $\mu$ be a Borel measure  which  compact set $M \subset \R^n$ which assigns nonzero measure to all nonempty open sets and such that $\text{\rm supp}(\mu) = M$ and $\mu(M) < \infty$.  If $f : M \to M$ is a measure-preserving map with respect to $\mu$, then $f$ cannot be extended to a robustly Turing-universal CDS.
\end{corollary}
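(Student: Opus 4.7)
The plan is to deduce this corollary immediately from Theorem \ref{thm:measurepreserve} together with the fundamental lemma of sub-machines (Lemma \ref{lemm:fund1}). First I would observe that the hypotheses of Theorem \ref{thm:measurepreserve} hold verbatim under the hypotheses of the corollary, so $f$ cannot be extended to a robust CDS for the counter machine $\machine{Plus}$.

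Next I would note that $\machine{Plus}$ is a sub-machine of every universal Turing machine $\machine{T}_{\!\text{univ}}$: by the definition of universality (the remark following Definition \ref{def:sim1}), $\machine{T}_{\!\text{univ}}$ must $(\tilde{\tau}, \tilde{t}(n))$-simulate $\machine{Plus}$ for some $\tilde{\tau}, \tilde{t}$, which by Definition \ref{def:sub-machines1} is exactly the statement that $\machine{Plus}$ is a sub-machine of $\machine{T}_{\!\text{univ}}$. Then I would apply the second (contrapositive) part of Lemma \ref{lemm:fund1}: since $f$ cannot be extended to a robust CDS simulating $\machine{Plus}$, and $\machine{Plus}$ is a sub-machine of every $\machine{T}_{\!\text{univ}}$, $f$ cannot be extended to a robust CDS simulating any universal Turing machine, i.e.~to a robustly Turing-universal CDS.

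The one step requiring real care --- and hence the main obstacle --- is verifying that the forward direction of Lemma \ref{lemm:fund1} (which is what the contrapositive step really invokes) preserves \emph{robustness}, since the lemma as stated does not mention robustness explicitly. Concretely, if $(f, \mathcal{E}, \mathcal{D}, \tau, \machine{T}_{\!\text{univ}})$ were a robust CDS and $(\machine{T}_{\!\text{univ}}, \widetilde{\mathcal{E}}, \widetilde{\mathcal{D}}, \tilde{\tau}, \machine{Plus})$ witnesses $\machine{Plus}$ as a sub-machine, then the composite preimage satisfies
\[
(\widetilde{\mathcal{D}} \circ \mathcal{D})^{-1}(s) \;=\; \bigcup_{s' \,\in\, \widetilde{\mathcal{D}}^{-1}(s)} \mathcal{D}^{-1}(s').
\]
The discrete preimage $\widetilde{\mathcal{D}}^{-1}(s)$ is \emph{finite} by the output-time-complexity bound on $\widetilde{\mathcal{D}}$ (the decoder must read and erase its entire input), just as in the argument of Lemma \ref{lemma:UTM-has-periodic-points}. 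A finite union of closed sets, each of which is the closure of its own interior, is again the closure of its interior; and the composite encoder $\mathcal{E} \circ \widetilde{\mathcal{E}}(s)$ sits in the interior of $\mathcal{D}^{-1}(\widetilde{\mathcal{E}}(s))$, hence in the interior of the whole composite preimage. Thus the composite CDS built by Lemma \ref{lemm:fund1} is robust, and the contradiction with Theorem \ref{thm:measurepreserve} goes through.
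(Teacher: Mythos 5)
Your proposal is correct and follows the paper's own argument exactly: invoke Theorem~\ref{thm:measurepreserve} to rule out a robust CDS for $\machine{Plus}$, then observe $\machine{Plus}$ is a sub-machine of every universal Turing machine and apply Lemma~\ref{lemm:fund1}. The one thing you add that the paper leaves implicit --- verifying that the composite CDS of Lemma~\ref{lemm:fund1} inherits robustness, via the finiteness of $\widetilde{\mathcal{D}}^{-1}(s)$ and the fact that a finite union of closed sets each equal to the closure of its interior is again the closure of its interior --- is a genuine and worthwhile gap-fill, since the lemma as stated in the paper says nothing about robustness.
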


Recall that a diffeomorphism $f: M \to M$ when $M$ is a symplectic manifold with symplectic form $\omega$ is a symplectomorphism when $f^*\omega = \omega$. Such maps thus satisfy $f^*\omega^n = \omega^n$, and hence preserve the Borel measure associated to the volume form $\omega^n$. Thus Corollary \ref{corr:measurepreserve} proves in particular the following corollary: 

\begin{corollary}
\label{corr:symplectic-compact}
Symplectomorphisms of \emph{compact} symplectic manifolds cannot be extended to a robustly Turing-universal CDS.
\end{corollary}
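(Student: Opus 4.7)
The plan is to deduce this immediately from Corollary~\ref{corr:measurepreserve} by exhibiting a canonical $f$-invariant Borel measure on $M$. Let $(M,\omega)$ be a compact symplectic manifold of dimension $2n$ and let $f$ be a symplectomorphism. Since $\omega^n$ is a nowhere-vanishing top-degree form on $M$, its associated density $|\omega^n|$ gives rise to a positive Borel measure $\mu_\omega$ on $M$.

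First I would verify the three hypotheses of Corollary~\ref{corr:measurepreserve} for $\mu_\omega$. Because $M$ is compact and $|\omega^n|$ is continuous, $\mu_\omega(M)<\infty$. For any nonempty open $U \subset M$, Darboux coordinates around any $p \in U$ express $\omega^n$ as a nonzero constant multiple of the standard Lebesgue volume on the chart, so $\mu_\omega(U)>0$; in particular $\mathrm{supp}(\mu_\omega) = M$. Finally, the symplectomorphism condition $f^*\omega=\omega$ forces $f^*(\omega^n)=(f^*\omega)^n=\omega^n$, so $\mu_\omega$ is $f$-invariant.

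Next I would bridge the gap between the statement of Corollary~\ref{corr:measurepreserve}, which is phrased for $M \subset \R^n$, and a general compact symplectic manifold. The cleanest route is to choose a smooth Whitney embedding $\iota: M \hookrightarrow \R^N$ for $N$ sufficiently large, push the measure forward to $\iota_*\mu_\omega$ supported on the compact set $\iota(M)$, and transport $f$ to $\iota \circ f \circ \iota^{-1}$ on $\iota(M)$. Equivalently, one can work directly in the computable-manifold extension of the CDS framework described in Remark~\ref{rk:computable-manifold}, which is in any case needed for the CDS notion to make sense when $M$ is not presented as a submanifold of Euclidean space. Applying Corollary~\ref{corr:measurepreserve} to the triple $(\iota(M), \iota_*\mu_\omega, \iota \circ f \circ \iota^{-1})$ then yields the claim.

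I do not anticipate any substantive obstacle: the statement is a specialization of Corollary~\ref{corr:measurepreserve} to the canonical Liouville measure. All the genuine work has already been carried out in the proof of Theorem~\ref{thm:measurepreserve}, where simulating $\machine{Plus}$ was shown to produce infinitely many pairwise disjoint decoder cells of equal positive $\mu_\omega$-measure, which is incompatible with $\mu_\omega(M)<\infty$.
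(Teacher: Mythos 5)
Your proof matches the paper's argument exactly: the paper also observes that $f^*\omega = \omega$ implies $f^*(\omega^n) = \omega^n$, so the Liouville measure is an $f$-invariant, finite, full-support Borel measure on the compact manifold, and then invokes Corollary~\ref{corr:measurepreserve}. The only difference is that you are more explicit about bridging from an abstract compact symplectic manifold to the $M\subset\R^N$ (or computable-manifold) setting in which Corollary~\ref{corr:measurepreserve} is stated, via a Whitney embedding or Remark~\ref{rk:computable-manifold} — a point the paper leaves implicit.
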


In the next theorem, we partially drop the compactness assumption on the domain. Before we state the theorem, we will describe a prototypical class of physical dynamical systems which motivate this latter generalization.

\begin{definition}[Continuous-time integrable system]
Let $M$ be a $2n$-dimensional symplectic manifold, and let $f_t : M \to M$ be a Hamiltonian flow generated by the Hamiltonian function $H_1: M \to \R$, such that there exists complete system of commuting Hamiltonian functions $H_2, \ldots, H_n: M \to \R$, i.e.~we have that
\[ \frac{d}{dt}f_t = X_{H_1}\,, \quad i_{X_{H_j}} \omega = dH_j\,, \quad \omega(X_{H_j}, X_{H_k}) = 0 \quad \text{ for }\,\,\,i,j=1, \ldots, n\,.\] 
Here $d$ and $i$ are the exterior derivative and interior product \cite{arnol2013mathematical}. We suppose that there is a dense set of points where differentials of the $n$ conserved Hamiltonian functions are linearly independent, and also that the sets $\bigcap_{i=1}^n H^{-1}_i(c_i)$ are compact for $(c_i)_{i=1}^n \in \R^n$.  The dynamics instantiated by $f_t$ define a \emph{continuous-time integrable system}.
\end{definition}
\noindent A nice feature of continuous-time integrable systems is that due to the condition on the differentials of the conserved Hamiltonian functions, it is possible to use a diffeomorphism $h$ to locally transform into simple action-angle variables using the Liouville-Arnold theorem. More formally, for an open dense set of points $p \in M$, there are $f_t$-invariant neighborhoods $V$ of $p$ together with diffeomorphisms $h: V \to U \times T^n$ for some open set $U \subset \R^n$, satisfying the following properties. Writing $\tilde{f}_t := h^{-1} \circ f_t \circ h$, the map $\tilde{f}_t$ acts as $\tilde{f}_t : (x, y) \mapsto (x, y + t\,\omega(x)\,(\text{mod }1))$, where $\omega(x)$ is a frequency vector depending on the action variables $x$. In physical terminology, the sets $U$ are the action variables, $T^n$ are angle variables, and the dynamics translates the angle variables linearly with a rate depending on the action variables, instantiating periodic or quasiperiodic motion on each $T^n$. 

Recalling Remark~\ref{rem:modification1}, we can generalize our definition of a CDS to a continuous-time dynamical system by letting $\tau$ be a partial function $\tau : M \rightharpoonup \mathbb{R}_{\geq 0}$ instead of $M \rightharpoonup \mathbb{Z}_{\geq 0}$.  With this in mind, we present the theorem below, which in particular shows that continuous-time integrable systems are not robustly Turing-universal:

\begin{theorem}[Integrable systems cannot be extended to a robust CDS for the $\textsf{Plus}$ machine.]
\label{thm:integrable-systems}
Let $f_t$ be a smooth family of diffeomorphisms of a manifold $M$, and suppose that $f_t^*\mu = \mu$ for a measure $\mu$ of the form $g\,dV$ where $g: M \to M$, $g > 0$, and $dV$ is the volume form on $M$. Suppose also that $M$ can be written as a union of compact invariant submanifold (with boundary) for the flow of $f_t$.  Then $f_t$ cannot be extended to a robust CDS for the machine $\machine{Plus}: \{1\}^* \to \{1\}^*$ defined by $\machine{Plus}([n]_1) = [n+1]_1$.
% a robust Turing-universal CDS.
\end{theorem}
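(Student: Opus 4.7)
The plan is to reduce to the compact-domain argument of Theorem~\ref{thm:measurepreserve} by working inside a single compact invariant piece of $M$. First I would suppose for contradiction that a robust CDS $(f_t, \mathcal{E}, \mathcal{D}, \tau, \machine{Plus})$ exists, write $s_n = [n+1]_1$ and $C_n = \mathcal{D}^{-1}(s_n)$, and invoke Lemma~\ref{lemm:intersection} to conclude that the iterates $\widetilde{C}_n := (f^\tau)^n(C_0)$ are pairwise disjoint. Robustness of the decoder places $\mathcal{E}(s_0)$ in the interior of $C_0$, and the hypothesis furnishes a compact invariant submanifold-with-boundary $N \subset M$ containing this point. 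After shrinking if necessary, I would choose an open ball $B \ni \mathcal{E}(s_0)$ with $B \subset C_0$ and $\overline{B} \subset N$.

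Next I would verify that the orbit of $B$ stays inside $N$ and has constant positive $\mu$-measure along iterates of $f^\tau$. Since $\tau$ is constant on each connected component of every $\mathcal{D}^{-1}(s)$ and $B$ is connected, a short induction shows that $(f^\tau)^n|_B = f_{T_n}|_B$ for some times $T_n \geq 0$: the value $\tau|_B$ is some constant $t_0$, so $(f^\tau)(B) = f_{t_0}(B)$ is a connected subset of $C_{s_1}$ on which $\tau$ is again constant, and the semigroup property of the flow handles the inductive step. Forward invariance of $N$ under each $f_t$ then gives $(f^\tau)^n(B) \subset N$ for every $n \geq 0$. Because $f_t$ preserves $\mu = g\,dV$ and $g > 0$, each image $(f^\tau)^n(B)$ has the same positive $\mu$-measure as $B$.

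The iterates are pairwise disjoint because the containing sets $\widetilde{C}_n$ are, so I would conclude
\begin{align*}
\mu(N) \,\geq\, \sum_{n=0}^{\infty} \mu\big((f^\tau)^n(B)\big) \,=\, \infty,
\end{align*}
contradicting $\mu(N) < \infty$, which holds by compactness of $N$ together with local integrability of $g$. The most delicate step is the induction showing that $(f^\tau)^n|_B$ is a single flow map; without the ``constant on connected components'' property of $\tau$ it could fail, since then the image $(f^\tau)(B)$ might be broken up by level sets of $\tau$ and the orbit would not be describable as a single flow map on $B$. For the integrable-systems application mentioned after the theorem, I would take the invariant pieces $N$ to be the codimension-zero sets $K \times T^n$ for compact $K$ in the action space of Liouville--Arnold coordinates, so that $\mu|_N$ is finite and the argument above applies directly without needing to disintegrate $\mu$ along the lower-dimensional invariant tori.
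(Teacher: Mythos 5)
Your proposal is correct and follows essentially the same route as the paper: apply Lemma~\ref{lemm:intersection} to get disjointness, shrink the initial decoded region to a closed set with nonempty interior (your ball $B$, the paper's $\widetilde{C}_{s_0}$) contained in a codimension-zero compact invariant piece, and run the finite-measure pigeonhole argument from Theorem~\ref{thm:measurepreserve} inside that piece. The one place you go further than the paper is in spelling out why $(f^\tau)^n|_B$ is a genuine flow map $f_{T_n}|_B$ — the paper simply says to ``recapitulate'' Theorem~\ref{thm:measurepreserve}, whereas you correctly observe that this requires $B$ to be connected so that $\tau$ is constant on each successive image, which is exactly what keeps $\mu((f^\tau)^n(B)) = \mu(B)$; that is a real (if small) subtlety worth making explicit.
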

\begin{proof}
We argue by contradiction. Let $s_0 = 1$ be an initial configuration of $\text{Plus}$ where $\{s_n\}_{n = 0}^\infty$ with $s_n := \textsf{Plus}^n(s_0) = [n+1]_1$ are all distinct.  Since $C_{s_0}$ has a non-trivial interior, we can find a closed set $\widetilde{C}_{s_0}$ with non-trivial interior inside $C_{s_0}$ such that $\widetilde{C}_{s_0}$ is small enough to be contained within some compact invariant submanifold (with boundary) $K$ for the flow of $f_t$.  Defining $\widetilde{C}_{s_n} := (f^{\tau})^n(\widetilde{C}_{s_0})$, due to Lemma~\ref{lemm:intersection} we have that the $\widetilde{C}_{s_n}$ are pairwise disjoint. Then, recapitulating the argument of Theorem~\ref{thm:measurepreserve} with $f_t$ restricted to $K$, we find that $\mu(\widetilde{C}_{s_0}) = \mu(\widetilde{C}_{s_n}) > 0$ for all $n$, and so $\sum_{n = 0}^\infty \mu(\widetilde{C}_{s_n}) = \mu(\bigcup_{n=0}^\infty \widetilde{C}_{s_n}) \leq \mu(K) < \infty$ which is a contradiction.
\end{proof}

\noindent Again, using that $\textsf{Plus}$ is a sub-machine of every universal Turing machine, we can again use the fundamental theorem of sub-machines to attain the following corollary:

\begin{corollary}[Integrable systems cannot be extended to a robustly Turing-universal CDS]
If $f_t$ is an integrable system then it cannot be extended to a robustly Turing-universal CDS.
\end{corollary}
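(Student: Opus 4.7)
My plan is to derive this corollary as a direct consequence of Theorem \ref{thm:integrable-systems} together with the fundamental theorem of sub-machines (Lemma \ref{lemm:fund1}). The work consists of two pieces: verifying the hypotheses of Theorem \ref{thm:integrable-systems} for continuous-time integrable systems, and then invoking the sub-machine reduction to pass from impossibility for $\machine{Plus}$ to impossibility for a universal Turing machine.

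For the measure-preservation hypothesis, I would observe that an integrable system is by definition the flow of the Hamiltonian vector field $X_{H_1}$ on a symplectic manifold $(M, \omega)$, and any Hamiltonian flow preserves $\omega$ and hence the Liouville volume form $\omega^n/n!$. Writing this Liouville form as $g \, dV$ for a fixed background volume form $dV$ on $M$, we have $g > 0$ smooth, which is exactly the required invariant measure. For the compact-invariant-submanifold hypothesis, I would appeal to the Liouville-Arnold theorem. By assumption the differentials $dH_1, \ldots, dH_n$ are linearly independent on an open dense set $M^{\text{reg}}$, and near any $p \in M^{\text{reg}}$ there is an $f_t$-invariant neighborhood $V$ together with a diffeomorphism $h: V \to U \times T^n$ in action-angle coordinates, under which $\tilde f_t(x,y) = (x, y + t\,\omega(x))$. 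Each set $h^{-1}(\{x\} \times T^n)$ sits inside a joint level set of the $H_i$ and is therefore compact by hypothesis. Consequently, for any open $\widetilde U \subset U$ with $\overline{\widetilde U} \subset U$ compact, the image $h^{-1}(\overline{\widetilde U} \times T^n)$ is a compact invariant submanifold (with boundary) of $M$. Since such sets cover an open dense subset of $M$, the proof of Theorem~\ref{thm:integrable-systems} can be applied: given the set $C_{s_0}$ of nonempty interior furnished by robustness, we pick a regular point in that interior and a small enough closed neighborhood to fit inside one of these compact invariant submanifolds with boundary, which supplies the $K$ used in that proof.

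The final step is the sub-machine reduction. Under the convention (adopted in the paper just after Definition \ref{def:sim1}) that a universal Turing machine $(\tau, t(n))$-simulates every Turing machine, $\machine{Plus}$ is automatically a sub-machine of $\machine{T}_{\!\text{univ}}$. Hence, if $f_t$ extended to a robustly Turing-universal CDS, the second part of Lemma \ref{lemm:fund1} (applied to the continuous-time CDS with the obvious adaptation of Remark~\ref{rem:modification1}) would allow us to compose the Turing-machine encoder/decoder for $\machine{Plus}$ with those for $\machine{T}_{\!\text{univ}}$ to obtain a robust CDS $(f_t, \mathcal{E}', \mathcal{D}', \tau', \machine{Plus})$, contradicting Theorem \ref{thm:integrable-systems}. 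The one place that requires care -- and what I expect to be the main technical nuisance rather than a real obstacle -- is that integrable systems generically have singular level sets of the momentum map where Liouville-Arnold does not apply, so one cannot literally write $M$ as a disjoint union of smooth compact invariant tori; but since we only need the compact-invariant-submanifold condition to hold locally near one point in the interior of $C_{s_0}$, density of $M^{\text{reg}}$ suffices to circumvent the singular locus entirely.
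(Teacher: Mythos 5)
Your proposal is correct and follows the same route as the paper: apply Theorem \ref{thm:integrable-systems}, which rules out a robust CDS for $\textsf{Plus}$, and then use Lemma \ref{lemm:fund1} together with the fact that $\textsf{Plus}$ is a sub-machine of every universal Turing machine. Your verification of the hypotheses (Liouville volume, compact invariant tori via Liouville--Arnold, and the observation that the singular locus of the momentum map is harmless because only local data near one interior point of $C_{s_0}$ is needed) spells out steps the paper leaves implicit.
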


\begin{remark}
Even though integrable systems cannot furnish even the basic $\textsf{Plus}$ machine, they are capable of serving as `memories', i.e.~they can encode information in the values of their conserved charges.  As such, integrable systems serve as a model of memory \textit{without} the ability to perform certain forms of more sophisticated computation.
\end{remark}

\subsection{Time complexity bounds in one dimension}
\label{sec:time-complexity}
We now proceed to go \emph{beyond} obstructions to universality, and to prove time complexity bounds for various kinds of differentiable systems. Indeed, many previous works show that universality is often too strong to hope for in the vast majority of differentiable systems under most formalizations. The real task of a theory of computational dynamical systems seems to us to find a correspondence between various \emph{complexity classes} in the sense of classical computability theory and various \emph{dynamical classes} defined in terms of natural conditions on smooth dynamical systems. Thus, a goal is to be able to say that `dynamical systems of this kind are this computationally powerful'.  We discuss how to  formalize this notion in Appendix \ref{App:compclass}. The role of this paper is in part to make precise the gap between what is known in the enormous literature on differentiable dynamics and what would be a satisfying realization of the above goal.

We first proceed with the one-dimensional setting, where we can prove a relatively strong theorem. Recall the following definition:

\begin{definition} A differentiable map $f: I \to I$ is \textbf{Axiom A} if 
\begin{enumerate}
\item All periodic points of $f$ are hyperbolic (and thus classified into \emph{attracting} or \emph{repelling} periodic points),  and 
\item Let $B(f) = \bigcup_p W^s(p)$ where the union runs over all attracting periodic points. Then $[0,1] \setminus B(f)$ is a hyperbolic set.
\end{enumerate}
\end{definition}

\begin{remark}
    This is a standard generalization of the notion of $1$-dimensional Axiom A diffeomorphism to the case of maps which are not necessarily invertible. Note that diffeomorphisms of the interval are relatively simple dynamical objects \cite{katok1995introduction}, and in particular, the condition that an interval diffeomoprhism is Axiom A is extremely restrictive. In contrast, Axiom A maps are generic among all interval maps by the theorem below. The theory of Axiom A maps of domains of dimension greater than one is much less developed than the corresponding theory for diffeomorphisms \cite{moosavi2023smooth}. 
\end{remark}

\begin{theorem}[\cite{kozlovski2007density}]
Axiom A maps $f: I \to I$ of class $C^k$ where $I = [0,1]$ are dense in the space of $C^k$ maps $C^k([0,1], [0,1])$ for $k=1, 2,\ldots, \infty$, or even $\omega$ (the real analytic maps); this density result even holds for polynomial maps of the interval. Moreover in each of the cases the Axiom A maps are exactly the structurally stable maps. 
\end{theorem}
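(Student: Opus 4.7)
The plan is to treat the two assertions separately: (i) density of Axiom A maps in each regularity class, and (ii) the equivalence between Axiom A and structural stability.

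For density, the approach I would take starts from the observation that being Axiom A reduces, in one dimension, to controlling the forward orbits of critical points: if every critical orbit falls into the basin $B(f)$ of an attracting hyperbolic periodic point, and the remaining invariant set $[0,1]\setminus B(f)$ is expanding, then $f$ is Axiom A. Given $f\in C^k([0,1],[0,1])$, the strategy is to produce an arbitrarily small $C^k$ perturbation $\tilde f$ each of whose critical points lies in the basin of some attracting cycle. To organize this, I would first isolate the critical set, then perform a dichotomy at each critical point: either the associated first-return dynamics is renormalizable (the return map to a smaller interval again looks like a one-dimensional map with a critical point), or not. In the non-renormalizable case, the key ingredient is the theory of real a priori bounds (Koebe distortion for high iterates of branches with bounded distortion), which gives uniform geometric control of the non-wandering set away from critical points and allows one to select a small perturbation of the critical value lying in a basin. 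In the renormalizable case, one carries out the same procedure inductively across renormalization scales; uniform control across scales is precisely the content of the \emph{complex bounds} of Sullivan, Levin--van Strien, and Kozlovski, which leverages quasiconformal rigidity and polynomial-like maps in the sense of Douady--Hubbard.

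The main obstacle is establishing this last ingredient in the real-analytic and polynomial settings: one cannot perturb critical values in a purely local $C^k$ fashion while preserving analyticity, so the perturbation must be produced by moving in the finite-dimensional space of analytic parameters, and showing that the set of parameters corresponding to Axiom A maps is dense requires the rigidity of the non-Axiom A locus. Concretely, one complexifies $f$ to a polynomial-like map, shows using complex bounds that deep renormalizations have definite moduli, and deduces that the topological conjugacy class of an infinitely renormalizable map has no quasiconformal deformations other than affine ones. This rigidity is what rules out an open set of non-Axiom A analytic maps. I would expect this to be the most delicate step of the whole argument.

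For the equivalence of Axiom A with structural stability, I would argue each direction independently. The Axiom-A-implies-structurally-stable direction is the one-dimensional specialization of the Robbin--Robinson--Mañé theorem (Theorem~\ref{thm:axiom-a-is-structurally-stable} cited in the excerpt); in dimension one the strong transversality hypothesis holds automatically once no critical point is preperiodic to a non-attracting cycle, and small $C^k$ perturbations of an Axiom A $f$ admit canonical topological conjugacies built from stable-manifold holonomies. For the converse, if $f$ is structurally stable then every periodic point must be hyperbolic, because a non-hyperbolic periodic point can be destroyed or split by an arbitrarily small perturbation, producing a non-conjugate nearby map. Combining this with the density part just proved, structural stability forces $f$ to lie in the open Axiom A locus: otherwise, by density there would be an Axiom A map arbitrarily close to $f$ whose periodic structure differs from that of $f$, contradicting topological conjugacy.
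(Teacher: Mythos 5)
The paper does not prove this statement; it is imported verbatim as a citation to Kozlovski--Shen--van Strien \cite{kozlovski2007density}, so there is no internal argument to compare against. Assessed on its own terms, your sketch of the density half correctly names the load-bearing technology (renormalization, real and complex a priori bounds, quasiconformal rigidity of polynomial-like maps) and correctly identifies the real-analytic/polynomial case as the delicate one, so that part is a reasonable high-level account of the actual proof.

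The structural-stability half, however, has two genuine gaps. First, the ``Axiom A $\Rightarrow$ structurally stable'' direction cannot be delegated to Robbin--Robinson--Ma\~n\'e: that theorem concerns diffeomorphisms, and an interval Axiom A \emph{map} is non-invertible with critical points, for which ``strong transversality'' is not the relevant condition. The correct replacement is the condition that no critical point is non-degenerate-periodic or preperiodic onto another critical point, together with the one-dimensional structural stability theory of de Melo and van Strien; this needs a separate argument, not a one-line citation. Second, and more seriously, your argument for ``structurally stable $\Rightarrow$ Axiom A'' does not close. You propose that if $f$ were not Axiom A, then by density some nearby Axiom A map $g$ would have ``periodic structure differ[ing] from $f$,'' contradicting conjugacy. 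But this is backwards: if $f$ is structurally stable then $g$ is topologically conjugate to $f$, so their periodic structures \emph{agree}; and topological conjugacy simply does not transport hyperbolicity, which is a $C^1$ property, from $g$ back to $f$. To make this direction work one must first deduce from structural stability that all periodic points of $f$ are hyperbolic \emph{and} that critical orbits are non-recurrent (both are obtained by exhibiting explicit perturbations that change the topological dynamics otherwise), and then invoke Ma\~n\'e's hyperbolicity theorem for one-dimensional maps to conclude that the complement of the basins is a hyperbolic set. The density theorem is not the missing ingredient here; Ma\~n\'e's theorem is.
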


\begin{theorem}
\label{thm:time-complexity-1-d-b}
    Let $f: [0,1] \to [0,1]$ be an Axiom A map. Then for any CDS $(f, \mathcal{E}, \mathcal{D}, \tau, \,\machine{T})$ where $t(n) = n$, $\tau$ is constant, and the encoder and decoder can be implemented with \textnormal{BSS$_{\textsf{C}}$} machines, then if $\machine{T}$ halts on a configuration then it will halt in time $O(F(n))$, where 
    \[ F(n) =D^{2^n} \]
    for some constant $D$.
    In other words, in the sense of Appendix \ref{App:compclass}, $f$ can recognize languages at most in $\textnormal{\textbf{DTIME}}(O(F(n)))$. 
\end{theorem}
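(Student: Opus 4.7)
The plan is to combine the structure theory of one-dimensional Axiom A maps with quantitative lower bounds from real algebraic geometry on the geometry of the decoder preimages $C_s := \mathcal{D}^{-1}(s) \subset [0,1]$.

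First I would unpack what the decoder looks like geometrically. Since $\mathcal{D}$ is implemented by a BSS$_\textsf{C}$ machine running in time $O(n)$ on input $x \in [0,1]$ and producing an output of length at most $n$, the branching structure of the machine partitions $[0,1]$ into semialgebraic cells. Consequently each $C_s$ is a finite union of intervals whose endpoints are roots of polynomials of degree at most $2^{O(n)}$ with coefficients built by $O(n)$ arithmetic operations from the BSS$_\textsf{C}$ machine's finite set of computable constants. Applying Rump's root-separation theorem \cite{rump1979polynomial}, distinct roots of such polynomials are separated by at least $D^{-2^n}$ for some constant $D$ depending on those computable constants; in particular every nonempty component of any $C_s$ has length at least $D^{-2^n}$, and at most $D^{2^n}$ such components fit disjointly into $[0,1]$.

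Next I would bring in the Axiom A decomposition $[0,1] = B(f) \sqcup R$, where $R$ is a compact hyperbolic repelling set and $B(f)$ is the union of the basins of finitely many hyperbolic attracting periodic orbits. On a neighborhood of $R$ the map $f$ is uniformly expanding with some rate $\lambda > 1$ and admits a Markov partition, yielding a symbolic conjugacy of $f|_R$ with a subshift of finite type. Fix an input configuration $s_0$ of length $n$, and take a component $I \subset C_{s_0}$ of length at least $D^{-2^n}$; by Lemma \ref{lemm:intersection}, the simulation condition forces $f^{k\tau}(I) \subseteq C_{\machine{T}^k(s_0)}$ whenever $\machine{T}$ has not yet halted. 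Iterating, if $I$ meets $R$ then the portion of $f^k(I)$ near $R$ expands at rate $\lambda$ until, within $O(2^n \log D / \log \lambda)$ steps, either (i) $f^k(I)$ has fully escaped $R$ into a basin, or (ii) the image has grown to straddle an atom of the Markov partition.

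In case (ii) the image $f^{k\tau}(I)$ cannot be contained in any single component of any configuration region of the appropriate length, contradicting Lemma \ref{lemm:intersection}. In case (i), after an additional controlled number of iterates the orbit is so close to an attracting periodic orbit of period $p$ that the contracting dynamics pins it into a length-$p\tau$ cycle of configurations; since the set of configurations reachable from $s_0$ is in bijection with a collection of pairwise disjoint intervals in $[0,1]$, pigeonholing against the $D^{2^n}$ upper bound on the total number of such components shows that this cycle begins (or the halt state is reached) within total time $O(D^{2^n})$. The main obstacle is handling the intermediate situation where only part of $I$ has escaped $R$: one must argue via the Markov partition that the portion of $f^k(I)$ remaining near $R$ has itinerary distinct from that of the escaped portion, forcing the two pieces to lie in disjoint $C_{s'}$'s and again contradicting Lemma \ref{lemm:intersection}. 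Carefully matching the Markov partition geometry, the expansion rate $\lambda$, and the Rump separation bound so that all constants balance to yield the uniform $O(D^{2^n})$ bound is the bulk of the work, and is where the constant $D$ inherits its dependence on the computable constants used by $\mathcal{D}$.
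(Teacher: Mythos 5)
Your strategy correctly matches the paper's: combine the $1$-D Axiom A structure (basins of finitely many hyperbolic attracting periodic orbits, plus a hyperbolic repeller $R$ with uniformly expanding Markov dynamics as in Lemma~\ref{lemma:convenient-cover}) with the Rump root-separation bound (Proposition~\ref{prop:real-algebraic-bound}) applied to the semialgebraic fibers $C_s=\mathcal{D}^{-1}(s)$, and your phase-1 escape argument is equivalent to the paper's, phrased via forward expansion of an interval rather than shrinkage of preimages $f^{-k}(U_\epsilon)$ of a neighborhood $U_\epsilon$ of the attracting periodic points. The gaps lie in phase 2. Your dichotomy for $I$ meeting $R$ is not exhaustive: since $R$ is forward-invariant, $f^k(I)$ meets $R$ for all $k$, so your option (i) cannot occur. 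The contradiction you assert in case (ii) does not follow from Lemma~\ref{lemm:intersection}: the lemma gives $f^{k\tau}(C_s)\subseteq C_{\machine{T}^k(s)}$, so the connected set $f^{k\tau}(I)$ lies in a single component of $C_{\machine{T}^k(s)}$, but nothing bounds that component from \emph{above} (the Rump bound is a lower bound), nor relates it to the Markov atoms, so ``straddling an atom'' is not a contradiction. Your ``intermediate situation'' of two pieces of $f^k(I)$ forced into disjoint $C_{s'}$'s cannot occur, since $f^{k\tau}(I)$ is connected and hence lies in one component. And the pigeonhole in case (i) against ``$D^{2^n}$ components'' presumes a uniform lower bound on component sizes for all the $C_{\machine{T}^k(s_0)}$ visited along the orbit, but the Rump bound degrades as $|\machine{T}^k(s_0)|$ grows with $k$.

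The paper's phase 2 avoids these difficulties with a self-intersection argument. Once a point of $f^{N}(C_s)$ (for $N=O(D^{2^n})$) lies in a neighborhood $U_\epsilon$ of an attracting orbit of period $\ell$, one of two things happens within a bounded number of further steps: either some point of $f^{N+i}(C_s)$ lingers near $R$ for $i=1,\ldots,2\ell$, in which case $f^{N+2\ell}(C_s)$ overlaps $f^{N}(C_s)$ near the attractor and Lemma~\ref{lemm:intersection} forces the machine into a cycle (so if it has not halted it never will); or the whole $f^k(C_s)$ vacates the repeller neighborhood, lands in a single component of $U_\epsilon$ within a bounded number of further iterates, and the contracting local normal form near the attracting periodic point decides halting in a further bounded time. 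This reaches the $O(D^{2^n})$ bound without any upper control on the sizes of the $C_{s'}$'s beyond the Rump lower bound used in phase 1, and it is the piece you would need to supply or replace to complete your argument.
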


Before proceeding with the proof of this theorem, we first prove several sublemmas. First, note that by \cite[Theorem~16.1.1]{katok1995introduction} we have that $f$ induces dynamics on $R(f) := [0,1] \setminus B(f)$ which are topologically conjugate to a topological Markov chain \cite{katok1995introduction}; thus this subset of the interval is a Cantor set with the induced topology, and is thus totally disconnected. In particular, any region $C_s$ associated to a robust decoder for a CDS involving $f$ will contain points outside of $B(f)$, which will thus have dynamics that asymptote to the attracting hyperbolic periodic points. To get a halting bound we will understand how long it will take for those points to get there. This will follow from the following two lemmas. The first is purely dynamical:

\begin{lemma}
\label{lemma:convenient-cover}
    There exists a finite collection of closed intervals $\{I_j\}_{j=1}^r$ whose union covers $R(f)$ and an $n$ such that $f^{-kn}(I_j)$ is a disjoint union of closed intervals all contained in $I = \bigcup_j I_j$ and such that the lengths of each of these intervals is bounded by $\lambda^kC$ for some $\lambda < 1$ and $C > 0$. 
\end{lemma}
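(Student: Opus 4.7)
The plan is to leverage hyperbolicity of $R(f)$ to obtain uniform expansion in a neighborhood, and then invoke a Markov partition adapted to that expansion to get the nested interval structure. Since $R(f)$ is a compact hyperbolic set for $f$, there exist $n_0 \geq 1$ and $\mu > 1$ with $|(f^{n_0})'(x)| \geq \mu$ for all $x \in R(f)$. By continuity of $(f^{n_0})'$ and compactness of $R(f)$, this bound extends (with a slightly worse constant $\mu' > 1$) to an open neighborhood $U$ of $R(f)$ in $[0,1]$. I would take $n = n_0$ and set $\lambda = 1/\mu' < 1$; this will be the exponential contraction rate for pulled-back intervals.

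By the already-cited Theorem 16.1.1 of \cite{katok1995introduction}, $f^n|_{R(f)}$ is topologically conjugate to a topological Markov chain. I would use this conjugacy to produce finitely many closed intervals $I_1, \ldots, I_r$ with pairwise disjoint interiors, whose union $I$ contains $R(f)$ and lies in $U$, such that each $f^n|_{I_j}$ is monotone and $f^n(I_j)$ is a union of some of the $I_k$'s. Concretely, I would take each $I_j$ to be a short thickening (in $U$) of a clopen Markov piece of $R(f)$, then refine the partition by passing to iterated $f^{-mn}$-preimages until $\max_j |I_j|$ is smaller than some prescribed $C > 0$.

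The Markov property then gives, for each $j$, that $f^{-n}(I_j) \cap I$ is a disjoint union of closed subintervals, one inside each $I_l$ for which the Markov transition $l \to j$ is allowed, with $f^n$ carrying each such subinterval monotonically onto $I_j$. Iterating, $f^{-kn}(I_j) \cap I$ is a disjoint union of closed subintervals indexed by allowable length-$k$ Markov words ending at $j$; the mean value theorem combined with the expansion bound $|(f^n)'| \geq \mu'$ on $U$ bounds the length of each subinterval by $(\mu')^{-k}|I_j| \leq \lambda^k C$, giving the required contraction.

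The main obstacle is showing that $f^{-kn}(I_j)$ is itself contained in $I$, not merely its intersection with $I$. Since $B(f) = [0,1] \setminus R(f)$ is both forward- and backward-invariant under $f$, the $f^{-n}$ preimages of points in $I_j \cap R(f)$ lie back in $R(f) \subseteq I$; the delicate part is the preimages of the gap points $I_j \cap B(f)$, which lie in $B(f)$ but a priori could escape $U$. I would address this by arranging $I$ to be a trapping neighborhood, $f^{-n}(I) \subseteq I$, obtained by thickening the $I_j$ across the $B(f)$-gap boundaries so that all relevant inverse branches are absorbed; this uses that $f$ has only finitely many critical points and that, near $R(f)$, the inverse branches of $f^n$ have predictable structure dictated by the Markov combinatorics. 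Carefully verifying this trapping condition, rather than just the contraction bound for the Markov subintervals inside $I$, is where I expect the main technical difficulty to lie.
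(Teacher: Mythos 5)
Your proof takes essentially the same route as the paper's: invoke Katok--Hasselblatt Theorem 16.1.1 for the Markov structure of $f$ on $R(f)$, use hyperbolicity for a uniform expansion factor, and conclude the exponential length bound via the mean value theorem and induction. You add a genuine refinement in flagging what the paper's two-sentence proof passes over silently: the inductive step needs the \emph{full} preimage $f^{-kn}(I_j)$ to land in $I$ (where the expansion estimate holds), not merely $f^{-kn}(I_j)\cap I$, since $f^n$ need not be monotone or expanding outside $I$. You have correctly located the pressure point.

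The proposed fix is where I would push back. Thickening the $I_j$ is not obviously the right move: enlarging $I$ also enlarges $f^{-n}(I)$, so the trapping inclusion $f^{-n}(I)\subseteq I$ is not monotone in $I$, and thickening can pull new inverse branches of $f^n$ into $I_j$ rather than absorbing them. The mechanism that actually delivers the containment is complete invariance of $R(f)$. Since $B(f)=\bigcup_p W^s(p)$ is both forward-invariant and backward-invariant, so is $R(f)$, hence $f^{-n}(R(f))=R(f)$; choosing $\partial I_j\subset R(f)$ then forces every maximal component $J$ of $f^{-n}(I_j)$ away from $0,1$ to have $f^n(\partial J)\subset\partial I_j\subset R(f)$ and therefore $\partial J\subset R(f)\subset I$. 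More directly still, one can take $I=[0,1]\setminus f^{-L}(U_\epsilon)$ for $U_\epsilon$ a trapping neighborhood of the attracting periodic orbits and $L$ large: this is a finite union of closed intervals covering $R(f)$, and it satisfies $f^{-1}(I)\subseteq I$ by construction, since $U_\epsilon\subset f^{-1}(U_\epsilon)$ gives $f^{-L}(U_\epsilon)\subset f^{-(L+1)}(U_\epsilon)$. One then refines this $I$ into Markov pieces $I_j$ small enough for the monotonicity and expansion estimates. That refinement, rather than trapping by thickening, is what the citation to Theorem 16.1.1 is implicitly carrying, and is the step you would need to complete.
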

\begin{proof}
    This is essentially proven in \cite[Theorem~16.1.1]{katok1995introduction}; our $I_j$'s are those of this theorem, as is our $n$. The only point to make is that if we chose the $I_j$'s small enough, so that on each $I_j$, we have that $f^n$ is monotone and has  $|(f^n)'|$ lower bounded by $\lambda^{-1} > 1$ on all the $I_j$. This is possible by hyperbolicity and the total disconnectedness of $R(f)$. Then, by the derivative bound and and the monotonicity of $f^n$, $f^n$ must stretch the length of each component of $f^{-n}(I_j)$ by a factor of at least $\lambda$. But then an induction on $k$ proves the lemma.
\end{proof}
Beyond this covering lemma, we have the following helpful definition and result from real algebraic geometry:

\begin{definition}[Standardized polynomial]
We say that a polynomial $p(x) = \sum_i a_i x^i$ is \textbf{standardized} if all of its non-zero coefficients are greater than or equal to $1$. For any $p(x)$, we let
\begin{align}
\widetilde{p} := p/\min\{1,\min\{|a_i|\, : \, a_i \not = 0\}\}\,,
\end{align}
which is standardized.  Indeed, if $p$ is already standardized, then $\widetilde{p} = p$.
\end{definition}

\begin{proposition}[Generalization of Theorem 3 of~\cite{rump1979polynomial}]\label{prop:real-algebraic-bound}
Let $p(x) = \sum_i a_i x^i$ be a real polynomial with integral coefficients. Denote $s = \sum_i |a_i|$ and $d = \deg p$. Let $\rm{rsep}_{[-1,1]}(p)$ be the minimum distance between two real roots of $p$ which are each in the interval $[-1,1]$. Then 
\[ \text{\rm{rsep}}_{[-1,1]}(p) \geq \frac{2 \sqrt{2}}{d^{d/2 + 1}(\tilde{s}(p) + 1)^d}\,. \]
\end{proposition}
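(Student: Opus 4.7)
I will follow Rump's approach, which exploits the fact that a nonzero integer has absolute value at least one, applied to the discriminant of the squarefree part of $p$. The restriction to roots in $[-1,1]$ is what ultimately replaces the Mahler-measure factors in the classical bound by powers of $s+1$, at the cost of a weaker exponent but a cleaner statement. The standardization hypothesis is a normalization that lets us assume $|a_0|, |a_d| \geq 1$ (the latter being automatic for integer polynomials), so I will tacitly assume $|a_d| \geq 1$ below.

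First I would reduce to the squarefree case. If $p$ has repeated roots, replace $p$ by its squarefree part $q = p/\gcd(p,p') \in \mathbb{Z}[x]$ (after clearing denominators and taking the primitive part). By Mignotte's factor bound, $\|q\|_1 \leq C(d)\|p\|_1$ with $C(d)$ polynomial-in-$d$, so that passing to $q$ modifies the eventual estimate only by factors absorbable into $d^{d/2+1}$. After this reduction, let $\gamma_1 = \alpha$, $\gamma_2 = \beta \in [-1,1]$ be two real roots with $|\gamma_1 - \gamma_2|$ minimal, and let $\gamma_3, \ldots, \gamma_d$ be the remaining complex roots of $p$. Since $p$ is now squarefree with integer coefficients, $\mathrm{disc}(p) \in \mathbb{Z} \setminus \{0\}$, hence
\[ 1 \leq |\mathrm{disc}(p)| = |a_d|^{2d-2}\,|\alpha-\beta|^2 \!\!\prod_{\substack{i<j \\ (i,j)\neq (1,2)}}\!\!|\gamma_i-\gamma_j|^2. \]

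Next I would obtain an upper bound on the product $\prod_{i<j}|\gamma_i - \gamma_j|^2$ in a way that isolates the factor $|\alpha-\beta|$. The product is the absolute value of the Vandermonde determinant $\det V$ where $V$ has rows $(1,\gamma_i,\gamma_i^2,\ldots,\gamma_i^{d-1})$. Instead of bounding $\det V$ directly, I would work with the modified matrix $V'$ obtained from $V$ by replacing the $\gamma_2$-row by the scaled finite-difference row $\bigl((\gamma_2^k-\gamma_1^k)/(\gamma_2-\gamma_1)\bigr)_{k=0}^{d-1}$, whose determinant equals $\det V/(\gamma_2-\gamma_1)$. Applying Hadamard's inequality to $V'$ together with the Cauchy root bound $|\gamma_i| \leq 1+s/|a_d| \leq s+1$ (valid for all complex roots since $|a_d|\geq 1$), each row of $V'$ has Euclidean norm at most $\sqrt{d}\,(s+1)^{d-1}$. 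Combined with the discriminant lower bound and the fact that $|a_d|\geq 1$, this yields $|\alpha-\beta|\cdot d^{d/2}(s+1)^{d(d-1)/2-\ldots}$ dominating $1$, from which $|\alpha - \beta|$ is bounded below by a quantity of the form $c \cdot d^{-d/2-1}(s+1)^{-d+O(1)}$.

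The main obstacle is the bookkeeping of constants: each step (Mignotte's factor bound, Hadamard, Cauchy's root bound, the finite-difference modification) loses a controlled factor of $d$ or $s+1$, and matching the stated constants $2\sqrt{2}$ and $d^{d/2+1}(s+1)^d$ exactly — as opposed to a version with some unspecified polynomial-in-$d$ factor — requires careful optimization. I expect the extra factor of $(s+1)$ compared to the naive Hadamard estimate (which produces $(s+1)^{d-1}$) will come from the squarefree-reduction step, where the Mignotte divisor bound contributes the missing power, and the precise constant $2\sqrt{2}$ will come from a slightly sharper variant of Hadamard using the $\ell^2$-norm $\|p\|_2 \leq \sqrt{d+1}\,\|p\|_\infty \leq \sqrt{s(s+1)}$ in place of the crude $\ell^1$ estimate.
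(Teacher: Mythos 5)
Your route is genuinely different from the paper's. The paper proceeds via Rolle's theorem: between consecutive real roots $\alpha<\beta$ in $[-1,1]$ there is a critical point $\gamma$ with $P'(\gamma)=0$ and (because $\alpha,\beta$ are consecutive) $P(\gamma)\neq 0$; a Taylor expansion $0 = P(\beta) = P(\gamma) + \tfrac{1}{2}(\gamma-\beta)^2 P''(\omega)$ then reduces the problem to a lower bound on $|P(\gamma)|$, which is obtained by bounding the smallest root of the resultant $\mathrm{res}(P'(x),\,y-P(x))$ using Hadamard on the Sylvester matrix and Cauchy's root bound. That argument avoids the discriminant entirely, works directly with $|P|_1$ rather than with the roots, handles repeated roots automatically (no squarefree step, since $P(\gamma)\neq 0$ is forced by minimality of $|\alpha-\beta|$), and applies to any standardized polynomial, not just one with integer coefficients. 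Your discriminant/modified-Vandermonde approach is the classical Mahler-style one, which in principle can yield a bound of this strength or slightly better, but only via Mahler measure rather than coefficient-sum estimates.

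As sketched, however, your argument does not reach the stated bound, and the two estimates you invoke cannot be patched together by bookkeeping. Bounding each row of $V'$ by $\sqrt{d}\,(s+1)^{d-1}$ using the per-root Cauchy bound and applying Hadamard gives $|\det V'| \lesssim d^{d/2}(s+1)^{d(d-1)}$, hence a lower bound on $|\alpha-\beta|$ of order $(s+1)^{-d(d-1)}$ — off from the target by a factor of $d-1$ in the exponent, which is not an ``$O(1)$'' slack. To reach exponent $d$ you must bound row $i$ by $\sqrt{d}\,\max(1,|\gamma_i|)^{d-1}$, recognize $\prod_i \max(1,|\gamma_i|) = M(p)/|a_d|$, and invoke Landau's inequality $M(p)\le\|p\|_2\le s$; none of this is in your sketch, and your own account of ``where the extra power of $(s+1)$ comes from'' attributes it to the squarefree reduction, which is not where it lives. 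That reduction is the second problem: Mignotte's divisor bound for $\|q\|_1$ is exponential in $d$ (on the order of $2^d\|p\|_2$), not polynomial, so it cannot be ``absorbed into $d^{d/2+1}$'' as you claim; the correct fix is again via the multiplicativity and divisor-monotonicity of the Mahler measure. Finally, a small point: the $2\sqrt{2}$ in the paper's constant comes from the $\tfrac{1}{2}$ in the Taylor remainder together with the split $|\alpha-\beta|=|\alpha-\gamma|+|\gamma-\beta|$, not from a sharpened Hadamard estimate.
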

\noindent This result is proved in Appendix \ref{App:root-separation}.  We now proceed with the proof of Theorem \ref{thm:time-complexity-1-d}.
\begin{proof}[Proof of Theorem \ref{thm:time-complexity-1-d-b}]

Denote the domain of $f$ by $M$. There is a region $\mathcal{H} \subset M$ which is a union of closed nonempty intervals of strictly positive length and should be thought of as the `halting region', corresponding to the union of all the decoded regions associated to configurations for which the underlying state of the Turing machine is the halting state.

Let $p_1, \ldots, p_k$ be the attracting periodic points. Choose an $\epsilon$ such that the neighborhood $U_\epsilon$ of size $\epsilon$ of the attracting periodic points is a union of intervals, one around each $p_i$, with standard dynamics given by the local normal forms \cite{kato2015global} for attracting periodic points. 

Now let $V = \bigcup_i I_j$ with $I_j$ as in Lemma \ref{lemma:convenient-cover}. We know that $R(f)^c = \bigcup_{i=1}^\infty f^{-i}(U_\epsilon)$; in particular, there is some $L$ such that $f^{-L}(U_\epsilon)$ contains $V^c$. Lemma \ref{lemma:convenient-cover} then implies that $f^{-L+k}(U_\epsilon)$ can be covered by disjoint intervals of size exponentially decreasing in $k$. 

Let $s$ be a configuration of $\machine{T}$. Then because $C_s$ is given by a BSS$_{\textsf{C}}$ machine, it is given via a union of the set of solutions to several equations $P_{s_i}(x)/Q_{s_i}(x) \geq 0$ for some polynomials $P_{s_i}(x)$ and $Q_{s_i}(x)$ which can be taken to be standardized (by rescaling). In particular, it is a union of intervals of length bounded by the distance between the roots of the polynomials $P_{s_i}$. We will bound the quantities $s$ and $d$ for these polynomials in terms of the length $n$ of $s$. 

The fact that $t(n)=n$ means that the formula for $C_s$ is computed in $O(n)$ steps. Indeed, the formula for this semialgebraic set is computed by running the BSS$_{\textsf{C}}$-machine defining $\mathcal{D}$. At each step the machine doing this computation has in its registers several real quantities which in terms of the original quanties fed into $\mathcal{D}$ take the form of ratios of polynomials in these quantities. At each step, either the computation of $\mathcal{D}$ branches on the positivity of one of these quantities; or the computation is in the reject state; or otherwise the computation continues adding, multiplying, and dividing the available quantities (which are all rational functions of the original inputs). If we have $P_1, P_2, Q_1, Q_2$ all satisfying $\deg P_i = O(w), \deg Q_i = O(w)$, while $s(P_i)$ and $s(Q_i)$ are both $\leq D^n$ for some large $D$, then  $P_1/Q_1 + P_2/Q_2 = P'/Q'$, where $\deg P'$ and $\deg Q'$ are $O(2w)$, $s(P')$ and $s(Q')$ are $\leq D^{2n}$, and $P'$ and $Q'$ are still standardized if $P_i$ and $Q_i$ were. Multiplying the available rational functions gives similar bounds. We can also add a constant or multiply a rational function by a constant; to preserve standardization of the numerator and denominator, we will have to rescale $P$ and $Q$ by the relevant constant, which affects e.g.~$s(P')$ by scaling it by $\deg P' C = O(w)C$.

Finally, if the computation underlying the formula at some point branches on the positivity of some value of some rational function $P_1/Q_1$ of the original inputs to $\mathcal{D}$, and while the other quantities available to the machine computing $\mathcal{D}$ are of the form $P_2/Q_2, \ldots$, then the computation of $\mathcal{D}$ continues, then this simply corresponds in the end to considering regions which are a union of regions in between the roots of $(P_1/Q_1) \cdot F(P_2/Q_2, \ldots)$ for some final formula $F$. Combining these observations together with Proposition \ref{prop:real-algebraic-bound}
    tells us that $C_s$ consists of intervals of length bounded by a constant times  
    \[ 2 \sqrt{2}((2^n)^{(2^n)/2+1} (2^{n(n+1)/2}C^nD^{2^n}+1)^{D^{2^n}+1})^{-1}.\]
    Here the factor $2^{n(n+1)/2}C^n = 2^nC2^{n-1}C2^{n-2}C\ldots$ is a lossy bound associated to the condition that we might have to to `restandardize' the polynomials at each step.  Thus after we iterate $f$ $L+k(n)$ times, where 
    \[k(n) = O(\log ((2^n)^{(2^n)/2+1} (2^{2n}C^nD^{2^n}+1)^{D^{2^n}+1}))= O(2^n n + D^{2^n}2^n)= O(D^{2^n})\]
     there  will be a point $p$ of $f^{L+k(n)}(C_s)$ which lies in $U_\epsilon$, say in a neighborhood of a periodic point of period $\ell$. 

     At this stage, it is comparatively easy to figure out if $\machine{T}$ will halt. The first possibility is that for $2\ell$ more iterations of $T$, there will still be a point of $f^{L+k(n)+i}(C_s)$, $i=1,..., \ell$ which stays in the neighborhood $V$, at which point by the $2\ell$-th iteration we will have that $f^{L+k(n)+2\ell}(C_s)$ overlaps $f^{L+k(n)}(C_s)$ which is a contradiction if we have not yet halted. The other possibility is that if the whole neighborhood leaves $V$ during these $2\ell$ iterations, we can wait another $f^L$ iterations to have the whole interval lie in $U_\epsilon$; we see in fact that at this stage, the image of the whole interval must lie in the same component of $U_\epsilon$ as the image of the original point entering $U_\epsilon$ does, for otherwise the interval will never decrease in size; but this could only happen if the interval in fact contained a point of $R(f)$, which would contradict the whole interval leaving $V$ eventually. In any case, once the whole interval lies in a single component of $U_\epsilon$, we must halt in some fixed time and overall bounded time $Q$ depending on how the halting set overlaps $U_\epsilon$, or never halt at all. This concludes the proof of the case when $\tau=1$.
\end{proof}

\begin{remark}
\label{rk:annoying-time-complexity-1d-remark}
    Note that much of the proof of Theorem \ref{thm:time-complexity-1-d-b} goes through for general $\tau$. Indeed,  the proof above really proves that $f^k(C_s)$ reaches the region near the hyperbolic attracting periodic point once $k$ is greater than the bound in the Theorem for $n=|s|$. Thus, if $\tau \geq 1$, we have that iterating $f$ on $C_s$ reaches the hyperbolic attracting periodic point \emph{faster}, and subsequently $f^k(C_s)$ stays near this hyperbolic attracting point. Thus that part of the bound of Theorem \ref{thm:time-complexity-1-d-b} holds for general $\tau$, as the bound is only \emph{improved} by increasing $\tau$. The reason we cannot state the theorem for general $\tau$ is because of the possibility that $f^k(C_s)$ eventually gets close to the hyperbolic periodic point but the slowdown function $\tau$ conspires so that on the collection of iterations $k=k_i$ of $f^k(C_s)$ on which we are to evaluate the dynamics of $f$ for the purposes of simulation, the $k_i$ always have the wrong value modulo the period of the periodic point. Thus, if one has an oracle which can resolve this challenge (and one can produce one for many slowdown functions that are much more general than the constant function), then one can continue to conclude the bound of Theorem \ref{thm:time-complexity-1-d-b}.
\end{remark}

\subsection{Time complexity bounds in many dimensions}
\label{sec:time-complexity-2}

In the theory of Axiom A systems (see Section \ref{sec:axiom-a}), the notion of a `topologically mixing' system occurs. There is an elementary statement showing that topologically mixing regions cannot encode non-halting computations:
\begin{lemma}
\label{lemma:topologically-mixing-implies-halting}
    Let $f$ be topologically mixing. Then any robust CDS with underlying dynamical system $f$ must halt on all inputs. 
\end{lemma}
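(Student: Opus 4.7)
The plan is to derive a contradiction from two facts: if the CDS fails to halt, then the $f^\tau$-orbit of a decoded region $C_{s_0}$ is confined forever to a disjoint family of non-halting decoded regions; but topological mixing of $f$ forces the $f$-iterates of any nonempty open set to eventually intersect any other nonempty open set, including one sitting inside a halting decoded region. The delicate point will be to translate between the $f^\tau$-iterates that the simulation controls and the plain $f$-iterates that mixing controls.

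First I would set up notation. Suppose for contradiction that the CDS fails to halt on some input $s_0$, so $\machine{T}^k(s_0)$ is non-halting for every $k \geq 0$. Let $s_h$ be any halting configuration in the range of $\mathcal{D}$. By robustness, the sets $U := \operatorname{int}(C_{s_0})$ and $V := \operatorname{int}(C_{s_h})$ are both nonempty and open. Because $\mathcal{D}$ is a partial function, the decoded regions $\{C_s\}_{s \in S}$ are pairwise disjoint; in particular $C_{\machine{T}^k(s_0)} \cap V = \emptyset$ for every $k$.

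Next I would replace $U$ by one of its connected components $U'$ (still nonempty and open, by local connectedness of $\mathbb{R}^k$) and show by induction on $k$ that there exist integers $n_k \geq 0$ with $(f^\tau)^k(U') = f^{n_k}(U')$. Indeed, at each step the connected set is sent by the continuous map $f^{\tau(\cdot)}$ into a single connected component of the next $C_{\machine{T}^j(s_0)}$, where $\tau$ takes a single integer value by hypothesis; summing these values yields $n_k$. Assuming $\tau \geq 1$ on the relevant regions (which is the sensible regime for a genuine simulation), we get $n_k \to \infty$ as $k \to \infty$. The simulation condition then gives
\[
f^{n_k}(U') \;=\; (f^\tau)^k(U') \;\subseteq\; C_{\machine{T}^k(s_0)}\,,
\]
so in particular $f^{n_k}(U') \cap V = \emptyset$ for every $k$.

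Finally I would apply topological mixing of $f$ to the two nonempty open sets $U'$ and $V$: there exists $N$ such that $f^n(U') \cap V \neq \emptyset$ for every $n \geq N$. Picking $k$ large enough that $n_k \geq N$ yields $f^{n_k}(U') \cap V \neq \emptyset$, contradicting the previous display. Hence the CDS must halt on every input. The main technical subtlety, rather than a real obstacle, is the book-keeping that lets us write $(f^\tau)^k(U') = f^{n_k}(U')$ for a single well-defined integer $n_k$; this rests on propagating connectedness under $f$ combined with the fact that $\tau$ is constant on connected components of each $C_s$. The only boundary case is a non-halting fixed point of $\machine{T}$ realized with $\tau \equiv 0$, which corresponds to a degenerate ``simulation'' with no physical progress; standard Turing machine conventions (or simply requiring $\tau \geq 1$) preclude this.
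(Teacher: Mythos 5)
Your proof takes the same approach as the paper's: topological mixing of $f$ forces the iterates of any nonempty open subset of $C_{s_0}$ to eventually meet the (open) halting region $\mathcal{H}$, while the simulation condition for a non-halting input confines those iterates to non-halting decoded regions, giving a contradiction. You are in fact more careful than the paper on the one genuine subtlety --- passing to a connected component so that $(f^\tau)^k(U')$ can be rewritten as $f^{n_k}(U')$ with $n_k \to \infty$, which is exactly what is needed to make the mixing hypothesis (stated for iterates of $f$) bite; the paper's proof invokes mixing for $f^n(C_s^o)$ but then derives the contradiction from $(f^\tau)^n(C_s^o)$ without spelling out this translation or the tacit assumption that the slowdown is eventually positive along the orbit.
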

\begin{proof}
    Any configuration $s$ of the machine $\machine{T}$ underlying the CDS defines a closed set with nonempty interior, namely $C_s = \mathcal{D}^{-1}(s)$. Similarly, there is a set $\mathcal{H}$ with nonempty interior which corresponds to the configurations of $\machine{T}$ in the halting state. Let $C^o_s$ denote the interior of $C_s$. Since $f$ is topologically mixing, then there is an $N$ such that for all $n > N$, $f^n(C^o_s) \cap \mathcal{H} \neq \emptyset$. Suppose that $\machine{T}$ never halts on $s$; then, choosing some $n$ large enough we must have that $(f^{\tau})^n(C^o_s) \cap \mathcal{H} = \emptyset$, which is a contradiction with the previous statement. 
\end{proof}

Now, for Axiom A diffeomorphisms, the spectral decomposition of Proposition \ref{prop:spectral-decomposition} decomposes the dynamics into basic sets on which the dynamics are topologically mixing in a highly quantitative way, and regions where the dynamics flows from one basic set to another. Thus to establish a higher-dimensional analog of Theorem \ref{thm:time-complexity-1-d} one must understand the `amount of computation' that can be done in a single basic set. The special case where there is one basic set which is exactly the entire the domain of $f$ is the setting where $f$ is Anosov. Already in this setting the problem is nontrivial; below we give an inexplicit computable bound on the halting time of any CDS with underlying Anosov dynamics in order to clarify the difficulties of the problem. First we must introduce a restriction on the type of decoder allowed which abstracts away the properties of the robust Cantor decoder of \eqref{eq:cantor-decoder}.

\begin{definition}
\label{def:cantor-like-decoder}
    A \emph{Cantor-like} decoder $\mathcal{D}: M \rightharpoonup S$ is one that is implemented as follows: there is a semialgebraic map $P: M \times \R^k \to M \times \R^k$, a semialgebraic (thus piecewise constant) map $\mathcal{D}_0: M \times \R^k \to \Sigma \cup \{\tilde{p}\}^2$ (here $\tilde{p}$ is a new symbol) and an initial condition $r_0 \in \R^k$, such that 
    \[ \mathcal{D}(x) = s_{-k_1}\ldots s_0 \ldots s_{k_2}\]
    is outputted as 
    \begin{align*}
    x &= x_0 \\
    (s_{-1},s_0) &= \mathcal{D}_0(x_0, r_0) \\
    x_i, r_i &= P(x_{i-1}, r_{i-1}) \\
    (s_{-i-1}, s_i) &=\mathcal{D}_0(x_i, r_i)
    \end{align*}
    where if one of $s_{-i-1}$ or $s_i$ is not defined in $\mathcal{D}(x)$, then $\mathcal{D}_0$ outputs $\tilde{p}$ for that value; and the output halts when one of $P$ or $\mathcal{D}_0$ reaches a value for which its output is not defined.

    Note that every Cantor-like decoder is optimal, i.e.~it takes at most $O(1)=C$ steps to compute each next pair of symbols of $\mathcal{D}(x)$. We call this constant $C$, the largest number of steps it might take to compute one more symbol of $\mathcal{D}(x)$, the \emph{rate constant} of $\mathcal{D}$. 
\end{definition}

\begin{theorem}
\label{thm:time-complexity-anosov}
    Let $f: M \to M$ be Anosov and volume-preserving. Consider a CDS $(f, \mathcal{E}, \mathcal{D}, \tau, \textnormal{\textsf{T}})$ where $t(n) = n$, the decoder is Cantor-like (Definition \ref{def:cantor-like-decoder}), and the encoder and decoder are implemented by \textnormal{BSS$_{\textsf{C}}$} machines with the finite set of computable constants being $\{a_1, \ldots, a_\ell\}$.  Then the CDS halts on all configurations  in time $O(\mathcal{C}(n))$, where $\mathcal{C}(n)$ is a computable function depending on $a_1, \ldots, a_\ell$.
\end{theorem}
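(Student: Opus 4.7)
The plan is to combine the Sinai--Bowen--Ruelle theory of exponential decay of correlations for volume-preserving Anosov diffeomorphisms with a computable real-algebraic lower bound on the ``thickness'' of the regions $C_s = \mathcal{D}^{-1}(s)$ produced by the Cantor-like decoder. Lemma~\ref{lemma:topologically-mixing-implies-halting} already forces halting; the work is in making the halting-time bound computable in $n = |s|$. As a first step, I would recall the classical Bowen--Ruelle result that a $C^{1+\alpha}$ volume-preserving Anosov diffeomorphism $f$ has exponential decay of correlations: there exist constants $C_f > 0$, $\lambda_f \in (0,1)$ depending only on $f$ such that
\[
\left| \int \phi \cdot (\psi \circ f^n)\, d\mu - \int \phi\, d\mu \int \psi\, d\mu \right| \leq C_f\, \lambda_f^n \,\|\phi\|_{C^\alpha}\, \|\psi\|_{C^\alpha}
\]
for all $\alpha$--H\"older $\phi, \psi$ on $M$.

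Second, I would extract geometric control on $C_s$ from the decoder. Since $\mathcal{D}$ is Cantor-like and runs in $O(n)$ steps on outputs of length $n$, the set $C_s$ is cut out by a Boolean combination of polynomial inequalities arising from iterating $P$ and reading off $\mathcal{D}_0$; the coefficients of these polynomials lie in the ring generated by $\{a_1, \ldots, a_\ell\}$ and have degree and size bounded by a computable function of $n$, by the same coefficient-growth bookkeeping used in the proof of Theorem~\ref{thm:time-complexity-1-d-b}. Assuming $C_s$ has nonempty interior (as is implicit in the CDS construction when the decoder is taken robust), I would invoke a multivariate effective real-algebraic-geometry bound in the spirit of Proposition~\ref{prop:real-algebraic-bound}, modified to handle algebraic rather than integer coefficients, to produce, computably in $n$ and $\{a_i\}$, a Euclidean ball $B_s \subset C_s$ of radius $r_s \geq \rho(n)$ where $\rho > 0$ is a computable function. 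A rescaled-from-a-fixed-profile $\alpha$--H\"older bump $\phi_s$ supported in $B_s$ then satisfies $\int \phi_s \geq c_d r_s^d$ and $\|\phi_s\|_{C^\alpha} \leq C_d r_s^{-\alpha}$, and one fixes once and for all a similar $\psi_{\mathcal{H}}$ supported in the halting set $\mathcal{H} = \mathcal{D}^{-1}(q_{\text{halt}})$ with $\int \psi_{\mathcal{H}} = I_{\mathcal{H}} > 0$ and $\|\psi_{\mathcal{H}}\|_{C^\alpha} = K_{\mathcal{H}}$.

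Finally, the proof concludes by contradiction. Let $N_K = \sum_{j=0}^{K-1} \tau(\textsf{T}^j(s))$, which satisfies $N_K \geq K$. If $\textsf{T}^k(s)$ is non-halting for all $k \leq K$, the Simulation Condition gives $\mathcal{D}(f^{N_k}(C_s)) = \textsf{T}^k(s) \neq q_{\text{halt}}$, so $f^{N_k}(C_s) \cap \mathcal{H} = \emptyset$ and in particular $\int \phi_s \cdot (\psi_{\mathcal{H}} \circ f^{N_K})\, d\mu = 0$. Exponential mixing then yields
\[
0 \;\geq\; c_d r_s^d \, I_{\mathcal{H}} \;-\; C_f \, \lambda_f^{N_K} \, C_d r_s^{-\alpha} \, K_{\mathcal{H}},
\]
which, using $N_K \geq K$ and $r_s \geq \rho(n)$, forces $K \leq \mathcal{C}(n)$ for $\mathcal{C}(n) = O_f\!\left((d+\alpha)\log(1/\rho(n))\right)$, a computable function of $n$ and $\{a_i\}$. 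The main obstacle is the second step: Proposition~\ref{prop:real-algebraic-bound} only handles univariate integer polynomials, and obtaining a sharp multivariate analogue over $\mathbb{R}[a_1, \ldots, a_\ell]$ is an open problem in real algebraic geometry. For the present inexplicit computable bound, it suffices to invoke effective quantifier elimination over the computable reals: the inradius of a semialgebraic set with computable coefficients is a computable real, and a uniform-in-$n$ lower bound can be extracted by tracking coefficient complexity through the decoder iteration, which is exactly why the dependence on $\{a_i\}$ appears in the theorem statement.
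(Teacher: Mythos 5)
Your proposal is correct and follows essentially the same two-part strategy as the paper's proof: Bowen--Ruelle exponential decay of correlations for the volume-preserving Anosov $f$, combined with a computable (but inexplicit) lower bound on the inradius of the decoded regions $C_s$ obtained from the algorithmic structure of the Cantor-like decoder with computable constants. Your invocation of effective quantifier elimination for the inradius lower bound is the same device the paper implements via cylindrical algebraic decomposition, and both arguments correctly identify the absence of a multivariate Rump-type root-separation bound as the obstruction to making $\mathcal{C}(n)$ explicit.
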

\begin{proof}
This follows from the ergodic theory of Anosov diffeomorphisms, together with an elementary finiteness argument. Knowing the constants used in $\mathcal{D}$, there are only a finite number of possible semialgebraic maps $P$ and $\mathcal{D}_0$ such that $(P, \mathcal{D}_0)$ can be represented by a formula which takes time at most $C$ to compute \cite{smale1997complexity}. Thus the sets $\mathcal{D}^{-1}(s)$ must each be a union of an \emph{explicit} finite set of semialgebraic sets with nonempty interior (the finiteness is by the fact that the number of connected components of a semialgebraic set is bounded by the complexity of the defining formula \cite{smale1997complexity}, together with the finiteness of the number of maps $(P, \mathcal{D}_0)$). In particular, there is a computable lower bound on the inradius of the sets $C_s = \mathcal{D}^{-1}(s)$ as a function of the length $|s|$. For example, for every possible choice of $(P, \mathcal{D}_0)$, one can compute the cylindrical algebraic decomposition (see \cite[Chapter~5]{Basu2006-pd} for a textbook treatment, or \cite{jirstrand1995cylindrical} for a gentle introduction) of the corresponding set $C_s$; the defining property of the cylindrical algebraic decomposition lets us find a cube  in the corresponding cell of the cylindrical algebraic decomposition of $C_s$ (which will be of full dimension in the corresponding component), at which point finding a computable lower bound on the inradius is elementary. Now, given a ball $B_r$ in $C_s$ of radius $r$, the exponential mixing properties of $f$ \cite{bowen2008equilibrium} imply that there are constants $C$ and $\kappa$ independent of $B_r$ or $r$ such that $f^N(B_r) \cap \mathcal{H} \neq \emptyset$ for all $N \geq N_0$ where $N_0$ is the minimum quantity such that $r^n > Ce^{-N_0\kappa}$, i.e.~such that 
\[ N_0 \geq (\log(C)-n\log r)/\kappa \]
(note that $r \ll 1$ so the quantity on the right-hand side grows large with small $r$). Combining this with the previous observation proves the theorem.
\end{proof}

It is quite interesting to try to get an explicit value for the computable function $\mathcal{C}(n)$. The primary challenge is to get an explicit bound for the volume (or inradius) of a semialgebraic set  (which is the closure of its interior) defined by polynomial inequalities with integer coefficients, in terms of a bound on the coefficients of the defining inequalities. Such a bound, which would be the higher-dimensional generalization of Rump's bound \cite{rump1979polynomial} stated in Proposition \ref{prop:real-algebraic-bound} above, does not appear to have been established. This is a basic question in real algebraic geometry that must require a fusion of the methods of \cite{rump1979polynomial} with the mathematics of the Fuschian differential equations satisfied by periods which control the volumes of semialgebraic sets defined by BSS$_{\mathbb{Q}}$ machines \cite{lairez2019computing}. 

\section{Dynamical mechanisms for computation}

Throughout the paper, we have used features of families of dynamical systems to constrain the kinds of computation that they can encode.  In doing so, we have identified a number of dynamical `mechanisms' which facilitate or manifest certain types of computation.  In this section we collect and summarize these mechanisms with the hope that they will be useful in future inquiries, and that the list will be enlarged in future works.

\begin{enumerate}
    \item \textbf{Thickened Smale horseshoes provide robust shift maps for finite strings.} In our construction of robust Turing-universal CDS in Section~\ref{sec:example-of-robust-cds}, we considered a nonlinear variant of the Baker's map giving rise to a modified version of the Smale horseshoe.  Ordinarily, the Smale horseshoe provides a dynamical mechanism for implementing the shift map on bi-infinite string encoded into a two-dimensional Cantor set (see e.g.~\cite{cardona2021constructing} for an application involving the construction of a Turing-universal, but not robust, dynamical system on the disk).  The problem is that the ordinary Smale horseshoe does not give rise to robust computation in our sense, since the Cantor encoding encodes bi-infinite bit strings into individual points, which accordingly do not have a non-trivial interior.  However, when it comes to constructing a Turing machine, we only need to consider the set of two-sided bit strings with \textit{bounded} length, which is a countable set.  In particular, if a Turing machine starts with a tape that has a finite number of blank symbols, then at any finite time step it will still have only a finite number of blank symbols.  As such, we constructed an encoding of corresponding two-sided bit strings into \textit{closed sets with non-empty interior} associated with the Cantor construction, and then constructed a thickened version of the Smale horseshoe (via a nonlinear Baker's map) which permutes the aforementioned closed sets.  As such, the thickened Smale horseshoe provides a `robustified' version of the shift map on finite two-sided strings.
    \item \textbf{Attractors with topologically mixing dynamics limit memory storage.} In Section~\ref{sec:axiom-a}, we proved that Axiom A systems are not robustly Turing-universal.  One of the key steps of the proof is to use the spectral decomposition of Axiom A systems~\cite{smale1967differentiable, bowen2008equilibrium} to decompose the non-wandering set of the dynamics into a finite disjoint union of sets $\Omega_{i}$ on which (an iterate) of the dynamics is topologically mixing.  The structural stability of Axiom A systems~\cite{robbin1971structural, robinson1976structural, mane1987proof} in tandem with the stable manifold theorem for hyperbolic sets~\cite{smale1967differentiable, hirsch-pugh, bowen2008equilibrium} allows us to prove that if there is a bit string that we want to encode in the dynamics and store for all time, it will encode a point that gets soaked up into a dense orbit of an $\Omega_i$; as such the $\Omega_i$ soaks up one finite bit string's worth of memory.  Then the number of $\Omega_i$'s limits the memory storage capacity of any computation instantiated by an Axiom A system.  More broadly, topologically mixing dynamics in a subset of the configuration space of a dynamical system is highly constrained, forcing any encoded strings within that subset to dynamically map into one another.  So for example, if the halting set overlaps with a part of the subset of the configuration space that is topologically mixing, any other bit strings which ultimately land in that subset will at some time land in the halting set. 
    %Units of memory
    \item \textbf{Measure-preserving dynamics on a compact set prevents computation accessing infinitely many states.} In measure-preserving dynamics, a region of the configuration space which encodes a string can never shrink.  If the configuration space of the dynamical system is compact, this means that the forward orbit of the encoded set can only visit the encoded sets of a finite number of strings before it exhausts the size of the compact space.  Accordingly, in a measure-preserving system, the computation is such that the forward orbit of every string is a finite set.  This prevents us from encoding e.g.~the $\textsf{Plus}$ machine into measure-preserving dynamics (see Theorem~\ref{thm:measurepreserve}), thus obstructing Turing-universality as well as other kinds of computations.
    % Finite state space
    % contractive dynamics too
    \item \textbf{Periodic orbits can store memory.} Part of our intuition in Section~\ref{subsec:nonunivmeas1} is that periodic orbits can be used to store memories.  One way is to use the period of the orbit itself to store information.  For instance, consider a family of integrable Hamiltonian systems, such as pendulums with varying tensions.  The tension dictates the period of oscillations in a robust manner in the regime of small oscillations (i.e.~the harmonic oscillator regime), and thus the `memory' stored in the pendulum could be regarded as its period.  Even for fixed tension, in the nonlinear regime of large oscillations the period of the pendulum depends more strongly on the initial conditions, which can also serve to encode the `memory'.  These considerations generalize to integrable Hamiltonian systems more broadly, and also to dynamical systems with multiple periodic orbits of varying lengths in their configuration space.
    %Contingent on slowdown; memories
    \item \textbf{Exponential mixing forgets the initial input to the computation.} In Sections~\ref{sec:time-complexity} and~\ref{sec:time-complexity-2} where we discuss time complexity bounds of Anosov systems, we exploit exponential mixing properties to bound the halting time of encoded dynamics.  In essence, the exponential mixing gives quantitative bounds on the timescale at which an initial encoded region will ultimately overlap with another encoded region.  That is, if we start in a configuration $s$ encoded into a region of known size, we can bound how quickly that region maps into another one corresponding to $s'$ with a known size.  In other words, we are bounding how quickly $s$ will be mapped into $s'$ (e.g.~we can imagine $s'$ being in the halting set).  As such, the exponential mixing mediates how quickly the computation `forgets' the initial string and maps it onto another arbitrary string $s'$.
\end{enumerate}

\section{Open Problems}
In this final section, we describe some natural open problems and research directions in the theory of computational dynamical systems. The problems we outline below involve an interplay between aspects of circuit theory or real algebraic geometry, aspects of dynamical systems theory, and questions that are natural from a computer science perspective. We hope that these problems will be of interest to others.

\paragraph{Quantitative analysis of Anosov diffeomorphisms and related problems.}
In Section \ref{sec:axiom-a}, we showed that an Axiom A diffeomorphism, in any dimension, cannot robustly implement machines which have infinitely many cycles in their state space. In fact, the proof implies a sharp bound on the number of possible distinct cycles in the state space of any machine implemented by an Axiom A diffeomorphism. This naturally leads to the question of a more quantitative analysis of the computational power of Axiom A diffeomorphisms and maps, which we began to investigate in Section \ref{sec:time-complexity}.

Anosov diffeomorphisms, which are the simplest Axiom A diffeomorphisms, constitute an interesting case for questions in computational dynamical systems theory, because establishing complexity bounds on Anosov systems is connected to the theory of error correcting codes. On an informal level, the problem is whether it is possible to efficiently encode and decode states into strongly chaotic systems such that these chaotic systems, which statistically behave like pure random number generators which rapidly `forget' their state, are nonetheless able to perform sophisticated computations. For example: can an ideal hard-ball gas in a box serve as a computer, given the correct interpretation of its state space? In order to show that in some sense the answer is \emph{no}, one can use the language of this paper to formalize this problem, and a mathematical answer is not straightforward. 

\begin{remark}
    The formalization of computational dynamical systems in this paper is connected to to certain discussions in the analytic philosophy literature on the computational theory of mind,  referred to as the `computational triviality problem' of Hinkman, Searle, Putman, Chalmers, and others (e.g.~`Does a rock implement every finite-state automaton' \cite{chalmers1996does}; see also \cite{sprevak2018triviality} for  a review). In some sense this paper gives a mathematical formalization and partial resolution of this problem. We thank Rosa Cao for explaining this connection.
\end{remark}

As described in Section \ref{sec:time-complexity-2}, the basic reason we are not able to prove a quantitative complexity bound for Anosov diffeomorphisms is because the method of Theorem \ref{thm:time-complexity-1-d-b} relies on a lower bound of the size of a real-algebraic set in dimension $1$ (Proposition \ref{prop:real-algebraic-bound}). There appears to be no known analog of this statement for regions in $\R^n$ defined by real algebraic circuits if $n>1$.

\begin{problem}
\label{problem:semialgebraic-complexity-bound}
Give a generalization of Proposition \ref{prop:real-algebraic-bound} to the setting of semialgebraic subsets of $\R^n$. Namely, let $K$ be a connected component of a semialgebraic set defined by $m$ algebraic inequalities of $n$ variables $p_i(x) \geq 0$, $i =1, \ldots, m$, and let $w_1, \ldots, w_\ell$ be a collection of explicit functions of the coefficients of the polynomials $p_i$. Write $B(x, r)$ for the ball of radius $r$ around $x$. Give an explicit lower bound
\[ C(n, m, w_1, \ldots, w_\ell) \leq \inrad(K) = \sup_{r > 0: B(x, r) \subset K} r\]
that holds whenever $K$ has nonempty interior. 
\end{problem}

\noindent Unfortunately, without a methodological improvement, progress on Problem \ref{problem:semialgebraic-complexity-bound} will give at best tower-exponential bounds for the Anosov setting of Theorem \ref{thm:time-complexity-anosov}. However, note that this setting, while higher-dimensional, is in some ways \emph{simpler} than the setting of 1-dimensional Axiom A maps, since the Anosov diffeomorphism system is exponentially mixing everywhere. However, even though one has an excellent understanding of the dynamics of an Anosov diffeomorphism from a symbolic dynamics perspective (Appendix \ref{app:symbolic-dynamics}), the basic challenge is that the encoding and decoding of states in a CDS does not have to be \emph{compatible} with the symbolic dynamics in any elementary fashion.

\begin{remark}[Decidable halting problem for measure-preserving dynamics]
If Problem~\ref{problem:semialgebraic-complexity-bound} can be solved with an explicit bound, then it would imply that measure-preserving dynamics on a compact set can only encode Turing machines with a decidable halting problem.  The argument is as follows.  Suppose we have a putative CDS involving $f : M \to M$ which instantiates measure-preserving dynamics for a measure $\mu$ on a compact set $M$, and let $\mathcal{D}^{-1}(s) = C_s \subset M$ for some $s$.  The forward orbit of $C_s$ under $f$ can only intersect finitely many $C_{s'}$'s since $M$ is compact and $C_s$ has non-trivial interior.  As such, in the Turing machine the forward orbit of any $s$ can only visit a finite number of configurations.  Then the Turing machine initialized with $s$ either halts or enters into a periodic orbit.  Contingent on a solution to Problem~\ref{problem:semialgebraic-complexity-bound}, we have an explicit lower bound on the radius of a ball inside $C_s$ as a function of $|s|$.  Let us call the lower bound $r_{C_s}$.  Then the number of points in the forward orbit of $s$ is at most $\mu(M)/\mu(B_{r_{C_s}})$, and as such the halting problem is decidable for the CDS.
\end{remark}

\begin{problem}
\label{problem:anosov-algorithm-for-halting-time}
Given a CDS with an underlying differentiable dynamical system $f$ which is Anosov, is there an efficient algorithm which will predict how long it takes for the system to halt on an input string? For example, is there a polynomial-time algorithm for this problem? Here, we do not ask that $f$ is computable, and we are simply asking for the \emph{existence} of such an algorithm for any $f$, rather than for an algorithm for finding this algorithm given $f$.
\end{problem}

\begin{problem}
\label{problem:anosov-bound-on-halting-time}
Given a CDS with an underlying differentiable dynamical system $f$ which is Anosov, will it halt in polynomial time in the input size?
\end{problem}
\noindent Relatedly, one would like to improve the bound of Theorem \ref{thm:time-complexity-1-d-b} to something much better than the tower-exponential bound given in the theorem statement. This may be difficult without putting additional restrictictions on the architecture of the encoder and decoder, which we turn to next. 

\paragraph{Varying conditions on encoders and decoders.}
Much of the difficulty of the questions of the previous section (e.g.~Problems \ref{problem:anosov-algorithm-for-halting-time} and \ref{problem:anosov-bound-on-halting-time}) is that not so much is known about the theory of circuits, whether Boolean or real-algebraic. Indeed, even for a Cantor-like decoder (Definition \ref{def:cantor-like-decoder}) as used in Theorem \ref{thm:time-complexity-anosov}, the structure of the encoder and decoder is essentially controlled by an auxiliary high-dimensional real-semialgebraic dynamical system, which may have arbitrarily complex dynamics; this is odd when compared to the  comparatively simple (e.g.~Anosov, Axiom-A) dynamics of the system being encoded into! Thus, while our definition of a optimal complexity encoder-decoder is  `simple'  from the perspective of computer science, it still allows for extraordinarily complex dynamical behavior. 

In fact, with the current definitions of encoders and decoders used in this paper, Problems \ref{problem:anosov-algorithm-for-halting-time} and \ref{problem:anosov-bound-on-halting-time} have corresponding variants where $f$ is an integrable system, and already the corresponding results are not straightfoward to establish. Although one can can certainly establish such bounds by solving Problem \ref{problem:semialgebraic-complexity-bound},  a more fruitful strategy may be do modify the definition of a Cantor-like decoder in a natural manner such that better methods of proof become available. 

\begin{problem}
Find a natural strengthening of the notions of a Cantor-like decoder which forces it to behave `like' the robust Cantor decoder defined in Theorem \ref{thm:TuringUniversalCDS1}. This notion should not be too `specific', i.e.~it should allow for behavior significantly more general than that of the robust Cantor decoder; however, using this notion should make it possible to significantly strengthen the bounds in Problems \ref{problem:anosov-algorithm-for-halting-time} and \ref{problem:anosov-bound-on-halting-time}, and the other problems of this section.
\end{problem}

More generally, varying the condition on the encoders and decoders allowed for any given problem about computational
dynamical systems tends to highlight different aspects of the problem, and can make a problem either interesting, trivial, or extremely difficult. For example:

\begin{problem}
Prove Theorem  \ref{thm:generic-diffeo-not-universal-under-hierachical-shrinking} while dropping the shrinking condition on the decoder. 
\end{problem}

\noindent It is also natural to approach Conjecture \ref{conj:generic-maps-are-not-universal} with various conditions on the encoder-decoder pair. 

Finally, one can conceive of a weaker notion of robustness where one works with a hierachically shrinking decoder (Definition \ref{def:shrinking-decoders-and-so-forth}), and one only asks (in the notation of that definition) that if $\machine{T}^n(s) = s'$ then $\mathcal{D}((f^{\tau})^n(C'_{s_{\text{padded}}})) = C'_{s'}$, where $s_{\text{padded}}$ denotes the string $s$ prepended and appended with some number of zeros, i.e.~there is some amount of `zero padding' around $s$. This notion encodes the idea that one may need to `encode a string more and more accurately to simulate it for longer and longer'. Modifying the problems in this paper to utilize this latter notion highlights yet a different aspect of the computational properties of continuous dynamical systems, which deserves to be explored in future work. 

\paragraph{Problems on finite state machines and other complexity classes.}
Below, a finite state machine is taken to be a Turing machine that only moves to the right on its tape.  We expect that there are positive answers to the following two problems:
\begin{problem}
Is there an Axiom A diffeomorphism $f$ which is universal for finite state machines, e.g.~such that for any finite state machine $F$ there is a Cantor-like encoder-decoder pair which makes $f$ implement $F$? 
\end{problem}

\begin{problem}
Fix an analog of the Cantor decoder $\mathcal{D}$ (as well as an analog of the encoder $\mathcal{E}$) of Theorem \ref{thm:TuringUniversalCDS1}. For every finite state machine $F$, does there exists an Axiom A diffeomorphism $f_F: \R^2 \to \R^2$ such that $(f_F, \mathcal{E}, \mathcal{D}, \tau, F)$ is a robust CDS for $F$?
\end{problem}
\noindent These two problems would clarify the more traditional correspondence between Axiom A systems and finite state machines via symbolic dynamics.  From the perspective of computational dynamical systems, the class of Axiom A systems should be `Finite-State-Machine-complete'.

We can generalize these problems to other complexity classes and other classes of dynamical systems. For example:

\newcommand{\complexityC}{\mathbf{C}}
\newcommand{\complexityD}{\mathbf{D}}

\begin{problem}
\label{problem:C-universal-diffeomorphism}
Does there exist a single Anosov diffeomorphism $f: M \to M$  (where $M$ is a computable manifold) such that for every decision problem $L \in \complexityC$ (where $\complexityC$ is a complexity class like $\textnormal{\textbf{P}}$, $\textnormal{\textbf{NP}}$, $\textnormal{\textbf{NEXPSPACE}}$, etc.) there exists a Turing machine $\textnormal{\textsf{T}}$ solving $L$ and an optimal complexity encoder-decoder pair $\mathcal{E}, \mathcal{D}$ such that $f, \mathcal{E}, \mathcal{D}$ implements $\textnormal{\textsf{T}}$? In this case, we may say that $f$ is $\complexityC$-complete. For example, we can also ask if the Arnold cat map in particular is $\complexityC$-complete? \end{problem}

\noindent There is an obvious relationship between Problems \ref{problem:anosov-bound-on-halting-time} and Problem \ref{problem:C-universal-diffeomorphism}; however, the requirement that $f$ is $\complexityC$-complete is a much stronger property than e.g.~a runtime bound, and there may be correspondingly stronger results. If a given $f$ is \emph{not} $\complexityC$-complete then there is some problem in $\complexityC$ that $f$ cannot implement.  If a given class of dynamical systems $\texttt{D}$ contains no elements which are $\complexityC$-complete, then this gives a precise meaning to the idea that `systems in class $\texttt{D}$ cannot implement computations in class $\complexityC$', i.e. that the classical computational complexity class $\complexityC$ is `more complex' than the dynamics available in $\texttt{D}$. 

One of the most satisfying aspects of the basic theory of computational complexity is the existence of various complexity classes and the (only very partially understood)
inclusions and separations between these classes. In differentiable dynamics, researchers also consider a
‘complexity hierarchy’, but rather than being governed by computational considerations, this hierarchy is
structured largely by ‘infinitesimal’ or ‘dynamical’ conditions, e.g.~a loss of uniformity of hyperbolicity (see \cite{hasselblatt2006partially} for a textbook review, and \cite{pomeau1980intermittent} for the connections to intermittency and turbulence). 

The language above lets us formalize a problem which we feel is particularly interesting:
\begin{problem}
    \label{problem:intermediate complexity}
    Do there exist natural classes of differentiable dynamical systems which are universal for finite state machines (or another intermediate complexity class) but cannot be extended to a robustly Turing-universal CDSs?
\end{problem}
\noindent This problem highlights the significant gap between the dynamical perspective and the computational perspective. This paper was largely written with the motivation of rephrasing this conceptual gap into a series of precise mathemtatical problems. We hope that future researchers can  help clarify in a rigorous fashion the correspondence between dynamical and computational notions of complexity.

\subsection*{Acknowledgements}
We thank Will Allen, Peter G\'{a}cs, Boris Hasselblatt, Felipe Hern\'{a}ndez, and Jensen Suther for valuable discussions.

\appendix

\section{Real computation and BSS$_{\textsf{C}}$ machines}
\label{App:BSSC}

In~\cite{blum1989theory} Blum, Shub and Smale introduced what is now known as the BSS machine as a model of computation over an arbitrary ring $R$, notably including the reals $R = \mathbb{R}$.  In the setting where $R$ is a finite field, a BSS machine reduces to the usual paradigm of Turing machines.  A detailed exposition is given in the original paper~\cite{blum1989theory} as well as the book~\cite{blum1998complexity}, where BSS machines are presented in terms of computational graphs, although some other equivalent formulations are given.  For our purposes, we provide an equivalent formulation that emphasizes the connection with ordinary Turing machines.  To build up the definition, we require some prelimary definitions from real algebraic geometry and the theory of semialgebraic sets (see e.g.~\cite{bochnak2013real, coste2000introduction}).

First let us define semialgebraic subsets of $R^n$.  Since we will consider $R$ to be a finite field, $\mathbb{R}$, or Cartesian products thereof, we can suppose that $R$ is a real closed field with a canonical ordering.  We have the following definition.
\begin{definition}[Semialgebraic subsets of $R^n$, adapted from~\cite{bochnak2013real}]\label{def:semialg1}
A \textbf{semialgebraic subset} of $R^n$ is a subset of the form
\begin{align}
\bigcup_{i=1}^s \bigcap_{j=1}^{r_i} \{x \in R^n \,:\,f_{i,j}(x) \triangleright_{i,j} 0\}
\end{align}
where $f_{i,j} \in R[X_1,...,X_n]$ and $\triangleright_{i,j}$ is either $>$ or $=$, for $i = 1,...,s$ and $j = 1,...,r_i$.
\end{definition}
\noindent Having defined a semialgebraic subset, we can now define a semialgebraic function.
\begin{definition}[Semialgebraic function, adapted from~\cite{bochnak2013real, coste2000introduction}]\label{def:semialgfn1} Let $A \subseteq R^m$ and $B \subseteq R^n$ be two semialgebraic sets.  A mapping $f : A \to B$ is a \textbf{semialgebraic function} if its graph
\begin{align}
\Gamma_f = \{(x,y) \in A \times B\, : \, y = f(x)\}
\end{align}
is a semialgebraic subset of $R^m \times R^n$.
\end{definition}
\noindent While Definitions~\ref{def:semialg1} and~\ref{def:semialgfn1} may be slightly hard to parse upon an initial glance, they have a simple interpretation as remarked below.
\begin{remark}
A semialgebraic subset of $R^n$ is one that is carved out be a finite boolean combination of polynomial equalities and inequalities.  Similarly, a semialgebraic function is a function that is built out of a finite boolean combination of addition, multiplication, and inequality comparisons.
\end{remark}
\noindent Moreover, we notice the notion of semialgebraic subsets of functions becomes trivial in the setting that $R = \mathbb{Z}_d$ for some $d$.  We capture this in the following remark.
\begin{remark}[Semialgebraic subsets and functions for $R = \mathbb{Z}_d$]
If $R = \mathbb{Z}_d$, all subsets of $\mathbb{Z}_d^n$ are semialgebraic and thus all functions $f : A \to B$ for $A \subseteq \mathbb{Z}_d^m$ and $B \subseteq \mathbb{Z}_d^n$ are semialgebraic functions.  That is, the semialgebraic conditions put no constraints on subsets or functions in the setting of finite fields.
\end{remark}

Before proceeding, it is useful to consider a refinements of Definition~\ref{def:semialg1}.
% and~\ref{def:semialgfn1}.
\begin{definition}[$R'$--semialgebraic subsets of $R^n$]\label{def:semialg2}
Let $R' \subseteq R$ by a real closed subfield of $R$ with an ordering induced by that of $R$. 
Then an \textbf{$R'$--semialgebraic subset} of $R^n$ is a subset of the form
\begin{align}
\bigcup_{i=1}^s \bigcap_{j=1}^{r_i} \{x \in R^n \,:\,f_{i,j}(x) \triangleright_{i,j} 0\}
\end{align}
where $f_{i,j} \in R'[X_1,...,X_n]$ and $\triangleright_{i,j}$ is either $>$ or $=$, for $i = 1,...,s$ and $j = 1,...,r_i$.
\end{definition}
%\noindent We similarly have:
%\begin{definition}[$R'$--semialgebraic function]\label{def:semialgfn2} Let $A \subseteq R^m$ and $B \subseteq R^n$ be two semialgebraic sets.  A mapping $f : A \to B$ is an \textbf{$R'$--semialgebraic function} if its graph is an $R'$--semialgebraic subset of $R^m \times R^n$.
%\end{definition}

We are now equipped to define BSS machines and some useful generalizations thereof. Let us first define the BSS analog of a Turing machine over $R$, akin to Definition~\ref{def:Turingmachine1}.
\begin{definition}[$R$--Turing machine]\label{def:RTuringmachine1} An \textbf{$R$--Turing machine} is given by a triple $(Q, \Gamma, \delta)$ where $Q = R^m$ and $\Gamma = R$ and:
\begin{enumerate}
    \item $Q$ is the set of states, containing a start state labeled $q_0$ and at least one halt state labeled $q_{\text{\rm halt}}$;
    \item $\Gamma$ is the tape alphabet; and
    \item $\delta : Q \times (\Gamma \times \mathbb{Z}_2)\to Q \times (\Gamma \times \mathbb{Z}_2)
    \times \{\text{\rm L}, \text{\rm R}, \text{\rm S}\}$ is a semialgebraic function, interpreted as one from $R^m \times (R \times \mathbb{Z}_2) \to R^m \times (R \times \mathbb{Z}_2) \times \mathbb{Z}_3$.  We identify $\mathbb{Z}_2 \simeq \{0,1\}$ with $\{\sqcup, 1\}$ and call $\sqcup$ the blank symbol.  We additionally require that $\delta(q, \gamma, \sqcup) = \delta(q', \gamma', 1)$ and $\delta(q, \gamma, 1) = \delta(q'', \gamma'', 1)$ for all $q, \gamma$.  In other words, transitions always take blank symbols to non-blank symbols, and non-blank symbols to non-blank symbols.
\end{enumerate}
The configuration space $S$ of an $R$--Turing machine is given by $S := \Gamma^* \times Q \times \Gamma^*$, where a configuration is denoted by $s = x_1 \cdots x_{m-1} \,q\,x_m \cdots x_n$ for $x_i \in \Gamma \times \mathbb{Z}_2$ and $q \in Q$, with the understanding that all symbols to the left of $x_1$ are equal to $(\gamma_0, \sqcup)$ for some fixed $\gamma_0$, and all symbols to the right of $x_n$ are equal to $(\gamma_0, \sqcup)$ for the same fixed $\gamma_0$.  Our notation expresses that the head is above $x_m$, and that all of the symbols on the tape to the left of $x_1$ and to the right of $x_n$ are $(\gamma_0, \sqcup)$ which is `blank'.  We can define a map $\textnormal{\textsf{T}}_R : S \to S$ as follows.  If $\delta(q, x_m) = (q', x_m', a)$ for $a \in \{\text{\rm L},\text{\rm R}, \text{\rm S}\}$, then
\begin{align}
\textnormal{\textsf{T}}_R(s)
 = \begin{cases}
x_1 \cdots x_{m-2} \,q'\, x_{m-1} x_{m}' \cdots x_n &\text{if }a = \text{\rm L} \\
x_1 \cdots x_{m-1} \,q'\, x_{m}' \cdots x_n &\text{if }a = \text{\rm S} \\
x_1 \cdots x_m' \,q'\, x_{m+1} \cdots x_n &\text{if }a = \text{\rm R}
\end{cases},
\end{align}
which describes one time step of the $R$--Turing machine.
\end{definition}
\noindent The above immediately generalizes to the $k$-tape setting.  We make the following important remark about terminology:
\begin{remark}[BSS machine]
Definition~\ref{def:RTuringmachine1} is different than the definition of a BSS machine over $R$ in e.g.~\cite{blum1989theory, blum1998complexity}.  However, our $R$--Turing machine model is computationally equivalent to a BSS machine over $R$.  As such, we will sometimes refer to an $R$--Turing machine as a BSS machine over $R$.
\end{remark}
\noindent Moreover, we note that Definition~\ref{def:RTuringmachine1} reduces to the ordinary Turing machine setting of Definition~\ref{def:Turingmachine1} when $R$ is a finite field.

Considering the setting of $R$--Turing machines for $R = \mathbb{R}$, there is a notable feature of Definition~\ref{def:RTuringmachine1}.  Since $\delta : \mathbb{R}^m \times (\mathbb{R} \times \mathbb{Z}_2) \to \mathbb{R}^m \times (\mathbb{R} \times \mathbb{Z}_2) \times \mathbb{Z}_3$ is a semialgebraic function, it entails polynomials and polynomial inequalities involving a finite number of arbitrary real numbers (e.g.~the coefficients of the polynomials).  As emphasized in~\cite{braverman2005complexity}, BSS machines over $\mathbb{R}$ can have certain pathological features stemming from the aforementioned arbitrary real constants.  As a remedy, Braverman suggests constraining those real numbers to be computable.  Let us make this precise in our language.  First, we recall the definition of computable real numbers.
\begin{definition}[Computable real numbers~\textsf{C}]
The \textbf{computable real numbers} \textnormal{\textsf{C}} are real numbers $x \in \mathbb{R}$ such that for each $x$, there exists a ($\mathbb{Z}_2$--)Turing machine $\textnormal{\textsf{T}}_x$ with $\Gamma = \{0,1\}$ that has the following property.  For each $n \in \mathbb{N}$, if the Turing machine is given the initial configuration $q_{0}\,[n]_2$ where $[n]_2$ is the representation of $n$ in binary, then the Turing machine halts with the configuration $q_{\text{\rm halt}}\,[a(n)]_2$ where $\frac{a(n) - 1}{n} \leq x \leq \frac{a(n) + 1}{n}$.
\end{definition}
\noindent We note that $\textsf{C}$ is a real closed subfield of $\mathbb{R}$ with the ordering induced by $\mathbb{R}$.  As such, we have the following definition:
\begin{definition}[BSS$_{\textsf{C}}$ machine, after~\cite{braverman2005complexity}]\label{def:BSSC1} A \textbf{BSS$_\textnormal{\textsf{C}}$ machine} is an $\mathbb{R}$--Turing machine $(Q, \Gamma, \delta)$ such that the graph of $\delta$ is a $(\textnormal{\textsf{C}}^m \times (\textnormal{\textsf{C}} \times \mathbb{Z}_2) \times \textnormal{\textsf{C}}^m \times (\textnormal{\textsf{C}} \times \mathbb{Z}_2) \times \mathbb{Z}_3)$--semialgebraic subset of $\mathbb{R}^m \times (\mathbb{R} \times \mathbb{Z}_2) \times \mathbb{R}^m \times (\mathbb{R} \times \mathbb{Z}_2) \times \mathbb{Z}_3$, and the distinguished symbol $\gamma_0$ is an element of $\textnormal{\textsf{C}}$.  This is to say that the (finite number of) real numbers entailed in the definition $\delta$ are all computable.
\end{definition}
\noindent In the above Definition, we can canonically take $\gamma_0 := 0$.  We use the Definition of a BSS$_{\textsf{C}}$ machine extensively throughout the paper.

Another useful definition is the one below.
\begin{definition}[Finite state BSS$_{\textsf{C}}$ machine]\label{def:FSBSSC1}
A \textbf{finite state BSS$_\textnormal{\textsf{C}}$ machine} is a map $f : M_1 \times M_2 \to M_2$ where $M_1 \subseteq \mathbb{R}^{n_1}$, $M_2 \subseteq \mathbb{R}^{n_2}$, and such that the graph of $f$ is a ($M_1|_{\textnormal{\textsf{C}}} \times M_2|_{\textnormal{\textsf{C}}} \times M_2|_{\textnormal{\textsf{C}}}$)--semialgebraic subset of $M_1 \times M_2 \times M_2$.
Here $M_1|_{\textnormal{\textsf{C}}}$ and $M_2|_{\textnormal{\textsf{C}}}$ mean the restriction of $M_1$ and $M_2$ to their computable points.
\end{definition}
\noindent Note that we can think of an ordinary BSS$_{\textsf{C}}$ as a Turing machine with an $\mathbb{R}$--valued tape, where the `head' is a type of finite state BSS$_{\textsf{C}}$ machine.

\section{Complexity classes and dynamical systems}
\label{App:compclass}

In this appendix we explore the relationship between complexity classes and computational dynamical systems.  We focus on deterministic time and space complexity, and do not discuss the non-deterministic setting here.  First we review some standard complexity classes along the lines of~\cite{arora2009computational}.  Recall that a language $L$ over a finite alphabet $\Gamma$ is a subset $L \subseteq \Gamma^*$.  Then we have the following definition.

\begin{definition}[Turing machine deciding a language]
Consider a (single tape) Turing machine $(Q, \Gamma, \delta)$ with two halt states called $q_{\text{\rm accept}}$ and $q_{\text{\rm reject}}$.  We say that the Turing machine \textbf{decides} a language $L \subseteq \Gamma^*$ if, when initiated in the configuration $q_0\,x$, the Turing machine halts with the head in the state $q_{\text{\rm accept}}$ if and only if $x \in L$.  We further say that a Turing machine which decides $L$ has \textbf{runtime} $O(T(n))$ if for each $x \in L$ such that $|x| = n$, the Turing machine halts in time $O(T(n))$.
\end{definition}

\noindent With the above definition, we can define the following complexity classes.

\begin{definition}[\textbf{DTIME} complexity classes, adapted from~\cite{arora2009computational}]
\label{def:DTIME1} If $T : \mathbb{N} \to \mathbb{N}$ is a monotonically increasing function, we say that $L$ is in $\textnormal{\textbf{DTIME}}(T(n))$ if there is a Turing machine that decides $L$ with runtime $O(T(n))$.
\end{definition}

\noindent While in some cases it is natural to consider e.g.~\textbf{DTIME}$(n)$ or \textbf{DTIME}$(n^2)$, it is often most useful to consider the class of languages that is decidable by a Turing machine with any polynomial runtime.  To this end, we define the following:

\begin{definition}[\textbf{P} complexity class]
\label{def:P1}
We define $\textnormal{\textbf{P}} := \bigcup_{k \in \mathbb{N}} \textnormal{\textbf{DTIME}}(n^k)$.
\end{definition}
\noindent 

\noindent This complexity class \textbf{P} is one of the central objects of study in computational complexity theory.  It is also useful to define:
\begin{definition}[\textbf{EXPTIME} complexity class]
\label{def:EXPTIME1}
We define $\textnormal{\textbf{EXPTIME}} := \bigcup_{k \in \mathbb{N}} \textnormal{\textbf{DTIME}}(2^{n^k})$.
\end{definition}

In addition to considering time-bounded complexity, we can also consider space-bounded complexity.  As such, a spatial analog of Definition~\ref{def:DTIME1} is:

\begin{definition}[\textbf{DTIME} complexity classes]
\label{def:DSPACE1} Let $S : \mathbb{N} \to \mathbb{N}$ be a monotonically increasing function. We say that $L$ is in $\textnormal{\textbf{SPACE}}(S(n))$ if there is a Turing machine that decides $L$ for which the following condition holds.  For each $x \in L$ with $|x| = n$, when the Turing machine is initiated in the configuration $q_0\,x$, it halts with its head in the state $q_{\text{\rm accept}}$ with the head visiting $O(S(n))$ cells of the tape during the computation.
\end{definition}

\noindent Then the spatial analogs of Definitions~\ref{def:P1} and~\ref{def:EXPTIME1} are:

\begin{definition}[\textbf{PSPACE} complexity class, adapted from~\cite{arora2009computational}]
\label{def:PSPACE1}
We define $\textnormal{\textbf{PSPACE}} := \bigcup_{k \in \mathbb{N}} \textnormal{\textbf{SPACE}}(n^k)$.
\end{definition}

\begin{definition}[\textbf{EXPSPACE} complexity class]
\label{def:EXPSPACE1}
We define $\textnormal{\textbf{EXPSPACE}} := \bigcup_{k \in \mathbb{N}} \textnormal{\textbf{SPACE}}(2^{n^k})$.
\end{definition}

\noindent There are standard relationships between the time and space complexity classes (see e.g.~\cite{sipser2012introduction, arora2009computational}), but we will not review them here.

Now we turn to dynamical systems.  Let us consider e.g.~differentiable maps $f : M \to M$ for $M$ some fixed manifold $M$.  Then we have the following definitions.

\begin{definition}[Simulation by a collection of dynamical systems]
Let $\textnormal{\textbf{C}}$ be some collection of languages and $\textnormal{\texttt{D}}$ be some collection of differentiable dynamical systems $f : M \to M$ for some fixed $M$.  Then we say that $\textnormal{\textbf{C}}$ is simulated by $\textnormal{\texttt{D}}$ with $O(t(n))$--complexity if for each language $L \in \textnormal{\textbf{C}}$, there exists a dynamical system in $\textnormal{\texttt{D}}$ which extends to a $O(t(n))$--complexity CDS of a Turing machine that decides $L$.
\end{definition}
\noindent The above Definition allows us to formulate certain interesting mathematical statements in a compact manner. For instance, let $\textbf{R}$ be the set of recursive languages, i.e.~the set of languages decidable by a Turing machine.  Moreover, let $\texttt{DiffDisk}$ be the set of diffeomorphisms of the disk.  Then by Theorem~\ref{thm:TuringUniversalCDS1}, we find that $\textbf{R}$ is simulated by $\texttt{DiffDisk}$ with $\Theta(n)$ complexity.  We can also formulate questions like:
\begin{question}
Let $\texttt{ErgodicDisk}$ be the set of ergodic diffeomorphisms on the disk.  Then is $\textbf{P}$ or even $\textbf{EXPTIME}$ simulated by $\texttt{ErgodicDisk}$ with $\Theta(n)$ complexity?
\end{question}
\noindent Or similarly we can ask:
\begin{question}
Is $\textbf{PSPACE}$ or even $\textbf{EXPSPACE}$ simulated by $\texttt{ErgodicDisk}$ with $\Theta(n)$ complexity?
\end{question}
\noindent We anticipate that there are many other interesting questions along these lines.

\section{Symbolic dynamics}
\label{app:symbolic-dynamics}
In this appendix we give a review of symbolic dynamics.  An excellent textbook treatment can be found in \cite{kitchens2012symbolic}; see also \cite{kato2015global}.

Let $\Sigma_1$ be a finite alphabet (distinct from the alphabet $\Sigma$ associated to any CDS with underyling dynamical system $f$). The set $\Sigma_1^\Z$ of bi-infinite strings of symbols from $\Sigma_1$ has a natural topology that makes it homeomorphic to the Cantor set: one declares a sub-basis of open sets to be the collection $U_s$, where $s$ ranges over $\Sigma^*$ of even length, such that writing $s = s_{-m}\cdots s_{-1}s_0s_1\cdots s_{m-1}$ (where the length of $s$ is $2m$) one has that 
\[ U_s = \{s' \in \Sigma_1^\Z \,:\,s'_i = s_i \,\text{ for }\, i=-m, \ldots, m-1\}.\]
The shift map 
\[ \sigma: \Sigma_1^\Z \to \Sigma_1^\Z,\quad \sigma(s)_i = \sigma(s)_{i+1}\,,\]
is a continuous homeomorphism of $\Sigma_1^\Z$. By the previous description of the topology on $\Sigma^1_\Z$, one sees that closed $\sigma$-invariant subsets of $\Sigma^1_\Z$ correspond to those sets of bi-infinite strings which do not contain some collection of finite substrings.

A closed $\sigma$-invariant subset $F \subset \Sigma_1^\Z$ such that $F = U_{s_1}^c \cap \cdots \cap U_{s_k}^c$, i.e. the set of strings which do not contain any of a \emph{finite} list of banned substrings, is called a \emph{shift of finite type.}

Now, a map of shift spaces 
\[ \Sigma_1^\Z \supset F_1 \longrightarrow F_2 \subset \Sigma_2^\Z\]
is simply a continuous map $\phi: F_1 \to F_2$ that commutes with the shift operations on each side. It is a theorem \cite[Theorem 1.4.9]{kitchens2012symbolic} that a map of shift spaces is defined by a \emph{sliding block code}: that is, $\phi$ takes the form 
\[ \phi(s)_i = \bar{\phi}(s_{i-n}, \ldots, s_{i+m}) \,\text{ for some }\, \bar{\phi}: \Sigma_1^{n+m+1} \longrightarrow \Sigma_2\,.\]
A map of shift spaces is a \emph{factor map} if it is surjective. 

Let $F = U_{s_1}^c \cap \cdots \cap U_{s_k}^c$ be a shift of finite type. Supposing $N = \max_j |s_j|$, if we know the last $N-1$ symbols of a substring of $s$ then we know what the allowed possible current symbols are. Thus, using a sliding block code which simply introduces a new symbol corresponding to every element of $\Sigma_1^N$, we can encode the data of $F$ into a graph, with vertices given by allowed words of length $N$ in $F$, and edges given by allowed transitions (corresponding to forgetting the earliest symbol and adding an allowed current symbol). We can think of the vertices of this graph as states and the edges of this graph as transitions between states. This discussion shows that using the sliding block code above, one can find an isomorphism from $F$ to another shift of finite type $F_0 = U_{s'_1}^c \cap \cdots \cap U_{s'_k}^c$ with $|s'_j|=1$ for all $j$. 

Using this representation, we see that we can think of a shift space $F_{\text{FSM}}$ (where `FSM' stands for `finite state machine') which is a \emph{factor} of a shift of finite type as the collection of strings outputted by walks on such a graph, where we compute the output string via a sliding block code on the sequence of vertices visited on the walk.  In other words, the symbol output at time $t$ corresponds to the vertices visited at times $t-n, \ldots, t+m$. By building a new graph whose vertices are labeled by sequences of $n+m+1$ vertices of the previous graph and whose transitions correspond to forgetting the earliest vertex and going to a valid subsequent vertex, we see that this set $F_{\text{FSM}}$ is exactly the set of walks on a finite graph with certain symbols labeling the vertices where the \emph{same symbol} may label \emph{multiple vertices}. Thinking of the vertices as states and the symbols labeling the vertices as `outputs', we see that we have exactly the set of possible outputs of a finite state machine with no halting state. Thus factors of shifts of finite type, also known as `sofic systems', correspond to collections of strings which can be output by a nondeterministic finite state machine without a distinguished halting state.

This review of symbolic dynamics should convince the reader that factors of shifts of finite type are a convenient description of a finite state machine. Now, given an Axiom A system $f$, a theorem of Bowen gives us an analogous understanding of the dynamics of $f$ on each basic set $\Omega_i$ in terms of an associated shift of finite type, due to the existence of \emph{Markov partitions} for $f$. This is a fundamental result of Bowen which we we state below:

\begin{theorem}[Bowen \cite{bowen1970markov}]
Let $C$ be a hyperbolic set for $f$. 
There exists a shift of finite type $F$ on the alphabet $\Sigma_1$, and a continuous map
\[ \pi: F \to C \]
such that:
\begin{enumerate}
    \item $\pi$ is surjective;
    \item $\pi$ is finite-to-one, with the sizes of preimages being finite; and
    \item The map $\pi$ is defined as follows: there are a finite collection of \emph{rectangles} $R_{a_1}, \ldots R_{a_k}$ with $\Sigma_1 = \{a_1, \ldots, a_k\}$, which are closed subsets of $\Omega$ with nontrivial, nonoverlapping interiors, satisfying some axioms (those of a \emph{Markov partition}, see \cite{bowen1970markov}). Then $\pi$ is defined to be the map 
    \[ s \mapsto \pi(x), \,\text{ where }\, \{\pi(x)\} = \bigcap_{i \in \Z} f^{k}(R_{s_{-k}})\,.\]
    Moreover, on the set of $x \in \Omega_i$ such that $f^k(x)$ always lies in the interior of some rectangle, $\pi^{-1}(x)$ consists of a single element.
\end{enumerate}

\end{theorem}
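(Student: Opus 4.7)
The plan is to follow Bowen's original approach, which builds on Sinai's construction for Anosov diffeomorphisms. The starting ingredient is the \emph{local product structure} of the hyperbolic set $C$: by the stable manifold theorem (Proposition \ref{prop:stable-manifold-theorem}), for any sufficiently small $\delta > 0$, and any $x, y \in C$ with $d(x,y) < \delta$, the local stable manifold $W^s_{\text{loc}}(x)$ and the local unstable manifold $W^u_{\text{loc}}(y)$ intersect transversely in exactly one point, which we denote $[x,y] \in C$. A \emph{rectangle} is then a small closed subset $R \subset C$ of diameter at most $\delta$ such that $[x,y] \in R$ whenever $x,y \in R$; its stable and unstable ``sides'' are $W^s(x)\cap R$ and $W^u(x)\cap R$ for any $x \in R$. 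First, I would cover $C$ by finitely many small rectangles $\{P_\alpha\}$ (possible by compactness of $C$) with diameters small enough that $f$ sends each stable slice of $P_\alpha$ into a single local stable manifold and similarly for unstable slices, and with disjoint interiors (modulo a perturbation argument).

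The main step is to refine this initial cover into rectangles $R_{a_1}, \ldots, R_{a_k}$ satisfying the Markov property: if $x$ is in the interior of $R_{a_i}$ and $f(x)$ is in the interior of $R_{a_j}$, then $f(W^s(x,R_{a_i})) \subset W^s(f(x), R_{a_j})$ and $f^{-1}(W^u(f(x), R_{a_j})) \subset W^u(x, R_{a_i})$. Following Bowen, the refinement is performed by intersecting each $P_\alpha$ with all ``pushes'' $f^{-1}(P_\beta)$ and ``pulls'' $f(P_\gamma)$ along the relevant stable/unstable directions; one shows that when these intersections are taken with respect to the local product coordinates (as opposed to merely set-theoretic intersections), the resulting pieces are themselves rectangles and that their boundaries transform into one another cleanly under $f$, which is precisely the Markov property. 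Defining the transition matrix $A_{ij}=1$ iff $\mathrm{int}(R_{a_i})\cap f^{-1}(\mathrm{int}(R_{a_j})) \neq \emptyset$ then gives the shift of finite type $F \subset \Sigma_1^{\Z}$, and the coding map is
\[ \pi(s) = \bigcap_{k \in \Z} f^{-k}(R_{s_k}),\]
which is well-defined and a single point by the expansiveness of $f$ on $C$ (a standard consequence of hyperbolicity). Surjectivity follows because every $x \in C$ has some valid itinerary through the partition (using that the rectangles cover $C$ and handling boundary ambiguities by choice), continuity follows from the fact that nearby itineraries force points to lie in small intersections of rectangles (again expansiveness), and finite-to-one follows because ambiguity in the itinerary of $x$ arises only when some $f^k(x)$ lies on the boundary of a rectangle, of which there are only finitely many boundaries to lie on. The last clause, that points whose orbits avoid all rectangle boundaries have unique preimages, is then immediate.

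The hard part will be the refinement step producing the Markov property: the rectangles $R_{a_i}$ are typically not smooth and in fact have \emph{fractal boundary}, as noted in the introduction and in \cite{bowen1978markov}. Carefully showing that the set-theoretic operations used to refine the initial cover produce rectangles (i.e.~that the bracket operation $[\cdot,\cdot]$ closes up on the refined pieces) requires exploiting the local product structure in a delicate way, and controlling the sizes of the pieces so that images under $f$ fit into a single rectangle of the refined cover. I would follow Bowen's induction on refinements and use the shadowing lemma to show that any admissible sequence of symbols is realized by a genuine orbit, which is the key step for surjectivity of $\pi$. The continuity of the stable/unstable foliations on $C$ (which holds only H\"older continuously, even for smooth $f$) is sufficient for this construction since we only need to intersect local leaves, not differentiate across them.
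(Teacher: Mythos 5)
The paper does not prove this result; it is Bowen's theorem, stated with a citation to \cite{bowen1970markov}, so there is no proof of the paper's own to compare against. What you have written is a reasonable sketch of the standard Sinai--Bowen construction, and the overall architecture (local product structure, refinement of an initial cover of rectangles to obtain the Markov property, expansiveness giving well-definedness and injectivity away from boundaries, shadowing giving surjectivity) is correct.

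Two points worth flagging. First, your finite-to-one argument is too weak as stated: the fact that there are finitely many rectangle boundaries does \emph{not} by itself bound the number of admissible itineraries of a point $x$, because $x$ may have its entire orbit lying on boundaries, and a priori each ambiguity at each time step could multiply. The actual argument (Bowen's Theorem 12 in \cite{bowen1970markov}, or Theorem 3.18 in \cite{bowen2008equilibrium}) relies crucially on the Markov property: the boundary of the partition splits into $f$-invariant stable and unstable parts, $\partial^s$ and $\partial^u$, so that a boundary ambiguity at time $0$ already determines the ambiguity at all later times in the stable direction and at all earlier times in the unstable direction. The number of preimages is then bounded by (roughly) the square of the number of rectangles, and the counting is done once for the stable boundary and once for the unstable boundary, not time step by time step. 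Second, the local product structure you invoke in your first paragraph --- that $[x,y]\in C$ whenever $x,y\in C$ are close --- is \emph{not} automatic for an arbitrary hyperbolic set; it requires $C$ to be locally maximal (or equivalently to have local product structure). Bowen's theorem on Markov partitions is formulated for basic sets of Axiom A systems, which satisfy this; if you want to state the result for a bare hyperbolic set as the paper's version does, you should make this hypothesis explicit, since without it the bracket $[x,y]$ may escape $C$ and the rectangle construction breaks down.
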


\noindent Thus we see that although $f$ is a continuous dynamical system rather than a shift space, its dynamics are in a precise sense a factor of a shift of finite type.

\begin{remark}
The combination of the spectral decomposition and Bowen's result on Markov partitions for $f$ suggests immediately that $f$ behaves like a non-deterministic finite state machine. One of our motivations for this work was to make this mathematical result have a more precise interpretation from a computational perspective. Part of the challenge is that the sets $R_j$ of the Markov partition are generally \emph{not} computable subsets. 

Indeed, if there is only one basic set and it is equal to all of $M$, we say that $f$ is \emph{Anosov}. Already in the relatively simple case of linear Anosov diffeomorphisms of higher-dimensional tori, one can see that the Markov partition $R_j$ does not consist of computable subsets \cite{bowen1978markov}.  See Problem  \ref{problem:anosov-bound-on-halting-time} above for questions about the computational complexity of $f$ which existing dynamical results on Axiom A systems do not immediately resolve. 
\end{remark}

\section{Variations of the stable manifold theorem}
\label{App:stable-manifold-theorem-variants}
In this technical appendix we derive the variant of the stable manifold theorem for hyperbolic sets that we use (Proposition \ref{prop:stable-manifold-theorem}) from the result of Hirsch-Pugh \cite{hirsch-pugh}. We will freely use the statement of Theorem 3.2 of \cite{hirsch-pugh}. First, with notation $W_x$ as in Hirsch-Pugh, note that  $x \in W_x$, so \cite[Theorem 3.2(c)]{hirsch-pugh} implies that $W_x$ contains $W^s(x) \cap B(x, r_x)$ for some $r_x$, where $W^s(x)$ is defined as in our paper.  Since the $W_x$ are a continuous family of $C^k$ submanifolds (in the sense of \cite{hirsch-pugh}), the numbers $r_x$ can be chosen to vary continuously with $x$. By compactness of the hyperbolic set, this means that $r_x$ achieves a minimum $r_{\min}$. But by \cite[Theorem 3.2(c)]{hirsch-pugh} together with the fact that $f(W^s(x)) = W^s(f(x))$, this implies that $f^{-n}(W_{f^n(x)}) \subset W^s(x)$ contains $W^s(x) \cap B(x, r_{\min}/(K\lambda)^n)$ for some $K > 0$ and $\lambda < 1$. In particular we have that $\bigcup_n f^{-n}(W_{f^n(x)}) = W^s(x)$. Moreover, the same argument lets us generalize \cite[Theorem 3.2(b)]{hirsch-pugh} to our notion of the continuity of the manifolds $W^s(x)$. Indeed, $f^n(z) \in W_{f^n(x)}$ for all sufficiently large $n$ by Theorem \cite[Theorem 3.2(c)]{hirsch-pugh}, so we can take our $\phi: U_x \to C^\infty(D^r, M)$ to be $f^{-n} \circ \phi' \circ f^n$ with $\phi': U_{f^n(x)} \to C^\infty(D^r, M)$ the map produced by \cite[Theorem 3.2(b)]{hirsch-pugh}.

\section{Proof of polynomial root separation bound}
\label{App:root-separation}

In this appendix we give a self-contained treatment of the polynomial root separation bound used in the proof of Theorem~\ref{thm:Anosov1}.  Our desired bound is essentially an adaptation Theorem 3 of~\cite{rump1979polynomial}, and we will follow that proof closely.

Let us begin with some definitions which allow us to state the results and proofs.  Denote by $P(x)$ a polynomial in $\mathbb{C}[x]$ of degree $n > 0$ of the form
\begin{align}
P(x) = \sum_{k = 0}^n a_k\,x^k = a_n \prod_{k = 1}^n (x - \lambda_k)
\end{align}
where $a_n \not = 0$.  Above, the $\lambda_k$'s are the roots of $P$.  Then we define the minimal root separation as follows.
\begin{definition}[Minimal root separation]
We denote by $\text{\rm rsep}(P)$ the minimal root separation of $P$, namely
\begin{align}
\text{\rm rsep}(P) := \min\{|\lambda_i - \lambda_j|\,\,\,\text{\rm for real }\,\lambda_i \not = \lambda_j\}\,.
\end{align}
We further denote
\begin{align}
\text{\rm rsep}_{[-1,1]}(P) := \min\{|\lambda_i - \lambda_j|\,\,\,\text{\rm for real }\,\lambda_i \not = \lambda_j\,\,\,\text{\rm and }\,|\lambda_i|, |\lambda_j| \leq 1\}\,.
\end{align}

\end{definition}
\noindent We will also need to define a $p$-seminorm on polynomials.
\begin{definition}[$p$-seminorm on polynomials] For $P \in \mathbb{C}[x]$, we define the $p$-seminorm $|P|_p := \left(\sum_{k = 0}^n |a_k|^{p}\right)^{1/p}$. Additionally consider $Q \in \mathbb{C}[x,y]$, and suppose $Q$ takes the form $Q(x,y) = \sum_{k = 0}^n \sum_{\ell = 0}^m a_{k,\ell}\,x^k y^\ell$.  Then we further define the $p$-seminorm $|Q|_{p} := \left(\sum_{k = 0}^n \sum_{\ell = 0}^m |a_{k,\ell}|^p\right)^{1/p}$.
\end{definition}
\begin{remark}
Note that $\mathbb{C}[x], \mathbb{C}[y] \subset \mathbb{C}[x,y]$, and as that the $p$-seminorm on $\mathbb{C}[x,y]$ induces the $p$-seminorm on $\mathbb{C}[x]$ and $\mathbb{C}[y]$.  As such, we not distinguish between these seminorms.
\end{remark}
\noindent We also use the following definitions to simplify notation.
\begin{definition}[Standardized polynomial]
We say that a polynomial $P$ is \textbf{standardized} if all of its non-zero coefficients are greater than or equal to $1$. For any $P(x)$, we let
\begin{align}
\widetilde{P} := P/\min\{1,\min\{|a_i|\, : \, a_i \not = 0\}\}\,,
\end{align}
which is standardized.  Indeed, if $P$ is already standardized, then $\widetilde{P} = P$.
\end{definition}
\begin{definition}[Size of a polynomial]
We define the \textbf{size} of $P$ to be $s(P) := |P|_1$.  We further let $\tilde{s}(P) := s(\widetilde{P})$, noting that $\tilde{s}(P) \geq s(P)$.    
\end{definition}
\noindent Finally, we also need to define the resultant of two polynomials.
\begin{definition}[Resultant of two polynomials]
Let $P,Q \in \mathbb{C}[x]$ with $P(x) = \sum_{k = 0}^n a_k\,x^k = a_n \prod_{k = 1}^n (x - \lambda_k)$ and $Q(x) = \sum_{\ell = 0}^m b_\ell\,x^\ell = b_m \prod_{\ell = 1}^m (x - \mu_\ell)$ with $a_n, b_m \not = 0$.  Then the \textbf{resultant} of $P$ and $Q$ is
\begin{align}
\text{\rm res}(P,Q) := a_n^m b_m^n \prod_{k = 1}^n \prod_{\ell = 1}^m (\lambda_k - \mu_\ell)\,.
\end{align}
The resultant has the equivalent expressions $\text{\rm res}(P,Q) = a_n^m \prod_{k = 1}^n Q(\lambda_k) = (-1)^{mn} b_m^n \prod_{\ell = 1}^m P(\mu_\ell)$.
\end{definition}

With the above definitions at hand, we can state the desired bound.
\begin{theorem}[Adapted from Theorem 3 of~\cite{rump1979polynomial}]\label{thm:mainrootbound1}
Let $P \in \mathbb{C}[x]$ have degree $n$.  Then we have the root separation bound
\begin{align}
\label{E:mainrootbound1}
\text{\rm rsep}_{[-1,1]}(P) \geq \frac{2\sqrt{2}}{n^{n/2 + 1} (\tilde{s}(P) + 1)^n}\,.
\end{align}
\end{theorem}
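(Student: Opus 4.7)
\medskip

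\noindent\textbf{Proof Plan.} The plan is to follow the classical Mignotte-Rump approach, which combines a resultant construction with Hadamard-type inequalities and a Cauchy lower bound on polynomial roots. First I would reduce to the standardized case by replacing $P$ with $\widetilde{P}$: since $P$ and $\widetilde{P}$ differ by a nonzero scalar they have identical roots, so $\text{rsep}_{[-1,1]}(P) = \text{rsep}_{[-1,1]}(\widetilde{P})$, while by construction $s(\widetilde{P}) = \tilde{s}(P)$. After this reduction we may assume $P$ is standardized, so each nonzero coefficient has absolute value at least $1$; in the intended application (where this theorem is invoked to prove Proposition~\ref{prop:real-algebraic-bound}), this amounts to assuming integrality of the coefficients, which is essential for the lower bound on $R(0)$ below.

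Second, given two distinct real roots $\lambda_1, \lambda_2 \in [-1,1]$ with separation $\delta = \lambda_1 - \lambda_2 \neq 0$, I would use the divided-difference identity
\[
P(x + \delta) - P(x) \;=\; \delta\cdot Q(x,\delta),\qquad Q(x,\delta) \;=\; \sum_{k=1}^{n}a_k\sum_{j+\ell=k-1}\binom{k}{\ell+1}x^j\delta^\ell,
\]
so that $Q(\lambda_2, \delta) = 0$. The key observation is that $P(x)$ and $Q(x,\delta)$ share a common root at $x = \lambda_2$ when $\delta$ takes the value $\lambda_1 - \lambda_2$. I would therefore define
\[
R(\delta) \;:=\; \mathrm{res}_x\bigl(P(x),\,Q(x,\delta)\bigr) \;\in\; \mathbb{C}[\delta],
\]
which vanishes at $\delta = \lambda_1 - \lambda_2$ (and more generally at each $\lambda_i - \lambda_j$ with $i \neq j$).

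Third, I would bound the nonzero roots of $R$ from below via Cauchy's bound: any nonzero root $\delta^*$ of $R(\delta) = \sum_k R_k \delta^k$ satisfies $|\delta^*| \geq |R(0)|/(|R(0)| + \max_{k\geq 1}|R_k|)$. The value $R(0) = \mathrm{res}_x(P(x), P'(x))$ is, up to a factor of $a_n^{\pm 1}$, the discriminant of $P$; in the standardized/integer-coefficient regime with $P$ having simple roots, integrality forces $|R(0)| \geq 1$. The coefficients $R_k$ are upper bounded by applying Hadamard's inequality to the Sylvester matrix of $P$ and $Q$, whose entries are in turn controlled by $s(P)$ (using $|\lambda_i| \leq 1$ to convert binomial factors into powers of $2$ bounded by a polynomial factor in $n$). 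Tracking constants carefully through this estimation is what yields the factors $n^{n/2+1}$ and $(\tilde{s}(P)+1)^n$ in the denominator, with the numerator $2\sqrt{2}$ arising from the standard normalization in Cauchy's bound.

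\medskip

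\noindent The main obstacle is the precise bookkeeping in step three: Hadamard's inequality applied to the $(2n{-}1)\times(2n{-}1)$ Sylvester matrix naturally produces a factor of $n^{n/2}$ from the row-norm bound and a factor $(\tilde{s}(P)+1)^n$ from the column entries (which are sums of at most $n+1$ coefficients of magnitude at most $\tilde{s}(P)$), but matching the stated constants exactly requires care—in particular, correctly absorbing the binomial coefficients into a clean bound, and ensuring the lower bound $|R(0)| \geq 1$ (which tacitly relies on $P$ being an integer polynomial after standardization) is properly invoked. This integrality assumption is implicit: without it, no universal separation bound of this form holds, as can be seen by considering $P(x) = (x-r)(x-r-\epsilon)$ with $r, \epsilon$ both irrational and small—the polynomial is standardized but its root separation can be arbitrarily tiny. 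Thus my plan treats the hypothesis ``$P \in \mathbb{C}[x]$'' as being specialized to the integer-coefficient case in which the theorem is actually applied.
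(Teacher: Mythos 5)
You take a genuinely different route from the paper. The paper follows Rump's original argument verbatim: from two nearest real roots $\alpha<\beta$ of the standardized $\widetilde P$ in $[-1,1]$, it produces a critical point $\gamma\in(\alpha,\beta)$ by Rolle's theorem, Taylor-expands to get $(\gamma-\beta)^2\,|\widetilde P''(\omega)| = 2\,|\widetilde P(\gamma)|$, then lower-bounds $|\widetilde P(\gamma)|$ via the single-variable resultant $R(y)=\mathrm{res}_x\bigl(\widetilde P'(x),\,y-\widetilde P(x)\bigr)$ together with Hadamard's determinant bound and Cauchy's root bound (Lemma~\ref{lemma:rump2} and Corollary~\ref{corr:rump3}), and finally bounds $|\widetilde P''(\omega)|\le n^2|\widetilde P|_1$ using $|\omega|\le 1$. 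You instead propose the classical Mahler-style construction, resulting $P$ directly with its divided difference in $\delta$, so that $R(\delta)$ vanishes at all root differences, bypassing Rolle and the intermediate critical point entirely. Both routes ultimately reduce to the same two tools — a Hadamard upper bound on the resultant's coefficients and a Cauchy lower bound on its roots — but since your resultant is built differently, you should not expect to reproduce the exponents $n^{n/2+1}$ and $(\tilde{s}(P)+1)^n$ without re-tuning the bookkeeping; if exact matching with the stated bound is the goal, the Rolle-based route is likely shorter.

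Your observation about integrality is correct and in fact uncovers a gap in the paper's own argument. The proof of Lemma~\ref{lemma:rump2} asserts that the lowest nonzero coefficient $r_p$ of $R(y)$ satisfies $|r_p|\ge 1$ ``because $\widetilde P$ and $\widetilde Q$ are standardized,'' but this does not follow over $\R$. For instance, $\widetilde P(x)=4x^2-4(1+\epsilon)x+(1+2\epsilon)$ is standardized, has both roots $\tfrac12$ and $\tfrac12+\epsilon$ in $[-1,1]$, and with $\widetilde Q=\widetilde P'=8x-4(1+\epsilon)$ one computes $R(y)=64(y+\epsilon^2)$, whose constant coefficient $64\epsilon^2$ is arbitrarily small; the lemma's conclusion then demands $\epsilon^2 \ge \{(|\widetilde P|_1+1)^n|\widetilde Q|_1^m+1\}^{-1}$, a fixed positive constant, which fails for small $\epsilon$. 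So the theorem as stated for general $\mathbb{C}[x]$ is too strong: the integrality hypothesis of Rump's original Theorem~3, which is in fact assumed in the paper's downstream application (Proposition~\ref{prop:real-algebraic-bound}), must be retained. Your plan, which makes the reliance on $|R(0)|\ge 1$ explicit, is the more careful formulation of where integrality enters.
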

\noindent To establish this Theorem, we require a number of lemmas.  We begin with a generalization of Hadamard's bound.
\begin{lemma}[Adapted from Theorem 1 of~\cite{collins1974minimum}]\label{lemma:Hadamard1} Let $M$ be an $N \times N$ matrix with entries valued in $\mathbb{C}[x,y]$ (or just $\mathbb{C}[x]$ or $\mathbb{C}[y]$). If $M_j$ denotes the $j$th row of $M$, then we have the bound
\begin{align}
\label{E:Hadamard1}
|\det(M)|_1 \leq \prod_{j = 1}^N |M_j|_1\,.
\end{align}
\end{lemma}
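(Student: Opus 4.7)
The plan is to reduce the inequality to two elementary facts about the coefficient ($\ell^1$) seminorm on polynomials: (i) subadditivity, $|P+Q|_1 \leq |P|_1 + |Q|_1$, and (ii) submultiplicativity, $|PQ|_1 \leq |P|_1 \cdot |Q|_1$. Both are one-line consequences of the convolution formula for the coefficients of a product together with the triangle inequality for $|\cdot|$ on $\mathbb{C}$, and both extend verbatim from $\mathbb{C}[x]$ to $\mathbb{C}[x,y]$ since $|\cdot|_1$ is just the $\ell^1$ norm on the underlying coefficient array.

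First I would expand the determinant via the Leibniz formula,
\[
\det(M) \;=\; \sum_{\sigma \in S_N} \mathrm{sgn}(\sigma) \prod_{j=1}^N M_{j,\sigma(j)}.
\]
Applying subadditivity to the outer sum and submultiplicativity inside each product yields
\[
|\det(M)|_1 \;\leq\; \sum_{\sigma \in S_N} \prod_{j=1}^N |M_{j,\sigma(j)}|_1 \;=\; \mathrm{per}(A),
\]
where $A$ is the $N \times N$ non-negative real matrix defined by $A_{j,i} := |M_{j,i}|_1$.

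The last step is to bound the permanent of the non-negative matrix $A$ by the product of its row sums. This is immediate: expanding the product of row sums gives
\[
\prod_{j=1}^N \Bigl(\sum_{i=1}^N A_{j,i}\Bigr) \;=\; \sum_{f:[N]\to[N]} \prod_{j=1}^N A_{j,f(j)},
\]
a sum over \emph{all} functions $f$, while $\mathrm{per}(A)$ is the restriction of the same sum to bijections. Since every term is non-negative, $\mathrm{per}(A) \leq \prod_j \sum_i A_{j,i} = \prod_j |M_j|_1$, which combined with the previous inequality is exactly~\eqref{E:Hadamard1}.

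None of the steps is really an obstacle; the only point to verify with any care is that subadditivity and submultiplicativity of $|\cdot|_1$ genuinely hold in the bivariate setting $\mathbb{C}[x,y]$ (and not merely for univariate polynomials), but since the bivariate $|\cdot|_1$ is the $\ell^1$ norm on the two-index coefficient array and polynomial multiplication in two variables is a two-index convolution, the same proofs apply without modification.
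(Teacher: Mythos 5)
Your proof is correct, but it takes a genuinely different route from the paper's. The paper argues by induction on $N$, expanding $\det(M)$ by cofactors along a single line (the first row/column) and applying the inductive hypothesis to the $N\times N$ minors; the only seminorm facts invoked are subadditivity and submultiplicativity (the paper labels the step ``Cauchy-Schwarz,'' but it is really just the triangle inequality for $|\cdot|_1$ combined with $|PQ|_1\le |P|_1|Q|_1$). You instead expand $\det(M)$ via the Leibniz permutation formula, pass to the permanent of the nonnegative matrix $A_{j,i}=|M_{j,i}|_1$, and then bound the permanent by the product of row sums by enlarging the sum from bijections to all functions $[N]\to[N]$. Both arguments rest on the same two elementary facts about $|\cdot|_1$, and both extend verbatim to $\mathbb{C}[x,y]$ as you note. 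Your version is non-inductive and in one pass, which some may find cleaner; the paper's induction is the more standard Hadamard-style presentation and keeps the argument closer to the original reference. In particular, your observation that the bound is really a permanent inequality for the matrix of coefficient norms is a nice structural reformulation that the paper's proof does not make explicit. No gaps.
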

\begin{proof}
We proceed with a proof by induction, noting that the $N = 1$ case holds automatically.  Supposing~\eqref{E:Hadamard1} holds for $N$, let us establish the bound for $N+1$.  Letting $M_{ij}'$ be the submatrix of $M$ with row $i$ and column $j$ removed, we have $\det(M) = \sum_{i = 1}^{N+1} (-1)^{i+1} M_{i1}\,\det(M_{i1}')$ and so
\begin{align}
|\det(M)|_1 \leq \sum_{i = 1}^{N+1}|M_{i1}|_1\,|\det(M_{i1}')|_1 \leq \prod_{j = 2}^{N+1} |M_j|_1 \sum_{i = 1}^{N+1}|M_{i1}|_1\,.
\end{align}
In the first inequality we have used Cauchy-Schwarz, and in the second inequality we applied~\eqref{E:Hadamard1} to $\det(M_{i1}')$.  Since $\sum_{i = 1}^{N+1}|M_{i1}|_1  = |M_1|_1$, we have established the bound for $N+1$, which completes the proof.
\end{proof}
\noindent Next we use the bound in the previous Lemma to bound a particular resultant of two polynomials.
\begin{lemma}[Adapted from Theorem 2 of~\cite{collins1974minimum}]\label{lemma:Hadamard2}
Let $R(y) := \text{\rm res}(Q(x), y - P(x))$ be the resultant with respect to $x$, so that $R(y)$ is a polynomial in $\mathbb{C}[y]$.  Then we have the inequality
\begin{align}
|R|_1 \leq (|P|_1 + 1)^m |Q|_1^n\,.
\end{align}
% where $|\,\cdot\,|_1$ on the left-hand side means the $1$-seminorm on $\mathbb{C}[y]$, and $|\,\cdot\,|_\infty$ on the right-hand side means the $\infty$-seminorm on $\mathbb{C}[x]$.
\end{lemma}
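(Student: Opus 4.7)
The plan is to express $R(y)$ as the determinant of the Sylvester matrix associated to $Q(x)$ and $y - P(x)$, viewed as polynomials in $x$ with coefficients in $\mathbb{C}[y]$, and then bound that determinant using the generalized Hadamard-type inequality in Lemma~\ref{lemma:Hadamard1}.

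First I would recall the standard Sylvester-matrix formula for the resultant: for polynomials $A(x) = \sum_{i=0}^a \alpha_i x^i$ and $B(x) = \sum_{j=0}^b \beta_j x^j$ over any commutative ring, $\text{\rm res}_x(A,B)$ is the determinant of the $(a+b) \times (a+b)$ Sylvester matrix whose first $b$ rows are shifted copies of $(\alpha_a, \ldots, \alpha_0, 0, \ldots, 0)$ and whose remaining $a$ rows are shifted copies of $(\beta_b, \ldots, \beta_0, 0, \ldots, 0)$. Applying this with $A = Q$ (of degree $m$ in $x$) and $B = y - P(x)$ (of degree $n$ in $x$, with coefficients in $\mathbb{C}[y]$), I obtain an $(m+n)\times(m+n)$ matrix $M$ with entries in $\mathbb{C}[y]$. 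Its first $n$ rows consist of (shifted, zero-padded) coefficients of $Q$, all of which lie in $\mathbb{C}$; its remaining $m$ rows consist of (shifted, zero-padded) coefficients of $y - P(x)$, and the only entry in each such row that actually depends on $y$ is the constant term of $y - P(x)$, namely $y - p_0$.

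Next I would estimate the $1$-seminorm of each row of $M$. Each of the first $n$ rows has $1$-seminorm equal to $|Q|_1$, since zero-padding does not affect the $1$-norm. Each of the remaining $m$ rows contributes $|y - p_0|_1 = 1 + |p_0|$ from the constant term of $y - P(x)$, together with $|p_k|$ for $k = 1, \ldots, n$, for a total row $1$-seminorm of $1 + |P|_1$. Invoking Lemma~\ref{lemma:Hadamard1} with $N = m+n$ then yields
\[
|R|_1 \;=\; |\det M|_1 \;\leq\; \prod_{i=1}^{m+n} |M_i|_1 \;=\; |Q|_1^{\,n}\,(|P|_1 + 1)^{m},
\]
which is the claimed bound.

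The only genuine care required is a bookkeeping check of the Sylvester-matrix orientation: one must verify that the number of rows carrying coefficients of $Q$ is $n$ (the degree of $y - P(x)$ in $x$), while the number of rows carrying coefficients of $y - P(x)$ is $m$ (the degree of $Q$), so that the exponents attach to the correct factors in the final product. Once that convention is fixed, the rest is a direct application of Lemma~\ref{lemma:Hadamard1}, and no further analytic input is needed.
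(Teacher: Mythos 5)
Your proof is correct and follows essentially the same route as the paper: both express $R(y)$ as the determinant of the Sylvester matrix of $Q(x)$ and $y - P(x)$ over $\mathbb{C}[y]$ and then apply Lemma~\ref{lemma:Hadamard1} row (or column) by row. The only cosmetic difference is that you use the standard row-oriented Sylvester layout while the paper writes the transposed matrix and bounds column $1$-seminorms; since $\det M = \det M^{\mathsf T}$, these are interchangeable, and your bookkeeping of the $n$ rows of weight $|Q|_1$ and $m$ rows of weight $|P|_1 + 1$ is right.
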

\begin{proof}
The resultant $R(y) := \text{\rm res}(Q(x), y - P(x))$ with respect to $x$, namely $R(y) = b_{m}^n\prod_{\ell = 1}^m (y - P(\mu_\ell))$, can be expressed as the determinant of the $(m + n) \times (m + n)$ matrix
\begin{align}
\label{E:bigmatrix1}
M = \begin{bmatrix}
b_m & 0 & \cdots & 0 & -a_n & 0 & \cdots & 0 \\
b_{m-1} & b_m & \cdots & 0 & -a_{n-1} & -a_n & \cdots & 0 \\
b_{m-2} & b_{m-1} & \ddots & 0 & -a_{n-2} & -a_{n-1} & \ddots & 0 \\
\vdots & \vdots & \ddots & b_m & \vdots & \vdots & \ddots & -a_n \\
b_0 & b_1 & \cdots &\vdots & y-a_0 & -a_1 & \cdots & \vdots \\
0 & b_0 & \ddots & \vdots & 0 & y-a_0 & \ddots & \vdots \\
\vdots & \vdots & \ddots & b_1 & \vdots & \vdots & \ddots & -a_1 \\
0 & 0 & \cdots & b_0 & 0 & 0 & \cdots & y - a_0
\end{bmatrix}
\end{align}
with entries values in $\mathbb{C}[y]$.  That is, $R(y) = \det(M)$.  Let $M_j$ denote the $j$th column of $M$.  Since Lemma~\ref{lemma:Hadamard1} gives $|\det(M)|_1 \leq \prod_{i=1}^{m+n} |M_i|_1$, examining~\eqref{E:bigmatrix1} we have
\begin{align}
|R|_1 = |\det(M)|_1 \leq  \prod_{j=1}^{m+n} |M_j|_1 = |y - P(x)|_1^m \,|Q|_1^n \leq (|P|_1 + 1)^m |Q|_1^n\,,
\end{align}
which gives the desired bound.
\end{proof}
\noindent Additionally, we need a classical result due to Cauchy.
\begin{lemma}[Cauchy's root bound, see e.g.~\cite{hirst1997bounding}]\label{lemma:Cauchyroot1}
If $\lambda$ is a root of $P \in \mathbb{C}[x]$, then $|\lambda| \leq \frac{|P|_\infty}{a_n} + 1$.
\end{lemma}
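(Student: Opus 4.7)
The plan is to split into two cases based on the magnitude of the root $\lambda$ relative to $1$. In the first (trivial) case, if $|\lambda| \leq 1$, then since $a_n \neq 0$ and $|P|_\infty \geq 0$, we have $\frac{|P|_\infty}{|a_n|} + 1 \geq 1 \geq |\lambda|$ and the bound holds immediately. So the substantive case is when $|\lambda| > 1$.

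For the nontrivial case, the idea is to exploit the defining equation $P(\lambda) = 0$, which rearranges to
\[ a_n \lambda^n = -\sum_{k=0}^{n-1} a_k \lambda^k. \]
Taking absolute values and applying the triangle inequality on the right yields $|a_n|\,|\lambda|^n \leq \sum_{k=0}^{n-1} |a_k|\,|\lambda|^k$. Bounding each $|a_k|$ uniformly by $|P|_\infty$ (interpreting $|P|_\infty$ as $\max_{0 \leq k < n} |a_k|$, per the standard Cauchy bound) turns the right-hand side into a finite geometric series $|P|_\infty \sum_{k=0}^{n-1} |\lambda|^k$, which, since $|\lambda| > 1$, evaluates to $|P|_\infty \cdot \frac{|\lambda|^n - 1}{|\lambda|-1}$.

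The next step is to dominate the geometric sum by the slightly larger quantity $\frac{|\lambda|^n}{|\lambda|-1}$, yielding $|a_n|\,|\lambda|^n \leq \frac{|P|_\infty\,|\lambda|^n}{|\lambda|-1}$. Dividing both sides by $|\lambda|^n > 0$ and by $|a_n|$, then rearranging, gives the desired inequality $|\lambda| \leq \frac{|P|_\infty}{|a_n|} + 1$. Combining the two cases completes the proof.

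There is no real obstacle here: the argument is a classical one-line manipulation, with the only mild subtlety being the correct interpretation of $|P|_\infty$ (the paper's convention should be read as the maximum of the lower-order coefficients, or equivalently all coefficients, since the $k = n$ term has already been isolated on the left-hand side). No further input from the earlier developments of the excerpt is needed.
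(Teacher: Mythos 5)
Your proof is correct and follows essentially the same approach as the paper: isolate $a_n\lambda^n$ using $P(\lambda)=0$, apply the triangle inequality, bound the tail by a geometric series, and dominate that by $\frac{|\lambda|^n}{|\lambda|-1}$ in the case $|\lambda| > 1$, with the case $|\lambda| \le 1$ being immediate. One small observation: the paper's proof has a typo, writing ``If $|\lambda| \leq 1$'' before carrying out precisely the geometric-series computation that requires $|\lambda| > 1$; your write-up states the case split correctly.
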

\begin{proof}
Since $\lambda$ is a root, we have $|a_n|\,|\lambda^n| = \left|\sum_{k=0}^{n-1} a_k \lambda^k\right|$.  If $|\lambda| \leq 1$, then
\begin{align}
\left|\sum_{k=0}^{n-1} a_k \lambda^k\right| \leq \max_k |a_k| \sum_{k=0}^{n-1} |\lambda^k| \leq \max_k |a_k| \frac{|\lambda|^n}{|\lambda| - 1}
\end{align}
and so altogether $|a_n| |\lambda^n| \leq \max_k |a_k| \frac{|\lambda|^n}{|\lambda| - 1}$ which implies $|\lambda| \leq \frac{|P|_\infty}{|a_n|} + 1$.  This inequality also holds for $|\lambda| \leq 1$ due to the $1$ in the right-hand side.
\end{proof}

The two previous Lemmas are useful to prove the following Lemma, which is central to the proof of Theorem~\ref{thm:mainrootbound1}:
\begin{lemma}[Adapted from Lemma 2 of~\cite{rump1979polynomial}]\label{lemma:rump2}
Let $\widetilde{P}$ and $\widetilde{Q}$ be standardized polynomials of degrees $n$ and $m$, respectively.  If for some $\mu$ we have $\widetilde{P}(\mu) \not = 0$ but $\widetilde{Q}(\mu) = 0$, then
\begin{align}
|\widetilde{P}(\mu)| \geq \{(|\widetilde{P}|_1 + 1)^n |\widetilde{Q}|_1^m + 1\}^{-1}.
\end{align}
\end{lemma}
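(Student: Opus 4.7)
The plan is to produce a univariate polynomial $R(y)$ that has $\widetilde P(\mu)$ as a root, bound its coefficient norm using Lemma \ref{lemma:Hadamard2}, and then invert to a lower bound on $|\widetilde P(\mu)|$ via Cauchy's root bound (Lemma \ref{lemma:Cauchyroot1}) applied to the reciprocal polynomial.

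First, I form the resultant $R(y) := \operatorname{res}_x\bigl(\widetilde Q(x),\, y - \widetilde P(x)\bigr)$ with respect to $x$. Labelling the roots of $\widetilde Q$ as $\mu_1,\dots,\mu_m$, so that the given $\mu$ is among them, we have the factorisation
\begin{align*}
R(y) \;=\; b_m^{\,n}\!\prod_{\ell=1}^{m}\bigl(y - \widetilde P(\mu_\ell)\bigr),
\end{align*}
and in particular $R(\widetilde P(\mu)) = 0$. Applying Lemma \ref{lemma:Hadamard2} with the roles of $P$ and $Q$ there played by $\widetilde P$ and $\widetilde Q$ respectively yields a coefficient-norm bound of the shape $|R|_1 \leq (|\widetilde P|_1+1)^{\alpha}\,|\widetilde Q|_1^{\,\beta}$ for the appropriate pair of exponents $(\alpha,\beta)\in\{(n,m),(m,n)\}$ determined by which variable parameterises which dimension of the Sylvester matrix.

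Next, since $\widetilde P(\mu)\neq 0$, I pass to the reciprocal polynomial $\tilde R(z):=z^{\deg R}R(1/z)$. This polynomial has $1/\widetilde P(\mu)$ as a root, and its leading coefficient is exactly the constant term $R(0) = \pm\,\operatorname{res}(\widetilde P,\widetilde Q)$. Assuming this is nonzero, Cauchy's root bound gives
\begin{align*}
\left|\tfrac{1}{\widetilde P(\mu)}\right| \;\leq\; \frac{|\tilde R|_\infty}{|R(0)|} + 1 \;\leq\; \frac{|R|_1}{|R(0)|} + 1,
\end{align*}
so that $|\widetilde P(\mu)| \geq |R(0)|/(|R|_1 + |R(0)|)$. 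Provided $|R(0)| \geq 1$, rearranging and plugging in the coefficient bound from the previous step yields the inequality claimed in Lemma \ref{lemma:rump2}.

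The main obstacle is the step that asserts $|R(0)| \geq 1$. This is where the \emph{standardized} hypothesis on $\widetilde P$ and $\widetilde Q$ must be used: for genuinely integer-coefficient coprime polynomials the resultant is a nonzero integer and the bound is immediate, and I would have to check that the weaker standardization condition (every nonzero coefficient of modulus at least $1$) is still strong enough to force the Sylvester determinant to be either zero or of modulus at least one — for instance by a combinatorial analysis of the terms in its Laplace expansion, noting that each factor contributing to a non-cancelling term has modulus at least $1$. The residual possibility $R(0)=0$ is excluded by restricting without loss of generality to $\widetilde P$ and $\widetilde Q$ coprime; this is the setting in which the lemma will be applied inside the proof of Theorem \ref{thm:mainrootbound1}, where one polynomial plays the role of a square-free factor of $P$ containing the chosen root and the other is complementary to it.
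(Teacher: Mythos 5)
Your route is essentially the paper's: form $R(y)=\mathrm{res}_x(\widetilde Q(x),\,y-\widetilde P(x))$, bound $|R|_1$ with Lemma~\ref{lemma:Hadamard2}, pass to a reciprocal polynomial, and apply Cauchy's bound (Lemma~\ref{lemma:Cauchyroot1}) to the root $1/\widetilde P(\mu)$. The substantive divergence is in the reciprocal step. You invert the full $R$ and therefore need $R(0)\neq 0$, which you secure by assuming $\widetilde P$ and $\widetilde Q$ coprime. The paper instead writes $R(y)=\sum_{\ell=p}^{m} r_\ell y^\ell$ with $r_p$ its lowest nonzero coefficient, takes the reciprocal of $R/y^p$, and divides by $|r_p|$ rather than by $|R(0)|$ in the Cauchy bound, so it never invokes coprimality. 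That difference matters here: in Corollary~\ref{corr:rump3} the lemma is applied with $\widetilde Q=\widetilde P'$, and a polynomial with a repeated root away from $\gamma$ is \emph{not} coprime to its derivative, so $R(0)$ may well vanish. Your ``without loss of generality'' is therefore not actually harmless in the place the lemma is used; you should replace $R(0)$ by $r_p$ to make the argument unconditional.

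Your remaining worry --- whether standardization forces $|R(0)|\geq 1$ (respectively $|r_p|\geq 1$) --- is well taken. The paper's justification is the single phrase ``we have used that $\widetilde P$ and $\widetilde Q$ are standardized,'' which is not an argument: the Sylvester matrix entries have modulus $\geq 1$, but the signed products in the Laplace expansion of $\det M$ can cancel, so standardization alone does not obviously bound the nonzero coefficients of $R$ away from zero. (For integer coefficients this is automatic since any nonzero integer has modulus $\geq 1$; the standardized real case is genuinely weaker.) The combinatorial remedy you sketch will not close this in the real case for exactly that cancellation reason, and in fairness neither does the paper's proof as written. You were right to flag it.
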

\begin{proof}
As before, we let $R(y) := \text{res}(\widetilde{Q}(x), y - \widetilde{P}(x))$ be the resultant with respect to $x$.  By Lemma~\ref{lemma:Hadamard1} and the inequality between the $\infty$-seminorm and $1$-seminorm we have
\begin{align}
|R|_\infty \leq |R|_1 \leq (|\widetilde{P}|_1 + 1)^m |\widetilde{Q}_1|^n\,.
\end{align}
Writing $R(y) = \sum_{\ell = p}^m r_\ell\,y^\ell$ where $p = \argmin_\ell \{r_\ell\,:\,r_\ell \not = 0\}$, we can define the reciprocal polynomial $\bar{R}(y) := y^{n-p} R(y^{-1})$ whose roots are the inverses of the roots of $R(y)$.  Note also that $|R|_\infty = |\bar{R}|_\infty$.  For each root $\alpha = \widetilde{P}(\mu)$ of $R$, by Cauchy's root bound in Lemma~\ref{lemma:Cauchyroot1} we have
\begin{align}
|\alpha| \leq \frac{|R|_\infty}{|r_m|} + 1 \leq |R|_\infty + 1\,,
\end{align}
where in the last inequality we have used that $\widetilde{P}$ and $\widetilde{Q}$ are standardized.  So then for roots $\alpha^{-1}$ of $\bar{R}(y)$, we similarly have
\begin{align}
|\alpha^{-1}| = |\widetilde{P}(\mu)^{-1}| \leq \frac{|\bar{R}|_\infty}{|r_p|} + 1 \leq |\bar{R}|_\infty + 1 = |R|_\infty + 1\,,
\end{align}
and so using~\eqref{E:Hadamard1} we find
\begin{align}
|\widetilde{P}(\mu)^{-1}| \leq (|\widetilde{P}|_1 + 1)^n \, |\widetilde{Q}|_1^m + 1\,,
\end{align}
which implies our desired inequality.
\end{proof}
\noindent The above Lemma has the following corollary:
\begin{corollary}[Adapted from Lemma 3 of~\cite{rump1979polynomial}]\label{corr:rump3}
Let $\widetilde{P}$ be a standardized polynomial of degree $n \geq 2$.  If some $\gamma$ satisfies $P'(\gamma) = 0$ but $P(\gamma) \not = 0$, then
\begin{align}
|\widetilde{P}(\gamma)| \geq \{n^n (|\widetilde{P}|_1 + 1)^{2n-1}\}^{-1}\,.
\end{align}
\end{corollary}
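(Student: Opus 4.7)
The plan is to apply Lemma~\ref{lemma:rump2} directly with $\widetilde{Q} = \widetilde{P}'$. First I would verify that $\widetilde{P}'$ is itself standardized: if $\widetilde{P}(x) = \sum_k a_k x^k$ is standardized with $|a_k| \geq 1$ whenever $a_k \neq 0$, then the nonzero coefficients of $\widetilde{P}'(x) = \sum_{k \geq 1} k a_k x^{k-1}$ have absolute value $k|a_k| \geq k \geq 1$. So $\widetilde{P}'$ meets the standardization hypothesis of Lemma~\ref{lemma:rump2}, and its degree is $m = n-1$.

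Next I would apply Lemma~\ref{lemma:rump2} with $P = \widetilde{P}$ (degree $n$) and $Q = \widetilde{P}'$ (degree $n-1$) at the point $\mu = \gamma$, which satisfies $\widetilde{P}(\gamma) \neq 0$ and $\widetilde{P}'(\gamma) = 0$ by hypothesis. This yields
\[
|\widetilde{P}(\gamma)| \geq \bigl\{ (|\widetilde{P}|_1 + 1)^n |\widetilde{P}'|_1^{\,n-1} + 1 \bigr\}^{-1}.
\]
Now I need a bound on $|\widetilde{P}'|_1$ in terms of $|\widetilde{P}|_1$. Since differentiating replaces each $a_k$ by $k a_k$ with $k \leq n$, we have $|\widetilde{P}'|_1 \leq n|\widetilde{P}|_1 \leq n(|\widetilde{P}|_1 + 1)$, and therefore $|\widetilde{P}'|_1^{n-1} \leq n^{n-1} (|\widetilde{P}|_1+1)^{n-1}$.

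Substituting this into the bound from Lemma~\ref{lemma:rump2} gives
\[
(|\widetilde{P}|_1 + 1)^n |\widetilde{P}'|_1^{\,n-1} + 1 \leq n^{n-1}(|\widetilde{P}|_1+1)^{2n-1} + 1.
\]
The final step is to absorb the additive $+1$ into the main term. Since $n \geq 2$ and $|\widetilde{P}|_1 + 1 \geq 2$, we have $(n-1) n^{n-1}(|\widetilde{P}|_1+1)^{2n-1} \geq 1 \cdot 2 \cdot 2^{3} > 1$, so $1 \leq (n-1)n^{n-1}(|\widetilde{P}|_1+1)^{2n-1}$, which rearranges to $n^{n-1}(|\widetilde{P}|_1+1)^{2n-1} + 1 \leq n^n (|\widetilde{P}|_1+1)^{2n-1}$. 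Inverting yields the stated inequality.

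The entire argument is essentially a direct specialization of Lemma~\ref{lemma:rump2}; the only places where care is required are (a) checking that the derivative of a standardized polynomial remains standardized (so that Lemma~\ref{lemma:rump2} applies with no further rescaling) and (b) the final absorption of the $+1$, where one needs the hypothesis $n \geq 2$. There is no substantive obstacle beyond bookkeeping.
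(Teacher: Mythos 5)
Your proof is correct and follows essentially the same route as the paper: apply Lemma~\ref{lemma:rump2} with $\widetilde{Q}=\widetilde{P}'$, observe that the derivative of a standardized polynomial is standardized, bound $|\widetilde{P}'|_1 \le n|\widetilde{P}|_1$, and absorb the trailing $+1$ using $n\ge 2$. Your explicit absorption step in fact cleans up the paper's displayed intermediate inequality, which contains apparent typographical slips, so the reasoning is if anything laid out more carefully than in the source.
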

\begin{proof}
We can use Lemma~\ref{lemma:rump2} with $\widetilde{Q} = \widetilde{P}'$, where we observe that if $\widetilde{P}$ is standardized then so is $\widetilde{Q} = \widetilde{P}'$.  Using $|\widetilde{P}'|_1 \leq n\cdot |\widetilde{P}|_1$, we find
\begin{align}
|\widetilde{P}(\gamma)| \geq \{n^n (|\widetilde{P}|_1^n + 1)^{n-1} \cdot |\widetilde{P}|_1^n + 1\}^{-1} \geq \{n^n (|\widetilde{P}|_1 + 1)^{2n-1}\}^{-1}\,,
\end{align}
which gives the stated bound.
\end{proof}

Finally, we can put together all of the above to prove Theorem~\ref{thm:mainrootbound1}.
\begin{proof}[Proof of Theorem~\ref{thm:mainrootbound1}, following Theorems 2 and 3 of~\cite{rump1979polynomial}]
Consider a polynomial $P$, and its standardized counterpart $\widetilde{P}$.  We will show that
\begin{align}
\label{E:mainrootbound2}
\text{rsep}_{[-1,1]}(\widetilde{P}) \geq \frac{2\sqrt{2}}{n^{n/2 + 1}(\tilde{s}(P) + 1)^n}\,.
\end{align}
Since $\text{rsep}_{[-1,1]}(P) = \text{rsep}_{[-1,1]}(\widetilde{P})$, the bound above in~\eqref{E:mainrootbound2} implies the desired bound~\eqref{E:mainrootbound1}.

Suppose that $\widetilde{P}(\alpha) = \widetilde{P}(\beta) = 0$ for $\alpha, \beta$ real, and moreover that $\text{rsep}(\widetilde{P}) = |\alpha - \beta|$.  Without loss of generality, we can take $-1 \leq \alpha < \beta \leq 1$.  By Rolle's Theorem there exists a real $\gamma$ with $-1 \leq \alpha < \gamma < \beta \leq 1$ such that $\widetilde{P}'(\gamma) = 0$.  Then we consider the expansion
\begin{align}
0 = \widetilde{P}(\beta) = \widetilde{P}(\gamma) + \frac{(\gamma - \beta)^2}{2}\,\widetilde{P}''(\omega)
\end{align}
for $\gamma < \omega < \beta$.  Using Lemma~\ref{lemma:rump2} we have
\begin{align}
(\gamma - \beta)^2\,|\widetilde{P}''(\omega)| = 2\,|\widetilde{P}(\gamma)| \geq 2\,\{n^n (|\widetilde{P}|_1 + 1)^{2n-1}\}^{-1}\,.
\end{align}
Since we have $|\omega| < 1$, it follows that
\begin{align}
|\widetilde{P}''(\omega)| \leq \left|\sum_{k=2}^n k(k-1) \,a_k\,\omega^{k-2}\right| \leq n^2\,|\widetilde{P}|_1
\end{align}
and so all together
\begin{align}
\label{E:gammabeta1}
(\gamma - \beta)^2 \geq 2\,\{n^{n+2} (|\widetilde{P}|_1 + 1)^{2n}\}^{-1}\,.
\end{align}
Repeating the same analysis using $\alpha$ in place of $\gamma$, we similarly find
\begin{align}
\label{E:alphagamma1}
(\alpha - \gamma)^2 \geq 2\,\{n^{n+2} (|\widetilde{P}|_1 + 1)^{2n}\}^{-1}\,,
\end{align}
and so combining~\eqref{E:gammabeta1} and~\eqref{E:alphagamma1} we arrive at the bound in~\eqref{E:mainrootbound2}.
\end{proof}

\bibliographystyle{alpha}
\bibliography{bibliography}

\end{document}